\newtheorem{theorem}{theorem}[section]
\journal{Mathematics Journal}
\begin{document}

\begin{frontmatter}



\title{The frequency $K_i$s for symmetric traveling salesman problem}


\author{Yong Wang} 

\affiliation{organization={New Energy School, North China Electric Power University},
            addressline={No.2, Beinong Road, Huilongguan, Changping}, 
            city={Beijing},
            postcode={102206}, 
            country={China}}

\begin{abstract}
The frequency $K_i$s ($i\in[4,n]$) are studied for symmetric traveling salesman problem ($TSP$) to characterize the structure properties of the edges in optimal Hamiltonian cycle ($OHC$). For a given $K_i$ in the complete graph $K_n$, the frequency $K_i$ is computed with the set of ${{i}\choose{2}}$ optimal $i$-vertex paths with fixed endpoints (optimal $i$-vertex paths) in the $K_i$. 
Given an $OHC$ edge related to $K_i$, it has certain frequency bigger than $\frac{1}{2}{{i}\choose{2}}$ in the frequency $K_i$, and that of an ordinary edge not in $OHC$ is smaller than $2(n-3)$. 
Moreover, given a frequency $K_i$ containing an $OHC$ edge related to $K_n$, the frequency of the $OHC$ edge is bigger than $\frac{1}{2}{{i}\choose{2}}$ in the average case. 
It also found that the probability that an $OHC$ edge is contained in the optimal $i$-vertex paths increases according to $i\in [4, n]$ or keeps stable if it decreases from $i$ to $i+1\leq n$. As the frequency $K_i$s are used to compute the frequency of an edge, each $OHC$ edge reaches its own peak frequency at $i=P_0$ where $P_0=\frac{n}{2} + 2$ for even $n$ or $\frac{n+1}{2} + 1$ for odd $n$. For each ordinary edge out of $OHC$, the probability that they are contained in the optimal $i$-vertex paths decreases according to $i$, respectively, in the average case. 
Moreover, the probability of an ordinary edge definitely decreases if $i \geq i_d$  where $i_d = O(n^{\frac{4}{7}})$ is the smallest number meeting the inequality $\frac{(n-2)(n-3) - (i_d-2)(i_d-3)}{(n-2)(n-3) - (i_d-1)(i_d-2)} \geq \sqrt{1 + \frac{2}{i_d(i_d+1)}}$. Based on these findings, an algorithm is presented to find $OHC$ in $O(n^2i_d^42^{i_d})$ time using dynamic programming. The experiments are executed to verify these findings with various $TSP$ instances. 
\end{abstract}

\begin{graphicalabstract}
\end{graphicalabstract}

\begin{highlights}
\item The lower frequency bound for $OHC$ edges in $K_i$ is derived.
\item The lower frequency bound for $OHC$ edges in $K_n$ is derived. 
\item The frequency changes for $OHC$ and ordinary edges are demonstrated.
\item All $OHC$ edges can be found in $O(n^2i_d^42^{i_d})$ time for $i_d = O(n^{\frac{4}{7}})$. 
\item The structure properties of $OHC$ edges are characterized.
\item The experiments are executed for verifying the findings.
\end{highlights}

\begin{keyword}
Traveling salesman problem\sep optimal Hamiltonian cycle\sep optimal $i$-vertex path with given endpoints\sep frequency $K_i$

\MSC 05C45\sep 68Q06\sep 05C90\sep 90B40
\end{keyword}

\end{frontmatter}



\section{Introduction}
\label{sec1}
Traveling Salesman Problem ($TSP$) is extensively studied in combinatorial optimization, operations research, computer science and engineering \cite{DBLP:books/Gutin07}. 
Given $K_n$ on $n$ vertices $\{1, \ldots, n\}$, there is certain distance $d(u,v)> 0$ for an edge $(u,v)$ where $u \neq v \in \{1, 2, \ldots, n\}$. For the symmetric $TSP$, $d(v,u)=d(u,v)$ exists for any pair of vertices $u$, $v$. A Hamiltonian cycle ($HC$) is one cycle visiting each of all vertices exactly once. The optimal Hamiltonian cycle ($OHC$) is the shortest one among all $HC$s. Finding $OHC$ is the classic $TSP$ and it is $NP$-Complete \cite{DBLP:journals/Karp75}. The number of $HC$s is $\frac{(n-1)!}{2}$ in symmetric $K_n$. As $n$ becomes big, it is impractical to find $OHC$ by the methods based on enumeration. 

The exact algorithms usually require $O(a^n)$ time for finding $OHC$ where $a>1$. For example, the dynamic programming
consumes $O(n^22^n)$ time owing to Bellman \cite{DBLP:journals/Bellman62}, and independently Held and Karp \cite{DBLP:journals/Karp62}. 
The techniques based on cutting-plane \cite{DBLP:journals/Levine00,DBLP:journals/Applegate09} 
and branch-and-bound \cite{DBLP:journals/Carpaneto95,DBLP:journals/Klerk11} 
are able to tackle $TSP$ with thousands of vertices. In 2006, one large Euclidean $TSP$ having 85,900 nodes was optimized  \cite{DBLP:journals/Applegate09}. 
In 2019, Cook \cite{DBLP:journals/Cook19} 
reported that the $OHC$ through 109,399 stars was found. However, the exact algorithms generally consume long time for resolving the $TSP$ instances of large scale.

$TSP$ is hard to resolve because $OHC$ edges do not show special distances in $K_n$. In other words, although $OHC$ edges are different from ordinary edges, it is difficult to separate them  according to the distances on edges. In the 1920s, Menger pointed out the nearest-neighbor method is not helpful for $TSP$ even though it works well to find the minimum spanning tree ($MST$) in graphs \cite{DBLP:journals/Liu08}. In contrast, many ordinary edges have some special features based on distances. Forty years ago, Jonker and Volgenant \cite{DBLP:journals/Jonker84} identified more than half ordinary edges out of $OHC$ based on 2-$opt$ $moves$. Ten years ago, Hougardy and Schroeder \cite{DBLP:journals/Hougardy14} found more ordinary edges using 3-$opt$ $moves$. Zhong \cite{DBLP:journals/Zhong18} compared the two methods when they were applied to Euclidean $TSP$, and illustrated that the method based on the 3-$opt$ $moves$ identified more ordinary edges than that according to the 2-$opt$ $moves$. 

According to the distances on edges, the properties of $OHC$ edges are not fully  demonstrated in the above studies. In real-world applications, there are selection criteria for finding some $OHC$ edges. To reduce the search time of heuristic algorithms, a limited number of nearest neighbors (i.e., five nearest  neighbors) to each vertex are considered in the original LK algorithm \cite{DBLP:journals/Lin73}. The LK heuristic was improved to LKH-1 in the year of 2000. One of the essential improvements is the $\alpha$-measure computed for edges based on 1-tree. It was taken as the criteria for choosing a few candidate $OHC$ edges linking each vertex \cite{DBLP:journals/Helsgaun20}. The experiments illustrated that the  $\alpha$-measure was much better than the method of nearest neighbors for choosing the candidate $OHC$ edges related to each vertex, and it also played well in the new version LKH-2  \cite{DBLP:journals/Helsgaun09}. Besides the $\alpha$-measure, pseudo-backbone edges were proposed for finding the candidate $OHC$ edges contained in a set of high quality tours \cite{DBLP:journals/Jäger14}. A backbone edge is originally contained in every $OHC$ and it is relaxed to one pseudo-backbone edge. The tours of high quality can be computed with the heuristic algorithms, such as LKH. 
It is interesting that the high quality tours contain many pseudo-backbone edges. Through contracting the pseudo-backbone edges, the size of $TSP$ is greatly reduced. Although the $\alpha$-measure and pseudo-backbone edges work well for most $TSP$ instances, they are lack of theoretical foundations for characterizing the $OHC$ edges. 

In recent years, the frequencies of edges were computed with the optimal 4-vertex paths with given endpoints (optimal 4-vertex paths), or frequency $K_4$s for identifying $OHC$ edges \cite{DBLP:journals/Wang151, DBLP:journals/Wang16}. Given one $K_4$ in $K_n$, there are twelve paths visiting the four vertices contained in the $K_4$. They are the six pairs of paths each of which has the same two endpoints. The authors compared each pair of the paths with the same endpoints, and the shorter one was taken as the optimal 4-vertex path for the pair of specific endpoints, respectively. Since six pairs of vertices in one $K_4$, there are six optimal 4-vertex paths in each $K_4$. After the six optimal 4-vertex paths were obtained, they enumerated the number of the optimal 4-vertex paths containing each edge. Each number denotes the frequency of one corresponding edge. After the frequencies of all edges are figured out, the frequency $K_4$ is constructed by the edges and their frequencies, see Appendix $A$. Given an edge in $K_n$, it is contained in ${{n-2}\choose{2}}$ $K_4$s, and it is contained in the same number of frequency $K_4$s. In each frequency $K_4$, the edge has certain frequency of 1, 3, or 5. For each edge in $K_n$, the authors added the related frequencies together for computing the total frequency in $[{{n-2}\choose{2}}, 5{{n-2}\choose{i-2}}]$ and the average frequency in $[1,5]$, respectively.
After that, it found that the average frequency of an $OHC$ edge is much higher than the average frequency of all edges and those of most ordinary edges. In theory, the average frequency of an $OHC$ edge is bigger than $3$ \cite{DBLP:journals/Wang16}. Moreover, an $OHC$ edge will be contained in more percentage of the optimal 4-vertex paths according to rising $n$ \cite{DBLP:journals/Wang19}. It implies that the average frequency of an $OHC$ edge will increase according to $n$ based on the frequency $K_4$s. For big and large $TSP$, most $OHC$ edges have the very big frequencies computed with the frequency $K_4$s.

The frequency $K_4$s have been investigated, and each $OHC$ edge in $K_n$ has certain (average) frequency much higher than those of most ordinary edges based on the frequency $K_4$s. The next work is to study the frequency $K_i$s ($4<i\leq n$) for $TSP$. In one frequency $K_i$, the frequency of each edge is computed with the ${{i}\choose{2}}$ optimal $i$-vertex paths contained in one  corresponding $K_i$, see Section 2. In each  frequency $K_i$, every edge has a frequency in $[0,{{i}\choose{2}} - 1]$. An edge in $K_n$ is contained in ${{n-2}\choose{i-2}}$ frequency $K_i$s. The frequency or average frequency of each edge can be computed based on these frequency $K_i$s. Similar to the frequencies of edges computed based on the frequency $K_4$s, $OHC$ edges and ordinary edges will have much difference with respect to their frequencies or average frequencies based on frequency $K_i$s. According to the distances on edges in $K_n$, it is difficult to separate $OHC$ edges from ordinary edges. Since $OHC$ edges and ordinary edges have much difference with respect to the frequencies computed with the frequency $K_i$s, they will be separated from each other according to their frequencies.

In this paper, we shall study the different structural properties for $OHC$ edges and ordinary edges based on the frequency $K_i$s. In the first step, the frequency $K_i$s are computed with the optimal $i$-vertex paths in the corresponding $K_i$s in $K_n$. Then, the frequency of each edge is computed with the frequency $K_i$s containing them, respectively. The $OHC$ edges and ordinary edges have different frequencies which are used to separate them from each other. As the frequency $K_i$ is computed for certain $K_i$, one will ask the difference between the frequency of an $OHC$ edge and that of an ordinary edge related to $K_i$. The second question is how big the frequency of an $OHC$ edge related to $K_n$ is when it is computed with the frequency $K_i$s containing it? Moreover, what are the frequency changes for an $OHC$ edge and ordinary edge according to $i$ as the frequency of each edge is computed with the frequency $K_i$s? 
The questions will be analyzed on the basis of the frequency $K_i$s. 

If each $K_i$ contains ${{i}\choose{2}}$ optimal $i$-vertex paths and one $OHC$, this paper presents the following results.  
An $OHC$ edge in a given $K_i$ has the frequency bigger than $\frac{1}{2}{{i}\choose{2}}$ in the  frequency $K_i$, i.e., an $OHC$ edge is contained in bigger than  $\frac{1}{2}{{i}\choose{2}}$ optimal $i$-vertex paths in every $K_i$. On the other hand, an ordinary edge has the frequency smaller than $2(n-3)$ in the frequency $K_i$, and the average frequency of the ordinary edges is smaller than $\frac{i+2}{2}$ in the best average case. On average, the expected frequency of an $OHC$ edge related to $K_i$ is bigger than $\frac{i^2-4i+7}{2}$, and an ordinary edge will have the expected frequency smaller than 2. Moreover, an $OHC$ edge related to $K_n$ has the frequency bigger than $\frac{1}{2}{{i}\choose{2}}$ in a frequency $K_i$ containing it in the worst average case. As we choose $N$ frequency $K_i$s containing an edge to compute the total and average frequency, the lower frequency bound for $OHC$ edges is $\frac{N}{2}{{i}\choose{2}}$, and the average frequency is bigger than $ \frac{1}{2}{{i}\choose{2}}$. In addition, the probability that an $OHC$ edge is contained in the optimal $i$-vertex paths increases from $i$ to $i+1$ where $i\in[4,n-1]$ or keeps the nearly equal value if it decreases. As the frequency of an edge is computed with the frequency $K_i$s, each $OHC$ edge reaches its own peak frequency at $P_0 = \frac{n}{2}+2$ for even $n$ or $P_0=\frac{n+1}{2}+1$ for odd $n$, respectively. On the other hand, the probability that an ordinary edge is contained in the optimal $i$-vertex paths decreases according to $i$ in the average case. 
Moreover, the probability of an ordinary edge definitely decreases if $i\geq  i_d$ where $i_d = O(n^{\frac{4}{7}})$ is the smallest number meeting the condition $\frac{(n-2)(n-3) - (i_d-2)(i_d-3)}{(n-2)(n-3) - (i_d-1)(i_d-2)} \geq \sqrt{1+\frac{2}{i_d(i_d+1)}}$. Based on these findings, an algorithm based on dynamic programming is presented to find all $OHC$ edges in $O(n^2i_d^42^{i_d})$ time. 

The remainder of this paper is organized as follows. In Section 2, the optimal $i$-vertex paths in one given $K_i$ are introduced, and the frequency $K_i$ is computed with the optimal $i$-vertex paths. In Section 3, the lower frequency bound for $OHC$ edges related to $K_i$ is proven, and the upper frequency bound for ordinary edges is also derived. In Section 4, the lower frequency bound for $OHC$ edges related to $K_n$ is studied based on the  frequency $K_i$s. 
In section 5, the frequency and probability changes for $OHC$ edges and ordinary edges are predicted according to $i$ as they are computed based on the frequency $K_i$s. Meanwhile, the algorithm to identify $OHC$ edges is also presented. In Section 6, the experiments are executed to verify these findings with the benchmark $TSP$ instances. Finally, conclusions are drawn in the last section, and the future research is also presented. 


\section{The optimal $i$-vertex paths and frequency $K_i$}
\label{sec2}

For symmetric $TSP$, there are $n$ vertices and ${{n}\choose{2}}$ edges in $K_n$ $(n\geq 4)$. 
\textit{If an edge is contained in $OHC$, it is called an $OHC$ edge. Otherwise, it is an ordinary edge.} The $TSP$ containing one $OHC$ is considered here. If some $TSP$ includes several or many $OHC$s, one can add small random distances to the distances of edges, and convert it into one modified $TSP$ having only one $OHC$. As the random distances are small enough, the original $TSP$ and modified $TSP$ will contain the same $OHC$. The $OHC$ in $K_n$ is illustrated in Figure \ref{OHC} where we assume that the subscripts of vertices are ordered according to the natural numbers $1, 2, ..., n$. The dashed lines are $OHC$ edges whereas the solid line represents an ordinary edge $(v_1,v_j)$ and $1<i<j<k<n$.

\begin{figure}
	\centering
	\includegraphics[width=2.5in,bb=0 0 400 200]{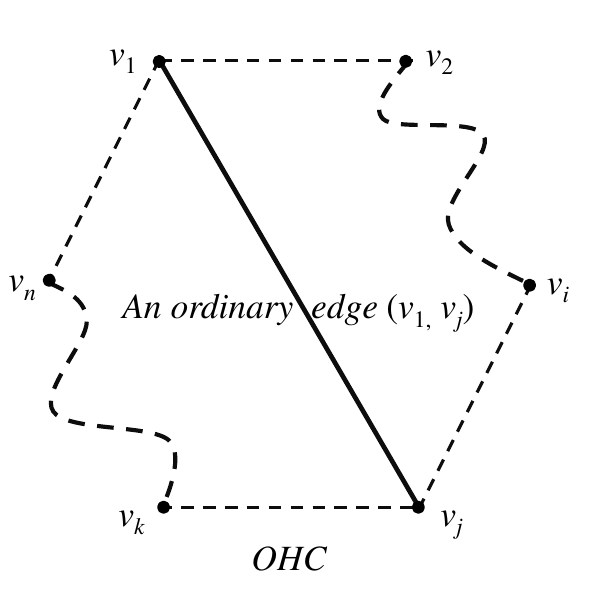}
	\caption{The illustration of $OHC$  in $K_n$.}
	\label{OHC}
\end{figure}

In $K_n$, each vertex is contained in $n-1$ edges. For example, $v_1$ is contained in the $n-1$ edges $(v_1, v_s)$ where $s\in [2,n]$ in Figure \ref{OHC}. Since there is only one $OHC$ in $K_n$, each vertex is contained in two $OHC$ edges and $n-3$ ordinary edges, respectively. For example in Figure \ref{OHC}, $v_1$ is contained in the two $OHC$ edges $(v_1,v_2)$ and $(v_1, v_n)$, and the $n-3$ ordinary edges $(v_1,v_k)$ $(k\in [3,n-1])$. Moreover, each ordinary edge is adjacent to two pairs of $OHC$ edges on both endpoints since each vertex is contained in two $OHC$ edges. For example in Figure \ref{OHC}, the ordinary edge $(v_1,v_j)$ is adjacent to the two pairs of $OHC$ edges $(v_1,v_2)$ $\&$ $(v_1,v_n)$ and $(v_j,v_i)$ $\&$ $(v_j, v_k)$. If $v_2 = v_i$ or $v_k = v_n$ for $n >4$, there are at most $n$ such ordinary edges in $K_n$. For each of the other $\frac{n(n-5)}{2}$ ordinary edges $(v_1, v_j)$, $v_2\neq v_i$ and $v_k\neq v_n$ exist. In the following, for an ordinary edge  $(v_1, v_j)$, the four vertices $v_2$, $v_i$, $v_k$ and $v_n$ contained in the two pairs of adjacent $OHC$ edges are are pairwise distinct if there are no special declarations. 

Given a set of $i$ vertices in $K_n$ where $i\in[4,n]$, the $i$ vertices are contained in one  corresponding $K_i$. As there is only one $OHC$ in the $K_i$, every vertex is contained in two $OHC$ edges and $i-3$ ordinary edges, and an ordinary edge is adjacent to two pairs of $OHC$ edges on both endpoints, respectively. Moreover, there are ${{i}\choose{2}}$ optimal $i$-vertex paths in the $K_i$. Given one $K_i$ in $K_n$, the corresponding frequency $K_i$ is computed with the optimal $i$-vertex paths ($OP^i$) in the $K_i$ where the superscript $i$ in $OP^i$ denotes the number of vertices. The optimal 4-vertex paths and frequency $K_4$s were computed in paper \cite{DBLP:journals/Wang16}, see Appendix $A$. The optimal $i$-vertex paths in the $K_i$ and the frequency $K_i$ are computed as follows. 

\begin{description}
	\item[Optimal $i$-vertex path ($OP^i$)]: Given one $K_i$ on a set of $i$ vertices $\{v_1,v_2, \ldots, v_{i-1},v_i \}$ in $K_n$, the paths $P^i = (v_{\sigma_1}, \ldots, v_{\sigma_i})$ visit each of the $i$ vertices exactly once. Fix the two endpoints of $P^i$, such as $v_{\sigma_1}=v_1$
	and $v_{\sigma_i} =v_2$, there are $(i-2)!$ $P^i$s where $v_1$ and $v_2$ are the endpoints. The shortest one is taken as the optimal $i$-vertex path denoted as $OP^i$ for $v_1$ and $v_2$. 
\end{description}
In the $K_i$, there are ${{i}\choose{2}}$ pairs of vertices which are taken as the endpoints for computing the $OP^i$s. Thus, each $K_i$ contains ${{i}\choose{2}}$ $OP^i$s. As one special $K_i$ contains the equal-weight edges, it may contain more than ${{i}\choose{2}}$ $OP^i$s and one $OHC$. One can add the small random distances to the distances of edges. In this case, the modified $K_i$ will contain one $OHC$ and ${{i}\choose{2}}$ $OP^i$s. The ${{i}\choose{2}}$ $OP^i$s in the $K_i$ on vertex set $\{v_1,v_2, \ldots, v_{i-1},v_i \}$ are illustrated in Figure \ref{OPi}. Each $OP^i$ has two specified endpoints which are different from those of the other $OP^i$s. The endpoints of each $OP^i$ are actually contained in one corresponding edge in the $K_i$. For example, the first $OP^i$ in Figure \ref{OPi} has the endpoints $v_1$ and $v_2$. In the $K_i$, $v_1$ and $v_2$ are contained in the edge $(v_1,v_2)$. In this case, $(v_1,v_2)$ will not be contained in the first $OP^i$ in Figure \ref{OPi}. Thus, an edge in the $K_i$ is excluded from at least one $OP^i$. For each vertex in the $K_i$, such as $v_1$ in Figure \ref{OPi}, it is one endpoint of $i-1$ $OP^i$s. In each of these $OP^i$s, $v_1$ is contained in one edge. In each of the other ${{i-1}\choose{2}}$ $OP^i$s, $v_1$ is one intermediate vertex, and it is contained in two edges. The edges containing $v_1$ in the $OP^i$s are one subset of the edges containing $v_1$ in the $K_i$. 
It mentions that the $OHC$ in the $K_i$ contains $i$ $OP^i$s which only contain the $OHC$ edges. On the other hand, each of the other $\frac{i(i-3)}{2}$ $OP^i$s contains at least one ordinary edge, respectively. Most importantly, each $OP^i$ contains the defined set of $i-1$ edges in the $K_i$, and each $OP^i$ does not become shorter whatever the intermediate vertices are exchanged. Thus, all paths contained in an $OP^i$ are the optimal paths having a relatively smaller number of vertices. For example, each $P^k$ ($k\in [4,i]$) contained in an $OP^i$ is one $OP^k$.  

\begin{figure}
	\centering
	\includegraphics[width=3in,bb=0 0 360 250]{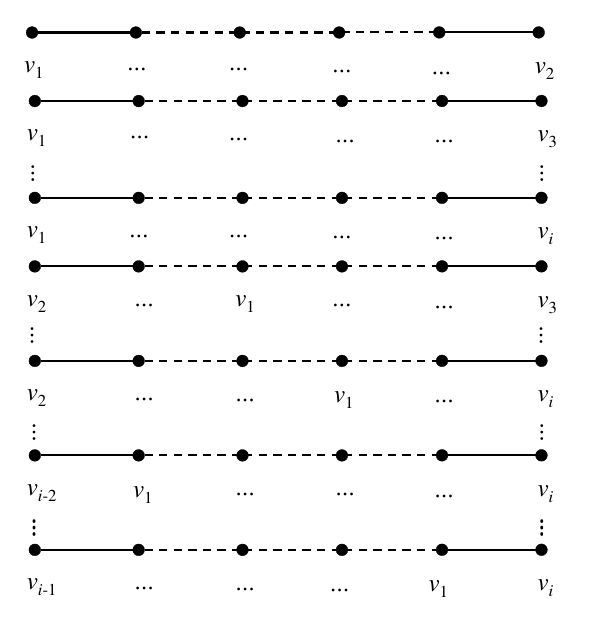}
	\caption{The ${{i}\choose{2}}$ $OP^i$s in one $K_i$.}
	\label{OPi}
\end{figure}

In $K_n$, there are ${{n}\choose{i}}$ $K_i$s, and there are total ${{i}\choose {2}}{{n}\choose {i}}$ $OP^i$s at given number $i$.  All the $P^i$s in $OHC$ and $OP^n$s of $K_n$ are $OP^i$s contained in some $K_i$s. Moreover, every path $P^k$ contained in an $OP^m$ ($k\in[4,m], m\in[k,n]$) is one $OP^k$. If one $P^k$ becomes shorter by exchanging the intermediate vertices, the $P^m$ containing it must not be optimal. It says that each path $P^k$ contained by two optimal paths $OP^i$ and $OP^j$ is also one $OP^k$ where $k\in [4, min\{i,j\}]$. Thus, the $OP^i$s and the intersecting operation defined for them form one semi-group \cite{DBLP:journals/Wang251}. The $OP^i$s containing more vertices only include the $OP^k$s containing less vertices. The $OP^k$s containing the smaller number of vertices keep the same structures (i.e., structure stability) when they are used to construct the $OP^i$s containing more vertices. 
\begin{description}
	\item[Frequency $K_i$]: After the ${{i}\choose{2}}$ $OP^i$s are computed from a given $K_i$, one can enumerate the number of $OP^i$s containing an edge in the $K_i$. This number is called the frequency of the edge. As the frequency of each edge is computed based on these $OP^i$s, the frequency $K_i$ is constructed with the edges and their frequencies.
\end{description}
In the frequency $K_i$, the vertices and edges are the same as those in the corresponding $K_i$. In contrast, the distances of edges in the $K_i$ are replaced by the corresponding frequencies of the edges, see the frequency $K_4$s in Appendix $A$. Thus, one frequency $K_i$ and the corresponding $K_i$ contain the same $OHC$ and $OP^i$s. Different from the various distances of edges in the $K_i$, each edge has a specific frequency in $\left[0,{{i}\choose{2}}-1\right]$. If an edge is not contained in any one $OP^i$, it has the frequency of zero. Since an edge in the $K_i$ is excluded from at least one $OP^i$, the maximum frequency of edges is ${{i}\choose{2}} - 1$. Moreover, an $OHC$ edge has certain  frequency much higher than that of an ordinary edge in the frequency $K_i$. There are ${{i}\choose{2}}$ $OP^i$s in the $K_i$ and each $OP^i$ contains $i-1$ edges. Since the frequency $K_i$ contains ${{i}\choose{2}}$ edges, the average frequency of all edges is $i-1$. Given a vertex in the $K_i$, there are $i-1$ $OP^i$s in which it is one endpoint. In each of these $OP^i$s, the vertex is contained in one edge. In each of the other ${{i-1}\choose{2}}$ $OP^i$s, the vertex is one intermediate vertex, and it is contained in two edges. Thus, the total frequency of the $i-1$ edges containing the vertex is $(i-1)^2$. This indicates that the average frequency of the edges containing a vertex is also $i-1$. The distances on the $i-1$ edges containing a vertex do not have such properties. 
An edge $e$ in $K_n$ is contained in ${{n-2}\choose{i-2}}$ $K_i$s, and it is contained in the same number of frequency $K_i$s. In each frequency $K_i$, $e$ has certain frequency in  $\left[0,{{i}\choose{2}}-1\right]$. The (total) frequency and average frequency of $e$ is computed with these frequency $K_i$s.
\begin{description}
	\item[The (total) frequency and average frequency of an edge in $K_n$]:  For an edge $e$ in $K_n$, it is contained in $N_i={{n-2}\choose{i-2}}$ frequency $K_i$s. Provided that the frequency of $e$ is $f_k(e) (k\in [1,N_i])$ in the $k^{th}$ frequency $K_i$, the (total) frequency of $e$ is computed as $F(e) = \sum_{k=1}^{N_i} f_k(e)$ and the expected or average frequency of $e$ is  $f(e)=\frac{F(e)}{N_i}\in \left[0,{{i}\choose{2}}-1\right]$. 
\end{description}

The frequencies or average frequencies of $OHC$ edges have much difference from those of ordinary edges as they are computed with the frequency $K_i$s where $i\in[4,n]$. The frequencies of $OHC$ edges are much bigger than those of most ordinary edges computed based on frequency $K_4$s \cite{DBLP:journals/Wang19}. The $OHC$ edges and most ordinary edges can be separated from each other according to their frequencies. Besides the frequency, the probability that an edge is contained in the $OP^i$s also illustrates the structure characteristics of the edge with respect to $OHC$. In other words, an $OHC$ edge will be contained in the $OP^i$s with the big probability, whereas most ordinary edges are contained in the $OP^i$s with a small probability. 

\begin{description}
	\item[The probability that an edge is contained in the $OP^i$s]:  For an edge $e$ in $K_n$,  the average frequency of $e$ is 
	$f(e)\in \left[0,{{i}\choose{2}}-1\right]$ based on the frequency $K_i$s containing it. The expected probability that $e$ is contained in the $OP^i$s (or an $OP^i$) is computed as $p(e\in OP^i) = \frac{f(e)}{{{i}\choose{2}}}\in [0,1)$.  
\end{description}

For convenience, the expected probability is simply called probability in this paper. In view of the definitions of average frequency and probability for an edge, one will see that they are depends on the weights of all edges in $K_n$. Once the weighted $K_n$ for a $TSP$ is given, every edge has the unique average frequency and probability. If one takes the weight of an edge as the real function defined on the edge, the average frequency or probability is a functional defined on the function space related to the edge weights. Therefore, compared with the weights of edges, the frequency or probability for an edge exhibits the even stronger capability in characterizing the graph structure in terms of $OHC$. In this case, we can build the sufficient and necessary conditions for $OHC$ edges and ordinary edges based on their frequencies and probabilities rather than the weights. Then, the rules or strategies based on these theories can be constructed to lead the algorithms to search for the $OHC$. 

In the following, as the $K_i$ ($i\in [4,n]$) or $K_i$s and frequency $K_i$ or frequency $K_i$s are discussed, they are totally contained in one $K_n$ for a given $TSP$ instance, and the distance of each edge is defined in the same space, such as one kind of Euclidean space, or metric space, etc. Once the $TSP$ instance is given, the $OP^i$s in each $K_i$ are determined according to the vertices and the distances on edges in the $K_i$, respectively. Meanwhile, the $OP^i$s containing each edge are also fixed according to the given $K_n$. It means that a given edge $e$ has the specific $f(e)$ and $p(e\in OP^i)$ for a given $TSP$.

\section{The lower frequency bound for $OHC$ edges and upper frequency bound for ordinary edges related to $K_i$}
\label{sec3}
We assume that each $K_i$ $(i\in [4,n])$ contains one $OHC$ and ${{i}\choose{2}}$ $OP^i$s. Given a $K_i$, each $OP^i$ in the $K_i$ is unique. It says that each $OP^i$ in the $K_i$ contains the defined set of $i-1$ edges, respectively. Some edges may be contained in many $OP^i$s, whereas some others are contained in a small number of $OP^i$s. In the worst case, the remainder edges exclude from any one $OP^i$. Thus, not all edges in the $K_i$ are contained in the $OP^i$s. The frequency of some edges will be zero in the corresponding frequency $K_i$. If the edges with the frequency of zero are neglected from the frequency $K_i$, the remained edges are contained in a graph $G_i$. The number of edges in the $G_i$ is smaller than ${{i}\choose{2}}$. 

\begin{theorem}
	\label{th001}
	Given one $K_i$ containing ${{i}\choose{2}}$ $OP^i$s, all edges in the $OP^i$s are contained in one graph $G_i$. The minimum vertex degree of $G_i$ is 3, and the maximum vertex degree of $G_i$ is at least 4 where $i\geq 5$. 
\end{theorem}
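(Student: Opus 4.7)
The plan is to address the two claims separately by contradiction, each rooted in the local structure at a single vertex.

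\textbf{Minimum degree.} For any vertex $v$ of the underlying $K_i$, let $E_v$ denote the set of edges of $G_i$ incident to $v$. I would first note $|E_v|\ge 2$: the $i$ $OP^i$s obtained by deleting one edge at a time from the $OHC$ already force both $OHC$-neighbours of $v$ into $E_v$ (whenever $v$ is an intermediate vertex of those paths, which happens $i-2$ times). Now suppose, for contradiction, that $|E_v|=2$, say $E_v=\{(v,x),(v,y)\}$. Consider the unique $OP^i$ whose two endpoints are exactly $x$ and $y$: the vertex $v$ is intermediate in this path, so its two path-neighbours must come from $E_v=\{x,y\}$, forcing the subpath $x-v-y$. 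But $x$ and $y$ are the endpoints of this $OP^i$, so the whole path collapses to $x-v-y$, of length $2$, contradicting the fact that an $OP^i$ has $i-1\ge 3$ edges. Hence $|E_v|\ge 3$ at every vertex.

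\textbf{Maximum degree.} Suppose for contradiction that every vertex of $G_i$ has degree exactly $3$, so $G_i$ is $3$-regular with $3i/2$ edges. When $i$ is odd, $3i/2$ is not an integer, which immediately contradicts the Handshake Lemma; this settles $i=5$ and every odd $i\ge 7$. When $i$ is even, each vertex keeps its two $OHC$-neighbours and one additional non-$OHC$ neighbour, so the non-$OHC$ edges of $G_i$ form a perfect matching $M$. Fixing a vertex $v$ with $OHC$-neighbours $u_1,u_2$ and $M$-partner $u_3$, I would analyse the three $OP^i$s whose endpoint pairs lie inside $\{u_1,u_2,u_3\}$: in each such $OP^i$, $v$ is intermediate, and the same forbidden-subpath argument used in the minimum-degree case eliminates one of the three possible path-neighbour pairs at $v$. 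Combining these local restrictions with the global constraint that every one of the $\frac{i(i-3)}{2}$ $OP^i$s whose endpoint pair is not an $OHC$ edge must traverse at least one $M$-edge, I would seek a counting incompatibility that a $3$-regular $G_i$ cannot satisfy.

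\textbf{Main obstacle.} The hardest step is the even-$i$ portion of the maximum-degree argument. The naive global bounds (degree-sum, the total edge-incidence $i(i-1)^2$ at each vertex, or the average frequency $(i-1)^2/3$ per used edge) are all comfortably consistent with $3$-regularity, so a purely enumerative argument will not suffice. A successful proof must exploit the fact that each $OP^i$ is not just \emph{a} Hamiltonian path of $G_i$ between its endpoints but the unique \emph{shortest} one, and combine this extremality with the rigidity forced by the semi-group structure of $OP^i$s (any path contained in an $OP^i$ is itself optimal, as noted in Section~2) to rule out the configurations that would otherwise permit $G_i$ to be $3$-regular.
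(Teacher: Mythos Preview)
Your minimum-degree argument is correct and is in fact the same mechanism the paper uses, stated more cleanly: if $v$ had only two $G_i$-neighbours $x,y$, the $OP^i$ with endpoints $x,y$ would be forced to the three-vertex path $x\!-\!v\!-\!y$. The paper makes this point by drawing a particular five-vertex graph (their $G_5$ in the figure) in which $v_3$ has degree $2$ and observing that the $OP^5$ with endpoints $v_2,v_4$ is missing; your abstract version is equivalent but avoids the picture.

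For the maximum-degree half, your parity observation for odd $i$ is valid and is not made explicit in the paper. The even-$i$ case, however, is a genuine gap in your proposal: the perfect-matching structure and the ``one of three neighbour-pairs is forbidden'' remark are correct, but you have not produced the promised counting incompatibility, and the closing appeal to optimality and the semi-group property is not yet a proof. Concretely, for $i=6$ the candidate $G_6=C_6+M$ with a ``short-diagonal'' matching does admit Hamiltonian paths between every pair of vertices, so existence of Hamiltonian paths alone will not yield a contradiction; you would really need to exploit \emph{shortest}-path optimality, and you have not shown how.

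The paper takes a different route that avoids the parity split altogether: it argues that inside $G_i$ one can find five vertices whose induced $G_5$ must already support all ten $OP^5$s of that $K_5$; a graph on five vertices with minimum degree $3$ cannot be $3$-regular, so some vertex of that $G_5$ (hence of $G_i$) has degree at least $4$. This gives a uniform argument for every $i\ge 5$, at the cost of a step the paper states rather briefly (``due to the variation of $K_i$s, each of the ten $OP^5$s \dots\ may be contained in the $OP^i$s''). If you wish to close your even-$i$ case, adapting this reduction-to-$G_5$ idea is likely easier than pushing the global matching/counting line you outlined.
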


\begin{proof}
	If $i=4$, the six frequency $K_4$s for a $K_4$ are illustrated in Figure \ref{quadrilaterals}. All edges in the $K_4$ are contained in the six $OP^4$s, and they are also contained in the $G_4$. The maximum and minimum vertex degree of $G_4$ is 3, respectively. 
	
	If $i \geq 5$, there are ${{i}\choose{2}}$ $OP^i$s in the $K_i$, and all the edges in the $OP^i$s are contained in the $G_i$. Because each $OP^i$ contains the defined set of  $i-1$ edges, no redundant ordinary edges are used to build the ${{i}\choose{2}}$ $OP^i$s. It means that the $OP^i$s  contain the least number of edges in the $K_i$. 
	It is known that an $OP^i$ only contains the $OP^k$s for $k\in[4,i]$. As $i > 5$, part of $OP^5$s in some $K_5$s will be contained in the $OP^i$s. Moreover, the edges in these $OP^5$s are contained in the corresponding $G_5$s, respectively. 
	Among all the $G_5$s, there is one $G_5$ having the maximum number of edges. The $G_5$ is one sub-graph of the corresponding $K_5$. All the edges in the $G_5$ are contained in the $OP^i$s.  
	Due to the variation of $K_i$s, each of the ten $OP^5$s in the $K_5$ may be contained in the $OP^i$s. 
	As the $OP^i$s visit the five vertices in the $G_5$ or $K_5$, they will include the proper $OP^5$s, respectively. It requires that the $G_5$ contains ten $OP^5$s.
	
	Since there is no redundant edges for building the ten $OP^5$s, the $G_5$ contains the least number of edges in the $K_5$. We assume that the maximum vertex degree of the $G_5$ is 3 while the minimum vertex degree is 2. For example in Figure \ref{G5} (a), the $G_5$ on the vertex set $\{v_1, v_2, v_3, v_4, v_5\}$ is contained in the $G_i$. In the $G_5$, $v_3$ has the minimum  degree 2, and the degree of the other vertices is 3. Provided that one $OP^i=(v_x,...,v_y)$ in the $K_i$ includes one $OP^5$ in the $G_5$ in Figure \ref{G5} (a), the $OP^i$ is illustrated on the right of the $G_5$. Moreover, this $OP^i$ just contains the $OP^5$ where $v_2$ and $v_4$ are the endpoints. 
	As the two vertices $v_2$ and $v_4$  are taken as the endpoints of the $OP^5$, there are only two paths $(v_2,v_1,v_5,v_4)$ and $(v_2, v_3, v_4)$ composed of smaller than five vertices, respectively. Thus, the $OP^i$ does not exist due to the lack of the $OP^5$ where $v_2$ and $v_4$ are the endpoints. If the $G_5$ contains the $OP^5$ where $v_2$ and $v_4$ are the endpoints, the degree of $v_3$ must be equal to or bigger than 3. Thus, the minimum vertex degree of $G_i$ is at least three if it includes the  ${{i}\choose{2}}$ $OP^i$. The $G_4$ also conforms to this rule. 
	
	Add another edge for $v_3$ to the $G_5$ in Figure \ref{G5}, such as $(v_1,v_3)$ (or $(v_3,v_5)$), the new $G_5$ in Figure \ref{G5} (b) is built. The degree of $v_3$ becomes 3 and the maximum vertex degree of the new $G_5$ becomes 4. In the new $G_5$, there are two $P_5$s where $v_2$ and $v_4$ are the endpoints. The two $P^5$s are $(v_2,v_3,v_1,v_5,v_4)$ and $(v_2,v_5,v_1,v_3,v_4)$, and one of them will be the $OP^5$. If more edges are added to the $G_5$, the new $G_5$ will contain more $P_5$s where $v_2$ and $v_4$ are the endpoints. Among these $P^5$s, there must be one $OP^5$. 
	Moreover, the $G_5$ contains the other $OP^5$s which can be used to construct the other $OP^i$s in the $G_i$. 
	
	The maximum vertex degree of the $G_5$ is 4, and the minimum vertex degree is 3. If $i > 5$, each of the vertices in the $G_5$ may connect to the other vertices in the $G_i$ yet not in the $G_5$. If one vertex of degree 3, such as $v_2$, $v_3$, $v_4$ or $v_5$, does not connected to the other vertices not in the $G_5$ for building the $OP^i$s, the minimum vertex degree of the $G_i$ is 3. Moreover, if $v_1$ is not connected to the other vertices not in the $G_5$ for building the $OP^i$s, $v_1$ maintains the degree 4 in the $G_i$. If $v_1$ is connected to the other vertices not in the $G_5$ for building the $OP^i$s, the maximum vertex degree of $G_i$ becomes bigger than 4.  
\end{proof}

\begin{figure}
	\centering
	\includegraphics[width=3in,bb=0 0 300 260]{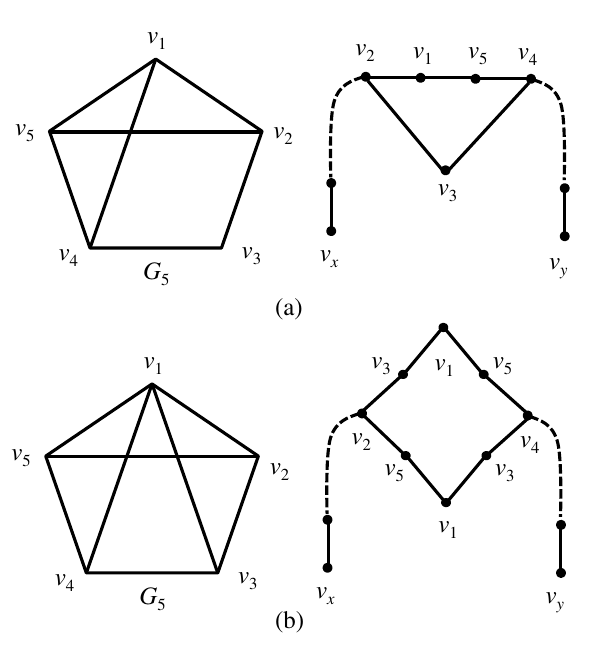}
	\caption{The two $G_5$s contained in the $G_i$ containing all edges in the ${{i}\choose{2}}$ $OP^i$s in $K_i$.}
	\label{G5}
\end{figure}

Given the ${i}\choose{2}$ $OP^i$s in certain $K_i$, the ${{i}\choose{2}}$ edges in the $K_i$ are not uniformly distributed to these $OP^i$s. Thus, the frequencies of edges are generally different in the frequency $K_i$. In general, an $OHC$ edge has the frequency much bigger than that of an ordinary edge in the average case.

\begin{theorem}
	\label{th01}
	Given a $K_i$ containing one $OHC$ and ${{i}\choose{2}}$ $OP^i$s, an $OHC$ edge is contained in more than $\frac{2(i-1)^2}{5}$ $OP^i$s, whereas an ordinary edge is contained in at most $\frac{(i-1)^2}{5(i-3)}$ $OP^i$s in the worst average case. 
\end{theorem}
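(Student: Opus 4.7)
The plan is to derive both bounds from a single counting identity: since each of the $\binom{i}{2}$ $OP^i$s in $K_i$ contains $i-1$ edges, summing frequencies over all edges in the $K_i$ yields the invariant total $\binom{i}{2}(i-1) = \frac{i(i-1)^2}{2}$. Splitting this total as $S_{OHC}+S_{ord}$ between the $i$ $OHC$ edges and the $\frac{i(i-3)}{2}$ ordinary edges, I aim to show $S_{OHC}>\frac{2i(i-1)^2}{5}$, from which the complementary bound $S_{ord}<\frac{i(i-1)^2}{10}$ follows by subtraction. Dividing by the respective edge counts then produces the two stated thresholds $\frac{2(i-1)^2}{5}$ and $\frac{(i-1)^2}{5(i-3)}$, so the two halves of the theorem are really two faces of the same counting inequality.

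To lower bound $S_{OHC}$, the first step is to isolate the $i$ ``$OHC$-subpath'' $OP^i$s: for each $OHC$ edge $e$, deleting $e$ from the $OHC$ leaves a Hamiltonian path on the $i$ vertices whose endpoints are precisely the endpoints of $e$; by optimality of the $OHC$ this path is the $OP^i$ for that endpoint-pair, and it consists of $i-1$ $OHC$ edges. These $i$ paths therefore contribute $i(i-1)$ to $S_{OHC}$ and nothing to $S_{ord}$, reducing the problem to controlling the remaining $\frac{i(i-3)}{2}$ $OP^i$s, whose endpoint-pairs correspond to ordinary edges of $K_i$.

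For this remainder I plan to combine Theorem~\ref{th001} with the vertex-wise identity that, at any fixed vertex, the sum of frequencies over its $i-1$ incident edges equals $(i-1)^2$, obtained by observing that the vertex is an endpoint of $i-1$ $OP^i$s (using one incident edge each) and an intermediate vertex of $\binom{i-1}{2}$ $OP^i$s (using two incident edges each). Since Theorem~\ref{th001} guarantees every vertex of $G_i$ has degree at least three, and exactly two of a vertex's $i-1$ incident $K_i$-edges are $OHC$ edges, at most $i-3$ incident ordinary edges can carry positive frequency. I would then argue that the two $OHC$ edges at each vertex jointly absorb a $\tfrac{4}{5}$-fraction of the local mass $(i-1)^2$, leaving at most $\tfrac{(i-1)^2}{5}$ to be distributed among the $i-3$ ordinary incident edges; summing over all vertices and correcting for the double counting of edges yields both stated averages.

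The main obstacle is pinning down the numerical constant $\tfrac{2}{5}$. The vertex-wise identity together with the baseline $i(i-1)$ from the $OHC$-subpath paths only forces an average per-$OHC$-edge frequency of $i-1$, which is well short of the target $\tfrac{2(i-1)^2}{5}$. Closing this gap requires a quantitative statement that ordinary edges appear rarely in the non-subpath $OP^i$s, which I would attempt through a swap argument: replacing an $OHC$-edge in some $OP^i$ by an adjacent ordinary edge must not decrease the path length, so ordinary-edge incidences in the non-subpath $OP^i$s can be bounded by comparison against nearby $OHC$-subpath configurations. Extracting the precise factor $\tfrac{1}{5}$ from this exchange bookkeeping, in combination with the minimum-degree-three constraint of Theorem~\ref{th001}, is the step I expect to be the most delicate part of the proof.
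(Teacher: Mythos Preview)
Your counting scaffolding---the global identity $\sum_e f(e)=\binom{i}{2}(i-1)$, the vertex-wise identity $\sum_{e\ni v}f(e)=(i-1)^2$, and the isolation of the $i$ ``OHC-subpath'' $OP^i$s---matches the paper exactly, and the paper also reduces the two conclusions to the single claim that at each vertex the two incident $OHC$ edges carry at least a $4/5$ share of the local mass $(i-1)^2$. So the architecture is right; the gap is precisely where you flagged it.

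The mechanism the paper actually uses to obtain the $4{:}1$ ratio is \emph{not} a swap/exchange argument but a recursive descent to $K_4$. First it argues, by comparing against the ``uniform'' benchmark where an $OHC$ edge appears in $2(i-2)$ paths and an ordinary edge in $i-3$, that $p(e\in OP^i)>2\,p(g\in OP^i)$ for an $OHC$ edge $e$ and any adjacent ordinary edge $g$. It then pushes this ratio down through $OP^{i-1},\ldots,OP^4$ (using that each $OP^k$ contained in an $OP^i$ is itself optimal), and at the base case reads off the explicit $K_4$ frequencies $\{5,5,3,3,1,1\}$ to certify the factor $2$. The crucial extra step is an \emph{independence} claim: because the $K_4$s witnessing $e$ versus $g_1$ and $e$ versus $g_2$ are distinct, the inequality $f(e)>2f(g_k)$ holds separately for every adjacent ordinary $g_k$, so in fact $f(e)>2\sum_{k}f(g_k)$. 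With two $OHC$ edges at $v$ this yields the $4{:}1$ split, i.e.\ $f(e_1)+f(e_2)>\tfrac{4}{5}(i-1)^2$, and hence both stated averages. Theorem~\ref{th001} is invoked only to cap an individual ordinary edge near $\tfrac{(i-1)^2}{5}$, not to drive the $4/5$ constant.

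Your proposed swap argument is a different route and, as you note, not yet quantitative; nothing in it currently forces more than the trivial $i(i-1)$ contribution to $S_{OHC}$. If you want to close the gap along the paper's lines, the pieces to supply are (i) the per-pair inequality $f(e)>2f(g)$ via the $K_4$ base case, and (ii) the argument that these per-pair inequalities aggregate additively over the $i-3$ ordinary edges at $v$.
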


\begin{proof}
	Given a $K_i$, it contains ${i}\choose{2}$ $OP^i$s and one $OHC$. The $i-1$ edges containing a vertex $v$ are considered. The $i-1$ edges include two $OHC$ edges  and $i-3$ ordinary edges. Firstly, we assume that the $i-1$ edges are uniformly contained in the $OP^i$s. In this case, each edge is contained in $i-1$ $OP^i$s. For the two $OHC$ edges containing $v$, each of them is contained in $i-1$ $OP^i$s in $OHC$ because there are $i$ $OP^i$s in $OHC$. Thus, the two $OHC$ edges will never be contained in any one of the other $\frac{i(i-3)}{2}$ $OP^i$s. 
	
	Since the $i-3$ ordinary edges containing $v$ are uniformly distributed in the  $\frac{i(i-3)}{2}$ $OP^i$s not in $OHC$, each pair of two ordinary edges is contained in the same number of $OP^i$s in which $v$ is one intermediate vertex and endpoint, respectively. There are ${{i-3}\choose{2}}$ pairs of the ordinary edges containing $v$. If each pair of the ordinary edges is contained in more than one $OP^i$ where $v$ is one intermediate vertex, the total number of $OP^i$s will be bigger than ${{i}\choose{2}}$. Thus, each pair of the ordinary edges is contained in at most one such $OP^i$. As each pair of the ordinary edges is contained in one $OP^i$ in which $v$ is one intermediate vertex, there are total ${{i-3}\choose{2}}$ such $OP^i$s each of which includes two ordinary edges containing $v$.
	Besides these $OP^i$s, there are $i-3$ more $OP^i$s which may contain the $i-3$ ordinary edges. In each of the $i-3$ $OP^i$s, $v$ is one endpoint, and each of the $OP^i$s includes one ordinary edge containing $v$. The $i-3$ ordinary edges are contained in at most ${{i-2}\choose{2}}$ $OP^i$s without considering the two $OHC$ edges. These $OP^i$s visit the $i-3$ ordinary edges $(i-3)^2$ times. Thus, each ordinary edge containing $v$ is contained in $i-3$ $OP^i$s rather than $i-1$ $OP^i$s on average. It indicates that each ordinary edge can not be contained in $i-1$ $OP^i$s due to the lack of the two $OHC$ edges containing $v$. 
	
	In addition, the ${{i}\choose{2}}$ $OP^i$s visiting $v$ can not be fully built if the two $OHC$ edges are not contained in some $OP^i$s not in $OHC$. Besides the $i$ $OP^i$s in $OHC$ and  ${{i-2}\choose{2}}$ $OP^i$s only including the ordinary edges, the remainder $i-3$ $OP^i$s must include the two $OHC$ edges. Moreover, $v$ is one intermediate vertex in each of the $i-3$ $OP^i$s. On average, if the two $OHC$ edges are uniformly contained in the $i-3$ $OP^i$s while the $i-3$ ordinary edges also do, each $OHC$ edge is included in at least $\frac{3i-5}{2}$ $OP^i$s in the worst average case, whereas each ordinary edge is contained in at most $i-2$ $OP^i$s. It implies that an ordinary edge will be contained in smaller than $i-1$ $OP^i$s in the best average case. 
	
	In the worst average case, an $OHC$ edge is contained in the $OP^i$s with the bigger probability than an ordinary edge.  Because each ordinary edge is not contained in $OHC$, we assume that the two $OHC$ edges and $i-3$ ordinary edges containing $v$ are uniformly contained in the $\frac{i(i-3)}{2}$ $OP^i$s excluding from $OHC$. In this case, an $OHC$ edge is contained in $2i - 4$ $OP^i$s in the worst average case, whereas an ordinary edge is contained in at most $i - 3$ $OP^i$s in the best average case. In fact, the $i-3$ ordinary edges are not contained in the $\frac{i(i-3)}{2}$ $OP^i$s with the equal probability. Otherwise, the number of $OP^i$s will be bigger than ${{i}\choose{2}}$ if $i > 4$. If $i > 4$, most ordinary edges are contained in smaller than $i-3$ $OP^i$s, and some ordinary edges are contained in none of the $OP^i$s. Since the ordinary edges with a frequency smaller than $i-3$ cannot be used to construct the $OP^i$s at the assumed positions, they must be replaced by the $OHC$ edges and some other ordinary edges. In this case, the total frequency of the $OHC$ edges will increase while that of the ordinary edges will decrease. Thus, an $OHC$ edge will be contained in greater than $2i-4$ $OP^i$s, whereas an ordinary edge will be contained in smaller than $i-3$ $OP^i$s on average. 
	
	There are $i$ $OHC$ edges and $\frac{i(i-3)}{2}$ ordinary edges in the $K_i$. Given an $OHC$ edge $e$ and ordinary edge $g$ containing $v$ in the  $K_i$, the $OP^i$s containing $e$ will be more than two times of those $OP^i$s containing $g$ because $\frac{2(i-2)}{i-3} > 2$ holds. It means that the frequency of $e$ is bigger than two times of that of $g$ in the frequency $K_i$, i.e., $f(e) > 2f(g)$. As the probability that $e$ or $g$ is contained in an $OP^i$ is considered, $p(e\in OP^i) > 2p(g\in OP^i)$ holds for $e$ and $g$ where $p(e\in OP^i)$ and $p(g\in OP^i)$ denote the probability that $e$ and $g$ are contained in an $OP^i$, respectively. 
	This result holds for an $OHC$ edge and an associated ordinary edge in the worst average case. 
	
	Each edge is contained in $i-2$ $K_{i-1}$s each of which contains ${{i-1}\choose{2}}$ $OP^{i-1}$s. Each edge in the $K_i$ is contained in the specific number of $OP^{i-1}$s, respectively. Moreover, the $OP^i$s only contain the $OP^{i-1}$s in the $K_i$. Due to the variations of $TSP$, each $OP^i$s may contain any one $OP^{i-1}$ containing $v$. In the average case, each $OP^{i-1}$ containing $v$ is contained in the $OP^i$s with the equal probability. Since $p(e\in OP^i) > 2p(g\in OP^i)$ exists based on the frequency $K_i$, $p(e\in OP^{i-1}) > 2p(g\in OP^{i-1})$ also holds, i.e., the number of $OP^{i-1}$s containing $e$ will be more than two times of those containing $g$. If $g$ appears once in one $OP^{i-1}$, $e$ will appear more than twice in the $OP^{i-1}$s. In the same manner, $p(e\in OP^{i-k}) > 2p(g\in OP^{i-k})$ $(k\in[1,i-4])$ exists, and the number of $OP^{i-k}$s including $e$ will be more than two times of those containing $g$ until $k = i-4$. As $e$ and $g$ are contained in the $K_4$s, the number of $OP^4$s containing $e$ will be more than two times of those containing $g$. Because $p(e\in OP^4) > 2p(g\in OP^4)$, $f(e) > 2f(g)$ holds in the frequency $K_4$s containing $e$ and $g$ on average. In a frequency $K_4$, the frequency of an edge is 1, 3 or 5, see the six frequency $K_4$s in Figure \ref{quadrilaterals}. Because $f(e) > 2f(g)$, $f(g)$ = 1 exists in a frequency $K_4$ containing $e$ and $g$. Thus, $e$ will have the frequency $f(e)=3$ or 5. In this case, $e$ is the $OHC$ edge of the $K_4$ whereas $g$ is an ordinary edge. Since $g$ is arbitrary, it indicates that $e$ is the $OHC$ edge of the $K_4$s containing it. 
	
	Not all the $OP^{i-k}$s in each $K_{i-k}$ are contained in the $OP^i$s. In fact, only part of the $OP^{i-k}$s in a small number of $K_{i-k}$s are contained in the $OP^i$s. For example, the $OHC$ in the $K_i$ contains $i$ $OP^4$s, and each of the $OP^4$s is contained in one $K_4$ where there are six $OP^4$s, respectively. In the extreme case, all the $OP^4$s in the ${{i}\choose{2}}$ $OP^i$s are different. There are at most $\frac{i(i^2-6i+11)}{2}$ $OP^4$s. However, the $K_i$ includes $6{{i}\choose{4}}$ $OP^4$s. It indicates that most $OP^4$s are not contained in the $OP^i$s. Moreover, some $OP^4$s are contained in a number of $OP^i$s. Thus, the number of different $OP^4$s in the $OP^i$s will be much smaller than $\frac{i(i^2-6i+11)}{2}$. To approve $p(e\in OP^4) > 2p(g\in OP^4)$ and $p(e\in OP^i) > 2p(g\in OP^i)$, the $OP^4$s in the $K_4$s containing $e$ will be used to construct the $OP^i$s. In most $K_4$s not containing $e$, $f(e) = 0$ exists in the corresponding frequency $K_4$s. It does not meet the condition $p(e\in OP^4) > 2p(g\in OP^4)$. Thus, the $OP^4$s in these $K_4$s not including $e$ are seldom used to build the $OP^i$s. It means that most $OP^4$s in the $OP^i$s will contain 2, 3 or at least one $e$. Every $e$ and $g$ are contained in $i-3$ $K_4$s. Given such one $K_4$, if the $OP^4$ including $g$ in the $K_4$ is contained in one $OP^i$, the $OP^4$ and other $OP^4$s containing $e$ in the $K_4$ will be contained in more than two $OP^i$s in the average case because of $p(e\in OP^i) > 2p(g\in OP^i)$. In fact, as the $OP^4$s in the $K_4$ are contained in the $OP^i$s with the equal probability, there will be more than three $OP^i$s containing $e$ because there are at least three $OP^4$s containing $e$ in the $K_4$ where there is only one $OP^4$ contains $g$. If $e$ and $g$ are contained in the same $K_{i-k}$, the $OP^{i-k}$s containing $e$ will be more than two times of those containing $g$ on average. 
	
	
	It mentions that there are two $OHC$ edges $e$ and $i-3$ ordinary edges $g$ containing $v$. If one $g$ is contained in an $OP^i$, each of the two $e$s will be contained in more than two $OP^i$s. Moreover, it is the $OP^4$s containing $e$ and $g$ in the corresponding $K_4$s that are also contained in the $OP^i$s. 
	Given two $g$s, they and $e$ are contained in two $K_4$s, respectively, because $f(g) = 1$ appears once in a frequency $K_4$. The $OP^4$s in the two $K_4$s are different from each other. As the $OP^4$s including the two $g$s are contained in the $OP^i$s, the $OP^4$s including $e$ in the two $K_4$s are contained in the $OP^i$s with respect to the two $g$s, respectively. Thus, $p(e\in OP^i) > 2p(g\in OP^i)$ occurs independently for the $e$ and each $g$. It indicates that the $f(e)$ will be bigger than two times of that of all adjacent ordinary edges $g$ contained in the $OP^i$s, i.e. $f(e) > 2\sum_{k=1}^{i-3}f(g_k)$ where $f(g_k)$ is the frequency of the ordinary edge $g_k$. 
	Since there are two $e$s containing $v$, the total frequency of the two $e$s containing $v$ is more than four times of that of the $i-3$ $g$s. 
	
	Given another ordinary edge $g_1$ containing $v$, $g$ will replace $g_1$ for building the $OP^i$s if $g$ has the frequency $f(g) > i-3$ whereas $g_1$ has the smaller frequency $f(g_1) < i-3$ in the frequency $K_i$. As $g$ replaces $g_1$ for building the $OP^i$s, it is the $OP^{i-1}$s containing $g$ that replace the $OP^{i-1}$s containing $g_1$ in the $OP^i$s. 
	Since $p(e\in OP^{i-1}) > 2p(g\in OP^{i-1})$ exists, the $OP^{i-1}$s containing $e$ will also replace those containing $g_1$ for building the $OP^i$s. Moreover, more than two times of the $OP^{i-1}$s containing $e$ will replace the $OP^{i-1}$s containing $g_1$ to build the $OP^i$s. Thus, $f(e) > 2f(g)$ still holds in the frequency $K_i$, and $p(e\in OP^i) > 2p(g\in OP^i)$ always holds no matter how $f(g)$ grows. As most ordinary edges are largely replaced by the $OHC$ edges and some other ordinary edges in the $OP^i$s, the frequencies of most ordinary edges will be very small, and the $OHC$ edges and some ordinary edges will have the big frequencies. Whatever the frequencies of the ordinary edges change, the total frequency is smaller than half frequency of each of the adjacent $OHC$ edges on average. 
	
	Theorem \ref{th001} says that the minimum vertex degree of $G_i$ is three where $G_i$ contains all edges in the $OP^i$s. In this case, the maximum frequency of $g$ containing $v$ will be near $\frac{(i-1)^2}{5}$, and the total frequency of the two adjacent $OHC$ edges will be bigger than $\frac{4(i-1)^2}{5}$. $\frac{(i-1)^2}{5}$ can be taken as one upper frequency bound for the ordinary edges, and $\frac{4(i-1)^2}{5}$ is one lower frequency bound for two adjacent $OHC$ edges in the frequency $K_i$. The total probability $\frac{8(i-1)}{5i}$ of two adjacent $OHC$ edges still increases according to $i$. It indicates that the total probability of two $OHC$ edges will increase according to $i$. If the $f(e) = \frac{3i-5}{2}$ and $f(g) = i-2$ is considered, $f(e) > 1.5f(g)$ can be derived. In this case, the total frequency of the two $OHC$ edges will be bigger than $\frac{3(i-1)^2}{4}$, and that of the $i-3$ ordinary edges is smaller than $\frac{(i-1)^2}{4}$.
	
	
	The ordinary edges and associated $OHC$ edges are used to construct the $OP^i$s under the frequency constraints between them. Since the frequency of each edge is enumerated from the ${{i}\choose{2}}$ $OP^i$s, the $OHC$ edges and ordinary edges also conform to the structure constraints of the $OP^i$s. Each vertex is visited once by every $OP^i$. Moreover, each vertex is contained in at most two different edges in every $OP^i$, and the $i-1$ edges in each $OP^i$ are  different from each other, respectively. 
	If an $OP^i$ contains one ordinary edge $g$, it will contain four $OHC$ edges on average because each vertex is contained in two $e$s and $p(e\in OP^i) > 2p(g\in OP^i)$ holds independently for each $e$ and every adjacent $g$. In this case, the total frequency of the $OHC$ edges will be more than four times of that of all the ordinary edges based on the $OP^i$s.  
	In the average case, an $OP^i$ contains at most $\frac{i-1}{5}$ ordinary edges and at least $\frac{4(i-1)}{5}$ $OHC$ edges. Considering the ${{i}\choose{2}}$ $OP^i$s and $i$ $OHC$ edges in the $K_i$, an $OHC$ edge will be contained in more than $\frac{2(i-1)^2}{5}$ $OP^i$s on average. On the other hand, an ordinary edge will be contained in less than $\frac{(i-1)^2}{5(i-3)}$ $OP^i$s. 
\end{proof}	

Theorem \ref{th01} implies that an $OHC$ edge in a $K_i$ is contained in more  $OP^i$s than an ordinary edge. In a $K_i$, the $OP^i$s contain the $OP^k$s ($k\in[4,i]$) having a smaller number of vertices. Since the edges are not uniformly distributed in the $OP^k$s, the $OP^k$s  contain different number of $OHC$ edges and ordinary edges. On average, an $OHC$ edge $e$ is contained in the $OP^k$s with a bigger probability than an ordinary edge $g$ for $p(e\in OP^k) > 2p(g\in OP^k)$. Based on Theorem \ref{th01}, the $OP^k$s including more $OHC$ edges will be contained in the $OP^i$s with the bigger probability. Moreover, most $OP^k$s in the $OP^i$s are contained in the $K_k$s including the $OHC$ edges. On the other hand, the $OP^k$s including less $OHC$ edges or more ordinary edges will be seldom contained in the $OP^i$s. %

If the ordinary edges and $OHC$ edges containing each vertex are uniformly distributed in many $OP^i$s, the upper frequency bound of the ordinary edges is $i-3$, and the lower frequency bound of the $OHC$ edges is $2(i-2)$. In this case, the number of $OP^i$s will be bigger than ${{i}\choose{2}}$. In the extreme case, all $HC$s have the equal distance, and there will be $\frac{(i-1)!}{2}$ $OHC$s in the $K_i$. 
If there is one $OHC$ and ${{i}\choose{2}}$ $OP^i$s in the $K_i$, the $OHC$ edges and ordinary edges will not be uniformly distributed in the $OP^i$s. Moreover, the frequencies of $OHC$ edges and those of ordinary edges will have significant difference in the frequency $K_i$. For most ordinary edges, each will have certain frequency smaller than $i-3$. In the average case, as the frequency is discussed for an ordinary edge  according to the frequency $K_i$s, the frequency of smaller than $i-3$ will be considered.


After the frequency $K_i$ is computed with the ${{i}\choose{2}}$ $OP^i$s, the average frequency of all edges is $i-1$. The lower frequency bound for $OHC$ edges is concerned to separate $OHC$ edges from ordinary edges. Theorem \ref{th01} has given one lower frequency bound $2(i-2)$  for $OHC$ edges. However, this frequency bound is too small comparing to the upper frequency bound $\frac{(i-1)^2}{5}$ for ordinary edges. 
In the next, the bigger lower frequency bound for $OHC$ edges related to $K_i$s will be proven for the worst average case. 

\begin{theorem}
	\label{th1}
	Given a $K_i$ $($$i\geq 4$$)$ containing ${{i}\choose{2}}$ $OP^i$s and one $OHC$, the frequency of an $OHC$ edge is bigger than $\frac{1}{2}{{i}\choose{2}}$ in the corresponding frequency $K_i$ in the worst average case. 
\end{theorem}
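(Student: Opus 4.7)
The plan is to leverage Theorem~\ref{th01} via a double-counting argument at a single vertex and then average over the $i$ $OHC$ edges of the given $K_i$. First, I would fix any vertex $v$ of $K_i$ and tally the total edge-incidence contributed at $v$ by the ${{i}\choose{2}}$ paths $OP^i$. The vertex $v$ serves as an endpoint of exactly $i-1$ paths (each contributing one edge incident to $v$) and as an intermediate vertex in the remaining ${{i-1}\choose{2}}$ paths (each contributing two edges incident to $v$). Summing gives $(i-1)+2{{i-1}\choose{2}}=(i-1)^2$, so the sum of frequencies over the $i-1$ edges meeting $v$ equals $(i-1)^2$.

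Next, I would split the edges at $v$ into the two $OHC$ edges $e_1,e_2$ and the $i-3$ ordinary edges $g_1,\ldots,g_{i-3}$, and write $S_v=\sum_{k=1}^{i-3} f(g_k)$. Theorem~\ref{th01} supplies the inequality $f(e_j)>2S_v$ for $j=1,2$ in the average case. Combined with the identity $f(e_1)+f(e_2)+S_v=(i-1)^2$, this forces $5S_v<(i-1)^2$, hence
\[
f(e_1)+f(e_2)>\frac{4(i-1)^2}{5}.
\]
Summing this over all $i$ vertices of $K_i$ and noting that every $OHC$ edge is counted at both endpoints yields $2\sum_{e\in OHC} f(e)>\tfrac{4i(i-1)^2}{5}$, so the average frequency of an $OHC$ edge exceeds $\tfrac{2(i-1)^2}{5}$.

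Finally, I would compare this bound with the target $\tfrac{1}{2}{{i}\choose{2}}=\tfrac{i(i-1)}{4}$: the inequality $\tfrac{2(i-1)^2}{5}>\tfrac{i(i-1)}{4}$ reduces to $8(i-1)>5i$, i.e.\ $i>8/3$, which holds for every $i\geq 4$, closing the theorem in the averaged form.

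The step I expect to be the main obstacle is the interpretation of the phrase \emph{worst average case}. The averaging over the $i$ $OHC$ edges of $K_i$ controls the mean frequency but not the frequency of any specific $OHC$ edge; to promote the average bound to a per-edge statement one would need to combine the local identity $f(e_1)+f(e_2)>4(i-1)^2/5$ at every vertex with the absolute floor $f(e)\geq i-1$ contributed by the $i$ Hamiltonian paths lying inside $OHC$, and then verify that no admissible allocation of ordinary-edge frequencies (subject to the Theorem~\ref{th01} constraints) can drive any single $OHC$ frequency below $\tfrac{i(i-1)}{4}$. Turning the averaged estimate into a genuinely uniform lower bound is where the most delicate book-keeping will be required.
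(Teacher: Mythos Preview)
Your averaged argument is sound: the local identity $(i-1)^2$ at each vertex combined with the inequality $f(e)>2S_v$ (which is exactly the intermediate claim in the proof of Theorem~\ref{th01}) does yield $f(e_1)+f(e_2)>\tfrac{4(i-1)^2}{5}$, and the arithmetic check $\tfrac{2(i-1)^2}{5}>\tfrac{i(i-1)}{4}$ for $i\ge 4$ is correct. In effect you are observing that the average bound already contained in Theorem~\ref{th01} exceeds the target $\tfrac{1}{2}\binom{i}{2}$, so the averaged form of Theorem~\ref{th1} is an immediate corollary.

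This is a genuinely different route from the paper. The paper does \emph{not} argue by comparison with Theorem~\ref{th01}; instead it runs an induction on $i$. The base case $i=4$ is handled by the explicit list of frequency $K_4$s in Appendix~A, where every $OHC$ edge has frequency $3$ or $5$, hence $p_4(e\in OP^4)\ge\tfrac12$. For the inductive step the paper adjoins a vertex to $K_i$ to form $K_{i+1}$, views the newly created $K_i$s as a sample from the $K_i$s inside $K_{i+1}$, and argues (in the average case) that $p_{i+1}(e\in OP^i)\ge p_i(e\in OP^i)$; it then converts this to a statement about $OP^{i+1}$s, obtaining the recursive inequality $p_{i+1}(e\in OP^{i+1})\ge\bigl[1-\tfrac{2\delta}{i(i-1)}\bigr]p_i(e\in OP^i)$ (formula~(\ref{F1})), and its companion upper bound (formula~(\ref{F2})). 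Your route is shorter and more transparent for the averaged claim; the paper's route is longer but its by-products---the inequalities~(\ref{F1}) and~(\ref{F2}) governing how $p_i(e)$ can change from $i$ to $i+1$---are used repeatedly throughout Sections~\ref{sec4} and~\ref{sec5} (e.g.\ in Theorems~\ref{th3}, \ref{th33}, \ref{th30}) and in the experimental comparisons (Table~\ref{pchange}).

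You are right to flag the per-edge versus average issue. The paper's own argument is also heuristic on this point: the sampling step and the passage from $OP^i$ to $OP^{i+1}$ are both ``in the average case'', and the paper acknowledges after the proof that in the worst case one only gets $p_i(e)\ge\tfrac{7}{18}$, not $\tfrac12$. So your identification of the obstacle is accurate, but note that the paper does not resolve it either---it simply adopts the phrase ``worst average case'' to signal that the bound is meant in this averaged sense.
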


\begin{proof}
	Let $e=(v_1,v_2)$ denote an $OHC$ edge in certain $K_i$ ($i\geq 4$), and $p_i(e\in OP^i)$ denotes the probability that $e$ is contained in an $OP^i$ in the $K_i$. 
	
	As $i = 4$, there are six frequency $K_4$s derived for the $K_4$s, see Appendix $A$. In each frequency $K_4$, the frequency of $e\in OHC$ is either 3 or 5. This indicates that $e$ is contained in at least three $OP^4$s in each $K_4$. Since there are six $OP^4$s in every $K_4$, the probability that $e$ is contained in an $OP^4$ is equal to or bigger than $\frac{1}{2}$, i.e., $p_4(e\in OP^4) \geq \frac{1}{2}$. The $K_i$ contains ${{i}\choose{2}}$ $OP^i$s. Based on the $p_i(e\in OP^i)$ for $e\in OHC$ in the $K_i$, the number of $OP^i$s containing $e$ is 
	$M_i={{i}\choose{2}}p_i(e\in OP^i)\geq M_i-\delta p_i(e\in OP^i)$ where $\delta\in [0,1]$. Given a number $i > 4$, we assume that $p_i(e\in OP^i) \geq \frac{1}{2}$ still holds. Next, we shall prove the same probability inequality for an $OHC$ edge related to one $K_{i+1}$ generated from the $K_i$ by adding one more vertex. 
	
	Add another vertex to the $K_i$, the $K_{i+1}$ is obtained as the new vertex is connected to each vertex in the $K_i$. Firstly, $e$ is considered as one $OHC$ edge in the $K_{i+1}$. The $OHC$ related to the $K_{i+1}$ contains $i+1$ $OP^{i+1}$s, and $i$ of them contain $e$. Moreover, $e$ is contained in $i-1$ $K_i$s in the $K_{i+1}$. Theorem \ref{th01} indicates that the frequency $f(e) \geq 2i-2$ exist in the  frequency $K_{i+1}$. For each of the $i$ $OP^i$s containing $e$ in $OHC$ related to the $K_{i+1}$, they are contained in two $OP^{i+1}$s in $OHC$, respectively. For each of the other $OP^i$s containing $e$ in the $OP^{i+1}$s not in $OHC$, it is only contained in one $OP^{i+1}$, respectively, and these $OP^i$s are different from each other. It is known that $e$ is contained in $2i-4$ $OP^i$s in the $K_i$ in the worst average case. Therefore, the number of $OP^i$s containing $e$ increases from the $K_i$ to the $K_{i+1}$.  
	In the $K_{i+1}$, $e$ is contained in $i-1$ $K_i$s in which there are $(i-1){{i}\choose{2}}$ $OP^i$s. Let $p_{i+1}(e\in OP^i)$ denote the probability that $e$ is contained in such an $OP^i$. The number of $OP^i$s containing $e$ is  $M_{i+1}=(i-1){{i}\choose{2}}p_{i+1}(e\in OP^i)$, and $M_{i+1} > M_i$ exists. 
	
	From the $K_i$ to the $K_{i+1}$, $i-2$ more $K_i$s containing $e$ are generated, and they contain $(i-2){{i}\choose{2}}$ new $OP^i$s. The $(i-2){{i}\choose{2}}$ new $OP^i$s are considered as one sample drawn from the $(i-1){{i}\choose{2}}$ $OP^i$s in the  $K_{i+1}$. Let $p_{i+1}'(e\in OP^i) > 0$ denote the probability that $e$ is contained in a new $OP^i$ for $M_{i+1} > M_i$. There will be $\Delta M=(i-2){{i}\choose{2}}p_{i+1}'(e\in OP^i)$ more $OP^i$s containing $e$.
	Because $M_i-\delta p_i(e\in OP^i)+\Delta M \leq M_{i+1}$ holds, $\left[1-\frac{\delta}{{{i}\choose{2}}}\right]p_i(e\in OP^i) + (i-2)p_{i+1}'(e\in OP^i) \leq (i-1)p_{i+1}(e\in OP^i)$ is derived. In the average case, $p_{i+1}'(e\in OP^i)$ and $p_{i+1}(e\in OP^i)$ will have no difference in values. Thus, $p_{i+1}(e\in OP^i) \geq \left[1-\frac{\delta}{{{i}\choose{2}}}\right]p_i(e\in OP^i)$ will hold since $p_{i+1}'(e\in OP^i)=p_{i+1}(e\in OP^i)$. Let $\delta$ be as small as possible, $\frac{\delta}{{{i}\choose{2}}}$ tends to zero, and $p_{i+1}(e\in OP^i) \geq p_i(e\in OP^i)$ holds. It implies that $e\in OHC$ is contained in more percentage of the $OP^i$s from the $K_i$ to the $K_{i+1}$. It means that more percentage of ordinary edges are replaced by $e$ for constructing the $OP^i$s from the $K_i$ to $K_{i+1}$. Because $p_i(e\in OP^i)\geq \frac{1}{2}$, more than half of the $OP^i$s contain $e$ with respect to the $K_i$s containing $e$. On average, $f(e) \geq \frac{1}{2}{{i}\choose{2}}$ exists in each frequency $K_i$ containing $e$, i.e., more than half of the $OP^i$s visiting $v_1$ and $v_2$ contain $e$. The inequality $p_i(e\in OP^i) \leq p_{i+1}(e\in OP^i)$ can be expanded to $p_i(e\in OP^i) \leq p_{i+k}(e\in OP^i)$ for $k\geq 1$. It implies that $e$ will be contained in more percentage of the $OP^i$s in the  $K_{i+k}$ containing more vertices if $e$ is one $OHC$ edge in the expanded graphs. The proof is the same. 
	
	
	Every $OP^{i+1}$ in the $K_{i+1}$ visits $v_1$ and $v_2$ in $e$, and each $OP^{i+1}$ contains two $OP^i$s visiting $v_1$ and/or $v_2$. If the two $OP^i$s in an $OP^{i+1}$ visit both $v_1$ and $v_2$, the probability that $e$ is not contained in the $OP^{i+1}$ is  $1-p_{i+1}(e\in OP^i)\leq \frac{1}{2}$ (i.e., it requires that one of the two $OP^i$s in the $OP^{i+1}$ does not contain $e$). Except the $OP^{i+1}$ in which $v_1$ and $v_2$ are the endpoints, each of the other $OP^{i+1}$s contains one or two $OP^i$s visiting both $v_1$ and $v_2$. Given an $OP^i$ visiting $v_1$ and $v_2$, $e$ is contained in the $OP^i$ with the probability $p_{i+1}(e\in OP^i)$. Thus, the probability that $e$ is included in each of these $OP^{i+1}$s is at least $p_{i+1}(e\in OP^i)\geq \frac{1}{2}$. 
	It mentions that each $OP^{i+1}$ is  computed with one $OP^i$ by adding one more vertex. As ${{i+1}\choose{2}}$ $OP^i$s vising $v_1$ and $v_2$ are chosen to compute the $OP^{i+1}$s, the number of the $OP^{i+1}$s containing $e$ will be bigger than  $L=\left[{{i+1}\choose{2}}-1\right]p_{i+1}(e\in OP^i)$. The probability that $e$ is contained in an $OP^{i+1}$ will be $p_{i+1}(e\in OP^{i+1}) \geq \frac{L}{{{i+1}\choose{2}}} = \left[1 -\frac{2}{i(i+1)}\right]p_ {i+1}(e\in OP^i) > \left[1 -\frac{2}{i(i+1)}\right]\left[1-\frac{2}{i(i-1)}\right]p_i(e\in OP^i)=\left(1 -\frac{4}{i^2}\right)p_i(e\in OP^i) \geq \frac{1}{2}-\frac{2}{i^2}$ where let $\delta = 1$ and $p_i(e\in OP^i) = \frac{1}{2}$ in the worst case. As $i$ is small, $p_{i+1}(e\in OP^{i+1})$ may be smaller than $\frac{1}{2}$ but bigger than 0.375 as $i = 4$. 
	
	It mentions that the $OP^{i+1}$ in which $v_1$ and $v_2$ are the endpoints is neglected as $p_{i+1}(e\in OP^{i+1})$ is derived. If this $OP^{i+1}$ is taken into account, $p_{i+1}(e\in OP^i)$ is derived as formula (\ref{F1}). In the worst case of $\delta = 1$, $p_{i+1}(e\in OP^{i+1}) > \left[1-\frac{2}{i(i-1)}\right]p_i(e\in OP^i)$ is derived for $e$. If let $\delta = 0$, $p_{i+1}(e\in OP^{i+1})\geq \frac{1}{2}$ exists. It indicates that there will be bigger than   $\frac{1}{2}{{i+1}\choose{2}}$ $OP^{i+1}$s containing $e$ in the $K_{i+1}$, and the frequency $f(e) \geq \frac{1}{2}{{i+1}\choose{2}}$ holds in the frequency $K_{i+1}$. Based on $p_{i+1}(e\in OP^{i+1}) > \left[1-\frac{2}{i(i-1)}\right]p_i(e\in OP^i)$, $p_i(e\in OP^i)$ is allowed to have the slight decrement or it rises from the $K_i$ to the $K_{i+1}$. 
	
	\begin{equation}
		p_{i+1}(e\in OP^{i+1}) \geq \left[1-\frac{2\delta}{i(i-1)}\right]p_i(e\in OP^i),  \delta\in[0,1]
		\label{F1}
	\end{equation}
	
	If $p_i(e\in OP^i)$ increases from the $K_i$ to the $K_{i+1}$, the probability increment is limited with respect to $p_i(e\in OP^i) \leq 1$. One can let $M_i + \delta p_i(e\in OP^i) + \Delta M \geq M_{i+1}$ and $L={{i+1}\choose{2}}p_{i+1}(e\in OP^i)$. If the $OP^i$s are contained in the $OP^{i+1}$s with the equal probability, the relations between $p_{i+1}(g\in OP^{i+1})$ and $p_i(e\in OP^i)$ is given as formula (\ref{F2}).	Let $\delta = 1$, $p_{i+1}(e\in OP^{i+1}) < \left[1 + \frac{2}{i(i-1)}\right]p_i(e\in OP^i)$ is  derived as well as  $\frac{p_{i+1}(e\in OP^{i+1})}{p_i(e\in OP^i)}\in\left(1-\frac{2}{i(i-1)},1+\frac{2}{i(i-1)}\right)$. 
	\begin{equation}
		p_{i+1}(e\in OP^{i+1}) \leq \left[1 + \frac{2\delta}{i(i-1)}\right]p_i(e\in OP^i), \delta\in [0,1]
		\label{F2}
	\end{equation}

	Whether $p_i(e\in OP^i)$ decreases or increases, the decrement or increment will become smaller according to $i$. It mentions that $p_{i+1}(e\in OP^{i+1}) < \left[1+\frac{2}{i(i-1)}\right]p_i(e\in OP^i)$ also holds for an ordinary edge $g$ even if $\Delta M = 0$. In fact, the $OP^i$s will not be contained in the $OP^{i+1}$s with the equal probability. Theorem \ref{th01} indicates that an $OHC$ edge is contained in more number of $OP^i$s than an ordinary edge in the $K_i$.  In similarity, the $OP^i$s containing more $OHC$ edges or less ordinary edges in the $K_{i+1}$ will be contained in the $OP^{i+1}$s with the bigger probability. Thus, $p_{i+1}(e\in OP^{i+1}) > \left[1-\frac{2}{i(i-1)}\right]p_i(e\in OP^i)$ will hold for $e$ if $p_{i+1}(e\in OP^{i+1}) \leq p_i(e\in OP^i)$ occurs. Moreover, $p_{i+1}(e\in OP^{i+1}) \geq \left[1+\frac{2}{i(i-1)}\right]p_i(e\in OP^i)$ will usually appear if $p_{i+1}(e\in OP^{i+1}) \geq p_i(e\in OP^i)$ occurs. On the other hand, for an ordinary edge $g$ in the $K_i$ and $K_{i+1}$, if $p_{i+1}(g\in OP^{i+1}) \leq p_i(g\in OP^i)$ happens from $K_i$ to $K_{i+1}$, $p_{i+1}(g\in OP^{i+1}) \leq \left[1 - \frac{2}{i(i-1)}\right]p_i(g\in OP^i)$ will appear. If $p_{i+1}(g\in OP^{i+1}) \geq p_i(g\in OP^i)$ happens, $p_{i+1}(g\in OP^{i+1}) \leq \left[1 + \frac{2}{i(i-1)}\right]p_i(g\in OP^i)$ will appear in most cases. 
			
	Given an ordinary edge $g=(u_1,u_2)$ in the $K_i$ and $K_{i+1}$, the average case is  analyzed for comparison. It mentions that $g$ may have any small frequency in the frequency $K_i$ and $K_{i+1}$ based on Theorem \ref{th01}. In the best average case, $g$ is contained in $i-1$ $OP^i$s and $i$ $OP^{i+1}$s in the $K_i$ and $K_{i+1}$, respectively. In this case, $p_i(g\in OP^i) = \frac{i-1}{{{i}\choose{2}}}$ and $p_{i+1}(g\in OP^{i+1}) = \frac{i}{{{i+1}\choose{2}}}$ are computed, and $p_{i+1}(g\in OP^{i+1}) = \frac{i}{i+1}p_i(g\in OP^i)$ is derived. 
	In fact, as $p_{i+1}(e\in OP^i) \geq p_i(e\in OP^i)$ exists, $p_{i+1}(g\in OP^i) \leq p_i(g\in OP^i)$ will occur from the $K_i$ to $K_{i+1}$. Thus, $p_{i+1}(g\in OP^{i+1})$ becomes smaller than $p_i(g\in OP^i)$ on average. Based on the frequency $K_4$s, $p_4(g\in OP^4) = \frac{1}{{{4}\choose{2}}}$ exists because $g$ is contained in one $OP^4$ and there are six $OP^4$s in each $K_4$. If we assume that $p_i(g\in OP^i) = \frac{1}{{{i}\choose{2}}}$ holds according to the $K_i$, $p_{i+1}(g\in OP^{i+1}) = \frac{i}{i+1}p_i(g\in OP^i) = \frac{1}{{{i+1}\choose{2}}}\times \frac{i}{i-1}$ is derived. If $i$ rises, $g$ maintains the frequency 1 or 2 from the frequency $K_i$ to frequency $K_{i+1}$ if it is always an ordinary edge. It implies that $g$ has the frequency 1 or 2 in the frequency $K_i$ on average. If all ordinary edges has the frequency of 1, an $OHC$ edge has the expected frequency  ${{i-1}\choose{2}} + 1$. If each ordinary edge has the frequency of 2, an $OHC$ edge has the expected frequency  $\frac{i^2-4i+7}{2}$.

	If $e\in OHC$ in the $K_i$ is substituted with another edge $e'\in OHC$ in the $K_{i+1}$, $p_{i+1}(e'\in OP^{i+1}) > \left[1 -\frac{2}{i(i-1)}\right]p_i(e\in OP^i)$ still holds. Otherwise, one can always find a sequence of $k$ $OHC$ edges $e'_k$ in the corresponding $K_{i+k}$s which are generated by adding $k\geq1$ vertices to the $K_i$ one by one.  $p_{i+k}(e'_k\in OP^{i+k})$ decreases fast according to $k$ until $p_{i+k}(e'_k\in OP^{i+k}){{i+k}\choose{2}} < 2(i+k)-4$ occurs. This result is contradict to Theorem \ref{th01} for the worst average case. Since $p_i(e\in OP^i) \geq \frac{1}{2}$ exists, $p_{i+1}(e'\in OP^{i+1})\geq \frac{1}{2}$ holds. It indicates that the frequency $f(e')\geq \frac{1}{2}{{i+1}\choose{2}}$ holds in the frequency $K_{i+1}$. In fact, either $e'$ is contained in the $K_i$ or not, $p_i(e'\in OP^i) \geq p_i(e\in OP^i)$ exists with respect to the $i-1$ $K_i$s containing $e$ and $e'$, respectively. Although $e'$ is not contained in $OHC$ in the $K_i$, $e$ will be replaced by $e'$ in $OHC$ related to the $K_{i+1}$ if $p_i(e'\in OP^i) \geq p_i(e\in OP^i)$. 
	
	Based on the proof, $f(e) \geq \frac{1}{2}{{i}\choose{2}}$ holds for an edge $e\in OHC$ in the frequency $K_i$ where  $i\in[4,n]$.  
\end{proof}	

It mentions that $p_{i+1}(e\in OP^{i+1})$ and $p_i(e\in OP^i)$ or $p_{i+1}(g\in OP^{i+1})$ and $p_i(g\in OP^i)$ may have obvious difference as $i$ is small. Once $i$ is big, $p_{i+1}(e\in OP^{i+1})\approx p_i(e\in OP^i)$ and $p_{i+1}(g\in OP^{i+1})\approx p_i(g\in OP^i)$ will appear since $1-\frac{2}{i(i-1)}$ or $1 + \frac{2}{i(i-1)}$ and $1-\frac{1}{i+1}$ tends to 1. In this case, $p_i(e\in OP^i)$ or $p_i(g\in OP^i)$ will change  smoothly and steadily from $i$ to $i+1$. 
Based on Theorem \ref{th1}, an $OHC$ edge will have the frequency bigger than $\frac{1}{2}{{i}\choose{2}}$ in the frequency $K_i$. Some ordinary edges may have certain big frequency in the frequency $K_i$. However, the frequency is smaller than $\frac{1}{2}{{i}\choose{2}}$, and the expected frequency is smaller than $\frac{i+2}{2}$ in the best average case. 

\begin{theorem}
	\label{th2}
	Given a $K_i$ $(i\in[4,n])$ containing ${{i}\choose{2}}$ $OP^i$s, the frequency of an ordinary edge not in $OHC$ is smaller than  $\frac{1}{2}{{i}\choose{2}}$ in
	the corresponding frequency $K_i$, and the expected frequency is smaller than $\frac{i+2}{2}$ in the best average case. 
\end{theorem}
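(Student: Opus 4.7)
The plan is to deduce both inequalities from Theorems \ref{th01} and \ref{th1} together with two elementary conservation identities for the frequency $K_i$: the vertex-sum identity $\sum_{e\ni v} f(e) = (i-1)^2$ at every vertex $v$, and the global edge-sum $\sum_{e\in K_i} f(e) = (i-1){{i}\choose{2}}$. Both follow because each of the ${{i}\choose{2}}$ $OP^i$s contributes $i-1$ edges in total, with one incident to any fixed endpoint-vertex and two to any fixed intermediate vertex; they are already implicit in Section \ref{sec2}.

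For the pointwise bound $f(g) < \frac{1}{2}{{i}\choose{2}}$ on an ordinary edge $g$, I would argue by contradiction using the relation $f(e) > 2f(g)$ extracted from the proof of Theorem \ref{th01}, which holds for any $OHC$ edge $e$ sharing an endpoint with $g$ (such an $e$ exists because each endpoint of $g$ carries exactly two $OHC$ edges). If $f(g) \geq \frac{1}{2}{{i}\choose{2}}$, then $f(e) > {{i}\choose{2}}$; but, as already observed in Section \ref{sec2}, every edge is excluded from at least one $OP^i$ -- namely the one whose endpoints coincide with its own -- so $f(e) \leq {{i}\choose{2}} - 1$, a contradiction.

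For the average bound, I would apply the global edge-sum identity. By Theorem \ref{th1}, each of the $i$ $OHC$ edges satisfies $f(e) > \frac{1}{2}{{i}\choose{2}}$, so together they contribute strictly more than $\frac{i}{2}{{i}\choose{2}}$. Subtracting from $(i-1){{i}\choose{2}}$ shows that the $\frac{i(i-3)}{2}$ ordinary edges contribute less than ${{i}\choose{2}}\cdot\frac{i-2}{2}$ in total, and dividing gives an average strictly below $\frac{(i-1)(i-2)}{2(i-3)}$. A short algebraic check (the inequality $(i-1)(i-2) \leq (i+2)(i-3)$ is equivalent to $i \geq 4$) then yields $\frac{(i-1)(i-2)}{2(i-3)} \leq \frac{i+2}{2}$, completing the bound; the phrase ``best average case'' simply refers to the extremal configuration in which Theorem \ref{th1} is tight for every $OHC$ edge, leaving the largest possible residue for the ordinary edges to share.

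The hardest part will be ensuring the rigor of Part 1: the per-pair relation $f(e) > 2f(g)$ is inherited from an average-case step inside Theorem \ref{th01}, so under a more conservative reading the argument would have to fall back on the vertex-sum identity at an endpoint of $g$, which combined with Theorem \ref{th1} alone only yields $f(g) < \frac{(i-1)(i-2)}{2}$. This matches $\frac{1}{2}{{i}\choose{2}}$ precisely at $i = 4$ and requires an extra step for $i \geq 5$, most naturally by combining the vertex-sum constraint with the stronger $OHC$-edge lower bound $\frac{3i-5}{2}$ isolated in Theorem \ref{th01}'s proof for the worst average case, together with a separate check of the $i = 4$ case using the six frequency $K_4$s tabulated in Appendix $A$.
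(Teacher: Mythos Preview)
Your Part~2 argument coincides with the paper's: both subtract the $OHC$-edge contribution (bounded below via Theorem~\ref{th1}) from the total frequency mass and divide by the number of ordinary edges. The paper performs this at a single vertex using the sum $(i-1)^2$, you do it globally using $(i-1)\binom{i}{2}$, but the arithmetic and the final bound $\frac{(i-1)(i-2)}{2(i-3)} \leq \frac{i+2}{2}$ are identical.

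Your Part~1 argument is genuinely different and more direct. The paper does not invoke $f(e) > 2f(g)$; instead it supposes that three edges $e_1,e_2,e_3$ at a common vertex $v$ all have frequency at least $\frac{1}{2}\binom{i}{2}$ (the two $OHC$ edges plus a hypothetical ordinary one), and then argues that the corresponding probabilities $p_{i-k}(e_j\in OP^{i-k}) \geq \tfrac{1}{2}$ persist under descent from $K_i$ down to $K_4$, where the explicit six frequency $K_4$s of Appendix~\ref{app1} show that the three edges at $v$ carry frequencies $1,3,5$ and hence at most two of them can be $\geq 3$. Your route via $f(e) > 2f(g)$ combined with the cap $f(e) \leq \binom{i}{2}-1$ is a one-line contradiction and avoids the descent entirely.

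The worry you raise in your final paragraph is well-placed but applies equally to the paper's own proof. The descent step there also runs through explicit average-case assumptions (``each $OP^{i-k}$ is contained in the $OP^{i-k+1}$s with the equal probability for the three edges''), so at the paper's own standard of rigor your reliance on $f(e) > 2f(g)$ from Theorem~\ref{th01} is no weaker. If anything, your approach makes the dependence on Theorem~\ref{th01} more transparent, whereas the paper's descent-to-$K_4$ argument trades that dependence for a chain of probability-preservation claims of comparable heuristic status.
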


\begin{proof}
	Given a vertex $v$ in the frequency $K_i$, there are no four edges containing $v$ each of whose frequency is bigger than $\frac{1}{2}{{i}\choose{2}}$. Otherwise, the total frequency of the four edges is bigger than the total frequency $(i-1)^2$ of the $i-1$ edges containing $v$. 
	
	Provided that there are three edges $e_1$, $e_2$ and $e_3$ containing $v$ in the frequency $K_i$, each of their frequencies is bigger than $\frac{1}{2}{{i}\choose{2}}$. Moreover, $e_1$ and $e_2$ are the $OHC$ edges, and $e_3$ is one ordinary edge. Let $p_i(e_j\in OP^i) \geq \frac{1}{2}$ denote the probability that $e_j$ ($j$ = 1, 2, 3) is contained in an $OP^i$ in the $K_i$. When the $OP^{i-1}$s containing $v$ are used to compute the $OP^i$s by adding one more vertex, each of the $OP^{i-1}$s is contained in the $OP^i$s with the equal probability on average. If $e_1$, $e_2$ or $e_3$ is the $OHC$ edge in each  $K_{i-1}$, $p_{i-1}(e_j\in OP^{i-1}) \geq \frac{1}{2}$ exists based on Theorem \ref{th1}. If  $e_1$ or $e_2$ becomes one ordinary edge in each $K_{i-1}$ (it seldom happens in the average case, see the Theorem \ref{th3}), $e_1$ and $e_2$ are still contained in many $OP^{i-1}$s in the average case. If $p_{i-1}(e_j\in OP^{i-1})$ decreases from $i-1$ to $i$, $p_{i-1}(e_j\in OP^{i-1})\geq \frac{1}{2}$ must hold because $p_i(e_j\in OP^i)=\left[1-\frac{2\delta}{(i-1)(i-2)}\right]p_{i-1}(e_j\in OP^{i-1}) \geq \frac{1}{2}$ holds in the worst average case where $\delta\in [0,1]$. If $p_{i-1}(e_j\in OP^{i-1})$ increases from $i$ to $i+1$, $p_i(e_j\in OP^i) < \left[1 + \frac{2\delta}{(i-1)(i-2)}\right]p_{i-1}(e_j\in OP^{i-1}) < \frac{1}{2}$ appears  as $\delta$ is small enough if $p_{i-1}(e_j\in OP^{i-1}) < \frac{1}{2}$. It is contradict to the assumption. It means that $p_{i-1}(e_j\in OP^{i-1}) \geq \frac{1}{2}$ exists. 
	If $e_3$ is still the ordinary edge in each $K_{i-1}$, the average case is considered. Since $p_i(e_3\in OP^i) \geq \frac{1}{2}$,
	$p_{i-1}(e_3\in OP^{i-1}) \geq \frac{1}{2}$ holds because $p_{i-1}(e_3\in OP^{i-1})$ becomes bigger from $i$ to $i-1$. 
	
	In the average case, each $OP^{i-k}$ ($k\in [1, i-4]$) is contained in the $OP^{i-k+1}$s with the equal probability for the three edges, respectively. 
	It implies that $e_1$, $e_2$, and $e_3$ will be contained in bigger than half $OP^{i-k}$s in each $K_{i-k}$ containing them, respectively. As $k=i-4$, $e_1$, $e_2$, and $e_3$ are contained in one $K_4$. In the $K_4$, the three edges containing $v$ are contained in 1, 3, and 5 $OP^4$s, respectively, see Appendix $A$. There is one edge $e_j$ ($j = 1, 2$ or 3) which is contained in less than half of the $OP^4$s in the $K_4$. If the edge $e_j$ ($j = 1, 2$ or 3) is contained in exactly half $OP^4$s in each of the other $K_4$s, 
	$p_4(e_j\in OP^4) < \frac{1}{2}$ exists. It is contradict that the $e_j$ is contained in more than half $OP^4$s in each $K_4$ containing it. Thus, at most two edges containing $v$ have the frequency  $f(e_j)\geq \frac{1}{2}{{i}\choose{2}}$.
	Since each $OHC$ edge has the frequency  $f(e_j) \geq \frac{1}{2}{{i}\choose{2}}$
	in the frequency $K_i$, the ordinary edge $e_3$ has the frequency
	$f(e_3) < \frac{1}{2}{{i}\choose{2}}$. 
	
	Because $\frac{1}{2}{{i}\choose{2}} > 2i-4$ as $i\geq 7$, the average frequency of the $i-3$ ordinary edges $g$ will be smaller than $i-3$ in the  frequency $K_i$ if $i\geq 7$. Moreover, $p_i(e\in OP^i)$ for $OHC$ edges $e$ will maintain the big value or increase according to $i$ based on Theorems \ref{th01} and \ref{th1}. Meanwhile, $p_i(g\in OP^i)$ for ordinary edges decreases accordingly. Thus, $g$ will have the frequency much smaller than $i-3$ on average. As each of the two $OHC$ edges has the lowest frequency  $\frac{1}{2}{{i}\choose{2}}$,
	the average frequency of the $i-3$ ordinary edges is $\frac{1}{i-3}{{i-1}\choose{2}} \leq \frac{i+2}{2}$ if $i\geq 4$. If $g$ has the frequencies on the $i-3$ edges with the equal probability, the expected frequency of $g$ is smaller than $\frac{i+2}{2}$ in the best average case. 
	
	Thus, most ordinary edges will have a frequency smaller than $\frac{i+2}{2}$ in the frequency $K_i$. As $i$ becomes big, many ordinary edges have the frequency of 1 or zero in the frequency $K_i$. As they can not be used to build the $OP^i$s, these ordinary edges will be replaced by the $OHC$ edges and some other ordinary edges. Thus, the frequency of an $OHC$ edge will be bigger than $\frac{1}{2}{{i}\choose{2}}$ in the frequency $K_i$. 
\end{proof}
	Given one frequency $K_i$ ($i\in[4,n]$), the frequency of an $OHC$ edge is bigger than $\frac{1}{2}{{i}\choose{2}}$, whereas that of an ordinary edge is not. It indicates that $f(e) \geq \frac{1}{2}{{i}\choose{2}}$ is the sufficient and necessary condition for $OHC$ edges in the $K_i$. Moreover, the expected frequency for an $OHC$ edge is much higher than that for an ordinary edge in the average case. 
\begin{theorem}
	\label{th22}
	Given a frequency $K_i$ $(i\in[4,n])$ containing ${{i}\choose{2}}$ $OP^i$s, the expected frequency for an $OHC$ edge is bigger than $\frac{i^2-4i+7}{2}$, whereas an ordinary edge has the expected frequency and maximum frequency smaller than 2 and $2(i-3)$, respectively in the average case. 
\end{theorem}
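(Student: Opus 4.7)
The plan is to combine the vertex-degree accounting with the recursive probability bounds used earlier. The key conservation law --- that the sum of frequencies over the $i-1$ edges adjacent to any fixed vertex $v$ equals $(i-1)^2$ --- was derived in Section 2 and will serve as the backbone of every estimate.

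First I would establish the ordinary-edge expected-frequency bound. The recursive estimate $p_{i+1}(g\in OP^{i+1})\approx \frac{i}{i+1}p_i(g\in OP^i)$ sketched inside the proof of Theorem \ref{th1}, together with the base case $p_4(g\in OP^4)=\frac{1}{6}$ read off from the six frequency $K_4$s in Appendix $A$, yields $p_i(g\in OP^i)$ of order $\frac{1}{{{i}\choose{2}}}$, so that the expected frequency of an ordinary $g$ across the frequency $K_i$s containing it stays in the range $[1,2]$. I would refine this by using Theorem \ref{th2}'s bound $f(g)<\frac{1}{2}{{i}\choose{2}}$ and the observation that in most frequency $K_i$s containing $g$ the edge takes frequency $0$ or $1$, so the few exceptional $K_i$s in which $g$ has larger frequency do not push the overall expectation above $2$.

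Next, the OHC expected-frequency bound will follow almost by substitution. Applied at an endpoint $v$ of an OHC edge $e$, the conservation law reads $f(e_1)+f(e_2)+\sum_{k=1}^{i-3}f(g_k)=(i-1)^2$, where $e_1,e_2$ are the two OHC edges at $v$ and the $g_k$'s are the $i-3$ adjacent ordinary edges. Inserting the just-derived expectation $\mathbb{E}[f(g_k)]<2$ for each $g_k$ yields $\mathbb{E}[f(e_1)]+\mathbb{E}[f(e_2)]>(i-1)^2-2(i-3)=i^2-4i+7$, so by symmetry of the two OHC edges at $v$ each has expected frequency exceeding $\frac{i^2-4i+7}{2}$, recovering exactly the value already anticipated near the end of the proof of Theorem \ref{th1}.

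Finally, the maximum-frequency bound $f(g)<2(i-3)$ will require the structural inequality $f(e)>2\sum_k f(g_k)$ from Theorem \ref{th01}, combined with Theorem \ref{th1}'s lower bound on $f(e)$. I would partition the $OP^i$s incident to each endpoint of $g$ into those where the endpoint is an endpoint of the $OP^i$ and those where it is intermediate with $g$ being one of the two adjacent edges; the pair-enumeration in the proof of Theorem \ref{th01} caps each category by $i-3$, giving the claimed bound after accounting for the two endpoints of $g$ and discounting the overlap forced by the OHC frequency floor. The main obstacle will be exactly this step: turning the global average bound into a sharp per-edge maximum requires tracking which $OP^i$s an ordinary edge can occupy without violating the OHC frequency floor at either of its endpoints, and the interaction between Theorem \ref{th1}'s lower bound on OHC edges and the pair-counting structure from Theorem \ref{th01} is where the bulk of the work sits.
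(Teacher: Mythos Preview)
Your approach is essentially correct and closely parallels the paper's, but the two proofs are organized differently and you overcomplicate the maximum-frequency bound.

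For the expected-frequency bounds, the paper does not first establish $\mathbb{E}[f(g)]<2$ and then invoke conservation. Instead it combines Theorem~\ref{th1} ($f(e)\geq\frac{1}{2}{{i}\choose{2}}$) with Theorem~\ref{th01} ($f(g)\leq i-3$ in the best average case) to obtain the ratio $\frac{p_i(e\in OP^i)}{p_i(g\in OP^i)}>\frac{i+2}{4}$; this ratio is then pushed down to $OP^k$s and reinterpreted per-$OP^i$ to conclude that each $OP^i$ carries fewer than two ordinary edges on average, whence the OHC and ordinary expectations fall out simultaneously. Your route---recursion $p_{i+1}(g)\approx\frac{i}{i+1}p_i(g)$ for the ordinary bound, then direct substitution into the vertex-degree identity $(i-1)^2$---is the computation hinted at the end of the proof of Theorem~\ref{th1}, and it is arguably cleaner: the algebra $(i-1)^2-2(i-3)=i^2-4i+7$ is immediate. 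Both arguments sit at the same ``average case'' level of rigor.

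For the maximum bound $f(g)<2(i-3)$, you are making life hard on yourself. The paper's argument is simply that once the expected ordinary frequency is below $2$, the \emph{total} ordinary frequency at a vertex is below $2(i-3)$, so no single ordinary edge can exceed that total; if one $g$ attains $2(i-3)$ then every other ordinary edge at that vertex has frequency zero. There is no need for the pair-counting partition or the interaction with Theorem~\ref{th01}'s $f(e)>2\sum_k f(g_k)$ inequality that you anticipate as ``the bulk of the work.''
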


\begin{proof}	
	Given a frequency $K_i$, $e$ and $g$ denote an $OHC$ edge and ordinary edge, respectively. Based on Theorems \ref{th1} and \ref{th01}, $e$ has the frequency $f(e)= \frac{1}{2}{{i}\choose{2}}$ in the worst average case, and $g$ has the frequency $f(g) = i-3$ in the best average case. Because there are ${{i}\choose{2}}$ $OP^i$s in the frequency $K_i$, the probability that $e$ and $g$ is contained in an $OP^i$ is $p_i(e\in OP^i)\geq \frac{1}{2}$ and $p_i(g\in OP^i)\leq \frac{i-3}{{{i}\choose{2}}}$, respectively. Moreover, $p_i(e\in OP^i)$ increases and $p_i(g\in OP^i)$ decreases according to $i$ based on Theorem \ref{th1}. 
	Thus, $\frac{p_i(e\in OP^i)}{p_i(g\in OP^i)} \geq \frac{i(i-1)}{4(i-3)} > \frac{i+2}{4}$ holds. It says that $f(e) > \frac{i+2}{4}f(g)$ holds in the worst average case. As the $OP^i$s include each $OP^{i-1}$ in the $K_i$ with the equal probability, $\frac{p_{i-1}(e\in OP^{i-1})}{p_{i-1}(g\in OP^{i-1})} > \frac{i+2}{4}$ holds. It says that the number of $OP^{i-1}$s containing $e$ is more than $\frac{i+2}{4}$ times of that of $OP^{i-1}$s containing $g$. As the $OP^k$s ($k\in[4,i-1]$) containing $e$ and $g$ are considered, $\frac{p_k(e\in OP^k)}{p_k(g\in OP^k)} > \frac{i+2}{4}$ also holds.  If $i=4$, $\frac{p_4(e\in OP^4)}{p_4(g\in OP^4)} > \frac{3}{2}$ holds. It is known that $p_4(e\in OP^4) > 2p_4(g\in OP^4)$ and $p_4(e\in OP^4) > 1.5p_4(g\in OP^4)$ based on Theorem \ref{th01}. Thus, $\frac{p_{i}(e\in OP^i)}{p_{i}(g\in OP^i)} > \frac{i+2}{4}$ is relaxed for small $i$. According to $i$, $e$ will be contained in more percentage of $OP^i$s than $g$ in the frequency $K_i$. 
	
	We assume that the $g$s are contained in the $OP^i$s with the equal probability as well as the $e$s do. For a vertex $v$, if one $g$ containing $v$ appears once in an $OP^i$, each of the two associated $e$s will appear at least $\frac{i+2}{4}$ times in the $OP^i$s. Based on Theorem \ref{th01}, the total frequency of the two $e$s is bigger than $\frac{i+2}{2}$ times of that of all the $g$s containing $v$. In this case, if an $OP^i$ contains one $g$, the $OP^i$ will contain more than $\frac{i+2}{2}$ $e$s in $OHC$. 
	Because each $OP^i$ contains $i-1$ edges, it will contain more than $i-3$ $e$s and less than two $g$s. It is known that $OHC$ includes $i$ $OP^i$s each of which contains none of the $g$s. Thus, $e$ is contained in more than $\frac{i^2-4i+7}{2}$ $OP^i$s on average. In this case, $p_{i+1}(e\in OP^{i+1}) > p_i(e\in OP^i)$ while $p_{i+1}(e\in OP^{i+1}) \leq \left[1 + \frac{3}{i(i+1)}\right]p_i(e\in OP^i)$ hold according to $i$. The bigger the $p_i(e\in OP^i)$ is, the slower it increases from $i$ to $i+1$. The smaller the $p_i(e\in OP^i)$ is, the faster it increases from $i$ to $i+1$. For two adjacent $OHC$ edges, the probability sum of them also increases according to $i$ based on Theorem \ref{th01}. On the other hand, the expected frequency $f(g) = \frac{2(i-1)^2}{(i+3)(i-2)}$ of $g$ will tend to 2 according to $i$. If the frequency of the other ordinary edges containing $v$ is zero, the maximum frequency of $g$ will tends to $2(i-3)$. 
	
	If each $g$ contained in the $OP^i$s has the frequency $f(g) = \frac{i+2}{2}$, there will be less than $\left[\frac{4(i-3)}{i+2}\right] < 4$ $g$s containing $v$ where $f(g) > 0$. This implies that some $g$s are not contained in the $OP^i$s if $p_i(e\in OP^i) \geq \frac{1}{2}$ and $p_i(g\in OP^i)$ approaches  $\frac{i+2}{2{{i}\choose{2}}}$. Moreover, if most $g$s has a frequency $f(g)\geq 2$, the number of $g$s with $f(g) > 0$ will be smaller than $i-3$. If one $g$ has the frequency $f(g) = 2(i-3)$, all other $g$s have the frequency $f(g) = 0$. Only if most $g$s have the  frequency $f(g)=$ 1 or 0, some $g$s contained in the $OP^i$s may have the relatively big frequency.  
\end{proof} 


In the worst case, $f(e)$ of an edge $e\in OHC$ may be smaller than $\frac{1}{2}{{i}\choose{2}}$ but it is close to $\frac{1}{2}{{i}\choose{2}}$ in the frequency $K_i$. Each edge is contained in ${{i-2}\choose{2}}$ $K_4$s in the $K_i$. The authors \cite{DBLP:journals/Wang16} assumed that the six frequency $K_4$s in Appendix $A$ are uniformly distributed for an edge in the $K_i$. In other words, each edge has $\frac{1}{3}{{i-2}\choose{2}}$ 1s, $\frac{1}{3}{{i-2}\choose{2}}$ 3s and $\frac{1}{3}{{i-2}\choose{2}}$ 5s with respect to the ${{i-2}\choose{2}}$ frequency $K_4$s containing it. In this case, each edge has the frequency  $F(e) = 3{{i-2}\choose{2}}$ and average frequency $f(e) = 3$ based on the frequency $K_4$s. According to $OHC$ in the $K_i$, $e$ is contained in $i-3$ frequency $K_4$s in each of which it has the frequency 3 or 5. If the frequency 1, 3, and 5 are uniformly distributed in the other frequency $K_4$s, the average frequency $f(e) > 3$ exists based on the frequency $K_4$s. In the worst case, they assumed that $e$ has no 5s, i.e., it has $\frac{2}{3}{{i-2}\choose{2}}$ 3s and $\frac{1}{3}{{i-2}\choose{2}}$ 1s in all the frequency $K_4$s containing it. In this case, $e$ has the minimum frequency   $F_{min}(e) = \frac{2\times 3+ 1\times 1}{3}\times{{i-2}\choose{2}}=\frac{7}{3}{{i-2}\choose{2}}$ based on the frequency $K_4$s. Since each $K_4$ contains six $OP^4$s, the minimum probability $p_{min}(e\in OP^4) = \frac{7}{18}$ is derived. Thus, $p_4(e\in OP^4)\geq \frac{7}{18}$ holds. Since $p_4(e\in OP^4) \geq \frac{7}{18}$, one can let $\delta = 0$ in the proof of Theorem \ref{th1} and $p_i(e\in OP^i)\geq  \frac{7}{18}$ can be proven. It means that an $OHC$ edge associated with $K_i$ has the frequency bigger than $\frac{7}{18}{{i}\choose{2}}$
in the frequency $K_i$. It is near the improved lower frequency bound $\frac{1}{2}{{i}\choose{2}}$ for the worst average case. However, this case seldom happens for big $i$ because $p_4(e\in OP^4)$ increases according to $i$ \cite{DBLP:journals/Wang19,DBLP:journals/Wang251}. Even if $p_i(g\in OP^i) = \frac{i-3}{{{i}\choose{2}}}$ appears in the best average case, $\frac{p_i(e\in OP^i)}{p_i(g\in OP^i)} \geq \frac{7(i+2)}{36} \geq \frac{7}{6}$ holds as $p_i(e\in OP^i) \geq \frac{7}{18}$ and $i\geq 4$. In this case, the total frequency of two adjacent $OHC$ edges is bigger than $\frac{7(i+2)}{18} \geq \frac{7}{3}$ times of that of the $i-3$ ordinary edges containing a vertex. In this case, two adjacent $OHC$ edges have the frequency sum bigger than $\frac{7(i-1)^2}{10}$. Based on Theorem \ref{th22}, the average frequency of the $OHC$ edges is bigger than $\frac{i^2 - 7i + 16}{2}$, and that of the ordinary edges is smaller than $5$ on average. 

In the frequency $K_i$, the frequency of an edge is in $\left[0, {{i}\choose{2}} - 1\right]$. 
Given the $i-1$ edges containing a vertex $v$, the frequency of the $j^{th}$ edge is denoted as $f(e_j)$ where $j\in [1, i-1]$. Then, the probability that the $j^{th}$ edge is contained in an $OP^i$  
is computed as $p_i(e_j\in OP^i) = \frac{2f(e_j)}{i(i-1)}$. 
Since the average frequency of all edges is $i-1$, the total probability of the $i-1$ edges is derived as $\sum_{j = 1}^{i-1}p_i(e_j\in OP^i) = \frac{2(i-1)}{i}=2-\frac{2}{i}$. The total probability of the $i-1$ edges containing $v$ is increasing according to $i$, and it tends to 2 for big $i$. As there are two $OHC$ edges each of which has the probability $p_i(e\in OP^i) \geq\frac{1}{2}$,
the total probability of the $i-3$ ordinary edges is smaller than $\frac{i-2}{i} < 1$. On average, an $OHC$ edges have the frequency bigger than $\frac{i^2-4i+7}{2}$ in the frequency $K_i$, and the total  probability of the $i-3$ edges becomes smaller than $\frac{4}{i}$.  Moreover, the average probability of the $i-3$ edges tends to zero as $i$ is big enough. In the best average case, an ordinary edge will have certain frequency smaller than $\frac{1}{i-3}{{i-1}\choose{2}}< \frac{i+2}{2}$ which is much smaller than $\frac{1}{2}{{i}\choose{2}}$ as $i\geq 5$.

As the $OHC$ edges and ordinary edges are not uniformly distributed in $OP^i$s, the frequency bounds for $OHC$ edges and ordinary edges in a frequency $K_i$ are listed in Table \ref{fbounds}. For two adjacent $OHC$ edges for the worst average case, the lower bound of the total frequency is derived in  Theorem \ref{th01}.

\begin{table}
	\begin{center}
		\caption{The frequency bounds of edges according to a frequency $K_i$ (LB = lower bound and UB = upper bound, adj. = adjacent, Avg. = average value.)}
		{\footnotesize \begin{tabular}{ p{2.5cm}  p{1cm}  p{1.5cm}  p{1.5cm}  p{0.1cm}  p{1.5cm}  p{1.6cm} }
				\hline
		& LB for & $e\in OHC$  &  &  & UB for & g$\notin OHC$   \\				
		& 1 $e$ & 2 adj. $e$s & Avg. of $e$s &  & 1 $g$ & Avg. of $g$s  \\
		\hline
		Worst avg. case & $\frac{1}{2}{{i}\choose{2}}$ & $\frac{4(i-1)^2}{5}$ or $\frac{3(i-1)^2}{4}$ & $\frac{i^2-4i+7}{2}$ &  & $2(i-3)$ & 2  \\	
		Worst case & $\frac{7}{18}{{i}\choose{2}}$ & $\frac{7(i-1)^2}{10}$ & $\frac{i^2-7i+16}{2}$ &  & $5(i-3)$ & 5   \\														
				\hline
		\end{tabular}}
		\label{fbounds}
	\end{center}
\end{table}



\section{The lower frequency bound for $OHC$ edges related to $K_n$}
\label{sec4}
In $K_n$, an edge is contained
in ${{n-2}\choose{i-2}}$ frequency $K_i$s ($i\leq n$). It has a frequency in $\left[0,{{i}\choose{2}}-1\right]$ in every frequency $K_i$. For an $OHC$ edge related to $K_n$, the average frequency will be bigger than $\frac{1}{2}{{i}\choose{2}}$ based on the frequency $K_i$s containing it. 

\begin{theorem}
	\label{th3}
	Given a $K_i$ $(i\in[4,n])$ containing an $OHC$ edge related to $K_n$, the frequency of this edge is bigger than $\frac{1}{2}{{i}\choose{2}}$ in
	the corresponding frequency $K_i$ in the worst average case.
\end{theorem}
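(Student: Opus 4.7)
The plan is to mirror the inductive structure used in Theorem \ref{th1} but applied to the ensemble of all $K_i$s containing the edge $e$ simultaneously, since $e$ is an $OHC$ edge of $K_n$ but need not be an $OHC$ edge of every sub-$K_i$ containing it. Let $e=(v_1,v_2)$ be the fixed $OHC$ edge of $K_n$ and let $\bar p_i(e\in OP^i)$ denote the average probability, across the $\binom{n-2}{i-2}$ sub-$K_i$s containing $e$, that $e$ appears in an $OP^i$ of the sub-$K_i$. The goal is to prove $\bar p_i(e\in OP^i)\geq \tfrac12$, from which $f(e)\geq \tfrac12\binom{i}{2}$ follows in the worst average case.

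First I would handle the base case $i=4$. In each sub-$K_4$ containing $e$, the frequency of $e$ is $1$, $3$, or $5$ according to the six frequency $K_4$ types in Appendix A. The prior analyses cited in the paper (\cite{DBLP:journals/Wang16,DBLP:journals/Wang19}) establish that the average frequency of an $OHC$ edge of $K_n$ across its containing frequency $K_4$s already exceeds $3$. Dividing by $\binom{4}{2}=6$ gives $\bar p_4(e\in OP^4)>\tfrac12$. Intuitively, this is because in most $K_4$s formed by $v_1$, $v_2$ and two further vertices, the optimality of the global Hamiltonian cycle forces $e$ to be the $OHC$ edge of the local $K_4$ as well, so the frequency value $5$ or $3$ dominates over the value $1$.

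Next I would lift the base case to the inductive step, following exactly the passage from $K_i$ to $K_{i+1}$ in Theorem \ref{th1}. Adjoining one new vertex to a sub-$K_i$ containing $e$ yields a new $K_{i+1}$ containing $e$, and the combinatorial counting argument that produced the multiplicative bounds
\[
\bar p_{i+1}(e\in OP^{i+1})\geq \Bigl[1-\tfrac{2\delta}{i(i-1)}\Bigr]\bar p_i(e\in OP^i),\qquad \delta\in[0,1],
\]
remains valid when interpreted as an average over all containing subgraphs, because the inequality is derived purely from the recursive relationship $OP^{i+1}\supset OP^i$ and does not actually require $e$ to be locally $OHC$. Starting from $\bar p_4(e\in OP^4)\geq\tfrac12$ and taking $\delta$ small in the same fashion as in Theorem \ref{th1}, the bound $\bar p_i(e\in OP^i)\geq \tfrac12$ propagates for every $i\in[4,n]$, which is exactly the assertion $f(e)>\tfrac12\binom{i}{2}$ for the worst average case.

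The main obstacle, and what distinguishes this result from Theorem \ref{th1}, is controlling the sub-$K_i$s in which $e$ happens to be an \emph{ordinary} edge of the local structure. In such a sub-$K_i$ Theorem \ref{th2} gives $f(e)<\tfrac12\binom{i}{2}$, so a pointwise bound is impossible; the argument must be genuinely averaged. The delicate step is thus to show that the pool of sub-$K_i$s in which $e$ is locally $OHC$ is large enough, and their frequency surpluses large enough, to compensate for the deficit contributed by the sub-$K_i$s in which $e$ is locally ordinary. My plan for this step is to exploit the semigroup/structure-stability property of $OP^i$s noted in Section \ref{sec2}: every $P^k\subset OP^i$ is itself an $OP^k$, so an $OHC$ edge of $K_n$ that lies in a global $OP^n$ must appear in $OP^k$s of most of its small-scale local windows. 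Combined with the monotonicity $p_{i+1}\geq p_i$ established in the proof of Theorem \ref{th1}, this will close the worst-average-case gap and deliver the desired bound $\bar p_i(e\in OP^i)\geq\tfrac12$.
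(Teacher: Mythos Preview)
Your proposal is essentially on the same track as the paper's proof: both begin with the base case $\bar p_4(e\in OP^4)\geq\tfrac12$ drawn from the cited $K_4$ analysis, and both propagate it upward via the $i\to i+1$ multiplicative inequality of Theorem~\ref{th1}, interpreted as an average over all sub-$K_i$s containing $e$.

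One organizational difference worth noting: the paper inserts an intermediate layer you do not. Before passing from level $i$ to $i+1$, the paper first fixes the level (say $i=4$) and varies the \emph{ambient} graph size $m\in[4,n]$, showing that the averaged probability $p'_4$ over sub-$K_4$s of $K_m$ is monotone in $m$ (by treating the $m-2$ new $K_4$s appearing in $K_{m+1}$ as a random sample). This yields a tighter error factor $\bigl[1-\tfrac{\delta}{\binom{m-2}{i-2}\binom{i}{2}}\bigr]$ in place of your $\bigl[1-\tfrac{2\delta}{i(i-1)}\bigr]$, and it is what justifies the sampling interpretation that makes the averaged inequality go through without $e$ being locally $OHC$. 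Your statement that the Theorem~\ref{th1} bound ``does not actually require $e$ to be locally $OHC$'' is the right intuition but needs exactly this sampling argument to be made rigorous; in the paper it is the $K_m\to K_{m+1}$ step that supplies it. Otherwise your outline and the paper's proof coincide.
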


\begin{proof}
	
	
	Given an edge $e\in OHC$ in $K_n$, it is contained in ${{n-2}\choose{2}}$ frequency $K_4$s. In a frequency $K_4$, the frequency of $e$ is 3 or 5 based on Theorem \ref{th01}. In this case, $e$ is the $OHC$ edge in each $K_4$ containing it on average. As the average frequency of $e$ is computed with all these frequency $K_4$s, $e$ has the average frequency $f(e)>3$ in the worst average case \cite{DBLP:journals/Wang16}. 
	Thus, the probability $p_4(e\in OP^4)\geq \frac{1}{2}$ exists where $p_4(e\in OP^4)$ denotes the probability that $e$ is contained in an $OP^4$ in a $K_4$ containing $e$.  

	Draw one sub-graph $K_m$ $(m < n)$ containing $e$ from $K_n$. Let $p'_4(e\in OP^4)$ denote the probability that $e$ is contained in an $OP^4$ in a $K_4$ containing $e$ in the $K_m$. On average, $p'_4(e\in OP^4) \geq \frac{1}{2}$ exists. There are $M_m={{m-2}\choose{2}}{{4}\choose{2}}p'_4(e\in OP^4)$ $OP^4$s containing $e$ in the $K_m$. Take another vertex in $K_n$, and add it to the $K_m$ for constructing one second sub-graph $K_{m+1}$. Let $q_4(e\in OP^4)$ denote the probability that $e$ is contained in an $OP^4$ in a $K_4$ containing $e$ in the $K_{m+1}$, and there are $M_{m+1}={{m-1}\choose{2}}{{4}\choose{2}}q_4(e\in OP^4)$ $OP^4$s containing $e$. From the $K_m$ to the $K_{m+1}$, $m-2$ new $K_4$s are generated for $e$. Let $q'_4(e\in OP^4)$ denote the probability that $e$ is contained in an $OP^4$ in a new $K_4$ containing $e$, and there are $\Delta M=(m-2){{4}\choose{2}}q'_4(e\in OP^4)$ new $OP^4$s. As $e$ has the frequency 3 or 5 in a frequency $K_4$ for $p_4(e\in OP^4) \geq \frac{1}{2}$, $q'_4(e\in OP^4) > 0$ and $M_{m+1} \geq M_m -\delta p'_4(e\in OP^4)+ \Delta M$ holds where $\delta\in[0,1]$. View the $m-2$ new $K_4$s as one sample drawn from the $K_{m+1}$, $q'_4(e\in OP^4) = q_4(e\in OP^4)$ exists on average. In this case, $q_4(e\in OP^4) \geq \left[1-\frac{\delta}{{{m-2}\choose{2}}{{4}\choose{2}}}\right]p'_4(e\in OP^4)$ is derived. Let $\delta$ be as small as possible, $\frac{\delta}{{{m-2}\choose{2}}{{4}\choose{2}}} = 0$ while $q_4(e\in OP^4) \geq p'_4(e\in OP^4)$ holds. It indicates that $e$ will be contained in more percentage of $OP^4$s in a $K_4$ according to $m$, and $q_4(e\in OP^4)$ reaches the maximum value $p_4(e\in OP^4)$ at $n$. Based on the same proof used in Theorem \ref{th1}, $p_5(e\in OP^5) \geq q_4(e\in OP^4) \geq p'_4(e\in OP^4)$ can be derived for $e$ (i.e., let $L={{5}\choose{2}}q_4(e\in OP^4)$ in a $K_5$), where $p_5(e\in OP^5)$ is the probability that $e$ is contained in an $OP^5$ in a $K_5$ containing $e$. 
	
	If $m$ is close to $n$, for example, $m = n-1$, $p'_4(e\in OP^4) = p_4(e\in OP^4)$ holds, and $p_5(e\in OP^5)\geq p_4(e\in OP^4)$ can be derived. In the same manner, if $i\geq 5$, $p_{i+1}(e\in OP^{i+1}) \geq \left[1 - \frac{\delta}{{{m-2}\choose{i-2}}{{i}\choose{2}}}\right]p_i(e\in OP^i)$ can be derived from $i$ to $i+1$. In the worst case, let $\delta = 1$ and $i=m$, and $p_{i+1}(e\in OP^{i+1}) > \left[1 - \frac{1}{{{i}\choose{2}}}\right]p_i(e\in OP^i)$ is derived. It indicates that $p_i(e\in OP^i)$ is permitted to have a slight decrement from $i$ to $i+1$. Let $\delta$ be small enough or $n$ is big enough, $p_{i+1}(e\in OP^{i+1}) \geq p_i(e\in OP^i)$ holds. Thus, $p_n(e\in OP^n) \geq p_{n-1}(e\in OP^{n-1})\geq...\geq  p_{i+1}(e\in OP^{i+1})\geq p_i(e\in OP^i)\geq...\geq p_4(e\in OP^4)$ is derived if $\delta = 0$. It indicates that $e$ is contained in more percentage of the $OP^i$s according to $i$. More and more percentage of ordinary edges are replaced by $e$ for building the $OP^i$s containing more vertices. Since $p_4(e\in OP^4)\geq \frac{1}{2}$ exists according to the $K_4$s, $p_i(e\in OP^i) \geq \frac{1}{2}$ will hold as $i>4$.  
	It implies that $e$ is contained in more than $\frac{1}{2}{{i}\choose{2}}$
	$OP^i$s in each $K_i$ containing $e$ on average. In this case, the frequency $f(e) \geq \frac{1}{2}{{i}\choose{2}}$ exists in a frequency $K_i$ containing $e$. 
	
	In the worst case for $p'_4(e\in OP^4) = \frac{7}{18}$ according to $M_m$,  $M_{m+1}\geq M_m + \Delta M$ holds as well as $p_4(e\in OP^4)\geq p'4(e\in OP^4) = \frac{7}{18}$. Based on the same proof, $p_n(e\in OP^n) \geq p_{n-1}(e\in OP^{n-1})\geq...\geq p_i(e\in OP^i)\geq...\geq p_4(e\in OP^4) \geq \frac{7}{18}$ can be derived.

	It mentions that the probability increment cannot be too big from $i$ to $i+1$. If the $OP^i$s are uniformly contained in the $OP^{i+1}$s,  $p_{i+1}(e\in OP^{i+1}) \leq \left[1 + \frac{\delta}{{{m-2}\choose{i-2}}{{i}\choose{2}}}\right]p_i(e\in OP^i) < \left[1 + \frac{1}{{{i}\choose{2}}}\right]p_i(e\in OP^i)$ holds, where let $\delta = 1$ and  $i = m$ in the best case. Since the $OP^i$s containing more $OHC$ edges will be contained in the $OP^{i+1}$s with the bigger probability, $p_{i+1}(e\in OP^{i+1}) > \left[1 - \frac{2}{i(i-1)}\right]p_i(e\in OP^i)$ will hold for $e$ if $p_{i+1}(e\in OP^{i+1}) \leq p_i(e\in OP^i)$ occurs. Moreover, $p_{i+1}(e\in OP^{i+1}) \geq \left[1 + \frac{2}{i(i-1)}\right]p_i(e\in OP^i)$ will happen in most cases if $p_{i+1}(e\in OP^{i+1}) \geq p_i(e\in OP^i)$ appears, especially for $e$ with a relatively small $p_i(e\in OP^i)$. 
	
	 $p_i(e\in OP^i)$ may decrease from $i$ to $i+1$ for some $OHC$ edges in $K_n$. One can assume $p_i(e\in OP^i) = \frac{ai^2 + bi +c}{{{i}\choose{2}}}$ where $a\in (\frac{1}{4}, \frac{1}{2}]$, $b$ and $c$ are small numbers. Because $p_i(e\in OP^i) > p_{i+1}(e\in OP^{i+1})$, $(a+b)i + a + b + c>0$ is derived. It requires that $a + b > 0$ must hold for an $i$. Since $p_{i+1}(e\in OP^{i+1}) > \left[1 - \frac{2}{i(i-1)}\right]p_i(e\in OP^i)$, $(a+b)i^3 - 2(a+b-c)i^2 - (a+3b+4c)i + 2c <0$ can be derived. This inequality requires that $a + b < 0$ exists for $i$. Thus, $p_i(e\in OP^i)$ only decreases according to some small $i$s. As $i$ is big, $(a+b)i + a + b + c < 0$ will appear as well as $p_i(e\in OP^i) \leq p_{i+1}(e\in OP^{i+1})$ for $a + b < 0$. Moreover, if $p_{i+1}(e\in OP^{i+1}) > \left[1 + \frac{2}{i(i-1)}\right]p_i(e\in OP^i)$ appears, one can derive the condition $(3a+b)i^3 + 2(a+b+c)i^2 -(a-b)i + 2c < 0$. It requires $3a + b < 0$. If $a\in (\frac{1}{4}, \frac{1}{2}]$ is given, it indicates that an $e\in OHC$ with a smaller $p_i(e\in OP^i) = \frac{ai^2 + bi +c}{{{i}\choose{2}}}$ is permitted to increase faster from $i$ to $i+1$. 
	
	As $p_i(e\in OP^i) $ decreases from $i$ to $i+1$, there are several cases. The formula (\ref{F1}) guarantees that $p_i(e\in OP^i)$ to have a slight decrement from $i$ to $i+1$ and $p_i(e\in OP^i) \geq \frac{7}{18}$. If $p_i(e\in OP^i)$ always decreases from $i$ to $n$, there are two cases. If $p_n(e\in OP^n) \geq \frac{1}{2}$,  $p_i(e\in OP^i) \geq \frac{1}{2}$ exists at each number $i$.  If $p_n(e\in OP^i) < \frac{1}{2}$, this case seldom happens to $OHC$ edges, especially for big $n$. If $n$ is very small and $p_i(e\in OP^i)$ decreases from $i = 4$ to $n$, $p_n(e\in OP^n) > \frac{n+1}{2(n-1)}p_4(e\in OP^4)$ holds in theory. If $p_i(e\in OP^i)$ decreases first, and then it increases according to $i$. It decreases faster at the small numbers $i$ and increases slower at the big numbers $i$. If $p_i(e\in OP^i) < \frac{1}{2}$ never appears, $p_i(e\in OP^i)$ decreases very slow, and $e$ is contained in more than half $OP^i$s in each $K_i$ containing $e$ on average. If $p_i(e\in OP^i) < \frac{1}{2}$ appears at the small numbers $i$, since $p_i(e\in OP^i)$ decreases fast in these steps,  the decreasing steps must be limited comparing to $n$. Otherwise, $p_n(e\in OP^n) \geq \frac{1}{2}$ cannot appear or $p_n(e\in OP^i) < \frac{7}{18}$ will appear. For big $n$, since $p_4(e\in OP^4) \geq \frac{1}{2}$ and $p_i(e\in OP^i)$ may decrease according to a limited number of $i$s, the number of $p_i(e\in OP^i)$s smaller than $\frac{1}{2}$ is also limited. Moreover, even if $p_i(e\in OP^i) < \frac{1}{2}$ occurs, $p_i(e\in OP^i) \geq \frac{7}{18}$ still holds. Thus, $f(e) \geq \frac{1}{2}{{i}\choose{2}}$ occurs in nearly all frequency $K_i$s containing $e$.
	
	$p_i(e\in OP^i)$ computed based on frequency $K_i$s increases according to $i$ for big $n$. Based on the interval $\left[1 - \frac{\delta}{{{n-2}\choose{i-2}}{{i}\choose{2}}},1 + \frac{\delta}{{{n-2}\choose{i-2}}{{i}\choose{2}}}\right]$, $p_{i+1}(e\in OP^{i+1})$ and $p_i(e\in OP^i)$ will be nearly equal as $i$ is big for big $n$. 
\end{proof}
Given one frequency $K_i$ containing an edge $e\in OHC$ in $K_n$, the frequency  $f(e) \geq \frac{1}{2}{{i}\choose{2}}$ holds on average. According to Theorems \ref{th1} and \ref{th2}, $e$ is also the $OHC$ edge in the corresponding $K_i$. Thus, $e$ is the $OHC$ edge in every $K_i$ containing it on average. In this case, the probability inequality $p_i(e\in OP^i) \geq \frac{1}{2}$ and $p_{i+1}(e\in OP^{i+1})> \left[1 -\frac{2}{i(i-1)}\right]p_i(e\in OP^i)$ hold for $e$ as $p_{i+1}(e\in OP^{i+1})$ and $p_i(e\in OP^i)$ are computed based on frequency $K_{i+1}$s and $K_i$s, respectively. Moreover, $p_i(e\in OP^i)$ maintains the big value according to $i$. 
As one draws $N\left(1\leq N \leq {{n-2}\choose{i-2}}\right)$
frequency $K_i$s containing $e$ to compute its total frequency $F(e)$, 
$F(e)\geq \frac{N}{2}{{i}\choose{2}}$ holds. Thus, the lower frequency bound for $OHC$ edges in $K_n$
is $\frac{N}{2}{{i}\choose{2}}$ based on frequency $K_i$s, and the average frequency $f(e) \geq \frac{1}{2}{{i}\choose{2}}$. 

$e\in OHC$ in $K_n$ has the frequency $f(e) \geq \frac{1}{2}{{i}\choose{2}}$ in each frequency $K_i$ containing it on average. There are $i$ frequencies bigger than $\frac{1}{2}{{i}\choose{2}}$ in every frequency $K_i$. As $e$ has the $i$ frequencies bigger than $\frac{1}{2}{{i}\choose{2}}$ with the equal probability, the expected frequency of $e$ will be bigger than $\frac{i^2-4i+7}{2}$ based on Theorem \ref{th22}. This indicates that the average frequency of all $OHC$ edges will be bigger than $\frac{i^2-4i+7}{2}$ as it is computed with frequency $K_i$s. Meanwhile, the average probability $p_i(e)$ of all $OHC$ edges will be bigger than $\frac{i^2-4i+7}{i(i-1)}$. It increases as $p_{i+1}(e\in OP^{i+1}) = \left[1 + \frac{3}{i(i+1)}\right]p_i(e\in OP^i) + o(i^{-2})$ from $i$ to $i+1$. For the $OHC$ edges with $f(e) > \frac{i^2 - 4i + 7}{2}$, $p_{i+1}(e\in OP^{i+1}) < \left[1 + \frac{3}{i(i+1)}\right]p_i(e\in OP^i)$ exists as $p_i(e\in OP^i)$ increases from $i$ to $i+1$. For the other $OHC$ edges with the relatively smaller average frequency $f(e) = \frac{i^2 + bi + c}{2}$ where $b < -4$ and $c$ is a small number, $p_{i}(e\in OP^i)$ will increase faster than that corresponds to $f(e) > \frac{i^2 - 4i + 7}{2}$. 

Since there are ${{n}\choose{i}}$ frequency $K_i$s in $K_n$, each frequency $K_i$ containing ${{i}\choose{2}}$ $OP^i$s, and each $OP^i$ contains $i-1$ edges, the average frequency of the $\frac{n(n-3)}{2}$ ordinary edges will be  $f(g) < \frac{(i-1)(n-1)}{n-3} - \frac{i^2-4i+7}{n-3} < i-1$. As the average frequency of $OHC$ edges is much bigger than $\frac{i^2-4i+7}{2}$,  $f(g) < i-3$ will appear according to $i$ in most cases. Meanwhile, the average probability of all ordinary edges will be $p_i(g\in OP^i) < \frac{2(n-1)}{i(n-3)} - \frac{2(i^2-4i+7)}{i(i-1)(n-3)}$ and it decreases according to $i$. Because $i\geq 4$, $p_i(g\in OP^i) < \frac{1}{2}$ exists for big $n$. It implies that $g$ is contained in smaller than $\frac{1}{2}{{i}\choose{2}}$ $OP^i$s in each $K_i$ on average. Moreover, $p_i(g\in OP^i)$ decreases sharply in proportion to the factor smaller than  $\frac{i}{i+1}$ according to $i$. One can assume $p_i(g\in OP^i) = \frac{ai^2 + bi +c}{{{i}\choose{2}}}$ and $a\in [0,\frac{1}{2}]$. If $p_{i+1}(g\in OP^{i+1}) < \frac{i}{i+1}p_i(g\in OP^i)$,  $ai^2 - ai -(a+b+c) < 0$ is derived. Thus, $a$ must be zero. It indicates that the average frequency $f(g) = bi + c$ where $b + c \geq 0$ if $p_{i+1}(g\in OP^{i+1}) < \frac{i}{i+1}p_i(g\in OP^i)$ happens from $i$ to $i+1$. In this case, more percentage of ordinary edges are replaced by $OHC$ edges for constructing the $OP^i$s containing more and more vertices. If $i$ is small, some ordinary edges $g$ may have certain average frequency $f(g) \geq \frac{1}{2}{{i}\choose{2}}$. Most of the ordinary edges are contained in the $OP^n$s not in $OHC$. However, they will have the frequency $ f(g) < 2(n-3)$ in the frequency $K_n$. It means that $p_i(g\in OP^i)$ and $f(g)$ of any $g$ increase slower or decrease faster than $p_i(e\in OP^i)$ and $f(e)$ of any $e$ according to $i\in[4,n]$ in the whole process. As $p_{i+1}(e\in OP^{i+1}) > \left[1 - \frac{2}{i(i-1)}\right]p_i(e\in OP^i)$ or $p_{i+1}(e\in OP^{i+1}) \geq \left[1 + \frac{2}{i(i-1)}\right]p_i(e\in OP^i)$ exists  from $i$ to $i+1$,  $p_{i+1}(g\in OP^{i+1}) \leq \left[1 - \frac{2}{i(i-1)}\right]p_i(g\in OP^i)$ will occur accordingly and $p_{i+1}(g\in OP^{i+1}) \leq \frac{i}{i+1}p_i(g\in OP^i)$ will appear in most cases. 

In the next, we shall analyze the frequency sum for two adjacent $OHC$ edges, such as $e_1$ and $e_2$ containing a vertex $v$, in $K_n$ based on the frequency $K_i$s. $v$ is contained in $n-1$ edges, and there are ${{n}\choose{2}}$ $OP^n$s visiting $v$. As the frequency of each of the $n-1$ edges is enumerated from all $OP^n$s, the total frequency related to the $n-1$ edges is $(n-1)^2$. We takes one $OP^4$ from $OHC$ where $v$ is one intermediate vertex. One $K_4$ is built on the four vertices in the $OP^4$. The $K_4$ contains six $OP^4$s, see Appendix $B$ from paper  \cite{DBLP:journals/Wang24}. As an intermediate vertex in the $OP^4$s, $e_1$ and $e_2$ have the frequency pair 3, 5 or 1, 5 according to the six $OP^4$s and frequency $K_4$. The frequency sum $f(e_1) + f(e_2) = $ 6 or 8 exists according to the six $OP^4$s. Moreover, 6 occurs four times and 8 occurs eight times with respect to the six $OP^4$s. Since there are 12 pairs of adjacent edges in the six $OP^4$s, the probabilities of 6 and 8 are $\frac{1}{3}$ and $\frac{2}{3}$, respectively. The expected value of $f(e_1) + f(e_2)$ is $\frac{22}{3}$, and the average frequency for one $OHC$ edge is $\frac{11}{3} > 3$. 

It is known that an $OHC$ edge in $K_n$ is also the $OHC$ edge in every  $K_4$ containing it on average. In the $n-3$ frequency $K_4$s containing $e_1$ and $e_2$, the frequency pair is 3, 5 as well as $f(e_1) + f(e_2) = $ 8 rather than 6. One sees that $\frac{4(i-1)^2}{5} < 8$ exists for $i = 4$. Theorem \ref{th1} works for $f(e_1) + f(e_2)$ if $e_1$ and $e_2$ are contained in the same frequency $K_4$s. Based on Theorem \ref{th3}, one can prove $f(e_1) + f(e_2) \geq \frac{4(i-1)^2}{5}$ as $e_1$ and $e_2$ are contained in the same frequency $K_i$s. 
In each of the other frequency $K_4$s containing $e_1$ and $e_2$, we assume that the frequency sums 6 and 8 are distributed for them based on the six $OP^4$s in Figure \ref{optimalpaths}. In this case, the lower bound of the expected frequency sum  can be derived as $\frac{22}{3} + \frac{4}{3(n-2)}$. Based on Theorem \ref{th01}, an $OHC$ edge is contained in the $OP^i$s with the bigger probability than an ordinary edge. Thus, $f(e_1) + f(e_2) \geq \frac{22}{3} + \frac{4}{3(n-2)}$ holds. The average frequency of an $OHC$ edge is bigger than $\frac{11}{3}$. Under the constrains of the $OP^4$s, the lower bound of the average frequency for an $OHC$ edge is improved from 3 to $\frac{11}{3}$ \cite{DBLP:journals/Wang25}. Thus, the total frequency for two adjacent $OHC$ edges is bigger than $\frac{22}{3}{{n-2}\choose{2}} + \frac{2(n-3)}{3}$ based on the frequency $K_4$s. Since $p_4(e\in OP^4)$ increases according to $n$ based on Theorem \ref{th3}, the frequency sum for two adjacent $OHC$ edges also rises accordingly.   

$e_1$ and $e_2$ are contained in ${{n-3}\choose{i-3}}$ frequency $K_i$s. In each of the frequency $K_i$s, $f(e_1) + f(e_2) \geq \frac{4(i-1)^2}{5}$ holds based on Theorem \ref{th01}. In each of the other frequency $K_i$s, $f(e_1)\geq \frac{1}{2}{{i}\choose{2}}$ and $f(e_2) \geq \frac{1}{2}{{i}\choose{2}}$ hold based on Theorem \ref{th1}. In the worst case, $f(e_1) + f(e_2) = \frac{4(i-1)^2}{5}$ is used according to the ${{n-3}\choose{i-3}}$ frequency $K_i$s, and $f(e_1) = f(e_2) = \frac{1}{2}{{i}\choose{2}}$ is considered in each of the other frequency $K_i$s. The expected value of $f(e_1) + f(e_2) = {{i}\choose{2}} + \frac{(i-2)(3i^2-11i+8)}{10(n-2)}$ can be computed. The probability sum $p_i(e_1\in OP^i) + p_i(e_2\in OP^i) = 1 + \frac{(i-2)(3i^2-11i+8)}{5(n-2)i(i-1)} > 1$ can be derived. As $i$ rises if $n$ is fixed, $p_i(e_1\in OP^i) + p_i(e_2\in OP^i)$ increases accordingly even if for the worst case. It means that $p_i(e_1\in OP^i)$ and $p_i(e_2\in OP^i)$ seldom decrease from $i$ to $i+1$ at the same time. If $p_i(e_1\in OP^i)$ or $p_i(e_2\in OP^i)$ has a slight decrement from $i$ to $i+1$, the other one must have some bigger increment. If $f(e_1) = f(e_2) = \frac{7}{18}{{i}\choose{2}}$ is considered with respect to the other frequency $K_i$s, the same conclusion can be drawn. It indicates that two adjacent $OHC$ edges must be contained in more percentage of the $OP^i$s according to $i\in[4,n]$. Moreover, under the frequency constraints between $OHC$ edges and ordinary edges in frequency $K_i$s(or the inclusive restrictions among $OP^i$s ($i\in [4,n]$)), the consecutive $k\in [2,n]$ $OHC$ edges will be contained in more percentage of $OP^i$s according to $i$, see the experiments in the paper \cite{DBLP:journals/Wang25}.

Based on Theorem \ref{th3}, $e_1$ and $e_2$ will be contained in more percentage of the $OP^i$s according to $i$. Since $f(e_1) + f(e_2) \geq \frac{22}{3}$ exists based on frequency $K_4$s, $f(e_1) + f(e_2)\geq \frac{11}{9}{{i}\choose{2}}$ holds based on frequency $K_i$s. 
As $i=n$, $f(e_1) + f(e_2) \geq \frac{11}{9}{{n}\choose{2}}$ holds in the frequency $K_n$. The $n-3$ ordinary edges containing $v$ have the frequency sum smaller than $\frac{(n-1)(7n-18)}{18}$. In the average case, $f(e_1) + f(e_2) \geq n^2-4n+7$ holds in the frequency $K_n$, an ordinary edge will have the frequency smaller than 2 on average. 
It implies that most ordinary edges will have the frequency of 1 and zero in the frequency $K_n$. 

The lower frequency bounds for an $OHC$ edge, two adjacent $OHC$ edges and average frequency for all $OHC$ edges are summarized in Table \ref{fbkn}. It mentions that the smallest frequency of $OHC$ edges increases according to $n$. For very small $TSP$, the smallest frequency for an $OHC$ edge may approach $\frac{1}{2}{{i}\choose{2}}$ or $\frac{7}{18}{{i}\choose{2}}$. For the big and large $TSP$, $\frac{1}{2}{{i}\choose{2}}$ will be too small as the lower frequency bound. In real-world applications, the smallest frequency sum for two adjacent $OHC$ edges is also much bigger than $\frac{11}{9}{{i}\choose{2}}$ for big and large $TSP$ \cite{DBLP:journals/Wang25}. 

\begin{table}
	\begin{center}
		\caption{The lower frequency bounds for $e\in OHC$  and upper frequency bounds for $g\notin OHC$ in $K_n$ based on frequency $K_i$s (adj. = adjacent and Avg. = average value).}
		{\footnotesize \begin{tabular}{ p{1.5cm}  p{2cm}  p{2cm}  p{2.5cm}  }
				\hline
				& 1 $e$ or $g$ & 2 adj. $e$s & Avg. of $e$s or $g$s    \\
				\hline
				$e\in OHC$ & $\frac{1}{2}{{i}\choose{2}}$ or $\frac{7}{18}{{i}\choose{2}}$& $\frac{11}{9}{{i}\choose{2}}$ & $\frac{i^2-4i+7}{2}$   \\	
				$g\notin OHC$ & --- & ---  & $i-1$    \\														
				\hline
		\end{tabular}}
		\label{fbkn}
	\end{center}
\end{table}

\section{The frequency and probability changes for edges according to $i$}
\label{sec5}
Given $K_n$, the frequency and probability changes for the three types of edges, i.e., $OHC$ edges, the edges contained in $OP^n$s but not in $OHC$, and the  edges excluding from any one $OP^n$, will be analyzed according to $i$ as the frequency of each edge is computed with the frequency $K_i$s. It is known that an $OHC$ edge will have the average frequency  $f(e) \geq \frac{1}{2}{{i}\choose{2}}$ based on the  frequency $K_i$s. For the ordinary edges contained in $OP^n$s yet not in $OHC$, they may have the average frequency  $f(g) \geq \frac{1}{2}{{i}\choose{2}}$ as $i$ is much smaller than $n$. If $i$ becomes big, $f(g) < \frac{1}{2}{{i}\choose{2}}$ will appear, and $f(g) < 2(n-3)$ exists in the frequency $K_n$. For the other ordinary edges, they usually have the average frequency  $f(g) < \frac{1}{2}{{i}\choose{2}}$, and $f(g) = 0$ exists in the frequency $K_n$. The probability that an edge is contained in the $OP^i$s plays an important role to determine the  (average) frequency computed with the frequency $K_i$s. 
As the frequency of each edge is computed with the frequency $K_i$s, the average frequency for all edges is $(i-1){{n-2}\choose{i-2}}$. Since an edge is contained in ${{n-2}\choose{i-2}}$ $K_i$s, and each $K_i$ contains ${i}\choose{2}$ $OP^i$s, the average probability that an edge is contained in an $OP^i$ is computed as $\frac{2}{i}$. This indicates that the average probability that an edge is contained in the $OP^i$s monotonously decreases according to $i$ on average. 
However, $OHC$ edges are contained in the nearly equal or more percentage of $OP^i$s according to $i$ based on Theorems \ref{th1} and \ref{th3}. For ordinary edges out of $OHC$, the change of probability that they are contained in the $OP^i$s follows Theorem \ref{th33}.

\begin{theorem}
	\label{th33}
	For an ordinary edge $g\notin OHC$ in $K_n$, the probability $p_i(g\in OP^i)$ that it is contained in the $OP^i$s monotonously decreases according to $i\in [4,n] $ in the average case. Moreover,  $p_{i+1}(g\in OP^{i+1}) \leq \frac{i}{i+1}p_i(g\in OP^i)$ exists in most cases as $p_i(g\in OP^i)$ decreases from $i$ to $i+1\leq n$. 
\end{theorem}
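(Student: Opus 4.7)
The plan is to derive the average probability that an ordinary edge is contained in the $OP^i$s from the total frequency budget, then show that this quantity is strictly decreasing in $i$, and finally strengthen the bound by a parametrization argument analogous to the one used in the discussion following Theorem \ref{th3}.

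First, I would count frequencies globally. Each frequency $K_i$ contains $\binom{i}{2}$ $OP^i$s and each $OP^i$ contains $i-1$ edges, so the total frequency contributed by one frequency $K_i$ is $(i-1)\binom{i}{2}$. Summing over the $\binom{n}{i}$ frequency $K_i$s in $K_n$ and distributing the mass over the $\binom{n}{2}$ edges (each appearing in $\binom{n-2}{i-2}$ frequency $K_i$s), the average frequency per edge in a single frequency $K_i$ is $i-1$, and the average probability that an edge lies in an $OP^i$ is $\tfrac{2}{i}$. This global average alone decreases in $i$, but the real statement is per-edge for the ordinary class, so I must separate the $OHC$ contribution.

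Next I would invoke Theorem \ref{th3} and the remarks following it: the expected frequency of every $OHC$ edge in a frequency $K_i$ containing it is at least $\tfrac{i^2-4i+7}{2}$. With $n$ $OHC$ edges and $\tfrac{n(n-3)}{2}$ ordinary edges, subtracting the $OHC$ mass from the total budget and dividing by the number of ordinary edges gives
\begin{equation*}
p_i(g\in OP^i) < \frac{2(n-1)}{i(n-3)} - \frac{2(i^2-4i+7)}{i(i-1)(n-3)}.
\end{equation*}
A direct discrete-difference check shows this right-hand side is strictly decreasing in $i$ on $[4,n]$, which establishes the monotone decrease of $p_i(g\in OP^i)$ in the average case. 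For the sharper rate $p_{i+1}(g\in OP^{i+1})\leq \tfrac{i}{i+1}p_i(g\in OP^i)$, I would mimic the parametrization used after Theorem \ref{th3}: write $p_i(g\in OP^i)=\tfrac{ai^2+bi+c}{\binom{i}{2}}$ with $a\in[0,\tfrac{1}{2}]$ and substitute into the target inequality. After clearing denominators the required condition reduces to $ai^2-ai-(a+b+c)<0$ for every $i$ in the regime of interest, which forces $a=0$. Hence a generic $p_i(g\in OP^i)$ behaves like $\tfrac{bi+c}{\binom{i}{2}}$, which indeed shrinks by the factor $\tfrac{i}{i+1}$ as $i\to i+1$.

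The main obstacle, as I see it, is the passage from the global average bound on $p_i(g\in OP^i)$ to the per-edge statement: a particular ordinary edge can temporarily inflate its frequency by absorbing slack left by even weaker ordinary edges, so monotonicity is not automatic on a fixed $g$. To handle this I would invoke Theorem \ref{th22} (expected ordinary frequency $<2$ and maximum $<2(i-3)$) together with the replacement principle from the proof of Theorem \ref{th1}, namely that any ordinary edge with anomalously high frequency is eventually displaced by an $OHC$ edge or by another ordinary edge when the $OP^i$s on larger vertex sets are assembled. This displacement mechanism is precisely what the ``in most cases'' qualifier in the statement refers to, and once it is accepted the parametric inequality $\tfrac{i}{i+1}$ follows as an algebraic consequence.
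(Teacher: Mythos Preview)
Your budget argument and the parametrization $p_i(g)=\tfrac{ai^2+bi+c}{\binom{i}{2}}$ leading to $a=0$ match the paper closely; the paper derives the same counting identity in a vertex-local form (its formula~(\ref{F3}), giving $\sum_j p_i(e_j)+\sum_j p_i(g_j)=\tfrac{2(n-1)}{i}$) and subtracts the $OHC$ mass to see the ordinary-edge sum shrinking with $i$. So for the ``average case'' monotonicity your outline is essentially the paper's.

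Where you diverge is exactly at the obstacle you flag, and your proposed fix is too loose. Invoking Theorem~\ref{th22} and a generic ``displacement mechanism'' from Theorem~\ref{th1} does not control how a \emph{fixed} $g$ behaves across the family of $K_i$s as $i$ grows; that argument lives inside a single $K_i$. The paper supplies a concrete structural device you are missing: it counts $K=2\binom{n-4}{i-4}-\binom{n-6}{i-6}$, the number of $K_i$s containing $g$ together with \emph{both} pairs of $OHC$ edges adjacent to $g$'s endpoints. In every such $K_i$, Theorem~\ref{th3} forces those adjacent edges to remain $OHC$ edges of the sub-$K_i$, so $g$ is an ordinary edge there and has small frequency $f_2(g)$. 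Setting $r=K/\binom{n-2}{i-2}$ and $\epsilon=\tfrac{(i-2)(i-3)}{(n-2)(n-3)}$, the paper shows $r\approx 2\epsilon-\epsilon^2$, which increases monotonically with $i$. This yields a per-edge decomposition $p_i(g)=(1-r)p_1+r\,p_2$ with $p_2$ small, and the growth of $r$ is what drives $p_i(g)$ down for the individual edge---first slowly, then (once the decrement exceeds $\tfrac{2p_i(g)}{i(i-1)}$) at a rate satisfying $p_{i+1}(g)\le\tfrac{i}{i+1}p_i(g)$. The paper even runs a numerical check of this $r$-driven model (Figure~\ref{pdi}) to confirm the predicted shape. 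Without this adjacency-counting argument, your plan has no mechanism that distinguishes a fixed ordinary edge from the class average, so the per-edge monotonicity and the $\tfrac{i}{i+1}$ rate remain unsupported.
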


\begin{proof}
	Without loss of generality, the $n-1$ edges containing a vertex $v$ in $K_n$ are taken as an example for analysis. The $n-1$ edges include two $OHC$ edges, for example $e_1$ and $e_2$, and $n-3$ ordinary edges $g_j$ ($j\in [1,n-3]$). The change of the probability that the $n-3$ ordinary edges are contained in the $OP^i$s will be predicted according to $i\in [4,n]$. 
	
	Each edge is contained in ${{n-2}\choose{i-2}}$ $K_i$s, and each $K_i$ includes ${{i}\choose{2}}$ $OP^i$s. Given all $K_i$s containing an edge, there are  ${{i}\choose{2}}{{n-2}\choose{i-2}}$ $OP^i$s. Let $p_i(e\in OP^i)$ and $p_i(g\in OP^i)$ denote the probability that an edge $e\in OHC$ and $g\notin OHC$ is contained in such an $OP^i$, respectively. The frequency of an edge $e$ or $g$ is computed as $F_i(e) = {{i}\choose{2}}{{n-2}\choose{i-2}}p_i(e\in OP^i)$ or $F_i(g) = {{i}\choose{2}}{{n-2}\choose{i-2}}p_i(g\in OP^i)$. For the $n-1$ edges containing $v$, the total frequency is $F_{tot} = \sum_{j=1}^{2}{{i}\choose{2}}{{n-2}\choose{i-2}}p_i(e_j\in OP^i) +  \sum_{j=1}^{n-3}{{i}\choose{2}}{{n-2}\choose{i-2}}p_i(g_j\in OP^i)$. There are ${{n-1}\choose{i-1}}$ $K_i$s containing $v$ and the associated edges. In each frequency $K_i$, the total frequency of the $i-1$ edges containing $v$ is $(i-1)^2$. Thus, the equation (\ref{F3}) is derived, and the equality $\sum_{j=1}^{n-3}p_i(g_j\in OP^i) = \frac{2(n-1)}{i}$ holds. Since the two $OHC$ edges $e_1$ and $e_2$ have the probability sum $p_i(e_1\in OP^i) + p_i(e_2\in OP^i)\geq \frac{11}{9}$, and it increases according to $i$, the $n-3$ ordinary edges have the probability sum $\sum_{j=1}^{n-3}p_i(g_j\in OP^i) \leq \frac{2(n-1)}{i} - \frac{11}{9}$. It mentions that $p_i(g_j\in OP^i) = 0$ will appear for some ordinary edges $g_j$ as $i$ is relatively big. The reason is that only $\frac{2}{i}$ of the $OP^i$s are contained in the $OP^{i+1}$s in each $K_{i+1}$. Most $OP^i$s are neglected as well as the included ordinary edges. Thus, the number of ordinary edges with the average frequency bigger than zero will be smaller than $n-3$ for big $i$. 
	\begin{eqnarray}\label{F3}
	\left[\sum_{j=1}^{2}p_i(e_j\in OP^i) + \sum_{j=1}^{n-3}p_i(g_j\in OP^i)\right]{{i}\choose{2}}{{n-2}\choose{i-2}} = (i-1)^2{{n-1}\choose{i-1}}, \nonumber \\
	\sum_{j=1}^{2}p_i(e_j\in OP^i) + \sum_{j=1}^{n-3}p_i(g_j\in OP^i) = \frac{2(n-1)}{i}, \nonumber \\   \sum_{j=1}^{n-3}p_i(g_j\in OP^i) \leq \frac{2(n-1)}{i} -  \frac{11}{9}
	\end{eqnarray}	
	As $i$ is small, the $n-3$ ordinary edges have one big probability sum. It indicates that an ordinary edge is contained in the relatively big percentage of the $OP^i$s on average. 
	As $i$ rises, the probability sum of the $n-3$ ordinary edges becomes smaller. %
	As $i=n$, $\sum_{j=1}^{n-3}p_i(g_j\in OP^i)$ reaches the minimum value $\leq \frac{7}{9} - \frac{2}{n}$. It implies that an ordinary edge is contained in the smaller percentage of the $OP^i$s according to $i$ on average. 
	As $i = n$, $p_n(e_1\in OP^n)+p_n(e_2\in OP^n)\geq \frac{8(n-1)}{5n}$ or $\frac{2(n^2-4n+7)}{n(n-1)}$ will hold based on Theorems \ref{th01} and \ref{th22}. Thus, $\sum_{j=1}^{n-3}p_i(g_j\in OP^i)$ for the $n-3$ ordinary edges will be very small as $i$ is big.
	
	Based on Theorems \ref{th1} and \ref{th3}, $p_i(e_1\in OP^i) + p_i(e_2\in OP^i)$ increases according to $i$ in the worst case. Because $p_{i+1}(e_1\in OP^{i+1}) + p_{i+1}(e_2\in OP^{i+1}) > p_i(e_1\in OP^i) + p_i(e_2\in OP^i)$ from $i$ to $i+1$, the probability decrement $\sum_{j=1}^{n-3}p_i(g_j\in OP^i)-\sum_{j=1}^{n-3}p_{i+1}(g_j\in OP^{i+1}) = \frac{2(n-1)}{i(i+1)} + p_{i+1}(e_1\in OP^{i+1}) + p_{i+1}(e_2\in OP^{i+1}) - p_i(e_1\in OP^i)  - p_i(e_2\in OP^i) > 0$ is computed. On average, the probability decrement for an ordinary edge $g = g_j$ ($j\in [1,n-3]$) is $pd_i(g) = p_i(g\in OP^i) - p_{i+1}(g\in OP^{i+1}) = \frac{2(n-1)}{i(i+1)(n-3)} + \frac{p_{i+1}(e_1\in OP^{i+1}) + p_{i+1}(e_2\in OP^{i+1})}{n-3} - \frac{ p_i(e_1\in OP^i) + p_i(e_2\in OP^i)}{n-3} > 0$. In the average case, the probability sum  $p_i(e_1\in OP^i) + p_i(e_2\in OP^i) \geq  \frac{2(i^2-4i+7)}{i(i-1)}$ holds according to $i$. In this case, $p_i(e_1\in OP^i)+p_i(e_2\in OP^i)$ increases faster and faster according to $i$ until it is close to the biggest value. On the other hand, the probability decrement $pd_i(g)$ will become bigger according to $i$. Meanwhile, $p_i(g\in OP^i)$ will decrease faster and faster until it is near the smallest value.
	
	An ordinary edge $g=g_j$ ($j\in [1,n-3]$) and two pairs of the $OHC$ edges are adjacent on both endpoints, see Figure \ref{OHC}. $g$ and the two pairs of the $OHC$ edges are contained in $K = 2{{n-4}\choose{i-4}} - {{n-6}\choose{i-6}}$ $K_i$s. In each of the $K_i$s, $g$ will be an ordinary edge since an $OHC$ edge related to $K_n$ is the $OHC$ edge in the $K_i$s containing it based on Theorem \ref{th3}. In the best average case, $g$ will have the frequency $\frac{i+2}{2}$ in each of these frequency $K_i$s. They occupy $r = \frac{2(i-2)(i-3)}{(n-2)(n-3)} - \frac{(i-2)(i-3)(i-4)(i-5)}{(n-2)(n-3)(n-4)(n-5)}\in[0,1]$ of the frequency $K_i$s containing $g$. As $i$ is small, $g$ and the two pairs of $OHC$ edges are contained in the small percentage of the frequency $K_i$s, and $g$ is just adjacent to either one $OHC$ edge or no $OHC$ edges on both endpoints in most $K_i$s. In this case, $g$ (and some other ordinary edges) will be the $OHC$ edge  and it has a big frequency in these $K_i$s. Thus, the probability $p_i(g\in OP^i)$ will be big based on the frequency $K_i$s, and the ordinary edges have the relatively big probability sum.
	
 	As $i$ rises, $r$ increases accordingly. $g$ and the two pairs of adjacent $OHC$ edges are contained in more and more percentage of the $K_i$s. As an ordinary edge, $g$ has the small frequency in these frequency $K_i$s. Moreover, $p_i(e_1\in OP^i) + p_i(e_2\in OP^i)$ increases according to $i$. 
	Thus, $p_i(g\in OP^i)$ will become smaller, and the probability sum of the ordinary edges monotonously decreases according to $i$, see formula (\ref{F3}). If $g$ has the frequency $f_2(g) = \frac{i+2}{2}$ in each of the $K$ frequency $K_i$s, and it has the maximum frequency $f_1(g) = {{i}\choose{2}} - 1$ in each of the other frequency $K_i$s, $p_i(g\in OP^i) = 1 - \left[1 - \frac{i+4}{i(i-1)}\right]r - \frac{2}{i(i-1)}$ is formulated. If $i$ is much smaller than $[\frac{n}{2}]$, $p_i(g\in OP^i)$ increases according to $i$. Because $r$ increases according to $i$, and $\frac{i+4}{i(i-1)}$ tends to zero as $i$ becomes big, $p_i(g\in OP^i)$ will decrease finally according to $i$. Based on the formula of $p_i(g\in OP^i)$, the maximum value is obtained at $i=O(n^{\frac{4}{7}})$ much smaller than $[\frac{n}{2}]$.  After that, $p_i(g\in OP^i)$ decreases from this biggest value according to $i$ until $i=n$. Moreover, once $p_i(g\in OP^i)$ decreases, the probability decrement $pd_i(g)$ will be bigger than zero according to $i$ until $p_i(g\in OP^i)$ becomes one small value or zero. 
	Based on dynamic programming, the $OP^i$s are computed to see the change of $p_i(g\in OP^i)$ for ordinary edges, and an ordinary edge can be identified in $O(n^{\frac{16}{7}}2^{n^{\frac{4}{7}}})$ time for this case.
	
	For $n=1000$, the $p_i(g\in OP^i)$ and probability decrement $pd_i(g) = p_i(g\in OP^i) - p_{i+1}(g\in OP^{i+1}) >0$ is computed according to $i$ and shown in Figure \ref{pdi}. In Figure \ref{pdi}, $p_i(g\in OP^i)$ increases quickly as $i$ is small, and it reaches the biggest value close to 1 at $i=33$ which is much smaller than $[\frac{n}{2}] = 500$. In the interval $i\in[4,32]$, $pd_i(g) < 0$ appears from $i$ to $i+1$. If $i \geq 33$, $p_i(g\in OP^i)$ decreases according to $i\in [33,1000]$. In this case, $pd_i(g) > 0$ exists. One sees the interval where $p_i(g\in OP^i)$ decreases is much bigger than that where it increases. Moreover, $pd_i(g)$ increases from $i=33$ to 589, and then it decreases according to $i\in[589,1000]$ yet $pd_i(g) \geq 0$. It indicates that $p_i(g\in OP^i)$ decreases faster and faster according to $i\in[33,589]$ which occupies more than half of the total interval $[4,1000]$. Since $\frac{2p_i(g\in OP^i)}{i(i-1)}$ decreases quickly according to $i$ while $pd_i(g)$ increases in this stage, $pd_i(g) > \frac{2p_i(g\in OP^i)}{i(i-1)}$ holds from $i\geq 32$ to $i+1\leq 590$ in most cases. In the experiments, $pd_i(g) > \frac{2p_i(g\in OP^i)}{i(i-1)}$ exists if $i \geq 81$. Moreover, $pd_i(g) > \frac{p_i(g\in OP^i)}{i+1}$ happens from $i=448$ to 999. Once $p_i(g\in OP^i)$ decreases from certain $i_d$, it will decrease in proportion to a bigger factor according to $i > i_d$ until $p_{i+1}(g\in OP^{i+1}) < \frac{i}{i+1}p_i(g\in OP^i)$ appears. 
	Even if in the best case for $g$, the change of $p_i(g\in OP^i)$ still conforms to Theorem \ref{th3} and formula (\ref{F3}). 
	
	The average of $pd_i(g)$s from $i=33$ to 589 is 0.001022 which is bigger than $\frac{2}{45\times 44}$ at $i=45$. It indicates that the  probability decrement $pd_i(g) > \frac{2p_i(g\in OP^i)}{i(i-1)}$ happens in most cases if $pd_i(g)$ increases. At $i = 545$, the decrement of $p_i(g\in OP^i)$ is bigger than 0.5, and $p_i(g\in OP^i) \leq \frac{1}{2}$ exists as $i > 545$. From $i=590$ to 1000, $pd_i(g)$ decreases according to $i$, and the average of the $pd_i(g)$s is 0.001039 which is bigger that that in the increasing stage. Even if $pd_i(g)$ decreases from the biggest value, it is still bigger than $\frac{p_i(g\in OP^i)}{i+1}$ according to $i\in [589,1000]$ in most cases unless it approaches zero.  On average, $p_i(e_1\in OP^i)$ and $p_i(e_2\in OP^i)$ will increase faster and faster according to $i\in [33, 589]$, and the probability increment bigger than $\frac{2p_i(e\in OP^i)}{i(i-1)}$ will occur in most cases.

	\begin{figure}
		\centering
		\includegraphics[width=3in,bb=0 0 360 200]{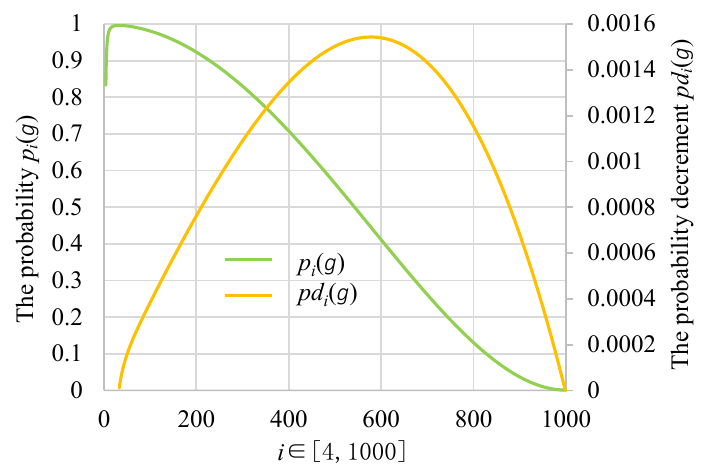}
		\caption{The changes of $p_i(g\in OP^i)=1-\left[1-\frac{i+4}{i(i-1)}\right]r - \frac{2}{i(i-1)}$  and $pd_i(g)=p_i(g\in OP^i) - p_{i+1}(g\in OP^{i+1}) > 0$ according to $i\in [4, 1000]$ for $n=1000$.} 
		\label{pdi}
	\end{figure}	
	
	In fact, $f_2(g) = \frac{i+2}{2}$ does not occur in each of the $K$ frequency $K_i$s and the maximum frequency  $f_1(g) = {{i}\choose{2}} - 1$ seldom exists in each of the other frequency $K_i$s. On average, $g$ will have the average frequency $\leq \frac{i+2}{2}$ according to the $K$ frequency $K_i$s. In each of the other frequency $K_i$s, $g$ is the $OHC$ edge in the best case. If  $g$ has the expected frequency of an $OHC$ edge according to the ${{n-2}\choose{i-2}} - K$ frequency $K_i$s, such as $f_1(g)=\frac{i^2-4i+7}{2}$, and it has the best average frequency $f_2(g) = \frac{i+2}{2}$ according to the $K$ frequency $K_i$s, the average frequency of $g$ will be $f(g) = (1-r)f_1(g) + rf_2(g) = f_1(g) - \frac{(i^2-5i+5)r}{2} < f_1(g)$ as well as $p_i(g\in OP^i) = \frac{i^2-4i+7}{i(i-1)} - \frac{(i^2-5i+5)r}{i(i-1)}$. It indicates that the average frequency and probability of $g$ increases slower than those of an $OHC$ edge according to $i$ in the average case if $f(g)$ and $p_i(g\in OP^i)$ rises. As $i$ is small, $r$ is also very small, and $p_i(g\in OP^i)$ is not affected much. In this case, $p_i(g\in OP^i)$ will increase according to $i$. If $i$ is relatively big, the probability increment will be obviously smaller than $\frac{2p_i(g\in OP^i)}{i(i-1)}$ from $i$ to $i+1$, see Theorems \ref{th1} and \ref{th3}. Moreover,  $p_i(g\in OP^i)$ will decrease according to $i$ after the biggest value. Once $p_i(g\in OP^i)$ becomes smaller according to $i$, $p_{i+1}(g\in OP^{i+1}) \leq \left[1 - \frac{2}{i(i-1)}\right]p_i(g\in OP^i)$ will exist and $p_{i+1}(g\in OP^{i+1}) \leq \frac{i}{i+1}p_i(g\in OP^i)$ will happen in most cases. 
	
	The numerical simulations illustrated that $p_i(g\in OP^i)$ decreases faster and faster according to $i$ in one big interval, see Figure \ref{pdi}. 
	Otherwise, if $p_i(g\in OP^i)$  increases faster or decreases slower than that of an $OHC$ edge according to $i$, it will have the frequency $f(g) > \frac{1}{2}{{n}\choose{2}}$ in the frequency $K_n$. It is contradict to that there are $n$ edges with the frequency  bigger than $\frac{1}{2}{{n}\choose{2}}$ in any frequency $K_n$, see Theorem \ref{th2}. 
	If $f_1(g) < {{i}\choose{2}}  - 1$ and $f_2(g) < \frac{i+2}{2}$ exists for $g$, $p_i(g\in OP^i) < 1 - \left[1 - \frac{i+4}{i(i-1)}\right]r - \frac{2}{i(i-1)}$ holds according to $i$, and the biggest value of $p_i(g\in OP^i)$ becomes smaller than that shown in Figure \ref{pdi}. For each of the other ordinary edges, the change of $p_i(g)$ has the similar trend although they have the smaller biggest value, and decrease more quickly according to $i$ once they become smaller from $i$ to $i+1$. In fact, one can build the probability function $p_i(g\in OP^i) = \frac{ai^2 + bi +c}{i(i-1)}$ for any $g$ based on  $p_i(g\in OP^i)$, $p_{i+1}(g\in OP^{i+1})$ and $p_{i+2}(g\in OP^{i+2})$ computed with the frequency $K_i$s, $K_{i+1}$s and $K_{i+2}$s for small $i$s where $a$, $b$ and $c$ are parameters. Given an ordinary edge $g$, the actual $p_i(g\in OP^i)$ will be smaller than that computed based on the probability function as $i$ is big. 
	In addition, the average probability $\bar{p}_i(g\in OP^i) \leq \frac{2(n-1)}{i(n-3)} - \frac{11}{9(n-3)}$ for the $n-3$ ordinary edges approaches zero as $i$ is big. $p_i(g\in OP^i) = 0$ will appear for most ordinary edges. 
	
	In the average case, $p_i(g\in OP^i)$ decreases in proportion to the factor $\frac{i}{i+1}$ from $i$ to $i+1$, see Theorem \ref{th1}. The equation (\ref{F3}) also indicates that the probability sum for the $n-3$ ordinary edges at $i+1$ becomes smaller than $\frac{i}{i+1}$ of that at $i$. As $p_i(g\in OP^i)$ has certain non-negligible decrement $\delta_i$, such as  $\delta_i\geq p_{i-1}(g\in OP^{i-1}) - p_i(g\in OP^i) = \frac{1}{i(i-1)}\times \frac{2(n-1)}{n-3}$, at certain number $i$, it will never have the equal increment at one bigger number  $k\leq n$. The obvious reason is that the probability sum for the ordinary edges becomes much smaller according to the bigger number $k$. The intrinsic reason is that $g$ and the two pairs of adjacent $OHC$ edges are contained in more percentage of the frequency $K_i$s where it must have the small frequency. Moreover, $p_i(e\in OP^i)$ for the adjacent $OHC$ edges and other $OHC$ edges $e\in OHC$  increases from $i$ to $k$, which also prevents $p_i(g\in OP^i)$ from increasing. Otherwise, all other ordinary edges will have the average frequency tending to zero at certain $i$ based on Theorems \ref{th2} and \ref{th22}. In this case, it is lack of ordinary edges for constructing the  ${{i}\choose{2}}$ $OP^i$s in most $K_i$s. 
	Theorem \ref{th1} illustrates that $p_{i+1}(g\in OP^{i+1}) < p_i(g\in OP^i)$ appears for ordinary edges  as $p_i(e\in OP^i)$ for $OHC$ edges rises from $i$ to $i+1$. Figure \ref{pdi} illustrates that the probability decrement $pd_i(g)$ becomes bigger according to $i$ in more than half of the total interval $[4,n]$. Moreover, $pd_i(g) > \frac{p_i(g\in OP^i)}{i+1}$ appears from $i$ to $i+1$ in most cases as $p_i(g\in OP^i)$ decreases. 
	
	For the ordinary edges not in $OP^n$s, $p_i(g\in OP^i)$ decreases faster than that for the edges contained in $OP^n$s. As $i$ is relatively big, $p_i(g\in OP^i) = 0$ will appear for most ordinary edges. In this case, the number of ordinary edges with $p_i(g\in OP^i) > \delta$ (here $\delta > 0$ is a small number) will be much smaller than $n-3$. Moreover, most or all of the ordinary edges with $p_i(g\in OP^i) > \delta$ are contained in the $OP^n$s. 
	It means that the number of ordinary edges with $p_i(g\in OP^i) > \delta$ does not change much from $i$ to $i+1$. Based on formula (\ref{F3}), $p_{i+1}(g\in OP^{i+1})\leq \frac{i}{i+1}p_i(g\in OP^i)$ can be derived for the ordinary edges contained in the $OP^n$s. 
	Thus, $pd_i(g) \geq \frac{p_i(g\in OP^i)}{i+1}$ will exist for them from $i$ to $i+1$. Since $p_i(e\in OP^i) \geq \frac{1}{2}$ holds for  $e\in OHC$, $g$ with $p_i(g\in OP^i) \ll 0.5$ at the small numbers $i$ is not an $OHC$ edge because $p_k(g\in OP^k) > 0.5$ will never occur at the big numbers $k$. 
	
	As $i$ is relatively big, $p_i(g_j\in OP^i)$ for each edge  remains stable from $i$ to $i+1$ whether it decreases or increases based on Theorem \ref{th3}. Given two ordinary edges in $K_n$, as an edge has the bigger probability (or frequency) than the other edge at certain big number $i\leq n$, the probability (or frequency) of the edge will be bigger than that of the other edge at another bigger number $k (i < k \leq n)$. Thus, $p_{i+1}(g\in OP^{i+1}) < \frac{i}{i+1}p_i(g\in OP^i)$ occurs in most cases for every ordinary edge according to formula (\ref{F3}). 
\end{proof}

According to $i$, $p_i(e\in OP^i)$ for an edge $e\in OHC$ keeps stable or increases, whereas $p_i(g\in OP^i)$ for an ordinary edge $g$ decreases on average. If $p_i(e\in OP^i)$ or $p_i(g\in OP^i)$ decreases according to $i$, the probability decrement according to the small numbers $i$ is bigger than that according to the big numbers $i$. In general, the probability decrement $pd_i(g) \geq \frac{p_i(g\in OP^i)}{i+1}$ happens from $i$ to $i+1$ in most cases. For most ordinary edges, $p_i(g\in OP^i)$ will decrease quickly according to $i$, and $p_i(g\in OP^i) < \frac{1}{2}$ will appear at the small numbers $i$. For $e\in OHC$, $p_i(e\in OP^i)$ will decrease much slower from $i$ to $i+1$ due to $p_{i+1}(e\in OP^{i+1}) \geq \left[1-\frac{2}{i(i-1)}\right]p_i(e\in OP^i)$. Moreover, $p_i(e\in OP^i)$ will increase from $i$ to $i+1$ and the probability increment will be bigger than $\frac{2p_i(e\in OP^i)}{i(i-1)}$ in most cases.  $p_i(e\in OP^i)\geq \frac{1}{2}$ holds according to $i$, especially as $i$ is big. Even if $p_i(g\in OP^i)$ increases from $i$ to $i+1$, the probability increment will be smaller than $\frac{2p_i(g\in OP^i)}{i(i-1)}$. Once $p_i(g\in OP^i)$ decreases from some $i$ to $i+1$, it will becomes smaller according to $i$ and never retrieves any previous bigger value. 

As the probability $p_i(e\in OP^i)$ or $p_i(g\in OP^i)$ is computed based on frequency $K_i$s, the probability change for an $OHC$ edge $e$ and ordinary edge $g$ from $i$ to $i+1$ is given in Table \ref{pchange}. As $r_p \geq 1$, it means that $p_{i+1}(e\in OP^{i+})\geq p_i(e\in OP^i)$ and $p_{i+1}(g\in OP^{i+})\geq p_i(g\in OP^i)$ from $i$ to $i+1$. In this case, $p_{i+1}(e\in OP^{i+1})$ is bigger or smaller than $\left[1+\frac{2}{i(i-1)}\right]p_i(e\in OP^i)$ whereas $p_{i+1}(g\in OP^{i+1}) < \left[1+\frac{2}{i(i-1)}\right]p_i(g\in OP^i)$ exists. If $r_p < 1$, it means that $p_{i+1}(e\in OP^{i+}) < p_i(e\in OP^i)$ and $p_{i+1}(g\in OP^{i+})< p_i(g\in OP^i)$ from $i$ to $i+1$. $p_{i+1}(e\in OP^{i+1}) > \left[1-\frac{2}{i(i-1)}\right]p_i(e\in OP^i)$ holds whereas $p_{i+1}(g\in OP^{i+1}) \leq \left[1-\frac{2}{i(i-1)}\right]p_i(e\in OP^i)$ or $\frac{i}{i+1}p_i(g\in OP^i)$ exists.
\begin{table}
	\begin{center}
		\caption{The changes of $p_i$ from $i$ to $i+1 \leq n$ for $e\in OHC$ and $g\notin OHC$.}
		{\footnotesize \begin{tabular}{ p{1.5cm}  p{3.5cm}  p{3.5cm}  }
				\hline
				&  $r_p = \frac{p_{i+1}}{p_i} \geq 1$ & $r_p = \frac{p_{i+1}}{p_i} \leq 1$   \\
				\hline
				$e\in OHC$ & $r_p \geq$ or $< 1 + \frac{2}{i(i-1)}$ & $r_p \geq 1 - \frac{2}{i(i-1)}$  \\	
				$g\notin OHC$ & $r_p < 1 + \frac{2}{i(i-1)}$ & $r_p \leq 1 - \frac{2}{i(i-1)}$ or $\frac{i}{i+1}$   \\														
				\hline
		\end{tabular}}
		\label{pchange}
	\end{center}
\end{table}

The average frequency of an edge relies on the probability that it is contained in the $OP^i$s. The total frequency is up to the probability and the number of $K_i$s containing it. According to $i\in[4,n]$, each edge $e_j$ $(j\in \left[1,{{n}\choose{2}}\right])$ will have one corresponding peak (average) frequency computed with the frequency $K_i$s at certain number $P_j \leq n$. If $i > P_j$, the frequency of $e_j$ will decrease according to $i$. As the frequency of each edge is computed with the frequency $K_i$s, the frequency changes for different types of edges are analyzed as follows. 

At certain number $i\geq 4$, an edge is contained in ${{n-2}\choose{i-2}}$ $K_i$s, and there are total ${{i}\choose{2}}{{n-2}\choose{i-2}}$ $OP^i$s. As $i = P_0= \frac{n}{2}+2$ for even $n$ or $P_0 = \frac{n+1}{2} + 1$ for odd $n$, the number of the $OP^i$s reaches the maximum value. From $i=4$ to $P_0$, the number of $OP^i$s in the $K_i$s increases according to $i$. If an edge is contained in the $OP^i$s with the same probability according to $i$, the frequency of the edge increases in proportion to the factor $\frac{(i+1)(n-i)}{(i-1)^2} > 1$. It indicates that the frequency of the edge increases exponentially according to $i (\leq P_0)$. For an $OHC$ edge $e$, $p_i(e\in OP^i)$ keeps nearly equal or increases from $i$ to $i+1$. Thus, the frequency of each $OHC$ edge increases exponentially according to $i \leq P_0$, and each $OHC$ edge reaches its own peak frequency at $P_0$, respectively. 

As $i\geq P_0$, the number of $OP^i$s in  the $K_i$s containing an edge decreases in proportion to the factor  $\frac{(i+1)(n-i)}{(i-1)^2} < 1$ from $i$ to $i+1$. If an edge maintains the equal probability to be contained in the $OP^i$s, the frequency of the edge will decrease exponentially according to $i$. As $i = n-2$, the number of $K_{n-2}$s is equal to that of $K_4$s, i.e.,  ${{n-2}\choose{2}}={{n-2}\choose{n-4}}$. 
Thus, the frequency of the edge computed with the frequency $K_{n-2}$s is  $\frac{(n-2)(n-3)}{12}$ times of that computed with the frequency $K_4$s. As $i = n-1$ and $n$, the number of $OP^i$s in the $K_i$s containing the edge decreases in proportion to the factors $\frac{2(n-1)}{(n-3)^2}$ and $\frac{n}{(n-2)^2}$, respectively. Finally, the frequency of the edge computed with all $OP^n$s becomes $\frac{n(n-1)}{6(n-2)(n-3)}$ times of that computed with the frequency $K_4$s. For big $n$, the frequency of the edge in the frequency $K_n$ will be $\frac{1}{6}$ times of that computed with the frequency $K_4$s. If the frequency of an edge in the frequency $K_n$ is much bigger than $\frac{1}{6}$ times of that computed with the frequency $K_4$s, the edge must be contained in more percentage of $OP^i$s as $i > 4$. Theorems \ref{th1} and \ref{th3} states that $OHC$ edges have such property. An $OHC$ edge has the lowest frequency $3{{n-2}\choose{2}}$ based on the frequency $K_4$s. As it maintains the same probability according to $i$, the frequency of the edge in the frequency $K_n$ is  $\frac{n(n-1)}{6(n-2)(n-3)}\times 3{{n-2}\choose{2}} =  \frac{1}{2}{{n}\choose{2}}$. It is the lower frequency bound for $OHC$ edges in the frequency $K_n$. 

For most ordinary edges $g$ excluding from any one $OP^n$, $p_i(g\in OP^i)$ will decrease in proportion to a factor smaller than $\frac{i}{i+1}$ from $i$ to $i+1 < P_0$, and the frequency $F(g)$ increases much slower than those of $OHC$ edges and the ordinary edges contained in the $OP^n$s. Once $\frac{p_{i+1}(g\in OP^{i+1})}{p_i(g\in OP^i)} < \frac{(i-1)^2}{(i+1)(n-i)}$ appears at certain number $i = P' < P_0$, $F(g)$ reaches the peak value at $P'$, and it becomes smaller after $i > P'$. Meanwhile, $p_i(g\in OP^i)$ will become smaller based on Theorem \ref{th33}, and $F(g)$ will decrease exponentially according to $i$ until it reaches zero. For the ordinary edges $g$ contained in the $OP^n$s, $p_i(g\in OP^i)$ may increase as $i$ is small, $F(g)$ will increase according to the small numbers $i$ since $\frac{(i+1)(n-i)}{(i-1)^2} \gg 1$ exists. As $i$ becomes big, $g$ will be contained in the big percentage of $K_i$s each of which includes one or two pairs of the adjacent $OHC$ edges, respectively. In this case, $g$ has the small frequency in each of these frequency $K_i$s, and $p_i(g\in OP^i)$ will become smaller according to $i$ until $\frac{p_{i+1}(g\in OP^{i+1})}{p_i(g\in OP^i)} < \frac{(i-1)^2}{(i+1)(n-i)}$ occurs at certain number $P \leq P_0$. $g$ has the peak frequency at $P$, and $F(g)$ decreases if $i>P$. 


The frequency change for an edge depends on the probability that it is contained in the  $OP^i$s according to $i$. However, the probability change for an edge is different from that of frequency. If the probability does not decrease for an edge, the frequency will never decrease from $i=4$ to $P_0$. If the frequency of an edge decreases at certain number $P' < P_0$ or $P < P_0$, the probability must decrease before $P'$ or $P$, and finally has the non-negligible decrement at $P'$ or $P$. Thus, the probability decreases before the frequency for ordinary edges. 
Some ordinary edges $g$ in $OP^n$s  may have the big $p_i(g\in OP^i)$ as $i$ is small. However, $p_i(g\in OP^i)$ will decrease at some number and it has the non-negligible decrement according to a factor smaller than $\frac{(i-1)^2}{(i+1)(n-i)} < \frac{i}{i+1}$ at certain number $i=P$. After that, it becomes smaller according to $i > P$. As $p_i(g\in OP^i)$ decreases, $F(g)$ increases very slow or becomes smaller according to $i$. If each $g$ has the average frequency $f(g) < \frac{1}{2}{{i}\choose{2}}$ at certain number $i$, $OHC$ edges and ordinary edges can be separated from each other at this number.  

\begin{theorem}
	\label{th30}
	If the average frequency and probability of an ordinary edge in $K_n$ are computed with the frequency $K_i$s $(i\in[4,n])$, the probability will decrease at $i_d = O(n^{\frac{4}{7}})$ meeting the inequality $\frac{(n-2)(n-3) - (i_d-2)(i_d-3)}{(n-2)(n-3) - (i_d-1)(i_d-2)} \geq \sqrt{1+\frac{2}{i_d(i_d+1)}}$, and the average frequency will be smaller than $\frac{1}{2}{{i}\choose{2}}$ if $i \geq 2i_d$.  
	
	
\end{theorem}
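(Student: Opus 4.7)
The plan is to derive the stated threshold as the exact condition under which the best-case probability formula from the proof of Theorem \ref{th33} begins to decrease, and then to use the multiplicative decay bound already established there to iterate from $i_d$ out to $2i_d$. Recall that in the best average case the formula reads
\[
p_i(g\in OP^i)=1-r(i)\Bigl[1-\tfrac{i+4}{i(i-1)}\Bigr]-\tfrac{2}{i(i-1)},
\qquad r(i)=\tfrac{2(i-2)(i-3)}{(n-2)(n-3)}-\tfrac{(i-2)(i-3)(i-4)(i-5)}{(n-2)(n-3)(n-4)(n-5)}.
\]
I would define $i_d$ to be the smallest $i$ at which $p_{i+1}(g\in OP^{i+1})\le p_i(g\in OP^i)$. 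Substituting the expressions at $i$ and $i{+}1$, isolating the $r$-terms on one side and the $\tfrac{2}{i(i-1)}-\tfrac{2}{i(i+1)}$ remainder on the other, and using the identity $1-r(i)\cdot(\cdot)=\tfrac{(n-2)(n-3)-(i-2)(i-3)}{(n-2)(n-3)}\cdot(1+o(1))$ (after folding the second-order part of $r$ into the cross terms), the decrement condition rearranges into a quadratic in the ratio of the two leading factors, which upon taking square roots yields exactly
\[
\frac{(n-2)(n-3)-(i_d-2)(i_d-3)}{(n-2)(n-3)-(i_d-1)(i_d-2)}\ \geq\ \sqrt{1+\tfrac{2}{i_d(i_d+1)}}.
\]

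Once $p_i(g\in OP^i)$ starts decreasing at $i_d$, Theorem \ref{th33} provides the multiplicative bound $p_{i+1}(g\in OP^{i+1})\le\tfrac{i}{i+1}\,p_i(g\in OP^i)$ in most cases, and the monotone decay persists for all larger $i$. Telescoping from $i_d$ to $2i_d$ then gives
\[
p_{2i_d}(g\in OP^{2i_d})\ \le\ \prod_{k=i_d}^{2i_d-1}\tfrac{k}{k+1}\cdot p_{i_d}(g\in OP^{i_d})\ =\ \tfrac{i_d}{2i_d}\,p_{i_d}(g\in OP^{i_d})\ \le\ \tfrac{1}{2},
\]
using only $p_{i_d}(g\in OP^{i_d})\le 1$. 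Since the probability continues to decrease for $i>2i_d$, one obtains $p_i(g\in OP^i)\le\tfrac{1}{2}$ for every $i\ge 2i_d$, hence $f(g)=p_i(g\in OP^i)\binom{i}{2}\le\tfrac{1}{2}\binom{i}{2}$, which is the claim. The asymptotic estimate $i_d=O(n^{4/7})$ follows from expanding both sides of the threshold inequality: $\sqrt{1+\tfrac{2}{i(i+1)}}=1+\tfrac{1}{i(i+1)}+O(i^{-4})$, the left side equals $1+\tfrac{2(i-2)}{(n-2)(n-3)-(i-1)(i-2)}$ after substitution, and balancing these against the second-order contribution of $r$ to avoid the cruder $O(n^{2/3})$ bound yields the refined $O(n^{4/7})$ rate.

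The delicate step is the algebraic reduction of $p_{i+1}(g\in OP^{i+1})\le p_i(g\in OP^i)$ into the square-root form: both terms of $r(i)$ must be tracked because the leading term alone only produces a linear ratio, whereas the quartic correction is what cancels one side of the difference and leaves a quadratic on the other, which then takes a square root. Once that manipulation is in hand, the rest is a telescoping product and a monotonicity argument already underwritten by Theorem \ref{th33}.
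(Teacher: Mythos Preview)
Your overall plan---locate $i_d$ as the first index where $p_{i+1}(g\in OP^{i+1})\le p_i(g\in OP^i)$ for the best-case model, then telescope the bound $p_{i+1}\le \frac{i}{i+1}p_i$ from $i_d$ to $2i_d$---matches the paper's strategy, and the telescoping half is essentially identical. The difference is in how the square-root inequality is obtained, and here your write-up leaves a gap.

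The paper does not work directly from the formula $p_i=1-r(i)[1-\tfrac{i+4}{i(i-1)}]-\tfrac{2}{i(i-1)}$ that you cite from Theorem~\ref{th33}. Instead it rewrites $p_i=(1-r)p_1+rp_2$, drops $rp_2$ as negligible (since $p_2\le\tfrac{2(i-3)}{i(i-1)}\ll 1$), and then uses the \emph{exact} factorization $1-r=(1-\epsilon)^2$ with $\epsilon=\tfrac{(i-2)(i-3)}{(n-2)(n-3)}$ (this is just $r=2\epsilon-\epsilon^2$). With $p_1$ parametrized as $\tfrac{ai^2+bi+c}{\binom{i}{2}}$, the decrement condition becomes
\[
\Bigl[\tfrac{1-\epsilon_i}{1-\epsilon_{i+1}}\Bigr]^2\ \ge\ \tfrac{a(i+1)^2+b(i+1)+c}{ai^2+bi+c}\cdot\tfrac{i-1}{i+1}\ \approx\ 1+\tfrac{2}{i(i+1)},
\]
and the square root is literally the square root of the left-hand side, yielding the stated ratio $\tfrac{(n-2)(n-3)-(i-2)(i-3)}{(n-2)(n-3)-(i-1)(i-2)}$ directly. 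Your proposed identity ``$1-r(i)\cdot(\cdot)=\tfrac{(n-2)(n-3)-(i-2)(i-3)}{(n-2)(n-3)}\cdot(1+o(1))$'' is only a first-order statement in $(1-\epsilon)$; it does not by itself produce a \emph{squared} ratio, so your claim that ``the decrement condition rearranges into a quadratic in the ratio of the two leading factors'' is not substantiated. If you keep the $\tfrac{i+4}{i(i-1)}$ and $\tfrac{2}{i(i-1)}$ terms rather than absorbing them into a parametric $p_1$, the algebra does not cleanly separate into a perfect square on the left. The paper also spends effort bounding the admissible ranges of $a,b,c$ (e.g.\ $a=\tfrac12$, $b\in(-\tfrac32,-\tfrac12)$) to justify that the approximation on the right-hand side is uniform; your specific choice from Theorem~\ref{th33} is one point in that range, but you should say why it is the governing case. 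Finally, neither the paper nor you gives a rigorous derivation of the $O(n^{4/7})$ rate---the paper supports it numerically via Figure~\ref{minid}---so your asymptotic sketch is not more or less complete than the original on that point.
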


\begin{proof}
	In $K_n$, the number of the $K_i$s containing an edge is  ${{n-2}\choose{i-2}}$, and there are ${{i}\choose{2}}{{n-2}\choose{i-2}}$ $OP^i$s in these $K_i$s. 
	The $n-1$ edges containing a vertex $v$ is considered here. They include two $OHC$ edges, some ordinary edges contained in the $OP^n$s, and the other ordinary edges excluding from any one $OP^n$. There are ${{n-1}\choose{i-1}}$ $K_i$s containing $v$ as well as the same number of frequency $K_i$s. In each frequency $K_i$, the total frequency of the $i-1$ edges containing $v$ is $(i-1)^2$. As the frequency of each edge is computed with the frequency $K_i$s containing them, respectively, the total frequency of the $n-1$ edges is $(i-1)^2{{n-1}\choose{i-1}}$. The three types of edges will share the total frequency, and they will have different frequencies and probabilities according to $i$. 

	Firstly, an $OHC$ edge $e$ is contained in the nearly equal or more percentage of $OP^i$s from $i$ to $i+1\leq n$ based on Theorems \ref{th1} and \ref{th3}. 
	The average frequency  $f(e) \geq \frac{1}{2}{{i}\choose{2}}$ exists as it is computed based on the frequency $K_i$s. In some worst cases, if $p_{i+1}(e\in OP^{i+1}) \leq p_i(e\in OP^i)$ occurs from $i$ to $i+1$ based on the frequency $K_{i+1}$s and $K_i$s, the probability inequality $p_{i+1}(e\in OP^{i+1}) \geq \left[1 - \frac{2}{i(i-1)}\right]p_i(e\in OP^i)$ holds. Moreover, $p_{i+1}(e\in OP^{i+1}) \geq \left[1 + \frac{2}{i(i-1)}\right]p_i(e\in OP^i)$ will appear from $i$ to $i+1$ in most cases if $p_{i+1}(e\in OP^{i+1}) \geq p_i(e\in OP^i)$. If $p_i(e\in OP^i)$ is close to 1, $p_{i+1}(e\in OP^{i+1}) \leq \left[1 + \frac{2}{i(i-1)}\right]p_i(e\in OP^i)$ will exist. 

	For the ordinary edges $g$ contained in the $OP^n$s, they may have certain big average frequency $f(g)> \frac{1}{2}{{i}\choose{2}}$ from $i=4$ to $P_0$. Moreover, $f(g)$ also increases from $i$ to $i+1$ as $i$ is small for big $n$. An ordinary edge $g$ and the two $OHC$ edges containing $v$ are contained in ${{n-4}\choose{i-4}}$ frequency $K_i$s, see Figure \ref{OHC}. In each of the frequency $K_i$s, $g$ is an ordinary edge and it has the frequency $f_2(g)<i-3$ on average. Because $g$ contains two vertices, $g$ is contained in $K = 2{{n-4}\choose{i-4}} - {{n-6}\choose{i-6}}$ frequency $K_i$s where $f_2(g)<i-3$ exists. Since an edge is contained in ${{n-2}\choose{i-2}}$ frequency $K_i$s, the $K$ frequency $K_i$s occupy $ r = \frac{2(i-2)(i-3)}{(n-2)(n-3)} - \frac{(i-2)(i-3)(i-4)(i-5)}{(n-2)(n-3)(n-4)(n-5)}$ of the total frequency $K_i$s containing $g$. As $i$ is much smaller than $n$, $\frac{(i-2)(i-3)}{(n-2)(n-3)} \approx \frac{(i-4)(i-5)}{(n-4)(n-5)}$ exists. Note $\epsilon = \frac{(i-2)(i-3)}{(n-2)(n-3)} \approx \frac{(i-4)(i-5)}{(n-4)(n-5)}$, $r = 2\epsilon - \epsilon^2$ and $1-r = (1-\epsilon)^2$ is formulated. Because $i\geq 4$, $\epsilon \in [\frac{2}{(n-2)(n-3)}, 1]$ exists. 
	
	For $n=1000$, the changes of $r$ and $1-r$ according to $\epsilon$ are illustrated in Figure \ref{r1-r} where $\epsilon \in [2\times 10^{-6},1]$. It mentions that  $\epsilon$ has the positive correlation with $i$ due to $\epsilon = \frac{(i-2)(i-3)}{(n-2)(n-3)}$. If  $i$ is small, $\epsilon$ and $r$ approaches zero whereas $1-r$ is close to 1. It means that $g$ is not adjacent to the two pairs of $OHC$ edges on either endpoint in most frequency $K_i$s. In this case, $g$ may be one $OHC$ edge in most of the $K_i$s and will have the frequency $f_1(g) \geq \frac{1}{2}{{i}\choose{2}}$ in the corresponding frequency $K_i$s. Thus, the average frequency  $f(g) \geq \frac{1}{2}{{i}\choose{2}}$ exists. As  $i$ rises, $\epsilon$ and $r$ tends to 1 step by step whereas $1-r$ tends to zero accordingly. If $r = 1-r$ or $r = \frac{1}{2}$, $\epsilon = 1-\frac{\sqrt{2}}{2}$ is computed. At the number $i$, the number of the frequency $K_i$s where $f_2(g)<i-3$ is at least $\frac{1}{2}{{n-2}\choose{i-2}}$.  
	As $r$ increases and $1-r$ decreases according to $i$, $g$ and the  two pairs of adjacent $OHC$ edges are contained in more percentage of the frequency $K_i$s. Meanwhile, $g$ will have a small frequency in more percent of the frequency $K_i$s. Thus, $f(g)$ will decrease according to $i$. 
	
\begin{figure}
	\centering
	\includegraphics[width=3in,bb=0 0 360 200]{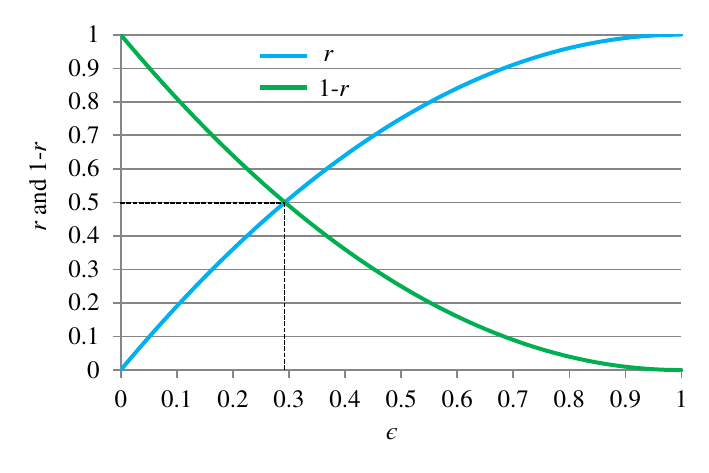}
	\caption{The changes of $r$ and $1-r$ according to $\epsilon \in [2\times 10^{-6}, 1]$ for $n=1000$.} 
	\label{r1-r}
\end{figure}
	
	In the extreme case, $f_2(g) = i-3$ is considered in each of the $K$ frequency $K_i$s, and  the maximum frequency $f_1(g) = {{i}\choose{2}} - 1$ is used in each of the other frequency $K_i$s. In this case, $g$ will have the average frequency $f(g) = {{i}\choose{2}} - \left[{{i}\choose{2}} - i +2\right]r - 1$. If $f(g)\leq \frac{1}{2}{{i}\choose{2}}$ appears, $2\epsilon -\epsilon^2 \geq \frac{\frac{1}{2}{{i}\choose{2}}-1}{{{i}\choose{2}}-i+2}$ can be derived as well as $\epsilon \geq 1 - \sqrt{\frac{1}{2}}\times \sqrt{\frac{{{i}\choose{2}} - 2i + 6}{{{i}\choose{2}} - i + 2}}$. As $i$ is relatively big, $\epsilon \geq 1 - \sqrt{\frac{1}{2}} = 0.2929$ is computed. Because $\epsilon = \frac{(i-2)(i-3)}{(n-2)(n-3)}$, $i\geq [\sqrt{\epsilon}(n-2.5) + 2.5] = [0.5412n + 1.1470]$ is derived. Since $i$ begins from 4 rather than zero, $i \geq [0.5412n + 5.1470]$ is used. From $i = [0.5412n + 5.1470]$, $g$ has the average frequency  $f(g)\leq \frac{1}{2}{{i}\choose{2}}$, $OHC$ edges and ordinary edges can be separated from each other according to their average frequencies. 
	
	As $g$ has the maximum frequency $f_1(g) = {{i}\choose{2}} - 1$ in each of the ${{n-2}\choose{i-2}} - K$ frequency $K_i$s, $p_i(g\in OP^i)$ increases in the fastest way to the biggest value according to $i$.  Because $p_i(g\in OP^i)$ increases and $p_i(g\in OP^i)\gg \frac{1}{2}$ exists as $i$ is small, $p_i(g\in OP^i)$ must decrease before certain $i \ll [0.5412n + 5.1470]$.  
	As $p_i(g\in OP^i)$ decreases according to $i$, it never becomes bigger based on Theorem \ref{th33}. 
	As $n = 1000$, the numerical simulations illustrated that $p_i(g\in OP^i)$ decreases from $i = 33$ which is much smaller than $[0.5412n + 5.1470]=546$, see Figure \ref{pdi}. If one assumes that $g$ has the frequency $f_1(g) < {{i}\choose{2}} - 1$ in each of the ${{n-2}\choose{i-2}} - K$ frequency $K_i$s, the maximum value of $p_i(g\in OP^i)$ will be obtained later yet it is still much smaller than $[0.5412n + 5.1470]$. For example if $f_1(g) = \frac{i^2-4i+7}{2}$ as $n=1000$, the numerical simulations illustrated that $p_i(g\in OP^i)$ reaches the maximum value $0.952687$ at $i=92$, and then it decreases according to $i > 92$. 
	
	If $g$ has the probability $p_1(g\in OP^i) > \frac{1}{2}$ based on the ${{n-2}\choose{i-2}} - K$ frequency $K_i$s and $p_2(g\in OP^i) \leq \frac{2(i-3)}{i(i-1)}$ according to the $K$ frequency $K_i$s, $p_i(g\in OP^i) = (1-r)p_1(g\in OP^i) + rp_2(g\in OP^i) = p_1(g\in OP^i) - [p_1(g\in OP^i) - p_2(g\in OP^i)]r$ can be formulated. If $p_i(g\in OP^i)$ increases according to $i$ like an $OHC$ edge, $p_i(g\in OP^i) > \frac{1}{2}$ must hold as well as $p_1(g\in OP^i) > \frac{1}{2}$ increases at the same time. Since $p_2(g\in OP^i) \leq \frac{2(i-3)}{i(i-1)} \ll \frac{1}{2}$ as $i$ is relatively big, $p_1(g\in OP^i)$ is much bigger than $p_2(g\in OP^i)$, and $rp_2(g\in OP^i)$ tends to zero if $i$ is much smaller than $n$. In this case, $p_i(g\in OP^i) \approx (1-r)p_1(g\in OP^i) = (1-\epsilon)^2p_1(g\in OP^i)$ holds where $\epsilon = \frac{(i-2)(i-3)}{(n-2)(n-3)}$.
	
	Provided that $p_1(g\in OP^i) = \frac{ai^2+bi+c}{{{i}\choose{2}}}$, $a\in(\frac{1}{4}, \frac{1}{2}]$, $b\leq 0$ and $c$ are coefficients. Because $p_1(g\in OP^i)$ increases according to $i$, $p_1(g\in OP^{i+1}) > p_1(g\in OP^i)$ holds and $(a+b)(i+1) +2c < 0$ is derived. Thus, $b < -a$ holds and $b < -\frac{1}{2}$. If $i = 4$, $p_1(g\in OP^4) = \frac{16a + 4b + c}{6} > \frac{1}{2}$ has the minimum value before it decreases.  Because $p_1(g\in OP^4) = \frac{16a + 4b + c}{6} < 1$, $3 - 16a < 4b + c < 6 - 16a$ is derived. As $i = 5$, $p_1(g\in OP^5) = \frac{25a + 5b +c}{10}\in (\frac{1}{2}, 1)$ reaches the second minimum value before $p_1(g\in OP^i)$ decreases. Since $p_1(g\in OP^5) > p_1(g\in OP^4)$, $4b + c < \frac{-5a + 3b}{2}$ is derived. Considering $4b + c > 3 - 16a$, $b > 2- 9a$ is derived. As $a = \frac{1}{2}$, $b > -\frac{5}{2}$ and $b\in (-\frac{5}{2}, -\frac{1}{2})$ exists. Since $p_1(g\in OP^4) < 1$, $b < \frac{6-16a-c}{4}$ holds. Because $b > 2 - 9a$, $c < 20a - 2$ is derived. Since $c > 3 - 16a - 4b$ and $b > 2 - 9a$, $c > 20a - 5$ is derived. Thus, $20a - 5 < c < 20a - 2$ and $c\in (0,8)$ is derived for $a\in (\frac{1}{4}, \frac{1}{2}]$. 
	The values of $a$, $b$ and $c$ are not accurate because the lower probability bound $\frac{1}{2}$ for $OHC$ edges is used here. As $i$ is small, $p_1(g\in OP^i)$ will be much bigger than $\frac{1}{2}$ for the ordinary edges contained in $OP^n$. If $p_1(g\in OP^i)$ rises to 1 according to $i\ll P_0$, $a$ is close to or equal to $\frac{1}{2}$, $b$ will be near $-\frac{1}{2}$ and $c$ will be close to zero.  
	
	Since $p_1(g\in OP^i)$ is near 1 as $i$ is small, $a = \frac{1}{2}$ is considered. 
	In the worst average case, $f(e) > 1.5f(g)$ is considered for each edge $e\in OHC$ and $g\notin OHC$ in a frequency $K_i$ based on Theorem \ref{th01}. As $g$ is an $OHC$ edge in each of the ${{n-2}\choose{i-2}} - K$ frequency $K_i$s, $p_1(g\in OP^i) > \frac{3(i-1)^2}{8{{i}\choose{2}}} \approx \frac{3}{4}$ is derived for big $i$. Based on Theorem \ref{th2}, $p_1(g\in OP^4) > \frac{3}{4}$ also holds as $i = 4$. Thus, $p_1(g\in OP^4) = \frac{16a + 4b + c}{6} > 0.75$ rather than $\frac{1}{2}$. Since $4b + c < \frac{-5a + 3b}{2}$, $b > 3- 9a$ is derived as well as $b\in (-\frac{3}{2}, -\frac{1}{2})$ and $c\in (0, 4)$. 
	
	As $n$ is fixed,  $p_i(g\in OP^i)$ and $p_{i+1}(g\in OP^i)$ are given as formula (\ref{F4}). As $p_{i+1}(g\in OP^{i+1}) \leq p_i(g\in OP^i)$ occurs, the formula (\ref{F5}) is derived as $i$ is much bigger than $a$, $b$ and $c$. The left term in formula (\ref{F5}) becomes bigger according $i$ whereas the right term becomes smaller simultaneously. At certain $i = i_d\ll P_0$, the inequality will hold. It means that $p_i(g\in OP^i)$ decreases from $i_d$.

	\begin{eqnarray}\label{F4}
	p_i(g\in OP^i) = \left[1 - \frac{(i-2)(i-3)}{(n-2)(n-3)}\right]^2\times \frac{ai^2+bi+c}{{{i}\choose{2}}}, \nonumber \\
	p_{i+1}(g\in OP^{i+1}) = \left[1 - \frac{(i-1)(i-2)}{(n-2)(n-3)}\right]^2\times \frac{a(i+1)^2+b(i+1)+c}{{{i+1}\choose{2}}}
\end{eqnarray}

	\begin{eqnarray}\label{F5}
	\left[\frac{1 - \frac{(i-2)(i-3)}{(n-2)(n-3)}}{1 - \frac{(i-1)(i-2)}{(n-2)(n-3)}}\right]^2 \geq  \frac{a(i+1)^2+b(i+1)+c}{ai^2+bi+c}\times \frac{i-1}{i+1} \approx 1+\frac{2}{i(i+1)}, \nonumber \\
	\frac{(n-2)(n-3) - (i-2)(i-3)}{(n-2)(n-3) - (i-1)(i-2)} \geq \sqrt{1+\frac{2}{i(i+1)}} 
\end{eqnarray}		
	
	One can use the numerical method to find the smallest number $i_d$ meeting the inequality (\ref{F5}) for given $n$. At the number  $i_d\ll P_0$, $p_i(g\in OP^i)$ reaches the biggest value, and it will decrease according to $i > i_d$. In the derivation of formula (\ref{F5}), the small residual $-\frac{(3a +b)i^2 + (a+3b+2c)i + 2c}{i(i+1)(ai^2 + bi +c)} < 0$ is neglected. If this residual is taken into account, the first number $i_d$ to meet the inequality (\ref{F5}) will become smaller. For example, the small residual is in $\left(-\frac{2}{i(i-1)}, 0\right)$ if $a = \frac{1}{2}$, $b\in\left(-\frac{3}{2}, -\frac{1}{2}\right)$ and $i$ is much bigger than $c$. 
	
	As $g$ is the $OHC$ edge in each of the ${{n-2}\choose{i-2}} - K$  frequency $K_i$s, $p_1(g\in OP^{i+1}) \leq \left[1 + \frac{2}{i(i+1)}\right]p_1(g\in OP^i)$ holds based on Theorem \ref{th3}. One can derive the inequality $(3a + b)i^3 + 2(a + b + c)i^2 - (a - b)i + 2c > 0$. It indicates that $3a + b > 0$ because $i\leq O(n^\frac{4}{7})$ is arbitrary. If $p_{i+1}(g\in OP^{i+1}) \leq p_i(g\in OP^i)$ occurs, the formula (\ref{F5}) can also be derived. In this case, the smallest $i_d$ has the biggest value for given $n$. As $i$ approaches $O(n^{\frac{4}{7}})$, $p_1(g\in OP^i)\approx p_1(g\in OP^{i+1})$ will hold for big $n$ whether $p_1(g\in OP^i)$ decreases or increases. 
	
	As $p_i(g\in OP^i)$ decreases from $i_d = O(n^{\frac{4}{7}})$, $p_{i+1}(g\in OP^{i+1}) \leq \left[1 - \frac{2}{i(i-1)}\right]p_i(g\in OP^i)$ will occur at some $i > i_d$. If the small residual $-\frac{(3a +b)i^2 + (a+3b+2c)i + 2c}{i(i+1)(ai^2 + bi +c)} < 0$ is not considered, one can derive the inequality $\frac{(n-2)(n-3) - (i-2)(i-3)}{(n-2)(n-3) - (i-1)(i-2)} \geq \sqrt{1 + \frac{4i}{i^3 - 3i -2}}$. The smallest vale $i = O(n^{\frac{2}{3}})$ and it has been proven by numerical simulation. In fact, if $p_i(g\in OP^i)$ decreases, it will decrease fast and $p_1(g\in OP^i)$ becomes smaller simultaneously since $\frac{(i-2)(i-3)}{(n-2)(n-3)} \approx \frac{(i-1)(i-2)}{(n-2)(n-3)}$ as $i\ll n$. It implies that the coefficient $a$ also decreases according to $i > i_d$. Thus, $p_{i+1}(g\in OP^{i+1}) \leq \left[1 - \frac{2}{i(i-1)}\right]p_i(g\in OP^i)$ will appear at $i$ much smaller than $O(n^{\frac{2}{3}})$.
		
	Since $c$ is small, $p_1(g\in OP^i)$ is not affected by $c$ for big $i$. In this case, we can assume $c = 0$ and $p_1(g\in OP^i) = \frac{2(ai^2+bi)}{i(i-1)}$ for simplification. If $p_i(g\in OP^i)$ increases according to $i$, $p_1(g\in OP^i)$ will tend to 1 in this stage. The coefficient $a = \frac{1}{2}$ and $\lvert{b}\rvert < \frac{3}{2}$ exist. If $i = 4$, $p_1(g\in OP^i) = \frac{2(16a + 4b)}{12} > \frac{1}{2}$ also holds. If $a = \frac{1}{2}$, $b > -\frac{5}{4}$ is derived. In this case, $p_i(g\in OP^i)$ will increase according to $i$ in the maximum interval $[4, i_d]$ for given $n$. Thus, $b\in \left[-\frac{5}{4}, -\frac{1}{2}\right]$ exists for $g$ contained in the $OP^n$s. In this case, the small residual is in $\left[-\frac{2}{i(i+1)}, -\frac{1}{2i(i+1)}\right]$ and $\frac{(n-2)(n-3) - (i-2)(i-3)}{(n-2)(n-3) - (i-1)(i-2)}\geq \sqrt{1 + \frac{1.5}{i(i+1)}}$ holds. It indicates that $p_i(g\in OP^i)$ will decrease at the even smaller number $i_d\ll P_0$. Since $p_i(g\in OP^i)$ rises to $1$ at $i_d \ll P_0$, $b$ will be close to $-\frac{1}{2}$. For example, if $a = \frac{1}{2}$, $b = -\frac{5}{4}$ and $c = 0$, $p_1(g\in OP^i) = \frac{1}{2}$ is computed as $i = 4$. This probability is too small for $g$ contained in the $OP^n$s. 

	The smallest $i_d$ computed based on formula (\ref{F5}) is lined according to $n\in [1E3, 1E7]$  and shown in Figure \ref{minid}. The function $4n^{\frac{4}{7}}$ is also given for comparison. One sees that $i_d < 4n^{\frac{4}{7}}$ is guaranteed according to $n$. It means that such an ordinary edge can be identified in $O(n^{\frac{16}{7}}2^{n^{\frac{4}{7}}})$ time based on dynamic programming. For example if $n=1000$ and 10000, the numerical simulations illustrate that $p_i(g\in OP^i)$ reaches the biggest value at $i_d = 80$ and 369, respectively. As $i\geq 80$ and $369$, $p_i(g\in OP^i)$ will decrease according to $i$, respectively. If we assume that $f_1(g) = \frac{i^2-4i+7}{2}$ and $f_2(g) = i-3$ for $n = 1000$, $p_i(g\in OP^i)$ decreases from $i = 92 > 80$. It means that $\frac{i^2 - 4i + 7}{2}$ is too small as the average frequency $f_1(g)$ if $p_i(g\in OP^i)$ increases according to $i$. Thus, the ordinary edges contained in the $OP^n$s have a big average frequency as $i$ is small. Only if $i > i_d$, $p_i(g\in OP^i)$ begins decreasing according to $i$. Meanwhile, $f(g)$  increases slower than that of the $OHC$ edges or decreases until $f(g) < \frac{1}{2}{{i}\choose{2}}$ appears.

	\begin{figure}
		\centering
		\includegraphics[width=3.5in,bb=0 0 360 200]{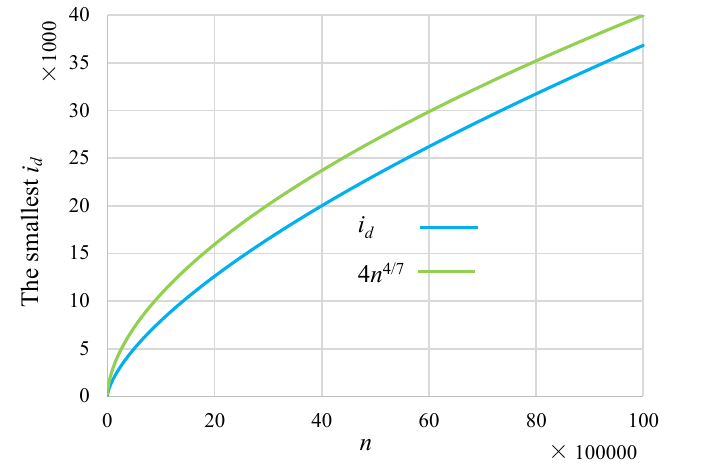}
		\caption{The smallest $i_d$ meeting the inequality (\ref{F5}) according to $n\in [1E3, 1E7]$.} 
		\label{minid}
	\end{figure}

	
	If $p_i(g\in OP^i)$ decreases from $i$ to $i+1$, $p_{i+1}(g\in OP^{i+1}) \leq \left[1-\frac{2}{i(i-1)}\right]p_i(g\in OP^i)$ will appear and $p_{i+1}(g\in OP^{i+1}) \leq \frac{i}{i+1}p_i(g\in OP^i)$ occurs in most cases until $p_i(g\in OP^i)$ reaches the smallest value. It mentions that the smallest $i_d$ computed with formula (\ref{F5}) has the biggest value for $b\in \left[-\frac{5}{4}, -\frac{1}{2}\right]$. $p_i(g\in OP^i)$ will decrease at some $i < i_d$. As $i = i_d$, $p_i(g\in OP^i)$ has passed the biggest value and the probability decrement $pd_i(g)$ has become bigger than $\frac{p_i(g\in OP^i)}{i+1}$ in most cases, see Theorem \ref{th33}. Provided that $p_{i+1}(g\in OP^i) \leq \frac{i}{i+1}p_i(g\in OP^i)$ occurs from $i = i_d$ computed based on formula (\ref{F5}), $p_i(g\in OP^i) \leq \frac{1}{2}$  and $f(g) \leq \frac{1}{2}{{i}\choose{2}}$ will appear if $i \geq 2i_d$. At this time, the  ordinary edges can also be identified according to their average frequencies. 
	
	In Figure \ref{OHC}, if $v_2 = v_i$ or $v_k = v_n$, there are $n$ such ordinary edges $g = (v_1,v_j)$. For such an ordinary edge, $K = 2{{n-2}\choose{i-2}} - {{n-5}\choose{i-5}}$ exists. Using the same method, one can derive the other inequality for computing the smallest $i_d$. This number is near but bigger than that computed based on formula (\ref{F5}) for given $n$. It indicates that $p_i(g\in OP^i)$ will reach the biggest value at the bigger number $i_d$ for these ordinary edges. Thus, the $n$ ordinary edges are a little more difficult to be identified than the other ordinary edges in theory. 
	
	For the ordinary edges excluding from any one $OP^n$, they will have the smaller average frequency than the ordinary edges contained in the $OP^n$s  based on Theorem \ref{th33}. Thus, they will have the average frequency $f(g) \leq \frac{1}{2}{{i}\choose{2}}$ at the even smaller number $i < 2i_d$, and the probability $p_i(g\in OP^i)$ will decrease before the corresponding $i < i_d$, respectively. 
\end{proof}

Based on Theorem \ref{th33}, the percentage of the $OP^i$s containing $g\notin OHC$ will become smaller according to $i$ in the average case. In each of the $K$ frequency $K_i$s, $g$ has the frequency $f_2(g)<\frac{i+2}{2}$ on average. Except the $K$ frequency $K_i$s containing $g$, $g$ will not have the maximum frequency $f_1(g) = {{i}\choose{2}} - 1$ in each of the other frequency $K_i$s. These frequency $K_i$s include the frequency $K_i$s containing $g$ and none of the four adjacent $OHC$ edges whose number is $J = {{n-6}\choose{i-2}}$, and the  remainder frequency $K_i$s containing $g$ and one $OHC$ edge adjacent to $g$ on either endpoint or $g$ and one adjacent $OHC$ edge on each endpoint whose number is $L = {{n-2}\choose{i-2}} - J - K$. As $n=1000$, the percents of the $J$, $K$, $L$ and $J + L$ frequency $K_i$s are computed according to $i\in [4,1000]$ and the percent changes are shown in Figure \ref{JKL}, respectively. The percents of $J$ and $J + L$ frequency $K_i$s are always decreasing according to $i$ because that of $K$ frequency $K_i$s is always increasing. Moreover, the percent of $J$ frequency $K_i$s decreases faster than that of the $J+L$ frequency $K_i$s. The percent of $L$ frequency $K_i$s increases first, and it finally decreases according to $i$ after the biggest value. As the percent of the $L$ frequency $K_i$s decreases according to $i$, it is much bigger and decreases slower than that of the $J$ frequency $K_i$s. The intersection point between the percent curves related to $J$ and $K$, and that between the percent curves related to $J$ and $L$ will be computed for predicting the changes of the average frequency and probability of $g$ according to $i$. The two points will also be compared with $i_d$ computed with formula (\ref{F5}). 

\begin{figure}
	\centering
	\includegraphics[width=3in,bb=0 0 360 200]{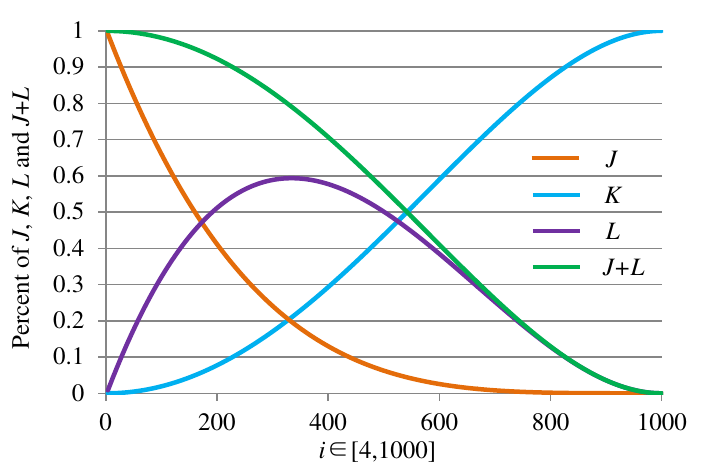}
	\caption{The percents of $J$, $K$, $L$ and $J+L$ frequency $K_i$s according to $i\in [4, 1000]$ for $n=1000$.} 
	\label{JKL}
\end{figure}	

In each of the $J$ frequency $K_i$s, $g$ is the $OHC$ edge in the ideal case. Based on Theorem \ref{th22}, we assume the average frequency $f_1(g) \geq \frac{i^2-4i+7}{2}$ at each $i$. 
In this case, $f_1(g)$ increase according to $i$ based on the $J$ frequency $K_i$s. However, the percent of $J$ decreases fast according to $i$. Plus the $K$ frequency $K_i$s where $f_2(g) < \frac{i+2}{2}$, the average frequency $f(g) < \frac{i^2-4i+7}{2}$ will appear as $i$ becomes big. 
For example, $K > J$ will appear if $i > [0.3236n+4]$ for big $n$. In this case, $f(g) < \frac{1}{2}{{i}\choose{2}}$ appears according to the $J+K$ frequency $K_i$s. Moreover, $p_i(g\in OP^i)$ will decrease before $i = [0.3236n+4]$ since $p_i(g\in OP^i)$ is much bigger than $\frac{1}{2}$ as $i$ is small. If $g$ has the probability $p_3(g\in OP^i)$ according to the $J$ frequency $K_i$s and $p_2(g\in OP^i)$ according to the $K$ frequency $K_i$s, $p_i(g\in OP^i) = \frac{K}{J+K}\times p_2(g\in OP^i) + \frac{J}{J+K}\times p_3(g\in OP^i)$ is derived based on the $J + K$ frequency $K_i$s. If the $L$ frequency $K_i$s are not considered, $p_i(g\in OP^i)$ will decrease before the number $i_d$ computed with formula (\ref{F5}) because $\frac{K}{J+K} > r$ holds according to $i$. 

In each of the remainder $L={{n-2}\choose{i-2}} - J - K$ frequency $K_i$s, $g$ is adjacent to one $OHC$ edge on either endpoint or one $OHC$ edge on each endpoint, respectively. Given any $g$ and $e\in OHC$, it is known that $p_i(e\in OP^i)$ increases faster than $p_i(g\in OP^i)$  according to $i\in [4,n]$ in the average case, see Theorem \ref{th3}. 
Among the $L$ frequency $K_i$s, there are $L'={{n-3}\choose{i-3}} - {{n-4}\choose{i-4}} - {{n-5}\choose{i-5}} + {{n-6}\choose{i-6}}$ frequency $K_i$s containing $g$ and one adjacent $OHC$ edge $e$. 
The $L'$ frequency $K_i$s containing $e$ and $g$ are taken as one sample drawn from the frequency $K_i$s containing $e$. In the average case, $p_i(e\in OP^i)$ and $f(e)$ will increase faster or decrease slower than $p_i(g\in OP^i)$ and $f(g)$, respectively, based on the $L'$ frequency $K_i$s. If $p_i(e\in OP^i)$ decreases according to $i$ as $i$ is small, it will reach the smallest value at certain small $i$, and then it will increase according to $i$ after the smallest value based on Theorem \ref{th3}. Moreover, $p_i(e\in OP^i)$ will increase to one big value within a limited number of steps because $p_{i+1}(e\in OP^{i+1}) \geq \left[1+\frac{2}{i(i-1)}\right]p_i(e\in OP^i)$ usually occurs if $p_i(e\in OP^i)$ increases from the smallest value. If $p_i(e\in OP^i)$ is close to 1, $p_{i+1}(e\in OP^{i+1}) \leq \left[1+\frac{2}{i(i-1)}\right]p_i(e\in OP^i)$ will hold from $i$ to $i+1$. Based on Theorem \ref{th22}, $e$ will have the expected frequency $f(e)>\frac{i^2-4i+7}{2}$ according to $i$ based on these frequency $K_i$s. Moreover, in each of the frequency $K_i$s, the frequency of each ordinary edge will be bigger than 1 and they will approach 2 according to $i$, respectively. In this case, $f(g) < \frac{i^2-4i+7}{2}$ will appear based on the $L'$ frequency $K_i$s. Even if $g$ is an $OHC$ edge in most of the $L$ $K_i$s, $g$ will have an average frequency $f(g) < \frac{i^2-4i+7}{2}$ according to these frequency $K_i$s. $p_i(g\in OP^i)$ may increase according to $i$ based on the $L$ frequency $K_i$s at first. Because $p_i(e\in OP^i)$ for each $e$ adjacent to $g$ increases according to $i$, $p_i(g\in OP^i)$ will reach the biggest value at certain small $i$ and it will decrease finally according to $i$. Once $p_i(g\in OP^i)$ decreases from $i$ to $i+1$, it never becomes bigger according to $i$ with respect to the $L$ frequency $K_i$s based on Theorem \ref{th33}. Moreover,   $p_{i+1}(g\in OP^{i+1}) <  \frac{i}{i+1}p_i(g\in OP^i)$ will  appear in most cases. 

Based on the $J + L = {{n-2}\choose{i-2}} - K$ frequency $K_i$s, $p_i(g\in OP^i)$ will also decrease as $i$ is relatively big. If $g$ has the probability $p_3(g\in OP^i)$ based on the $J$ frequency $K_i$s, and it has the probability $q_3(g\in OP^i)$ based on the $L$ frequency $K_i$s, the $p_i(g\in OP^i)$ will be analyzed according to $i$ based on the $J + L$ frequency $K_i$s. As $i$ is small, we assume that $p_3(g\in OP^i) \gg \frac{1}{2}$ and $q_3(g\in OP^i) > \frac{1}{2}$, and they increase according to $i$ in the ideal case until they are near 1 in case that $p_3(g\in OP^i) \geq q_3(g\in OP^i)$. Provided that $q_3(g\in OP^i)$ reaches the maximum value at certain $i < P_0$ and $q_3(g\in OP^i) = p_3(g\in OP^i)$ appears, $p_3(g\in OP^i)$ and $q_3(g\in OP^i)$ will have different changes according to the following $i$s with respect to the $J$ and $L$ frequency $K_i$s, respectively. In the following, $p_3(g\in OP^i)$ is still increasing according to $i$ in the best case. However, $q_3(g\in OP^i)$ will decrease simultaneously. 
As $p_3(g\in OP^i)$ increases from $i$ to $i+1$, the probability increment will be smaller than $\frac{2p_3(g\in OP^i)}{i(i-1)}$ based on Theorem \ref{th3}. On the other hand, the probability decrement for $q_3(g\in OP^i)$ will be bigger than $\frac{2q_3(g\in OP^i)}{i(i-1)}$ in most cases. Let $r_1 = \frac{J}{J+L}\neq 0$ and $r_2 = \frac{L}{J+L}\neq 0$, and $p_i(g\in OP^i) = r_1p_3(g\in OP^i) + r_2q_3(g\in OP^i)$ is formulated. The percent of $J$ is decreasing according to $i$ while that of $L$ increases before $i = [0.3236n+4]$ and decreases after $i = [0.3236n+4]$. At $i=[0.3236n+4]$, the percent of $J$ and $K$ frequency $K_i$s is approximately $20\%$, respectively and that of $L$ frequency $K_i$s is approximately $60\%$ in Figure \ref{JKL}. If $g$ has the frequency $f_1(g) = \frac{i^2}{2}+bi+c$ in each of the $J+L$ frequency $K_i$s and $f_2(g) = i-3$ in each of the $K$ frequency $K_i$s, the average frequency $f(g) < 0.4i^2 + (0.8b+0.2)i+0.8c - 0.6$ and $p_i(g\in OP^i) \approx 0.8$ as $i\geq[0.3236n+4]$. Based on all frequency $K_i$s containing $g$, $p_i(g\in OP^i)$ decreases before $i=[0.3236n+4]$ because the maximum value is near 1. Thus, $i_d < [0.3236n + 4]$ exists for $g\notin OHC$. 

In $p_i(g\in OP^i) = r_1p_3(g\in OP^i) + r_2q_3(g\in OP^i)$, whether the percent of $L$ increases or decreases according to $i$, $r_2$ increases faster or decreases slower than $r_1$. $p_i(g\in OP^i)$ will become bigger according to $i$ if $r_1 > r_2$ in the ideal case. Otherwise, $p_i(g\in OP^i)$ will decrease according to $i$ once $r_1 \leq r_2$ while $q_3(g\in OP^i)$  decreases according to $i$. If $r_1 = r_2$, one can derive that $i \approx [(3-2\sqrt{2})(n-2.5) + 2.5] = [0.1716n + 2.0711]$. $i = [0.1716n + 6.0711]$ is considered as $i$ starts from 4. The percentage of $K$ frequency $K_i$ is approximately equal to $5.89\%$. Since $i_d$ computed with formula (\ref{F5}) is much smaller than $[0.1716n + 6.0711]$, it means that $p_i(g\in OP^i)$ tends to 1 if it is still rising from $i_d$ to $i=[0.1716n + 6.0711]$. However, the percent of $J+L$ frequency $K_i$s is smaller than 0.9411 at $i = [0.1716n + 6.0711]$. As the $K$ frequency $K_i$s are also considered, the average frequency $f(g) = 0.9411(\frac{i^2}{2} + bi +c) + 0.0589(i-3) < 0.9411\times \frac{i^2}{2}$ is derived. It means that $p_i(g\in OP^i)$ decreases before $i = [0.1716n+6.0711]$. 
Moreover, $q_3(g\in OP^i)$ also decreases  before $i=[0.1716n+6.0711]$.  The reason is that $p_i(g\in OP^i)$  decreases because the probability decrement for $q_3(g\in OP^i)$ is bigger than the probability increment for $p_3(g\in OP^i)$ from $i$ to $i+1$. 

Since $p_i(g\in OP^i)$ decreases according to $i$ based on the $J+L = {{n-2}\choose{i-2}} - K$ frequency $K_i$s before $i=[0.1716n + 2.0711]$, the probability decrement $pd_i(g)$ will reach the biggest value later than that shown in Figure \ref{pdi}. Moreover, it will keep a big value according to $i$ until $p_i(g\in OP^i)$ tends to the smallest value. One can assume $p_i(g\in OP^i) = \frac{ai^2 + bi +c}{{{i}\choose{2}}} > 0$ based on the $J+L$ frequency $K_i$s. If $p_i(g\in OP^i) > p_{i+1}(g\in OP^{i+1})$ happens, $(a+b)(i+1) + 2c > 0$ is derived. It means that $b > -a$ holds for any $i$. If $a = \frac{1}{2}$, $p_i(g\in OP^i) > 1$ will appear for big $i$ and small $c$. Thus, if $p_i(g\in OP^i)$ decreases from $i$ to $i+1$, $a < \frac{1}{2}$ must hold and it becomes smaller according to $i$. That's why $pd_i(g)$ becomes bigger according to $i$ or keeps a big value for big $i$. 

As $i > i_d$, $p_i(g\in OP^i) = 0$ exists for most ordinary edges. For the ordinary edges contained in the $OP^n$s, $p_{i+1}(g\in OP^{i+1})\leq \frac{i}{i+1}p_i(g\in OP^i)$ holds in most cases based on formula (\ref{F5}). As $p_i(g\in OP^i)$s for all the $n-3$ ordinary edges containing a vertex decrease from $i_d \ll [0.1716n + 6.0711]$, the $p_i(e\in OP^i)$s for the two adjacent $OHC$ edges definitely increase, see formula (\ref{F3}). 

\begin{theorem}
	\label{th5}
	$TSP$ on $K_n$ can be resolved in $O(n^2i_d^42^{i_d})$ time based on the frequency $K_i$s and dynamic programming where $i_d = O(n^{\frac{4}{7}})$ is the smallest number meeting the inequality (\ref{F5}). 
\end{theorem}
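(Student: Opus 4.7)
The plan is to translate the frequency separation between $OHC$ edges and ordinary edges into an edge-classification algorithm and to account for its cost. Setting $i = 2i_d$ with $i_d$ the smallest integer meeting~(\ref{F5}), Theorem \ref{th30} gives an average frequency strictly below $\frac{1}{2}\binom{i}{2}$ for every ordinary edge, while Theorems \ref{th1} and \ref{th3} give an average frequency of at least $\frac{1}{2}\binom{i}{2}$ for every $OHC$ edge. Consequently, thresholding average frequencies at $\frac{1}{2}\binom{i}{2}$ is both sufficient and necessary for $OHC$ membership, and the algorithmic task reduces to estimating these averages for each edge.

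The algorithm I would write proceeds in three steps. First, compute $i_d$ from~(\ref{F5}); since $i_d = O(n^{4/7})$ this is negligible, and fix $i = 2i_d$. Second, for each of the $\binom{n}{2} = O(n^2)$ edges $e$ of $K_n$, process a bounded number of $K_i$s containing $e$; for every such $K_i$, use Bellman--Held--Karp dynamic programming to solve the shortest Hamiltonian path problem for each of the $\binom{i}{2}$ endpoint pairs in $O(i^2 2^i)$ time per pair, yielding all $OP^i$s and the associated frequency $K_i$ in $O(i^4 2^i) = O(i_d^4 2^{2i_d})$ time. Third, aggregate the per-$K_i$ contributions to $f(e)$, declare $e \in OHC$ iff the resulting average is at least $\frac{1}{2}\binom{i}{2}$, and assemble the Hamiltonian cycle from the selected edges.

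The running time then follows by multiplying $O(n^2)$ edges, a bounded number of sampled $K_i$s per edge, and $O(i_d^4 2^{2i_d})$ cost per $K_i$, yielding $O(n^2 i_d^4 2^{2i_d})$ as claimed. Correctness of each Bellman--Held--Karp call is the standard correctness of shortest Hamiltonian path dynamic programming on $i$ vertices, and correctness of the classification step is exactly the separation in Theorems \ref{th1}, \ref{th3} and \ref{th30}.

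The main obstacle, in my view, is justifying that a bounded sample of $K_i$s per edge is enough, since Theorem \ref{th30} phrases the separation as an \emph{average} over the full family of $\binom{n-2}{i-2}$ frequency $K_i$s containing $e$, and a single randomly chosen $K_i$ need not classify correctly. I would close this gap using the two-sided controls $p_{i+1}(e\in OP^{i+1}) \geq \bigl(1-\tfrac{2}{i(i-1)}\bigr)\, p_i(e\in OP^i)$ for $OHC$ edges (Theorem \ref{th1}) and $p_{i+1}(g\in OP^{i+1}) \leq \tfrac{i}{i+1}\, p_i(g\in OP^i)$ for ordinary edges (Theorem \ref{th33}), which confine the per-$K_i$ frequency tightly around its mean. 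A Hoeffding-type concentration bound then forces an $O(1)$, or at worst $O(\log n)$, sample of $K_i$s per edge to place the empirical frequency on the correct side of $\frac{1}{2}\binom{i}{2}$ with high probability; the extra logarithmic factor, if needed, is absorbed into the $O(n^2)$ factor of the claimed bound.
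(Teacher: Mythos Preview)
Your proposal is correct and follows essentially the same route as the paper: set $i=2i_d$, invoke Theorems~\ref{th3} and~\ref{th30} for the $\tfrac{1}{2}\binom{i}{2}$ separation, compute the $\binom{i}{2}$ $OP^i$s per sampled $K_i$ via Held--Karp dynamic programming at cost $O(i^4 2^i)=O(i_d^4 2^{2i_d})$, and multiply by $O(n^2)$ edges and a constant number of sampled $K_i$s per edge.

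The only noteworthy difference is how the constant-sample step is justified. The paper simply appeals to the binomial-distribution model of edge frequencies from earlier work to assert that a constant number of $K_{2i_d}$s per edge suffices. You instead reach for Hoeffding-type concentration, which is a reasonable alternative; however, your attempt to bound the per-$K_i$ fluctuations via the inequalities $p_{i+1}\ge(1-\tfrac{2}{i(i-1)})p_i$ and $p_{i+1}\le\tfrac{i}{i+1}p_i$ is misplaced, since those inequalities control how the \emph{averaged} probability evolves from $i$ to $i+1$, not the spread of frequencies across different $K_i$s at a fixed $i$. A plain Hoeffding bound using only that each per-$K_i$ frequency lies in $[0,\binom{i}{2}-1]$ already gives the needed concentration (with at most an $O(\log n)$ factor, as you note), so the conclusion stands.
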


\begin{proof}
	Based on Theorems \ref{th3} and \ref{th30}, the probabilities $p_i(e\in OP^i)$ for an $OHC$ edge and $p_i(g\in OP^i)$ for an ordinary edge have different evolutionary trends according to $i$. Most importantly, $p_i(g\in OP^i)$ definitely decreases if $i\geq i_d = O(n^{\frac{4}{7}})$, and $p_{i+1}(g\in OP^{i+1}) < \left[1 - \frac{2}{i(i-1)}\right]p_i(g\in OP^i)$ will happen in most cases once $p_i(g\in OP^i)$ becomes smaller from $i$ to $i+1$. To distinguish between OHC edges and ordinary edges, it is necessary to compute the $p_i(e\in OP^i)$ and $p_i(g\in OP^i)$ at $i = i_d$ to see the probability change for each edge. The dynamic programming is used to compute the $OP^i$s. Given a $K_{i_d}$, it consumes $O(i_d^22^{i_d})$ time to compute an $OP^{i_d}$. Based on the binomial distribution, one can choose a constant number of $K_{i_d}$s for each edge to compute the probability for a given $TSP$ instance on $K_n$. 
	The computation time is $O(n^2i_d^42^{i_d})$ since each $K_{i_d}$ contains ${{i_d}\choose{2}}$ $OP^{i_d}$s, and $K_n$ contains ${{n}\choose{2}}$ edges. All $OHC$ edges will be found in this time. 
\end{proof}

In applications, the probabilities $p_i(e\in OP^i)$ and $p_i(g\in OP^i)$ are useful to identify $OHC$ edges and ordinary edges. According to $i$, $p_i(e\in OP^i)$ for an $OHC$ edge will increase or keep the nearly equal value. For an ordinary edge, $p_i(g\in OP^i)$s will increase very slowly or decrease according to $i < i_d$, and it must decrease according to $i > i_d$ computed with formula (\ref{F5}). 

\section{Experiments and analysis}
In this section, the experiments are executed for the constructed and real-world $TSP$ instances to verify the above findings. Firstly, the small $TSP$ instances are constructed and the frequency $K_n$s are computed for showing the lower frequency bound for $OHC$ edges and upper frequency bound for ordinary edges where $n \in[4,14]$. The frequencies and probabilities of edges are computed with frequency $K_i$s, and the frequency and probability changes for edges will also be illustrated  according to $i\in[4,n]$ for the small $TSP$ instances. Secondly, the experiments for the real-world $TSP$ instances are executed to show the lower frequency and probability bounds for $OHC$ edges, and the frequency and probability changes for $OHC$ edges and all edges according to $i$ as the frequency of each edge is computed with the frequency $K_i$s where $i\in[4,8]$. Given certain $K_i$ ($i\in[4,n]$) in $K_n$, the dynamic programming is used to compute the ${{i}\choose{2}}$ $OP^i$s and frequency $K_i$. The total computation time is $O(i^42^i)$. As the frequency of each edge in $K_n$ is computed with $N$ frequency $K_i$s at given number $i$, it requires $O(n^2i^42^iN)$ time where $N$ is the number of the selected $K_i$s for an edge. In the experiments, the program to compute the $OP^i$s, frequency and probability of each edge is encoded in C++ language and run on one laptop with CPU 2.3GHz and inner memory 4.00GB. 

The probability $p_i(e\in OP^i)$ for an edge $e\in OHC$ or $OHC$ edges and $p_i(g\in OP^i)$ for an ordinary edge $g\notin OHC$ or ordinary edges will be used with high frequency in the experiments. To simplify the writing, $p_i(e\in OP^i)$ is replaced by $p_i(e)$ and $p_i(g\in OP^i)$ is replaced by $p_i(g)$, respectively. 
\subsection{The experiments for small $TSP$ instances $(n\in[4,14])$}
We construct 11 small $TSP$ instances to verify the lower frequency bound for $OHC$ edges and upper frequency bound for ordinary edges. As $n\in[4,8]$, each edge in $K_n$ has one random distance in $(0,10]$. If $n\in[9,14]$, the $K_n$s are constructed based on Oliver30 which is one Euclidean $TSP$. Each $K_n$ contains the first $n$ vertices in Oliver30 and the vertex set is $\{0,1,...,n-1\}$. Due to the time complexity to compute the $OP^n$s, 
the biggest $TSP$ instance has 14 vertices in the experiments. Each small $TSP$ instance has one $OHC$ and at least ${{n}\choose{2}}$ $OP^n$s. If the number of $OP^n$s is bigger than ${{n}\choose{2}}$, it will be difficult to choose the right $OP^n$s for computing the big frequency for some $OHC$ edges. We sometimes meet the $K_n$ having a lot of equal-weight edges. In this case, there may be many $OHC$s and $OP^i$s with the same endpoints in the $K_i$s. If many $OP^i$s not including the $OHC$ edges are used, the frequencies of edges computed with these $OP^i$s may not work well for separating $OHC$ edges from ordinary edges. In general, if there are one $OHC$ and ${{i}\choose{2}}$ $OP^i$s in most $K_i$s, the frequencies of edges computed with the frequency $K_i$s will work well for separating $OHC$ edges from most  ordinary edges. For each of the small $K_n$s, we first compute the $OP^n$s ($4\leq n < 14$) and $OHC$. Then, the ${{n}\choose{i}}{{i}\choose{2}}$ $OP^i$s for given $i$ are used to compute the  frequency of each edge in the corresponding $K_n$. Based on the known $OHC$, the lower frequency bound for $OHC$ edges and upper frequency bound for ordinary edges will be compared. For general $TSP$, the experimental results are enough to show the different structure properties denoted by frequencies between the $OHC$ edges and ordinary edges. 

\subsubsection{The frequency bounds for $OHC$ and ordinary edges in frequency $K_n$}
Given $K_n$, the ${{n}\choose{2}}$ $OP^n$s are computed and the frequency of each edge is computed with the $OP^n$s. The frequencies of edges in these frequency $K_n$s are listed in Table \ref{Examp} where $n$ changes from 4 to 14. In each column, the frequencies of the edges denoted by $f_k$ ($k\in[1,{{n}\choose{2}}]$) in each frequency $K_n$ are ordered from big to small values. They form one monotone decreasing frequency sequence $\left(f_1, f_2,\cdots, f_{{n}\choose{2}}\right)$, and the frequency of zero is not shown. In addition, the smallest frequency of the $OHC$ edges in each frequency $K_n$ is denoted with the boldface number. In the last three rows, $f_{lb} = \frac{1}{2}{{n}\choose{2}}$ denotes the lower frequency bound for $OHC$ edges in $K_n$, $N_f$ denotes the number of edges with the frequency bigger than zero and $N_{tot} = {{n}\choose{2}}$ denotes the total number of edges in $K_n$.


\begin{table}
	\begin{center}
		\caption{The frequencies($>0$) of edges computed with $OP^n$s in $K_n$.}
		{\footnotesize \begin{tabular}{ p{0.5cm}  p{0.5cm}  p{0.5cm}  p{0.5cm}  p{0.5cm}  p{0.5cm}  p{0.5cm}  p{0.5cm}  p{0.5cm}  p{0.5cm}  p{0.5cm}  p{0.5cm} }
				\hline
		$f_k$ &  &  &  &  &  & $n$ &  &  &  &  &   \\
		& 4 & 5 & 6 & 7 & 8 & 9 & 10 & 11 & 12 & 13 & 14  \\
				\hline
		$f_1$ & 5 & 9 & 14 & 20 & 27 & 35 & 44 & 54 & 65 & 77 & 90  \\
		$f_2$ & 5 & 8 & 13 & 18 & 27 & 35 & 44 & 53 & 65 & 77 & 90  \\
		$f_3$ & 3 & 7 & 12 & 16 & 25 & 33 & 42 & 47 & 63 & 74 & 86  \\
		$f_4$ & \textbf{3} & 6 & 10 & 15 & 19 & 30 & 40 & 45 & 57 & 69 & 82  \\
		$f_5$ & 1 & \textbf{5} & 8 & 13 & 18 & 27 & 35 & 44 & 56 & 68 & 81  \\
		$f_6$ & 1 & 2 & \textbf{6} & 12 & 18 & 24 & 35 & 44 & 54 & 66 & 78  \\
		$f_7$ &  & 1 & 3 & \textbf{12} & 17 & 22 & 31 & 43 & 53 & 65 & 77  \\
		$f_8$ &  & 1 & 3 & 8 & \textbf{15} & 22 & 29 & 40 & 50 & 64 & 76  \\
		$f_9$ &  & 1 & 3 & 3 & 6 & \textbf{16} & 27 & 40 & 50 & 61 & 73  \\
		$f_{10}$ &  &  & 1 & 3 & 5 & 10 & \textbf{20} & 36 & 43 & 52 & 72  \\
		$f_{11}$ &  &  & 1 & 2 & 5 & 6 & 12 & \textbf{24} & 41 & 51 & 62  \\
		$f_{12}$ &  &  & 1 & 2 & 4 & 5 & 6 & 14 & \textbf{32} & 47 & 60  \\
		$f_{13}$ &  &  &  & 1 & 3 & 4 & 6 & 10 & 16 & \textbf{41} & 53  \\
		$f_{14}$ &  &  &  & 1 & 3 & 4 & 4 & 9 & 11 & 18 & \textbf{51}  \\
		$f_{15}$ &  &  &  &  & 2 & 4 & 4 & 7 & 9 & 13 & 21  \\
		$f_{16}$ &  &  &  &  & 1 & 3 & 4 & 7 & 7 & 12 & 20  \\
		$f_{17}$ &  &  &  &  & 1 & 3 & 4 & 6 & 6 & 10 & 14  \\
		$f_{18}$ &  &  &  &  &  & 2 & 4 & 6 & 6 & 9 & 13  \\
		$f_{19}$ &  &  &  &  &  & 1 & 4 & 6 & 6 & 8 & 10  \\
		$f_{20}$ &  &  &  &  &  & 1 & 2 & 2 & 6 & 6 & 9  \\
		$f_{21}$ &  &  &  &  &  & 1 & 2 & 2 & 4 & 6 & 6  \\
		$f_{22}$ &  &  &  &  &  &  & 2 & 2 & 4 & 6 & 6  \\
		$f_{23}$ &  &  &  &  &  &  & 2 & 2 & 3 & 5 & 6  \\
		$f_{24}$ &  &  &  &  &  &  & 1 & 2 & 3 & 4 & 6  \\
		$f_{25}$ &  &  &  &  &  &  & 1 & 2 & 2 & 4 & 6  \\
		$f_{26}$ &  &  &  &  &  &  &  & 1 & 2 & 4 & 5  \\
		$f_{27}$ &  &  &  &  &  &  &  & 1 & 2 & 3 & 5  \\
		$f_{28}$ &  &  &  &  &  &  &  & 1 & 2 & 3 & 4  \\
		$f_{29}$ &  &  &  &  &  &  &  &  & 2 & 2 & 4  \\
		$f_{30}$ &  &  &  &  &  &  &  &  & 2 & 2 & 3  \\
		$f_{31}$ &  &  &  &  &  &  &  &  & 1 & 2 & 2  \\
		$f_{32}$ &  &  &  &  &  &  &  &  & 1 & 2 & 2  \\
		$f_{33}$ &  &  &  &  &  &  &  &  & 1 & 1 & 2  \\
		$f_{34}$ &  &  &  &  &  &  &  &  & 1 & 1 & 2  \\
		$f_{35}$ &  &  &  &  &  &  &  &  &  & 1 & 2  \\
		$f_{36}$ &  &  &  &  &  &  &  &  &  & 1 & 1  \\
		$f_{37}$ &  &  &  &  &  &  &  &  &  & 1 & 1  \\
		$f_{38}$ &  &  &  &  &  &  &  &  &  &  & 1  \\
		$f_{39}$ &  &  &  &  &  &  &  &  &  &  & 1  \\		
		$f_{lb}$ & 3 & 5 & 7.5 & 10.5 & 14 & 18 & 22.5 & 27.5 & 33 & 39 & 45.5 \\	
		$N_f$ & 6 & 9 & 12 & 14 & 17 & 21 & 25 & 28 & 34 & 37 & 39  \\
		$N_{tot}$ & 6 & 10 & 15 & 21 & 28 & 36 & 45 & 55 & 66 & 78 & 91  \\
				
				\hline
		\end{tabular}}
		\label{Examp}
	\end{center}
\end{table}

Based on Theorem \ref{th1}, an $OHC$ edge in $K_n$ has the lower frequency bound $f_{lb} = \frac{1}{2}{{n}\choose{2}}$ in the worst average case, and $\frac{7}{18}{{n}\choose{2}}$ in the worst case. Moreover, $f_{lb}$ will increase according to $n$. If $n$ ($n\geq 5 $) is small, the smallest frequency of the $OHC$ edges may be smaller than $\frac{1}{2}{{n}\choose{2}}$ but must be bigger than $\frac{7}{18}{{n}\choose{2}}$. For the ordinary edges, the upper frequency bound  is smaller than $2(n-3)$ based on Theorem \ref{th22}. In Table \ref{Examp}, the top $n$ values of $f_k$ in each column belong to the $OHC$ edges in each $K_n$. In view of the datum, one firstly observes that: The $OHC$ edges have the top $n$ frequencies much bigger than those of the ordinary edges in each frequency $K_n$. 
The frequencies of the ordinary edges are very small, and most of them are equal to zero as $n$ is relatively big, such as $n\geq 10$. It says that every $OHC$ edge is included in many of the ${{n}\choose{2}}$ $OP^n$s. On the other hand, an ordinary edge is included in a small number of the $OP^n$s so the frequency is very small and even equals zero. 

Secondly, the maximum frequency of the $OHC$ edges is ${{n}\choose{2}} - 1$. Only one $OP^n$ does not include such an $OHC$ edge where the two vertices  are the endpoints of the $OP^n$.  Moreover, the smallest frequency of the $OHC$ edges in each column is bigger than $\frac{7}{18}{{n}\choose{2}}$, and most of them are bigger than $f_{lb} = \frac{1}{2}{{n}\choose{2}}$. If the smallest frequency of the $OHC$ edges is smaller than $f_{lb}$, the number of such $OHC$ edges is only one in these frequency $K_n$s. Except the smallest frequency, the frequencies of the other $OHC$ edges are much bigger than $f_{lb}$, such as $n=$  9, 10, 11, 12, 13 and 14. It indicates that Theorem \ref{th1} works well for the constructed $K_n$s containing random distances and small real-life $TSP$ instances. We investigated the edge with the smallest frequency for $n=9\sim 14$. It is always the edge $(2,6)$ which is not the $OHC$ edge in Oliver30 $(K_{30})$. We also investigated the frequencies of the other $OHC$ edges in these frequency graphs. It found that the edges with the frequency bigger than $\frac{i^2-4i+7}{2}$ are mostly the $OHC$ edges in Oliver30. On the other hand, the edges with the frequency smaller than $\frac{i^2-4i+7}{2}$ are mostly not the $OHC$ edges in Oliver30. It indicates that although some ordinary edges in $K_n$ are the $OHC$ edges in some sub-graphs $K_i$s $(i< n)$, the frequency of such ordinary edges will be much smaller than the lower bound for the average frequency of all $OHC$ edges, i.e., $\frac{i^2-4i+7}{2}$, in these frequency $K_i$s. Thus, the average frequency and probability of ordinary edges are smaller than those of most $OHC$ edges based on frequency $K_i$s. 
In the frequency $K_n$s if $n > 4$, the frequencies of some ordinary edges are equal to zero. As $n$ becomes big, the number of such ordinary edges rises accordingly. Since these ordinary edges are not used to build the $OP^n$s, they will be replaced by the $OHC$ edges and the ordinary edges with $f(g) > 0$ for building the $OP^n$s. Because the $OHC$ edges have the frequencies much bigger than the  ordinary edges with $f(g) > 0$ in the frequency $K_n$s, the ordinary edges  with $f(g) = 0$ are largely replaced by the $OHC$ edges to build the $OP^n$s. 

Thirdly, the biggest frequency of the ordinary edges is smaller than $2(i-3)$ in each frequency $K_n$. For some instances, the biggest frequency of the ordinary edges is close to $f_{lb}$ as $n$ is very small, such as $n$= 4, 5, 6. As $n$ rises, $f_{lb}$ becomes much bigger than the biggest frequency of the ordinary edges, and most of the ordinary edges with $f(g) > 0$ have the frequency  $f(g) < \frac{n+2}{2}$. Moreover, the average frequency of the ordinary edges with $f(g) > 0$ is also smaller than $\frac{n+2}{2}$ in each frequency $K_n$. 
It implies that most ordinary edges with $f(g) > 0$ are also replaced by the $OHC$ edges to build the $OP^n$s. Thus, the frequencies of the $OHC$ edges are much bigger than those of  ordinary edges. Theorem \ref{th2} works well for the $K_n$s containing random distances and small real-life $TSP$ instances. It indicates that the $OHC$ edges and ordinary edges have different structure properties. The different structure properties of them are fully characterized by the frequencies of edges computed with the $OP^n$s. Thus, the $OHC$ edges and ordinary edges can be separated from each other based on their frequencies in the frequency $K_n$. 


Based on the datum in Table \ref{Examp}, the average frequency of the $OHC$ edges and that of the ordinary edges are computed according to each frequency $K_n$ and given in Table \ref{averf}. $F_{tot}$, $F_{ohc}$ and $F_{ord}$ denote the total frequency of all edges, $OHC$ edges and ordinary edges in each frequency $K_n$, respectively. $f_{ohc}$ and $f_{ord}$ are the average frequency of the $n$ $OHC$ edges and $\frac{n(n-3)}{2}$ ordinary edges, respectively. $f_{oavg} = \frac{n^2-4n+7}{2}$ denotes the lower bound of the average frequency of all $OHC$ edges in each frequency $K_n$. $r_f = \frac{f_{ohc}}{f_{ord}}$ is the ratio between $f_{ohc}$ and $f_{ord}$. Firstly, although both $F_{ohc}$ and $F_{ord}$ increase according to $n$, $F_{ohc}$ approaches $F_{tot}$ and it is much bigger than $F_{ord}$. Meanwhile, $f_{ohc}$ is much bigger than $f_{ord}$, see the value of $r_f$. It means that an $OHC$ edge is contained in much bigger number of the $OP^n$s than an ordinary edge in $K_n$ on average. In addition, $f_{ohc} > f_{oavg}$ holds according to $n\in [4,14]$. Although the smallest frequency of the $OHC$ edges approaches $f_{lb} = \frac{1}{2}{{i}\choose{2}}$, the average frequency of all $OHC$ edges is closer to ${{i}\choose{2}} - 1$. It implies that the number of the $OHC$ edges having the  frequency near $f_{lb}$ is limited. Otherwise, the average frequency of all $OHC$ edges will be close to $f_{lb}$ rather than bigger than $f_{oavg}$. It indicates most $OHC$ edges will have the frequency much bigger than $f_{lb}$ in the frequency $K_n$. On average, an $OHC$ edge is contained in more than $f_{oavg} = \frac{n^2-4n+7}{2}$ $OP^n$s in each frequency $K_n$. On the other hand, $f_{ord}\in[1,2]$ implies that an ordinary edge is contained in less than two $OP^n$s in each of the frequency $K_n$s. Moreover, the average frequency of the ordinary edges with $f(g) > 0$ in each frequency $K_n$ is smaller than $\frac{i+2}{2}$. In addition, $f_{ord}$ approaches 2 according to $n$, which means that an ordinary edge is also included in more number of $OP^n$s according to $n$ on average. Theorems \ref{th1}, \ref{th2} and \ref{th22} are totally approved by the experimental results.

\begin{table}
	\begin{center}
		\caption{The average frequency of $OHC$ edges and that of ordinary edges according to the frequency $K_n$s in Table \ref{Examp}.}
		{\footnotesize \begin{tabular}{ p{0.5cm}  p{0.5cm}  p{0.5cm}  p{0.5cm}  p{0.5cm}  p{0.5cm}  p{0.5cm}  p{0.5cm}  p{0.5cm}  p{0.5cm}  p{0.5cm}  p{0.5cm} }
		\hline
		$i$ & 4 & 5 & 6 & 7 & 8 & 9 & 10 & 11 & 12 & 13 & 14  \\
		\hline
		$F_{tot}$ & 18 & 40 & 75 & 126 & 196 & 288 & 405 & 550 & 726 & 936 & 1183  \\		
		$F_{ohc}$ & 16 & 35 & 63 & 106 & 166 & 244 & 347 & 470 & 629 & 812 & 1031  \\	
		$F_{ord}$ & 2 & 5 & 12 & 20 & 30 & 44 & 58 & 80 & 97 & 124 &  152   \\			
		$f_{ohc}$ & 4 & 7 & 10.5 & 15.14 & 20.75 & 27.11 & 34.7 & 42.73 & 52.42 & 62.46 & 73.64   \\
		$f_{ord}$ & 1 & 1 & 1.33 & 1.43 & 1.5 & 1.63 & 1.66 & 1.82 & 1.80 & 1.91 & 1.97     \\  		
		$f_{oavg}$ & 3.5 & 6 & 9.5 & 14 & 19.5 & 26 & 33.5 & 42 & 51.5 & 62 & 73.5   \\	
		$r_f$ & 4 & 7 & 7.88 & 10.6 & 13.83 & 16.64 & 20.94 & 23.5 & 29.18 & 32.74 &  37.30  \\															
		\hline
		\end{tabular}}
		\label{averf}
	\end{center}
\end{table}

To compare the frequencies of $OHC$ edges and ordinary edges in local graph structures, the total frequency of the two $OHC$ edges and that of the ordinary edges containing each vertex are computed for the 11 $TSP$ instances in Table \ref{Examp}, and the results are illustrated in Table \ref{twotimes}. In the first column of Table \ref{twotimes}, the vertices in each $K_n$ are denoted by natural numbers $0, 1, 2, ..., n$. In each of the following columns, the total frequency of the two $OHC$ edges containing each vertex in one frequency $K_n$ is given before that of the associated ordinary edges, respectively. There is one slash between the two total frequencies for each vertex. In the last three rows, the three lower frequency bounds for two adjacent $OHC$ edges $f_{lb_1} = \frac{4(n-1)^2}{5}$, $f_{lb_2} = \frac{3(n-1)^2}{4}$ and $f_{lb_3} = \frac{7(n-1)^2}{10}$ are computed according to Table \ref{fbounds} for comparisons. 
\begin{table}
	\begin{center}
		\caption{The total frequency of the two $OHC$ edges and that of the ordinary edges containing each vertex in the frequency $K_n$s ($n\in[4,14]$).}
		{\footnotesize \begin{tabular}{ p{0.5cm}  p{0.4cm}  p{0.4cm}  p{0.4cm}  p{0.4cm}  p{0.6cm}  p{0.6cm}  p{0.6cm}  p{0.6cm}  p{0.7cm}  p{0.8cm}  p{0.8cm} }
		\hline
		Vertex &  &  &  &  &  & $n$ &  &  &  &  &   \\
		No.& 4 & 5 & 6 & 7 & 8 & 9 & 10 & 11 & 12 & 13 & 14  \\
		\hline
		0 & 8/1 & 14/2 & 21/4 & 27/9 & 44/5 & 59/5 & 75/6 & 94/6 & 115/6 & 138/6 & 163/6  \\
		1 & 8/1 & 13/3 & 23/2 & 28/8 & 37/12 & 62/2 & 79/2 & 98/2 & 119/2 & 142/2 & 167/2  \\
		2 & 8/1 & 15/1 & 22/3 & 27/9 & 40/9 & 43/21 & 55/26 & 68/32 & 86/35 & 106/38 & 128/41  \\	
		3 & 8/1 & 15/1 & 18/7 & 32/4 & 45/4 & 46/18 & 60/21 & 76/24 & 93/28 & 113/31 & 135/34  \\
		4 &  & 13/3 & 22/3 & 33/3 & 44/5 & 57/7 & 71/10 & 91/9 & 107/14 & 120/24 & 142/27  \\
		5 &  &  & 20/5 & 31/5 & 45/4 & 57/7 & 73/8 & 80/20 & 100/21 & 121/23 & 144/25  \\	
		6 &  &  &  & 34/2 & 35/14 & 49/15 & 62/19 & 77/23 & 95/26 & 115/29 & 137/32  \\
		7 &  &  &  &  & 42/7 & 63/1 & 77/4 & 93/7 & 116/5 & 138/6 & 162/7  \\
		8 &  &  &  &  &  & 52/12 & 75/6 & 88/12 & 109/12 & 132/12 & 157/12  \\	
		9 &  &  &  &  &  &  & 67/14 & 85/15 & 97/24 & 115/29 & 134/32  \\
		10 &  &  &  &  &  &  &  & 87/13 & 106/15 & 113/31 & 126/43  \\
		11 &  &  &  &  &  &  &  &  & 115/16 & 128/16 & 138/31  \\		
		12 &  &  &  &  &  &  &  &  &  & 143/1 & 168/1  \\		
		13 &  &  &  &  &  &  &  &  &  &  &  162/7 \\	
		$f_{lb_1}$ & 7.2 & 12.8 & 20 & 28.8 & 39.2 & 51.2 & 64.8 & 80 & 96.8 & 115.2 &  135.2 \\ 		
		$f_{lb_2}$ & 6.75 & 12 & 18.75 & 27 & 36.75 & 48 & 60.75 & 75 & 90.75 & 108 &  126.75 \\ 		
		$f_{lb_3}$ & 6.3 & 11.2 & 17.5 & 25.2 & 34.3 & 44.8 & 56.7 & 70 & 84.7 & 100.8 &  118.3 \\ 																				
		\hline
		\end{tabular}}
		\label{twotimes}
	\end{center}
\end{table}

For most vertices in each frequency $K_n$, the total frequency of two adjacent $OHC$ edges is bigger than $f_{lb_1}$ and $f_{lb_2}$, and it is bigger than $f_{lb_3}$ for nearly all vertices in these frequency $K_n$s. Only for the vertex 2 in $K_9$ and $K_{10}$, the total frequency of the two adjacent $OHC$ edges is slightly smaller than $f_{lb_3}$. 
As $n=$ 4 and 5, the total frequency of the two $OHC$ edges is bigger than four times of that of the associated ordinary edges for every vertex. As $n$ becomes big, the total frequency of the two $OHC$ edges is smaller than four times of that of the associated ordinary edges for a small number of vertices, but it is bigger than two or three times of that of the associated ordinary edges for these vertices. Moreover, the total frequency of the two $OHC$ edges becomes much bigger than that of the associated ordinary edges for most vertices according to $n$. It says that the percentage that the $OHC$ edges are contained in the $OP^n$s increases quickly according to $n$. On the other hand, the ordinary edges will be contained in the smaller percentage of $OP^n$s according to $n$. The experimental results illustrated that not only all $OHC$ edges and ordinary edges in $K_n$, but also the two $OHC$ edges and ordinary edges containing each vertex conform to Theorems \ref{th01}, \ref{th1}. 

\subsubsection{The frequency change for the replaced $OHC$ edges from frequency $K_n$ to $K_{n+1}$ for the small $TSP$ instances}
In Table \ref{Examp} from $n$=9 to 14, the previous graph $K_n$ is contained in the next bigger graph $K_{n+1}$. As the vertex $n$ is added to $K_n$ to construct the $K_{n+1}$, some $OHC$ edges in the $K_n$ will be replaced by the other new edges for building the $OHC$ in the $K_{n+1}$. If the $OHC$ edges in the $K_n$ becomes ordinary edges in the $K_{n+1}$, it will have the frequency smaller than $f_{lb}$ in the frequency $K_{n+1}$. The frequencies of the replaced $OHC$ edges and new emerging $OHC$ edges in the frequency $K_n$ and $K_{n+1}$ will be investigated according to $n$=9$\sim$14 ($n-1\in [8,13]$). The experimental results are given in Table \ref{repf}. In Table \ref{repf}, the replaced $OHC$ edges and new $OHC$ edges are denoted with vertices, i.e., $(u,v)$ where $u\neq v$ and $u, v\in [0,13]$. 

\begin{table}
	\begin{center}
		\caption{The frequencies of the replaced $OHC$ edges and new $OHC$ edges from the frequency $K_n$ to the frequency $K_{n+1}$.}
		{\footnotesize \begin{tabular}{ p{0.4cm}  p{0.4cm}  p{0.4cm}  p{0.4cm}  p{0.4cm}  p{0.4cm}  p{0.4cm}  p{0.6cm}  p{0.6cm}  p{0.6cm}  p{0.6cm}  p{0.6cm} p{0.6cm}}
				\hline
		$n$ &  &  &  &  &  & edges &  &  &  &  &   \\
		& (4,8) & (4,9) & (9,8) & (4,10) & (10,9) & (4,11) & (11,10) & (11,12) & (12,10) & (12,13) & (13,10) & $f_{lb}$  \\				
				\hline
		9 & 22 &  &  &  &  &   &  &  &  &  &  & 18  \\		
		10 & 2 & 27 & 40 &  &  &  &  &  &  &  &  & 22.5  \\	
		11 & 0 & 0 & 45 & 47 & 40 &  &  &  &  &  &  & 27.5  \\		
		12 & 0 & 1 & 56 & 7 & 41 & 50 & 65 &  &  &  &  &  33 \\	
		13 & 0 & 1 & 68 & 13 & 47 & 51 & 10 & 77 & 66 &  &  &  39 \\		
		14 & 0 & 1 & 81 & 14 & 53 & 60 & 21 & 78 & 1 & 90 & 72 &  45.5 \\					 				
				\hline
		\end{tabular}}
		\label{repf}
	\end{center}
\end{table}

For example, $(4,8)$ is the $OHC$ edge in the $K_9$. As the vertex 9 is added to the $K_9$, $(4,8)$ is broken while two new edges $(4,9)$ and $(9,8)$ are used to construct the $OHC$ in the $K_{10}$. As $(4,8)$ is the $OHC$ edge in the $K_9$, it has the frequency of 22 which is bigger than $f_{lb}$. However, the frequency of $(4,8)$ becomes 2 in the frequency $K_{10}$ where it becomes one ordinary edge. The frequency of (4,8) drops sharply from the frequency $K_9$ to frequency $K_{10}$. In the following frequency $K_n$s ($n\in [11,14]$), it is an ordinary edge and the frequency becomes zero. 

On the other hand, the two new emerging $OHC$ edges $(4,9)$  and $(9,8)$ in the $K_{10}$ have the frequencies of 27 and 40 in the frequency $K_{10}$, respectively. Both of the frequencies are bigger than $f_{lb}$ in the frequency $K_{10}$. One can observe the frequencies of the other replaced $OHC$ edges and new emerging $OHC$ edges in the frequency $K_n$s and $K_{n+1}$s. The replaced $OHC$ edges have the frequency much smaller than $f_{lb}$ in the frequency $K_{n+1}$s once they become ordinary edges, whereas the new emerging $OHC$ edges have the frequency much bigger than $f_{lb}$ in the frequency $K_{n+1}$s. For the $OHC$ edges contained in the $K_n$ and $K_{n+1}$, they preserve the frequency bigger than $f_{lb}$ in the frequency $K_n$ and $K_{n+1}$, such as the edges (9,8) from the frequency $K_{10}$ to the frequency $K_{14}$, (10,9) from the frequency $K_{11}$ to the frequency $K_{14}$, etc. Theorems \ref{th1} and \ref{th2} are verified with these experimental results. 

In addition, among the illustrated $OHC$ edges in these $K_n$s in Table \ref{repf}, $(8,9)$, $(9,10)$, $(10,11)$, $(11,12)$ and $(12,13)$ are also the $OHC$ edges of Oliver30. $(8,9)$, $(9,10)$ and $(11,12)$ maintain the big frequencies in these frequency $K_n$s according to $n$. It indicates that an $OHC$ edge in $K_n$ is generally the $OHC$ edge in the $K_i$s containing it. For the edge $(10,11)$, although it is an ordinary edge in the  $K_{13}$ and $K_{14}$, the frequency is near or bigger than 13 and 14 in the frequency $K_{13}$ and $K_{14}$. It means that the frequency of an $OHC$ edge in $K_n$ will not be too small (for example, 0, 1, 2, etc.) even if it is an ordinary edge in some frequency $K_i$s containing it. In most of the other frequency $K_i$s, it will have the even bigger frequency to approve the average frequency to be bigger than $f_{lb}$. 

\subsubsection{The lower frequency bounds for $OHC$ edges computed based on frequency $K_i$s for small $TSP$ instances}
In the next, we shall verify the lower frequency bound for $OHC$ edges computed with the frequency $K_i$s according to $i\in[4,n]$. The experiments for the small $TSP$ instances containing $n=9\sim 14$ vertices in Table \ref{Examp} are executed to verify the lower frequency bound of $OHC$ edges computed with the frequency $K_i$s. Firstly, the frequency $K_i$s containing every edge are computed for given $K_n$. Then, the average frequency of each edge is computed with these frequency $K_i$s, respectively. At last, the edges in each frequency $K_n$ are ordered according to their average frequencies from big to small values. Based on the known $OHC$, the frequencies of the $OHC$ edges are compared with the $f_{lb}$, and the number of the edges with the average frequency bigger than that of some specified $OHC$ edge is also computed. The experimental results are given in Tables \ref{smallf}$\sim$\ref{naminf}.

In Table \ref{smallf}, the smallest average frequency of the  $OHC$ edges in each $K_n$ is shown according to $i$ from 4 to $n$. In the last row, the lower frequency bound $f_{lb}$ is given for comparisons. As $n$ is small, such as $n$=9, 10, the smallest average frequency of the $OHC$ edges is smaller than but close to $f_{lb}$. As $n$ rises, the smallest average frequency of the $OHC$ edges computed with the frequency $K_i$s becomes bigger accordingly. For example, as the frequency of each edge is computed with the frequency $K_4$s, the smallest frequency of the $OHC$ edges becomes from 2.90 to 3.42 as $n$ rises from 9 to 14. As $n$=9 and 10, the smallest average frequency of the $OHC$ edges is smaller than $f_{lb}$, and it becomes bigger than $f_{lb}$ as $n$=11, 12, 13 and 14. As the frequency of an edge is computed with the frequency $K_i$s containing more vertices, the same results are obtained. It indicates that the average frequency of an $OHC$ edge rises according to $n$ as it is computed with the frequency $K_i$s for given $i$ on average. It means that the $OHC$ edges will be contained in more and more percentage of $OP^i$s according to $n$.   If $K_n$ always contains an $OHC$ edge and $n$ grows, the average frequency of the $OHC$ edge computed with the frequency $K_i$s will increase according to $n$. The previous part of Theorem \ref{th3} is approved. 

Moreover, the smallest average frequency of the $OHC$ edges increases very fast according to $i$ as $n$ is fixed. For example $n$=13 and 14, the smallest average frequency of the  $OHC$ edges is smaller than $f_{lb}$ as $i$ is small. However, the smallest average frequency becomes bigger than $f_{lb}$ as $i$ rises to one relatively big number, such as $i\geq 11$. It implies that the smallest average frequency of the $OHC$ edges increases faster than $f_{lb}=\frac{1}{2}{{i}\choose{2}}$ according  to $i$. If the smallest average frequency of the $OHC$ edges is smaller than $f_{lb}$ as $i$ is small, it can be predicted that it will be bigger than $f_{lb}$ at certain number $i$, and it will always be bigger than $f_{lb}$ after this number $i$. If we compare $p_{i+1}(e)$ to $p_i(e)$, we will find $p_{i+1}(e) > \left[1 + \frac{2}{i(i+1)}\right]p_i(e)$ exists in most cases for the $OHC$ edge having the smallest average frequency for so small $TSP$ instances. For example, $p_{i+1}(e) > \left[1 + \frac{2}{i(i+1)}\right]p_i(e)$ holds from $i = 7$ for $K_9$ and $K_{10}$, from $i = 9$ for $K_{13}$ and $K_{14}$. It indicates that the $OHC$ edges will be contained in more and more percentage of $OP^i$s according to $i$ as $n$ is fixed. Theorem \ref{th3} is totally verified by the experiments. In real-world applications, the frequencies of edges computed with the frequency $K_i$s containing more vertices are better to characterize $OHC$ edges. On the other hand, it consumes more time to compute the $OP^i$s and frequency $K_i$s.  

\begin{table}
	\begin{center}
		\caption{The smallest average frequency of $OHC$ edges computed with frequency $K_i$s.}
		{\footnotesize \begin{tabular}{ p{0.4cm}  p{0.4cm}  p{0.4cm}  p{0.4cm}  p{0.4cm}  p{0.6cm}  p{0.6cm}  p{0.6cm}  p{0.6cm}  p{0.6cm}  p{0.6cm}  p{0.6cm} }
		\hline
		$n$ &  &  &  &  &  & $i$ &  &  &  &  &   \\
		 & 4 & 5 & 6 & 7 & 8 & 9 & 10 & 11 & 12 & 13 & 14  \\				
		\hline
		9 & 2.90 & 3.97 & 5.49 & 8.05 & 11.86 & 16 &  &  &  &  &   \\		
		10 & 2.93 & 4.05 & 5.64 & 8.13 & 11.55 & 15.63 & 20 &  &  &  &   \\	
		11 & 3.11 & 4.46 & 6.33 & 8.88 & 12.08 & 15.86 & 20 & 24 &  &  &   \\		
		12 & 3.22 & 4.73 & 6.68 & 9.22 & 12.43 & 16.43 & 21.18 & 26.4 & 32 &  &   \\	
		13 & 3.36 & 5.04 & 7.13 & 9.75 & 12.97 & 16.97 & 21.87 & 27.6 & 34 & 41 &   \\		
		14 & 3.42 & 5.22 & 7.38 & 9.96 & 13.15 & 17.14 & 22.11 & 28.18 & 34.94 & 42.58 & 51 \\	 	
		$f_{lb}$ & 3 & 5 & 7.5 & 10.5 & 14 & 18 & 22.5 & 27.5 & 33 & 39 & 45.5 \\	 																	
		\hline
		\end{tabular}}
		\label{smallf}
	\end{center}
\end{table}

\subsubsection{The number of ordinary edges with frequency bigger than the smallest frequency of $OHC$ edges for small $TSP$ instances}
In the frequency $K_n$, the frequency of each ordinary edge is smaller than that of an $OHC$ edge. Moreover, the number of the $OHC$ edges having the frequency near the lower frequency bound $f_{lb}$ will be very small. In addition, several smallest frequencies of the $OHC$ edges may be close for some intricate $TSP$. Since the average frequency considering all $OHC$ edges is bigger than $f_{oavg}$, the frequencies of the $OHC$ edges will rise quickly from $f_{lb}$ to $f_{oavg}$, see the experimental results in Table \ref{Examp}. The results will be different for the $OHC$ edges in $K_n$ as the frequency of each edge is computed with the frequency $K_i$s if $i\ll n$. In this case, some ordinary edges may have the frequencies bigger than those of some $OHC$ edges (see Theorems \ref{th33} and \ref{th30}). To illustrate the difference between the frequencies of the $OHC$ edges and those of ordinary edges computed with the frequency $K_i$s, Table \ref{noe} illustrates the number of edges with the frequency bigger than the first and third smallest frequency of the $OHC$ edges, and Table \ref{naminf} gives the number of such ordinary edges. 

As $i$ is small, such as $i$=4, there are many ordinary edges having the frequencies bigger than the first smallest frequency of the $OHC$ edges. The reason is that these ordinary edges are contained in the big percentage of the frequency $K_4$s without containing one or two pairs of the adjacent $OHC$ edges in $K_n$. In this case, these ordinary edges are the $OHC$ edges in most of the frequency $K_4$s and have the big frequency of 3 or 5. Thus, these ordinary edges have the average frequencies bigger than the first smallest frequency of the $OHC$ edges at $i=4$. If $i$ becomes big, the number of such ordinary edges decreases quickly. For example at $i$=7, there are only a few such ordinary edges for each $K_n$ in Table \ref{naminf}. If $i$ is big, an ordinary edge is contained in the big percentage of the frequency $K_i$s where the ordinary edge is adjacent to one or two pairs of the $OHC$ edges in $K_n$. At this time, the ordinary edges will not be the $OHC$ edges and they have the frequency smaller than $i-3$ in these frequency $K_i$s. In this case, the average frequency of most ordinary edges becomes small at the big number $i$. On the other hand, an $OHC$ edge is contained in the bigger or nearly equal percentage of $OP^i$s, and the average frequency increases quickly to the even bigger value. Thus, the number of the ordinary edges with the frequency bigger than the smallest frequency of the $OHC$ edges becomes much smaller at the big number $i$. Theorem \ref{th33} is approved by these experimental results.

In addition, the number of total edges or ordinary edges with the average frequency bigger than the first smallest frequency of the $OHC$ edges is small comparing to the total number of edges $N_{tot}$ or ordinary edges $N_{ord}$. It says that most ordinary edges having the frequency smaller than the smallest frequency of the $OHC$ edges. Moreover, the number of such ordinary edges becomes smaller according to $i$ until it equals zero as $i$ becomes relatively big. It indicates that the average frequencies of the $OHC$ edges rise faster than those of the ordinary edges according to $i\in[4,n]$ as well as the probabilities that they are contained in the $OP^i$s. Thus, the frequencies of the ordinary edges must be smaller than that of an $OHC$ edge at certain $i$. It mentions that although some ordinary edges have the  frequencies bigger than the smallest frequency of the $OHC$ edges as $i$ is small, their frequencies are smaller than those of most other $OHC$ edges.  

As the third smallest frequency of the $OHC$ edges are used to filter out the ordinary edges according to their average frequencies, the number of the preserved ordinary edges becomes even smaller. As $i$ is small, for example $i$=4, the number of the preserved ordinary edges according to the third smallest frequency of the $OHC$ edges becomes much smaller than that according to the first smallest frequency. It indicates that the third smallest frequency is much bigger than the first smallest frequency of the $OHC$ edges. Moreover, the third smallest frequency rises faster than that of most ordinary edges according to $i$. For example as $i$=6, there are no ordinary edges having the frequency bigger than the third smallest frequency for $n$=9, 10 and 11, and there are only 1, 3 and 4 ordinary edges having the  frequency bigger than the third smallest frequency for $n$=12, 13 and 14. When $i$=8, there are only two ordinary edges having the average frequency bigger than the third smallest frequency for $n$=14. For the other instances where $n\in[9,13]$ if $i\geq 8$, the average frequency of no ordinary edges is bigger than the third smallest frequency of the $OHC$ edges. 

\begin{table}
	\begin{center}
		\caption{The number of edges with the frequency bigger than the first/third smallest frequency of the $OHC$ edges.}
		{\footnotesize \begin{tabular}{ p{0.4cm}  p{0.6cm}  p{0.6cm}  p{0.6cm}  p{0.6cm}  p{0.4cm}  p{0.4cm}  p{0.4cm}  p{0.4cm}  p{0.4cm}  p{0.4cm}  p{0.4cm}  p{0.4cm} }
		\hline
		$n$	&  &  &  &  &  & $i$ &  &  &  &  &  & $N_{tot}$ \\
		& 4 & 5 & 6 & 7 & 8 & 9 & 10 & 11 & 12 & 13 & 14 &  \\				
		\hline
		9 & 16/7 & 14/6 & 12/6 & 10/6 & 8/6 & 8/6 &  &  &  &  &  &  36 \\		
		10 & 20/8 & 18/8 & 16/7 & 11/7 & 9/7 & 9/7 & 9/7 &  &  &  &  &  45 \\	
		11 & 22/9 & 19/8 & 18/8 & 13/8 & 10/8 & 10/8 & 10/8 & 10/8 &  &  &  & 55 \\		
		12 & 23/14 & 19/12 & 18/10 & 16/9 & 12/9 & 11/9 & 11/9 & 11/9 & 11/9 &  &  &  66 \\	
		13 & 25/17 & 22/16 & 20/13 & 18/10 & 15/10 & 12/10 & 12/10 & 12/10 & 12/10 & 12/10 &  &  78 \\	
		14 & 27/20 & 25/20 & 23/15 & 22/13 & 16/11 & 15/11 & 13/11 & 13/11 & 13/11 & 13/11 & 13/11 &  91  \\					
		\hline
		\end{tabular}}
		\label{noe}
	\end{center}
\end{table}

\begin{table}
	\begin{center}
		\caption{The number of ordinary edges with the frequency bigger than the first/third smallest frequency of the $OHC$ edges.}
		{\footnotesize \begin{tabular}{ p{0.4cm}  p{0.6cm}  p{0.6cm}  p{0.6cm}  p{0.6cm}  p{0.4cm}  p{0.4cm}  p{0.4cm}  p{0.4cm}  p{0.4cm}  p{0.4cm}  p{0.4cm}  p{0.4cm} }
		\hline
		$n$	&  &  &  &  &  & $i$ &  &  &  &  &  & $N_{ord}$ \\
		 & 4 & 5 & 6 & 7 & 8 & 9 & 10 & 11 & 12 & 13 & 14 &  \\				
		\hline
		9 & 8/1 & 6/0 & 4/0 & 2/0 & 0/0 & 0/0 &  &  &  &  &  &  27 \\		
		10 & 11/2 & 9/1 & 7/0 & 2/0 & 0/0 & 0/0 & 0/0 &  &  &  &  &  35 \\	
		11 & 12/1 & 9/0 & 8/0 & 3/0 & 0/0 & 0/0 & 0/0 & 0/0 &  &  &  & 44 \\		
		12 & 12/5 & 8/3 & 7/1 & 5/0 & 1/0 & 0/0 & 0/0 & 0/0 & 0/0 &  &  &  54 \\	
		13 & 13/7 & 10/6 & 8/3 & 6/0 & 3/0 & 0/0 & 0/0 & 0/0 & 0/0 & 0/0 &  &  65 \\	
		14 & 14/9 & 12/9 & 10/4 & 9/2 & 3/0 & 2/0 & 0/0 & 0/0 & 0/0 & 0/0 & 0/0 &  77  \\					
		\hline
		\end{tabular}}
		\label{naminf}
	\end{center}
\end{table}

The experimental results demonstrate that there are only a small number of $OHC$ edges having the average frequency near the lower frequency bound $f_{lb}$ or the smallest frequency of the $OHC$ edges, and the frequencies of the other $OHC$ edges are much bigger than $f_{lb}$ or the smallest frequency of the $OHC$ edges.  Although there are many ordinary edges having the average frequency bigger than $f_{lb}$ or the smallest frequency of the $OHC$ edges as $i$ is small, the number of the ordinary edges having the average frequency bigger than the third smallest frequency of the $OHC$ edges is very small. One can use certain frequency threshold bigger than $f_{lb}$ to filter out the ordinary edges,  much more ordinary edges will be abandoned, whereas only a small number of $OHC$ edges are  neglected. Moreover, the frequencies of the $OHC$ edges rises faster than those of most ordinary edges according to $i$. As the frequency of each edge is computed with the frequency $K_i$s containing more vertices, one can use the even bigger frequency threshold to filter out more ordinary edges while more $OHC$ edges will be preserved. At $i$=6, if the third smallest frequency of the $OHC$ edges are used to eliminate the edges with the smaller frequency, 7, 8, 9 $OHC$ edges are preserved  and all ordinary edges are eliminated for $n$=9, 10 and 11 in Table \ref{noe}. At $i$=7, 7, 8, 9, 10, 11 and 12 $OHC$ edges are preserved for $n\in[9,14]$ according to the third smallest frequency of the $OHC$ edges. All the ordinary edges are eliminated for $n$=9, 10, 11, 12 and 13, and only 2 ordinary edges are preserved for $n$= 14. 

\subsubsection{The frequency and probability changes for $OHC$ and ordinary edges according to $i$ for small $TSP$ instances}
In the following, the frequency and probability changes for $OHC$ edges and ordinary edges in $K_n$ will be illustrated according to $i$. The experiments are executed for the six small $TSP$ instances where $n\in [9,14]$ in Table \ref{Examp}. Given $K_n$, the frequency of an edge is computed with the frequency $K_i$s from $i$=4 to $n$, respectively. The experimental results are given in Table \ref{chgf}. In Table \ref{chgf}, the total frequency of all edges is denoted by $F_{tot}$, and $F_{ohc}$ denotes the total frequency of all $OHC$ edges for $i\in[4,n]$. $p_{ohc} = \frac{F_{ohc}}{F_{tot}}\times 100\%$ represents the percent of $F_{ohc}$. $p_i(e)$ and $p_i(g)$ denote the average probability of the $n$ $OHC$ edges and $\frac{n(n-3)}{2}$ ordinary edges, respectively. $p_i(e)$ or $p_i(g)$ also represents the probability that an $OHC$ edge $e$ or an ordinary edge $g$ is contained in the $OP^i$s based on the $K_i$s containing $e$ or $g$ in the average case. In addition, $p_{min}(e)$ denotes the minimum probability of the $OHC$ edges, and $p_{max}(g)$ denotes the maximum probability of the ordinary edges according to $i$.

Firstly, $F_{ohc}$ approaches $F_{tot}$ according to $i$. One also observes that $p_{ohc}$ becomes bigger according to $i$. Moreover, $p_i(e)$ also increases simultaneously and it is bigger than $\frac{i^2-4i+7}{i(i-1)}$ according to $i$. It indicates that the $OHC$ edges are  contained in more and more percentage of $OP^i$s according to $i$, and an $OHC$ edge is contained in more than $f_{oavg} = \frac{i^2-4i+7}{2}$ $OP^i$s based on each $K_i$ containing it on average. Since $p_i(e) > \frac{i^2 - 4i + 7}{i(i-1)}$, $p_{i+1}(e) < \left[1 + \frac{3}{i(i+1)}\right]p_i(e)$ exists from $i$ to $i+1$. On the other hand, $p_i(g)$ decreases according to $i$, and it is smaller than $\frac{2(n-1)}{i(n-3)} - \frac{2(i^2-4i+7)}{i(i-1)(n-3)}$ according to $i$. Even if at $i$=4, $p_i(g) < \frac{1}{2}$ exists for these small $TSP$ instances. It implies that an ordinary edge is contained in smaller than $\frac{1}{2}{{i}\choose{2}}$ $OP^i$s in a $K_i$ containing it on average. Moreover, $p_i(g)$ decreases sharply in proportion to certain factor smaller than $\frac{i}{i+1}$ from $i$ to $i+1$, i.e., $p_{i+1}(g) < \frac{i}{i+1}p_i(g)$. It means that the ordinary edges are contained in smaller and smaller percentage of $OP^i$s according to $i$. In addition, $p_i(e)$ rises according to $n$ for given $i$. It indicates that the $OHC$ edges are contained in more and more percentage of $OP^i$s according to $n$.  Since $p_{ohc}(e)$ rises according to $i$, $p_{ohc}(e)$ and $p_i(e)$ reach the maximum value at $i=n$, respectively. As $n$ is big, $p_{ohc}(e)$ and $p_i(e)$ will approach 1 at $i=n$, and $p_i(g)$ will tend to zero. Theorem \ref{th3} is approved with the experimental results. 

It mentions that $p_{ohc}$ is much smaller than 1 if $i$ is small, such as $i=4$. It means that the ordinary edges occupy the big percentage of the edges in the $OP^i$s containing a small number of vertices. Moreover, $p_{ohc}$ decreases according to $n$ for given $i$. It indicates that all  ordinary edges occupy more percentage of the edges in the $OP^i$s according to $n$ for given $i$. In this case, there will be many ordinary edges having the frequency bigger than $f_{lb}$ for big $n$ and small $i$. The reason has been given in Theorems \ref{th33} and \ref{th30}. Thus, if $i$ is small while $n$ is very big, the frequency of edges computed with the frequency $K_i$s is not rational for separating $OHC$ edges from some ordinary edges. The frequencies of edges computed with the frequency $K_i$s containing more vertices are better to characterize and identify the $OHC$ edges.

As $i$ rises, $p_{ohc}$ and $p_i(e)$ increases quickly according to $i$ whereas $p_i(g)$ decreases sharply, and it becomes very small as $i$ is  big. Moreover, $p_i(g)$ decreases according to $n$ at $i=n$. It means that more and more ordinary edges are excluded from the $OP^i$s containing more vertices. Since the $OHC$ edges have the bigger value of $p_i(e)$ according to $i$, these ordinary edges are largely replaced by the $OHC$ edges to construct the $OP^i$s. As $i=n$, only a portion of the ordinary edges are contained in the $OP^n$s, and the percentage of the ordinary edges contained in the $OP^n$s tends to zero and it becomes smaller according to $n$. For the ordinary edges, Theorem \ref{th33} is verified with the experimental results. The experimental results demonstrated that the frequency changes or probability changes for the $OHC$ edges and ordinary edges are different according to $i$, which discloses that they have different structure properties with respect to $OHC$ and $OP^i$s. 

If $i$ is small, there will be some ordinary edges having the probability (or frequency) bigger than the smallest probability (or frequency) of the $OHC$ edges. The smallest probability $p_{min}(e)$ of the $OHC$ edges and the maximum probability $p_{max}(g)$ of the ordinary edges are given according to $i$ in Table \ref{chgf}. $p_{min}(e)$ is usually smaller than $p_{max}(g)$ as $i$ is small for these small $TSP$ instances. However, $p_{min}$ becomes bigger than $p_{max}(g)$ as $i$ is relatively big. Moreover, although $p_{min}(e)$ decreases from $i$ to $i+1$ as $i$ is small, it will increase from certain $i$ to $n$. On the other hand, $p_{max}(g)$ always decreases according to $i\in[4,n]$.

In the next, the change of $p_{min}(e)$ will be analyzed according to $i$. $p_{min}(e)$ does not always rise according to $i$, and sometimes it decreases from certain $i$ to $i+1$. One sees the probability decrement is relatively big according to the small $i$s and it becomes smaller as $i$ rises. Moreover, the decrement is smaller than $\frac{2p_{min}(e)}{i(i-1)}$ from $i$ to $i+1$ in all the stages for these small $TSP$ instances except from $i$=4 to 5 for $K_9$. Even if for the smallest probabilities of the $OHC$ edges for the small $TSP$ instances, Theorem \ref{th3} still works well to predict their changes according to $i$, i.e., $p_{i+1}(e)\geq \left[1-\frac{2}{i(i-1)}\right]p_i(e)$ holds for all $OHC$ edges if $p_i(e)$ decreases. Although $p_{min}(e)$ decreases at some steps, the number of such steps is very small comparing to $n$. Moreover, $p_{min}(e)$ begins increasing according to the following $i$s after the smallest value until $i=n$. The experiments demonstrated that $p_{min}(e)$ reaches the smallest value at certain $i < 2i_d$ and it begins increasing after that. As $p_{min}(e)$ increases from $i$ to $i+1$, the probability increment is bigger than $\frac{2p_{min}(e)}{i(i-1)}$ in most cases for these small $TSP$ instances, for example from $i=9$ to 12 for $K_{13}$ and from $i=9$ to 13 for $K_{14}$. Theorem \ref{th3} demonstrates that  the smaller $p_i(e)$ of an $e\in OHC$ will increase faster according to $i$ if $p_i(e)$ increases from $i$ to $i+1$. It implies that the probability increment for the $OHC$ edges is generally bigger than the average probability increment computed for all edges contained in the $OP^n$s. On the other hand, the probability increment for the ordinary edges will be smaller than $\frac{2p_i(g)}{i(i-1)}$ in most cases if $p_i(g)$ increases from $i$ to $i+1$. The experiments illustrate that $p_{min}(e)$ changes steadily from $i$=4 to $n$. As $i< P_0=\frac{n}{2} + 2$ for even $n$ and $P_0 = \frac{n+1}{2} + 1$ for odd $n$, the small decrement of $p_{min}(e)$ from $i$ to $i+1$ does not prevent the frequency from increasing. Thus, the frequency of the related $OHC$ edge will increase according to $i$ until it reaches the peak value at $P_0$. For the other $OHC$ edges, their probability is bigger than $p_{min}(e)$ according to $i$. The probabilities of most $OHC$ edges will increase according to $i$, and these $OHC$ edges will have the even bigger peak frequency at $P_0$. 

Based on Theorems \ref{th3} and \ref{th33}, it is known that the $p_i(e)$ increases according to $i$ whereas $p_i(g)$ decreases according to $i$ in the average case. For the smallest frequency and probability of the $OHC$ edges, the changes will not be the same as that of $p_i(e)$. Firstly, as $n$ is small, such as $n$=9 and 10, $p_{min}(e)<\frac{1}{2}$ occurs and sometimes it is even smaller than $\frac{7}{18}$ for some particular $TSP$ instances(for example, the $TSP$ instances contain many equal-weight edges.). It says that the lower frequency bound $f_{lb} = \frac{1}{2}{{i}\choose{2}}$ does not work well for some $OHC$ edges for the very small $TSP$ instances. It is known that the (average) frequency of an $OHC$ edge rises according to $n$ for given $i$. As $n$ becomes big, such as $n\geq 11$, $p_{min}(e) > \frac{7}{18}$ exists and it is bigger than $\frac{1}{2}$ in most cases according to $i$ even if for these small $TSP$ instances. Thus, $f_{lb}= \frac{1}{2}{{i}\choose{2}}$ will work well in real-world applications as $n$ is big or most $K_i$s contain ${{i}\choose{2}}$ $OP^i$s. It mentions that the constraints of $OP^i$s are not considered for deriving the lower frequency bound for $OHC$ edges. If the constraints of $OP^i$s are taken into account, the lower frequency bound for $OHC$ edges will be improved \cite{DBLP:journals/Wang25}.

Thirdly, the $p_{max}(g)$ change will be analyzed according to $i$. In view of the experimental results of $p_{max}(g)$, $p_{max}(g) > \frac{1}{2}$ occurs as $i$ is small. It indicates that the ordinary edge is contained in more than half $OP^i$s according to the $K_i$s containing it. However, it always decreases according to $i\geq 4$ and never rises from $i$ to $i+1$ at any one step. Based on formula (\ref{F5}), the smallest $i_d$ = 4 is computed for $n = $9, 10 and 11, $i_d = 5$ is computed for $n = 12$, 13 and 14. All the changes of $p_{max}(g)$ conform to Theorem \ref{th30}. $p_{max}(g)$ becomes smaller than $\frac{1}{2}$ before the $2i_d$  and much smaller than $\frac{1}{2}$ at $i=n$ for these small $TSP$ instances. At first, $p_{max}(g)$ decreases slowly according to $i$. For example, the probability decrement is smaller than $\frac{2p_{max}(g)}{i(i-1)}$ from $i$=4 to 5, or from $i$=5 to 6. However, the decrement becomes bigger than $\frac{2p_{max}(g)}{i(i-1)}$ from $i$=5 for $n=9$, 10, and from $i=6$ for $n=$11, 12, 13 and 14. Moreover, the probability decrement increases according to $i$ for these small $TSP$ instances until $p_{max}(g)$ is near the smallest value. For example, $p_{max}(g\in OP^{i+1}) < \frac{i}{i+1}p_{max}(g\in OP^i)$ happens if $i\geq 6$ for $n=$9 and 10, $i\geq 7$ for $n=$11 and 12, $i\geq 8$ for $n=$13 and 14, respectively. Once sees $p_{max}(g\in OP^{i+1}) > \frac{i}{i+1}p_{max}(g\in OP^i)$ exists according to $i$ within a small number of steps. However, $p_{max}(g\in OP^{i+1}) < \frac{i}{i+1}p_{max}(g\in OP^i)$ exists according to $i$ in most cases. Once $p_{max}(g)$ decreases, $p_{max}(g\in OP^{i+1}) < \frac{i}{i+1}p_{max}(g\in OP^i)$ will appear after a small number of steps. Thus, the probability decrement for ordinary edges will be bigger than  $\frac{p_i(g)}{i+1}$ from $i$ to $i+1$ even if $i$ is not big. Once the probability decreases from certain $i$, the probability decrement becomes bigger according to the following $i$s, which accelerates the probability to become small quickly. This is the difference (i.e., different probability decrements) between $OHC$ edges and ordinary edges.  In addition, $p_{max}(g)$ is smaller than $p_i(e)$ according to $i$ for these small instances except $K_{14}$ at $i=4$. The error between $p_i(e)$ and $p_{max}(g)$ becomes bigger according to $i$. It indicates that $p_{max}(g)$ will be smaller than $p_i(e)$ in most cases even if $i$ is not big. Since $p_i(e)$ is bigger than $\frac{i^2-4i+7}{i(i-1)}$, 
$p_{max}(g) < \frac{i^2 - 4i + 7}{i(i-1)}$ must hold as $i$ is relatively big. Theorems \ref{th33} and \ref{th30} are verified by the experimental datum. For each $K_n$, it mentions that the $OHC$ edges and ordinary edges having the $p_{min}(e)$ and $p_{max}(g)$ will change according to $i$. Thus, the related edges with $p_{min}(e)$ or $p_{max}(g)$ are not the same edges in these $K_n$s. 

As $n$ is big, the probability of some ordinary edges may increase according to $i$ if $i$ is small (The probability increment is generally smaller than $\frac{2p_i(g)}{i(i-1)}$ from $i$ to $i+1$ for $g$ based on Theorem \ref{th30}). Once $i > i_d$  computed with formula (\ref{F5}), the probability will decrease, and the decrement will be bigger than $\frac{2p_i(g)}{i(i-1)}$ from $i$ to $i+1$ and $\frac{p_i(g)}{i+1}$ in most cases. Thus, the decrement of $p_i(g)$ can be taken as one condition to disclose ordinary edges. If $p_{i+1}(g) < \left[1 - \frac{2}{i(i-1)}\right]p_i(g)$ or $ < \frac{i}{i+1}p_i(g)$ occurs from certain $i$ to $i+1$, $g$ will be an ordinary edge. Theorems \ref{th33}, \ref{th3} are verified by the experiments. 
 
\begin{table}
	\begin{center}
		\caption{The frequency and probability changes for $OHC$ edges and ordinary edges according to $i$ for $n\in[9,14]$ in Table \ref{Examp}.}
		{\footnotesize \begin{tabular}{ p{0.1cm}  p{0.7cm}  p{0.4cm}  p{0.4cm}  p{0.6cm}  p{0.6cm}  p{0.6cm}  p{0.6cm}  p{0.6cm}  p{0.6cm}  p{0.5cm}  p{0.5cm}  p{0.4cm} }
		\hline
		$n$	&  &  &  &  &  & $i$ &  &  &  &  &  &  \\
		&  & 4 & 5 & 6 & 7 & 8 & 9 & 10 & 11 & 12 & 13 & 14   \\				
		\hline
		9 & $F_{tot}$ & 2268 & 5040 & 6300 & 4536 & 1765 & 288 &  &  &  &  &   \\	
		& $F_{ohc}$ & 791 & 2248 & 3421 & 2918 & 1315 & 244 &  &  &  &  &   \\	
		& $p_{ohc}(\%)$ & 34.87 & 44.60 & 54.30 & 64.33 & 74.50 & 84.72 &  &  &  &  & \\ 
		& $p_i(e)$ & 0.698 & 0.714 & 0.724 & 0.735 & 0.745 & 0.753 &  &  &  &  & \\ 
		& $p_i(g)$ & 0.434 & 0.295 & 0.203 & 0.136 & 0.085 & 0.045 &  &  &  &  & \\ 		
		& $p_{min}(e)$ & 0.484 & 0.397 & 0.365 & 0.383 & 0.423 & 0.444 &  &  &  &  & \\ 
		& $p_{max}(g)$ & 0.659 & 0.614 & 0.533 & 0.426 & 0.301 & 0.278 &  &  &  &  & \\ 	
		10 & $F_{tot}$ & 3780 & 10080 & 15750 & 15120 & 8820 & 2880 & 405 &  &  &  &   \\	
		& $F_{ohc}$ & 1188 & 4066 & 7742 & 8796 & 5944 & 2206 & 347 &  &  &  &   \\	
		& $p_{ohc}(\%)$ & 31.43 & 40.34 & 49.16 & 58.17 & 67.39 & 76.60 & 85.68 &  &  &  & \\ 
		& $p_i(e)$ & 0.707 & 0.726 & 0.737 & 0.748 & 0.758 & 0.766 & 0.771 &  &  &  & \\ 
		& $p_i(g)$ & 0.441 & 0.307 & 0.218 & 0.154 & 0.105 & 0.067 & 0.037 &  &  &  & \\ 		
		& $p_{min}(e)$ & 0.488 & 0.405 & 0.376 & 0.387 & 0.412 & 0.434 & 0.444 &  &  &  & \\ 
		& $p_{max}(g)$ & 0.667 & 0.629 & 0.558 & 0.464 & 0.360 & 0.260 & 0.267 &  &  &  & \\ 		
		11 & $F_{tot}$ & 5940 & 18480 & 34650 & 41580 & 32340 & 15840 & 4455 & 550 &  &  &   \\	
		& $F_{ohc}$ & 1688 & 6770 & 15497 & 21949 & 19699 & 10944 & 3443 & 470 &  &  &   \\	
		& $p_{ohc}(\%)$ & 28.42 & 36.63 & 44.72 & 52.79 & 60.91 & 69.09 & 77.28 & 85.45 &  &  & \\ 
		& $p_i(e)$ & 0.710 & 0.733 & 0.745 & 0.754 & 0.761 & 0.768 & 0.773 & 0.777 &  &  & \\ 
		& $p_i(g)$ & 0.447 & 0.317 & 0.230 & 0.169 & 0.122 & 0.086 & 0.057 & 0.033 &  &  & \\ 		
		& $p_{min}(e)$ & 0.519 & 0.446 & 0.422 & 0.423 & 0.431 & 0.441 & 0.444 & 0.436 &  &  & \\ 
		& $p_{max}(g)$ & 0.685 & 0.657 & 0.599 & 0.519 & 0.428 & 0.329 & 0.249 & 0.254 &  &  & \\ 		
		12 & $F_{tot}$ & 8910 & 31680 & 69300 & 99792 & 97020 & 63360 & 26730 & 6600 & 726 &  &   \\	
		& $F_{ohc}$ & 2318 & 10624 & 28413 & 48355 & 54267 & 40230 & 19018 & 5205 & 629 &  &   \\	
		& $p_{ohc}$ & 26.02 & 33.54 & 41.0 & 48.46 & 55.93 & 63.49 & 71.15 & 78.86 & 86.64 &  & \\ 
		& $p_i(e)$ & 0.715 & 0.738 & 0.752 & 0.761 & 0.769 & 0.776 & 0.783 & 0.789 & 0.784 &  & \\ 
		& $p_i(g)$ & 0.452 & 0.325 & 0.240 & 0.180 & 0.135 & 0.099 & 0.071 & 0.047 & 0.027 &  & \\ 		
		& $p_{min}(e)$ & 0.537 & 0.473 & 0.445 & 0.439 & 0.444 & 0.456 & 0.471 & 0.48 & 0.485 &  & \\ 
		& $p_{max}(g)$ & 0.7 & 0.681 & 0.630 & 0.560 & 0.479 & 0.390 & 0.294 & 0.238 & 0.242 &  & \\ 	
		13 & $F_{tot}$ & 12870 & 51480 & 128700 & 216216 & 252252 & 205920 & 115830 & 42900 & 9438 & 936 &   \\	
		& $F_{ohc}$ & 3077 & 15852 & 48417 & 96189 & 129616 & 120087 & 75648 & 31052 & 7506 & 812 &   \\	
		& $p_{ohc}$ & 23.91 & 30.79 & 37.62 & 44.49 & 51.38 & 58.32 & 65.31 & 72.38 & 79.53 & 86.75 & \\ 
		& $p_i(e)$ & 0.717 & 0.739 & 0.752 & 0.763 & 0.771 & 0.778 & 0.784 & 0.790 & 0.795 & 0.801 & \\ 
		& $p_i(g)$ & 0.457 & 0.332 & 0.250 & 0.190 & 0.146 & 0.111 & 0.083 & 0.060 & 0.041 & 0.024 & \\ 		
		& $p_{min}(e)$ & 0.561 & 0.504 & 0.475 & 0.464 & 0.463 & 0.471 & 0.486 & 0.502 & 0.515 & 0.526 & \\ 
		& $p_{max}(g)$ & 0.706 & 0.705 & 0.662 & 0.602 & 0.532 & 0.456 & 0.378 & 0.296 & 0.227 & 0.231 & \\ 
		14 & $F_{tot}$ & 18018 & 80080 & 220173 & 432432 & 588588 & 576576 & 405405 & 200200 & 66066 & 13104 &  1183 \\	
		& $F_{ohc}$ & 3960 & 22620 & 77606 & 176040 & 276770 & 308078 & 243098 & 133401 & 48502 & 10521 & 1031 \\	
		& $p_{ohc}(\%)$ & 21.98 & 28.25 & 35.25 & 40.71 & 47.02 & 53.43 & 59.96 & 66.63 & 73.41 & 80.29 & 87.15 \\ 
		& $p_i(e)$ & 0.714 & 0.734 & 0.747 & 0.756 & 0.764 & 0.772 & 0.780 & 0.787 & 0.795 & 0.803 & 0.809 \\ 
		& $p_i(g)$ & 0.461 & 0.339 & 0.249 & 0.200 & 0.157 & 0.122 & 0.0946 & 0.0717 & 0.0524 & 0.0358 & 0.0217 \\ 		
		& $p_{min}(e)$ & 0.571 & 0.522 & 0.492 & 0.474 & 0.470 & 0.476 & 0.491 & 0.512 & 0.529 & 0.546 & 0.560 \\ 
		& $p_{max}(g)$ & 0.722 & 0.717 & 0.680 & 0.630 & 0.570 &  0.493 & 0.412 & 0.331 & 0.268 & 0.241 & 0.231 \\ 															
		\hline
		\end{tabular}}
		\label{chgf}
	\end{center}
\end{table}

\subsubsection{The probability change for two adjacent $OHC$ edges according to $i$}
 In this section, we shall show the probability change for two adjacent $OHC$ edges for $K_{14}$ in Table \ref{Examp}. In Table \ref{chgf}, one sees that the probability $p_{min}(e)$ is generally smaller than $\frac{1}{2}$, and it sometimes decreases according to $i\in [4,4]$. The changes of the probability $p_i(e_1) + p_i(e_2)$ for two adjacent $OHC$ edges $e_1$ and $e_2$ will be different based on Theorem \ref{th3}. In other words, for two $OHC$ edges $e_1$ and $e_2$ containing a vertex $v$, $p_i(e_1) + p_i(e_2) > 1$ holds and it will increases according to $i$. The  $p_i(e_1) + p_i(e_2)$ for each vertex $v\in [0,13]$ in the $K_{14}$ is computed according to $i\in [4,14]$ and illustrated in Table \ref{chgtwof}.
 
 Firstly,  $p_i(e_1) + p_i(e_2) > 1$ holds for every vertex $v$ according to $i$. The minimum value of $p_i(e_1) + p_i(e_2)$ is 1.174 in the experiments. Thus, the average value of $p_i(e_1)$ and $p_i(e_2)$ is obviously bigger than $\frac{1}{2}$. Although some $OHC$ edges $e_1$ has the probability $p_i(e_1) < \frac{1}{2}$, the probability condition  $p_i(e_1) + p_i(e_2) > 1$ indicates that the adjacent $OHC$ edges $e_2$ of $e_1$ will have an even bigger probability 
$p_i(e_2) > \frac{1}{2}$ based on the frequency $K_i$s.  

 Moreover, $p_i(e_1) + p_i(e_2)$ increases according to $i$ for most vertices, such as 0, 1, 3, 6-8, 11-13. The probability increment is relatively big as $i$ is small, and it becomes smaller as $i$ big. In general, the probability $p_i(e_1)$ and $p_2(e_2)$ increases simultaneously according to $i$ for these vertices. For the other vertices, $p_i(e_1) + p_i(e_2)$ decreases from $i$ to $i+1$ as $i$ is small, and it increases according to $i$ after the smallest value. As $p_i(e_1) + p_i(e_2)$ decreases from $i$ to $i+1$, only $p_i(e_1)$ or $p_i(e_2)$ becomes smaller and the other one rises for most vertices. Among the 14 vertex, there is only one vertex 10 whose $p_i(e_1)$ and $p_i(e_2)$ decrease from $i=4$ to 8 at the same time. It means that the number of such vertex is very small, especially for medium and big $TSP$. Theorem \ref{th3} is verified by the experiments. 

 \begin{table}
 	\begin{center}
 		\caption{The probability change for two adjacent $OHC$ edges according to $i$ for $K_{14}$.}
 		{\footnotesize \begin{tabular}{ p{0.1cm}  p{0.7cm}  p{0.5cm}  p{0.5cm}  p{0.5cm}  p{0.5cm}  p{0.5cm}  p{0.5cm}  p{0.5cm}  p{0.5cm}  p{0.5cm}  p{0.5cm}   }
 		\hline
 		 &  &  &  & $p_i(e_1)$ &  & + &  & $p_i(e_2)$ &  &  &   \\
 		$v$ & $i = $ 4 & 5 & 6 & 7 & 8 & 9 & 10 & 11 & 12 & 13 & 14   \\				
 		\hline
0	&	1.571 	&	1.676 	&	1.722 	&	1.746 	&	1.760 	&	1.768 	&	1.772 	&	1.776 	&	1.779 	&	1.785 	&	1.791 	\\ 
1	&	1.596 	&	1.707 	&	1.755 	&	1.779 	&	1.793 	&	1.802 	&	1.810 	&	1.816 	&	1.822 	&	1.828 	&	1.835 	\\ 
2	&	1.343 	&	1.343 	&	1.332 	&	1.322 	&	1.317 	&	1.319 	&	1.330 	&	1.347 	&	1.366 	&	1.387 	&	1.407 	\\ 
3	&	1.364 	&	1.385 	&	1.408 	&	1.432 	&	1.452 	&	1.467 	&	1.476 	&	1.481 	&	1.482 	&	1.482 	&	1.484 	\\ 
4	&	1.273 	&	1.248 	&	1.292 	&	1.356 	&	1.416 	&	1.463 	&	1.497 	&	1.522 	&	1.541 	&	1.556 	&	1.560 	\\ 
5	&	1.288 	&	1.270 	&	1.316 	&	1.381 	&	1.441 	&	1.492 	&	1.531 	&	1.560 	&	1.578 	&	1.587 	&	1.582 	\\ 
6	&	1.389 	&	1.410 	&	1.412 	&	1.412 	&	1.415 	&	1.424 	&	1.440 	&	1.460 	&	1.479 	&	1.495 	&	1.505 	\\ 
7	&	1.576 	&	1.676 	&	1.722 	&	1.747 	&	1.760 	&	1.768 	&	1.774 	&	1.778 	&	1.781 	&	1.782 	&	1.780 	\\ 
8	&	1.490 	&	1.555 	&	1.596 	&	1.624 	&	1.642 	&	1.652 	&	1.658 	&	1.666 	&	1.677 	&	1.697 	&	1.725 	\\ 
9	&	1.293 	&	1.275 	&	1.268 	&	1.269 	&	1.279 	&	1.295 	&	1.318 	&	1.349 	&	1.386 	&	1.429 	&	1.473 	\\ 
10	&	1.293 	&	1.271 	&	1.231 	&	1.193 	&	1.174 	&	1.177 	&	1.203 	&	1.243 	&	1.291 	&	1.339 	&	1.374 	\\ 
11	&	1.384 	&	1.404 	&	1.432 	&	1.461 	&	1.484 	&	1.497 	&	1.504 	&	1.508 	&	1.509 	&	1.511 	&	1.516 	\\ 
12	&	1.596 	&	1.706 	&	1.757 	&	1.786 	&	1.806 	&	1.820 	&	1.830 	&	1.837 	&	1.842 	&	1.845 	&	1.846 	\\ 
13	&	1.545 	&	1.637 	&	1.660 	&	1.659 	&	1.658 	&	1.666 	&	1.684 	&	1.709 	&	1.735 	&	1.760 	&	1.780 	\\ 
			
 				\hline
 		\end{tabular}}
 		\label{chgtwof}
 	\end{center}
 \end{table}
 
 The probabilities for all edges are computed with the frequency $K_i$s where $i\in[4,14]$ for the small $TSP$ instance $K_{14}$ in Table \ref{Examp} to show the probability  change according to $i$. For a given $i$, the probabilities of all edges are ordered from small to big values according to $k\in[1,91]$ and a probability sequence $\left(p_1, p_2,...,p_k,...,p_{{{n}\choose{2}}}\right)$ is obtained where $p_k$ denotes the $k^{th}$ probability. Then the $p_k$s are lined according to $k$ and each $i$, respectively. The probability change curves according to $i=4\sim 14$ are shown in Figure \ref{fcalle14}. In the picture, the lower probability bound $Lbp = \frac{1}{2}$ for $OHC$ edges is denoted with one black line. Moreover, the smallest probability of the $OHC$ edges is denoted with one square dot in each curve for $i=$4, 5, 6, 7, 8 and 9. If $i \geq 10$, the $i$ biggest probabilities in each curve belong to the $OHC$ edges so the smallest probability of the $OHC$ edges is not highlighted. 
 
 According to each probability change curve computed based on the frequency $K_i$s, there are more than $\frac{2}{3}{{n}\choose{2}}$ edges with the probability smaller than the smallest probability of the $OHC$ edges even if $i=4$. It says that at least two thirds of all edges can be neglected according to their probabilities or frequencies as $TSP$ is resolved. Moreover, the number of ordinary edges with the probability smaller than the smallest probability of the $OHC$ edges increases according to $i$ even if the smallest probability becomes smaller. For example, as $i=4$, there are 63 ordinary edges with the probability smaller than the smallest probability of the $OHC$ edges. Moreover, the number of such ordinary edges rises to 75 as $i=9$. In addition, the smallest probability of the $OHC$ edges is either bigger than or close to the lower probability bound $Lbp=\frac{1}{2}$. One also sees that each probability curve rises quickly from $Lbp$ to the biggest probability. It means that the number of the edges with the probability bigger than $Lbp$ is small, and the edges with the probabilities close to $Lbp$ is very small. Thus, most $OHC$ edges have the probabilities much bigger than $Lbp$, and there will be a small number of $OHC$ edges with the probability near $Lbp$. As $i$ increases, the difference between the probabilities of the $OHC$ edges and those of the ordinary edges becomes bigger. As $i$ is close to 14, such as $i\geq 10$, the smallest probability of the $OHC$ edges is bigger than that of each ordinary edge. 
 
 Based on Theorem \ref{th33}, the probabilities of two ordinary edges will have the fixed relationship as $i$ is big. Except the $OHC$ edges, it implies that the probability of each ordinary edge has the fixed order in the probability sequences for big $i$s. Given the order number $k$ for an ordinary edge, the probability becomes smaller according to $i$ in Figure \ref{fcalle14}. It means that it is contained in the smaller percentage of $OP^i$s according to $i$. Although some ordinary edges have a big probability as $i$ is small, the order number $k$ in the probability sequence still decreases as $i$ is big. On the other hand, the probability of an $OHC$ edge becomes bigger or keeps the nearly equal value according to $i$. Thus, the order number of the smallest probability of the $OHC$ edges increases according to $i$. The probability or frequency change related to all edges for the other general $TSP$ instances is similar to that for the small $TSP$ instance.  Theorems \ref{th3} and \ref{th33} are testified with the numerical results. 
 
  \begin{figure}
 	\centering
 	\includegraphics[width=3in,bb=0 0 450 260]{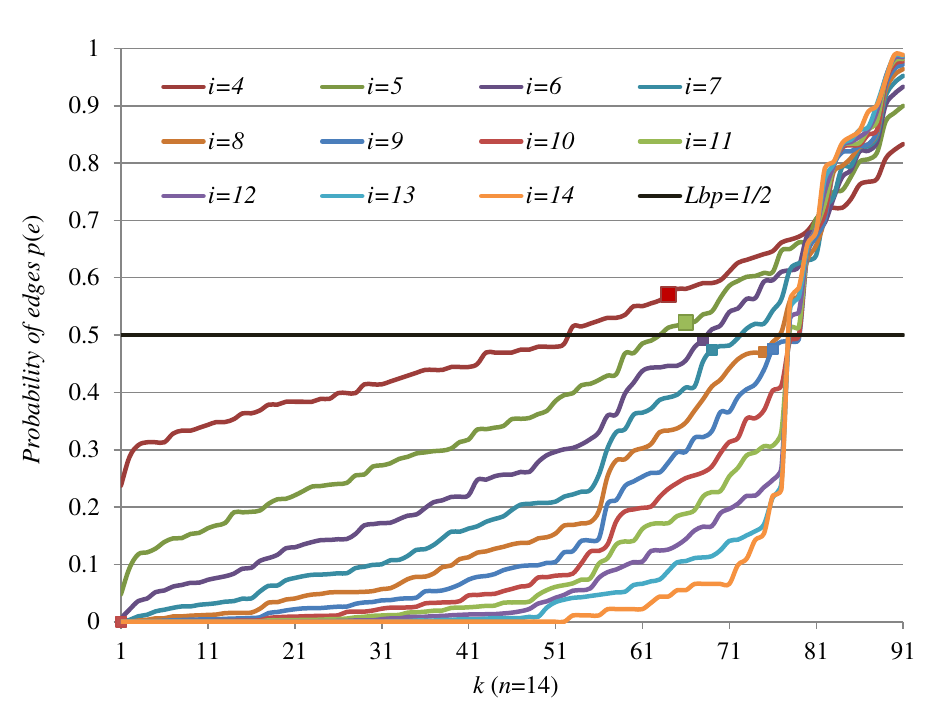}
 	\caption{The changes of the ordered probabilities related to all edges according to $k\in[1,91]$ computed based on frequency $K_i$s in $K_{14}$ where $i\in[4,14]$.}
 	\label{fcalle14}
 \end{figure}

 \subsubsection{The frequency and probability changes for the same edges according to $i$ for small $TSP$ instances}
 The $p_{min}(e)$s and $p_{max}(g)$s according to $i$ in Table \ref{chgf} may not for the same edges. For the same edge, $p_{min}(e)$ will change more steadily whereas $p_{max}(g)$ will decrease more sharply from $i$ to $i+1$ in some steps. Thus, the results of $p_{min}(e)$ and $p_{max}(g)$ in Table \ref{chgf} are not accurate enough to demonstrate the frequency and probability changes for the $OHC$ edges and ordinary edges. To illustrate the frequency and probability changes for the same $OHC$ edges and ordinary edges, three $OHC$ edges with the smallest frequencies among all $OHC$ edges and three ordinary edges with the biggest frequencies among all ordinary edges at $i=4$ are chosen from the $K_{13}$ and $K_{14}$ in Table \ref{Examp}, respectively. The frequency and probability changes according to $i\in[4,n]$ for these edges are illustrated in Figure \ref{fpcs13} and Figure \ref{fpcs14}, respectively. 

 In each picture in Figures \ref{fpcs13} and \ref{fpcs14}, the frequency and probability changes related to $OHC$ edges are denoted with the solid lines while those for the ordinary edges are denoted with the dashed lines. In Figure \ref{fpcs13} where $n=13$, (3,5), (9,10) and (2,6) are the three $OHC$ edges having the smallest frequencies and probabilities at $i$ = 4, and (0,2), (10,11) and (6,8) are the three ordinary edges having the biggest frequencies and probabilities at $i=4$. In the first and second picture in Figure \ref{fpcs13}, the frequencies and probabilities of each edge computed with the frequency $K_i$s are lined according to $i=4 \sim 13$, respectively. Since the three $OHC$ edges have the smallest frequencies while the three ordinary edges have the biggest frequencies at $i=4$, the frequencies of the $OHC$ edges are smaller than those of the ordinary edges and they do not have much difference as $i$ is small, such as $i$=4 and 5. Moreover, the frequency of each of the $OHC$ edges and ordinary edges rises according to $i \leq 8$, and each edge reaches its peak frequency at $i$=8, respectively. Then, the frequency of each edge begins decreasing according to $i>8$, and the frequencies of the ordinary edges decrease faster than those of the $OHC$ edges. Before $i=8$, the frequencies of the three ordinary edges are always bigger than the smallest frequency of the $OHC$ edge (2,6). After $i=8$, the frequencies of the three ordinary edges become smaller than that of (2,6) according to $i$. It indicates that some ordinary edges will maintain the big frequency according to $i < P_0$ until $i$ is close to $P_0$. As $n$ becomes big, the frequencies of the $OHC$ edges will rise according to $n$, and the frequencies of most ordinary edges will be smaller than those of $OHC$ edges at certain $i < P_0$, see Theorem \ref{th30} and \ref{th5}. Even if this $i$ is far from $P_0$, sometimes it may be very big, such as $i=2i_d$ computed with formula (\ref{F5}). In this case, it will be time-consuming to compute the frequencies of edges with the frequency $K_i$s for separating $OHC$ edges from ordinary edges (according to their frequencies). 
 
   \begin{figure}
 	\centering
 	\includegraphics[width=3in,bb=0 0 500 400]{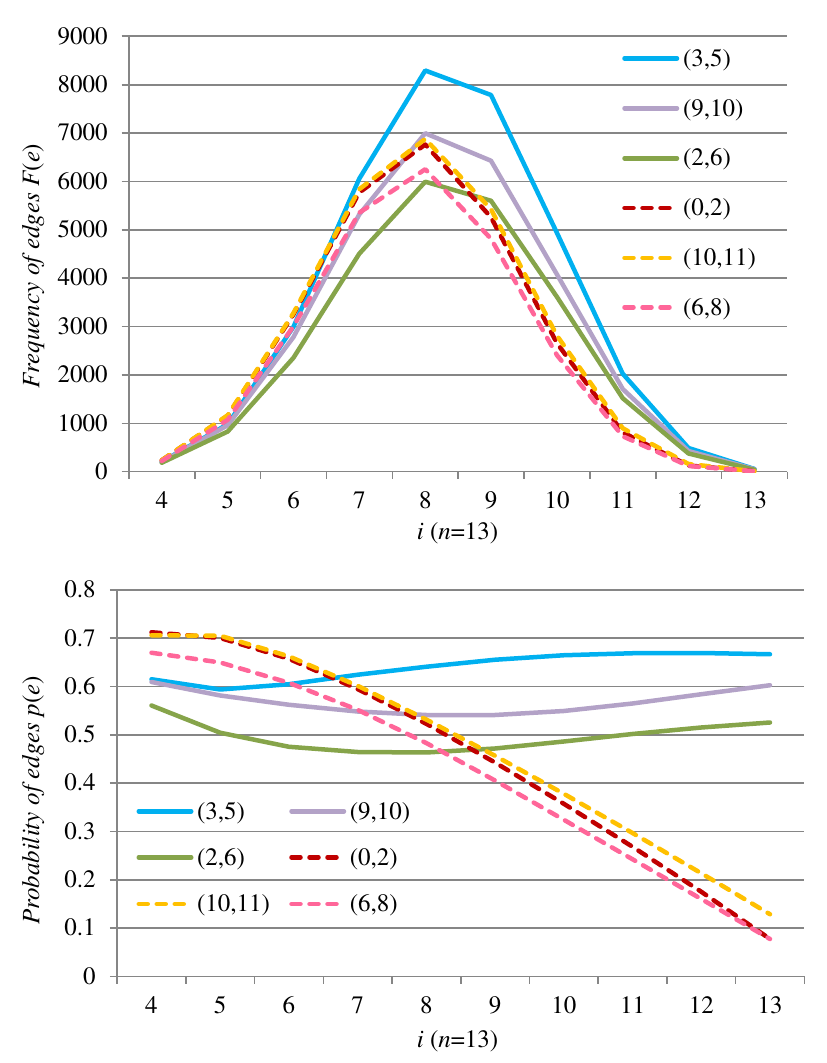}
 	\caption{The frequency and probability changes for three $OHC$ edges with the smallest frequencies and three ordinary edges with the biggest frequencies at $i=4$ for $K_{13}$.}
 	\label{fpcs13}
 \end{figure}

 In the second picture in Figure \ref{fpcs13}, the probabilities of the three $OHC$ edges decrease from $i$ to $i+1$ as $i$ is small, and then they increase from $i$ to $i+1$ as $i$ is big until $i = n-1$. Moreover, the probability of (3,5) decreases or increases slightly from $i$ to $i+1$, i.e., $p_{i+1}(e)\in \left[1-\frac{2}{i(i-1)}, 1+\frac{2}{i(i-1)}\right]p_i(e)$ holds for (3,5). For (9,10) and (2,6), $p_{i+1}(e) > \left[1-\frac{2}{i(i-1)}\right]p_i(e)$ holds if $p_i(e)$ decreases from $i$ to $i+1$, and $p_{i+1}(e) > \left[1+\frac{2}{i(i-1)}\right]p_i(e)$ exists from $i=10$ to 12 for (9,10), from $i=9$ to 12 for (2,6) as $p_i(e)$ increases. Once sees the smallest probabilities of the $OHC$ edges generally decrease according to $i$ as $i$ is small. As they increase from the smallest values, the smaller $p_i(e)$ increases faster than the bigger $p_i(e)$ according to $i$. Since $\frac{(i+1)(n-i)}{(i-1)^2}\times \left[1-\frac{2}{i(i-1)}\right] > 1$ exists as $i < P_0$, the frequency $F(e)$ always increases from $i$ to $i+1$ for the three $OHC$ edges even if $p_i(e)$ decreases slightly from $i$ to $i+1$. Thus, the three $OHC$ edges reach their peak frequencies at $i=P_0=8$, respectively. As $i > 8$, $p_i(e)$ of the three $OHC$ edges increases according to $i$ until $i=n$. It indicates that $p_i(e)$s of the three $OHC$ edges increase before $2i_d = 10$ for $n=13$. Moreover, $p_{i+1}(e) > \left[1+\frac{2}{i(i-1)}\right]p_i(e)$ occurs to (9,10) and (2,6) from $i$ to $i+1$ in most steps. It implies that $p_{i+1}(e) > \left[1+\frac{2}{i(i-1)}\right]p_i(e)$ will happen if $p_i(e)$ increases from the smallest value. Thus, they are contained in more percentage of $OP^i$s according to $i$. The maximum probability is reached at $i=n$ for (3,5) and (9,10). For (2,6), $p_i(e)$ is also rising after $i=8$. However,  the smallest probability at $i=8$ is small and the number of steps from $i = 8$ to 12 is only 4. Thus, the maximum probability is obtained at $i=4$ rather than 13. Although (2,6) does not obtain the maximum probability at $n$,  $p_n(e)=0.525641 > 0.5$ still holds at $n$. Because $\frac{(i+1)(n-i)}{(i-1)^2} < 1$ decreases sharply from $i$ to $i+1$ as $i > P_0$ while $\frac{p_{i+1}(e)}{p_i(e)}$ increases very slow,  $\frac{(i+1)(n-i)}{(i-1)^2}\times \frac{p_{i+1}(e)}{p_i(e)} < 1$ appears and $F(e)$ decreases according to $i > 8$ until $i = n$. 
 
 For the three ordinary edges, the probability $p_i(g)$ always decreases according to $i\in[4,13]$, see the change of the three dashed lines. Moreover, $p_i(g)$ becomes smaller in the nearly linear way according to $i\geq 6$. In Figure \ref{pdi}, the probability decrement increases at first, and then it decreases according to $i$ after the biggest value. In the experiments, the probability decrement always increases according to $i$ as $i$ is small, and it keeps the nearly equal big value from $i$ to $i+1$ until $p_i(g)$ is near the smallest value. The probability change implies that each of the ordinary edges is contained in the smaller percentage of $OP^i$s containing more vertices based on all $K_i$s containing them, respectively. Moreover, the probability decrement change implies that an ordinary edge is also contained in the smaller percentage of the $OP^i$s according to $i$ based on the ${{n-i}\choose{i-2}}-K$ $K_i$s containing it. As $i=4$ and 5, the probabilities decrease slightly as those of the $OHC$ edges, and the frequencies $F(g)$ increase quickly from $i$ to $i+1$. However, $\frac{p_{i+1}(g)}{p_i(g)} < 1- \frac{2}{i(i-1)}$ appears for the three ordinary edges from $i$=6 and $\frac{p_{i+1}(g)}{p_i(g)} <  \frac{i}{i+1}$ exists from $i = 8$.  The probability decrement becomes bigger and keeps the nearly equal big value according to $i$. As $i > 6$, $p_i(g)$ decreases quickly according to $i$, which prevents the frequency $F(g)$ from increasing fast. One can observe and compare the increasing slopes of the  $F(e)$ and $F(g)$ curves from $i$=7 to 8. It is clearly that the $F(e)$ curves related to the three $OHC$ edges have the bigger increasing slopes than the $F(g)$ curves related to the three ordinary edges. It implies that the frequencies of the ordinary edges increase slower than those of the $OHC$ edges after $i = 6$. Because $\frac{(i+1)(n-i)}{(i-1)^2}\times \frac{p_{i+1}(g)}{p_i(g)} > 1$  exists for the three ordinary edges if $i < 8$, the frequencies are still increasing and the three ordinary edges also reach their peak frequencies at $i=8$, respectively. Since $\frac{p_{i+1}(g)}{p_i(g)} < \frac{i}{i+1}< 1 < \frac{p_{i+1}(e)}{p_i(e)}$ as $i \geq 8$, $F(g)$s for the three ordinary edges decrease more sharply than those of the $OHC$ edges, see the decreasing slopes of the $F(e)$ curves and $F(g)$ curves. At last, $p_i(g) < \frac{1}{2}$ appears from $i=9$ and $F(g)$s of the three ordinary edges become smaller than (the smallest) $F(e)$ of edge (2,6). $p_i(g) > 0$ exists for the three ordinary edges at $i=13$. It means that the three ordinary edges are contained in some $OP^{13}$s. Thus, the edges with a frequency much bigger than $f_{lb}$ at the small $i$s, such as $i=$ 4, 5 and 6, are either $OHC$ edges or contained in the $OP^n$s. Theorem \ref{th33} is verified by these experimental results. 
 
 Comparing with the frequency changes for the $OHC$ edges and ordinary edges, some ordinary edges will have the big frequencies as the $OHC$ edges before $P_0$, and the frequencies of the ordinary edges will also increase according to $i < P_0$. The ordinary edges having the  big frequencies at the small $i$s are generally contained in $OP^n$s. Thus, the frequencies and probabilities of these ordinary edges are not equal to zero at $n$. For the other ordinary edges excluding from any one $OP^n$, they will have a small frequency and probability as $i$ is small based on Theorem \ref{th33}. The frequency and probability decrease more quickly according to $i$ than those of the ordinary edges contained in $OP^n$s. In addition, it discloses that two edges (either two $OHC$ edges or two ordinary edges) generally preserves the fixed frequency and probability relationship according to $i$. For example the three $OHC$ edges (3,5), (9,10) and (2,6), the frequency of (3,5) is always bigger than that of (9,10), and the frequency of (9,10) is always bigger that of (2,6) according to $i$. In the second picture, the probabilities of the three $OHC$ edges conform to the same relationships. For the three ordinary edges, the frequency of (0,2) is the biggest at $i=4$, and it becomes smaller than that of (10,11) from $i=5$. After that, the frequency of (0,2) is smaller than that of (10,11) according to $i$. It indicates that the frequency of (0,2) increases slower than that of (10,11) according to $i<P_0=8$ and it decreases faster than that of (10,11) after $P_0 = 8$. Moreover, the probability of (0,2) is always smaller than that of (10,11) until $i=13$. One sees the frequency and probability relationship for two edges will change as $i$ is small. As $i$ is big, the frequency and probability relationship between them will keep intact according to $i$. Theorem \ref{th33} is verified with the experimental results. 
 
 Since the average frequencies of some ordinary edges are bigger than the frequency bound $f_{lb}=\frac{1}{2}{{i}\choose{2}}$ and they may increase before $P_0$, it will be time-consuming to compute the frequency $K_i$s containing many vertices for separating $OHC$ edges and ordinary edges in $K_n$. Comparing the frequency change  with the probability change for the ordinary edges, the probability decreases much earlier before the frequency does. That is to say, although the frequency is still increasing at certain $i \ll P_0$, the probability begins decreasing before $i = i_d \ll P_0$ for the ordinary edges. Theorem \ref{th30} is verified with these experimental results. At first, the probability decrement  will be smaller than $\frac{2p_i(g)}{i(i-1)}$ from $i$ to $i+1$ as $i$ is small. For example, this case occurs for the three ordinary edges as $i\leq 5$. As $i$ is increased to certain big number, the probability decrement will be bigger than $\frac{2p_i(g)}{i(i-1)}$ and $\frac{p_i(g)}{i+1}$ from $i$ to $i+1$. For example, this case happens to the three ordinary edges as $i\geq 6$ in Figure \ref{fpcs13}. Moreover,  $\frac{p_{i+1}(g)}{p_i(g)}$ decreases quickly within two steps and it becomes smaller than $\frac{i}{i+1}$ from $i$ to $i+1$ if $i\geq 8$ even if the three ordinary edges are contained in the $OP^n$s. It illustrates that $p_{i+1}(g) < \frac{i}{i+1}p_i(g)$ will exist for ordinary edges from $i$ to $i+1$ in most cases. Once $p_i(g)$ becomes smaller, $p_i(g)$ will decrease quickly according to $i$ until it approaches the smallest value. This case never happens to the three $OHC$ edges in Figure \ref{fpcs13}. Since this case does not happen to $OHC$ edges, the edges with such probability decrement must be ordinary edges. Thus, according to the necessary condition of probability decrement for $OHC$ edges and the probability change of ordinary edges, it will consume less time to separate more ordinary edges from $OHC$ edges. 
 
  \begin{figure}
 	\centering
 	\includegraphics[width=3in,bb=0 0 500 400]{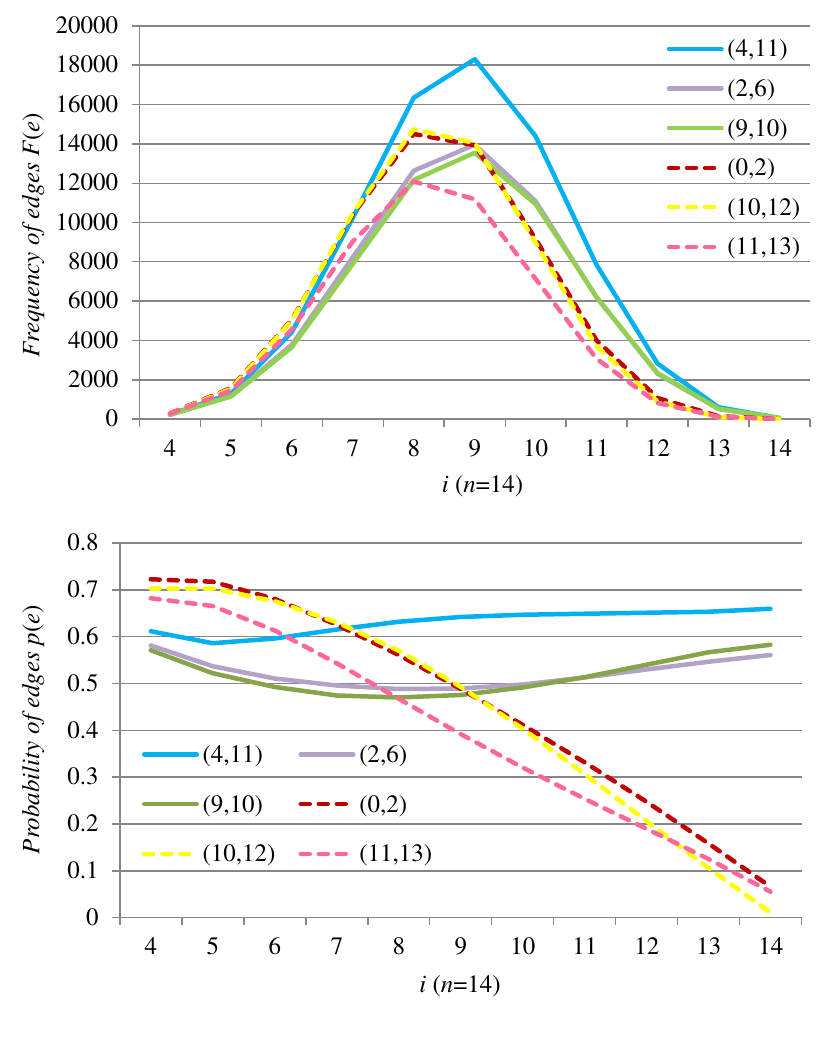}
 	\caption{The frequency and probability changes for three $OHC$ edges with the smallest frequencies and three ordinary edges with the biggest frequencies at $i=4$ for $K_{14}$.}
 	\label{fpcs14}
 \end{figure}

 For the even number $n=14$, the frequency and probability changes for the three $OHC$ edges with the smallest frequencies and three ordinary edges with the biggest frequencies at $i=4$ are illustrated in Figure \ref{fpcs14}. The frequency changes for the $OHC$ edges and ordinary edges are similar to those for the $OHC$ edges and ordinary edges in Figure \ref{fpcs13}, respectively. The difference is that each of the $OHC$ edges reaches the peak frequency at $i=9$ (i.e., $\frac{n}{2}+2$) whereas the three ordinary edges reach their peak frequencies at $i=8$ (i.e., $\frac{n}{2}+1$). It indicates that the frequencies of the ordinary edges decrease before those of the $OHC$ edges for even $n$. Thus, the frequency of an $OHC$ edge begins decreasing from $P_0$ whereas the frequency of an ordinary edge decreases before $P_0$ for even $n$. The probability changes for the $OHC$ edges and ordinary edges are also similar to those for the $OHC$ edges and ordinary edges in Figure \ref{fpcs13}, respectively. The probability of each $OHC$ edge decreases slightly from $i$ to $i+1$ as $i$ is small because these $OHC$ edges have the smallest probabilities at $i=4$.  As $i$ is relatively big, i.e., at certain $i\leq2i_d = 10$, the probability of each $OHC$ edge begins increasing according to $i$. If $p_i(e)$ decreases from $i$ to $i+1$, $p_{i+1}(e) > \left[1-\frac{2}{i(i-1)}\right]p_i(e)$ holds for each of the three $OHC$ edges. As $p_i(e)$ increases from $i$ to $i+1$, $p_{i+1}(e) < \left[1+\frac{2}{i(i-1)}\right]p_i(e)$ exists for (4,11) from $i=8$ to 13. $p_{i+1}(e) > \left[1+\frac{2}{i(i-1)}\right]p_i(e)$ appears for (2,6) from $i=10$ to 13, and for (9,10) from $i=9$ to 13, respectively. 
 
 The three ordinary edges are also contained in some $OP^n$s. For the two ordinary edges (0,2) and (11,13), the probability always decreases according to $i$, and the probability decrement deviates from $\frac{2p_i(g)}{i(i-1)}$ further and further. For example, the probability decrement is smaller than $\frac{2p_i(g)}{i(i-1)}$ as $i \leq 5$, and it becomes bigger than $\frac{2p_i(g)}{i(i-1)}$ as $i\geq 6$, and it is bigger than $\frac{p_i(g)}{i+1}$ as $i\geq 8$ and $i\geq 7$ for (0,2) and (11,13), respectively. For the ordinary edge (10,12), the probability increases slightly from $i=4$ to 5, and the probability increment is 0.000707 which is much smaller than $\frac{2p_i(g)}{i(i-1)} = 0.117003$ ($p_4(g) = 0.70202$ and $p_5(g) = 0.702727$ for (10,12)). Then, it decreases according to $i$ as $i\geq 6$. The probability decrement is smaller than $\frac{2p_i(g)}{i(i-1)}$ if $i\leq 6$, and it is bigger than $\frac{2p_i(g)}{i(i-1)}$ if $i\geq 7$, and it is bigger than $\frac{p_i(g)}{i+1}$ if $i\geq 8$. One sees $p_i(g)$ for some ordinary edges may increase as $i$ is small. However, the probability increment is generally smaller than $\frac{2p_i(g)}{i(i-1)}$ from $i$ to $i+1$. Since the three ordinary edges have the biggest probabilities at $i=4$, the probability decreases slower than those of the other ordinary edges. Although the probability decrement is small at first, it will become bigger than $\frac{2p_i(g)}{i(i-1)}$ and $\frac{p_i(g)}{i+1}$ within a small number of steps and keeps the big value according to $i$ until $p_i(g)$ tends to the smallest value. The experiments illustrate that $p_{i+1}(g) < \frac{i}{i+1}p_i(g)$ holds from $i$ to $i+1$ in most cases. Moreover, the experiments demonstrate that the probability decrement does not become smaller once $p_i(g)$ decrease. Thus, $p_i(g)$ will decrease quickly according to $i$ until it approaches the smallest value.

If the frequencies of the $OHC$ edges or ordinary edges are obviously different at $i=4$, the frequency and probability changes according to $i$ will also be  different, respectively. The three general $OHC$ edges and ordinary edges are chosen from $K_{13}$ and $K_{14}$, respectively to show the frequency and probability changes according to $i$. The frequencies of the three $OHC$ edges are obviously different at $i=4$ as well as their probabilities. In similarity, the frequencies or probabilities of the three ordinary edges are not close at $i=4$. They represent different types of ordinary edges which are contained in the $OP^n$s or not. It says that each of them is contained in different number of $OP^4$s although each of them is contained in the same number of $K_4$s in the $K_{13}$ or $K_{14}$, respectively. The frequency and probability changes for the three $OHC$ edges and ordinary edges in $K_{13}$ are illustrated in Figure \ref{fpcs13or}, and those for the three $OHC$ edges and ordinary edges in $K_{14}$ are shown in Figure \ref{fpcs14or}. In the same manner, the frequency and probability changes for each $OHC$ edge are denoted with the solid lines and those for each ordinary edge are denoted with the dashed lines. In the two Figures, the lower frequency bound $Lb F = \frac{1}{2}{{i}\choose{2}}{{n-2}\choose{i-2}}$ for $OHC$ edges is also shown and denoted with the solid black lines as well as the lower probability bound $Lb p= \frac{1}{2}$ for $OHC$ edges.

In the first picture in Figure \ref{fpcs13or}, the frequencies of the three $OHC$ edges (0,1), (6,7) and (1,2) are much bigger than the lower frequency bound $LbF$, and $LbF$ is much bigger than the frequencies of the three ordinary edges (6,9), (1,5) and (3,8) according to $i$. It indicates that the frequencies of the $OHC$ edges are generally much bigger than those of the ordinary edges as the frequency of each edge is computed with the frequency $K_i$s. Moreover, the frequencies of the three $OHC$ edges do not have much difference as $i$ is big, and they increase faster than $LbF$ according to $i<P_0$. It implies that most $OHC$ edges have the frequency much bigger than $LbF$. If $i > P_0$, the frequencies of the $OHC$ edges decreases quickly according to $i$ yet they are still much bigger than $LbF$. In fact, they decrease faster than $LbF$ and those of the ordinary edges according to $i > P_0$. Comparing the three smallest frequencies of the $OHC$ edges in Figure \ref{fpcs13}, it finds that only a small number $OHC$ edges having the frequency close to $LbF$, and the frequencies of most $OHC$ edges are much bigger than $LbF$ and the smallest frequency of the $OHC$ edges. 

On the other hand, the frequencies of the three ordinary edges increases much slower than $LbF$ according to $i$, and they are much smaller than $LbF$. For the three $OHC$ edges, they reach their peak frequencies at $i=8$. For the ordinary edges (6,9), (1,5) and (3,8), they reaches their peak frequencies at $i=8$, 7 and 6, respectively. For most ordinary edges, the peak frequency is computed before $P_0$, such as (1,5) and (3,8). The frequencies of these ordinary edges decrease sharply after the corresponding peak value, and become zero at certain number $i \leq n$. These ordinary edges are not contained in the $OP^n$s. For example, the frequency of (1,5) becomes zero at $i=13$ and that of (3,8) becomes zero at $i=10$.  Only a small number of the ordinary edges have the peak frequency at $\frac{n+1}{2} + 1$ for odd $n$. Most of them are  contained in the $OP^n$s, such as (6,9) in Figure \ref{fpcs13or}, and (0,2), (10,11) and (6,8) in Figure \ref{fpcs13}.  

In the second picture in Figure \ref{fpcs13or}, the probabilities of the three general $OHC$ edges are always bigger than the lower probability bound $Lbp = \frac{1}{2}$  according to $i$, and $Lbp$ is not bigger than those of some ordinary edges as $i$ is small, such as $i=4$. It indicates that most $OHC$ edges are contained in more than half $OP^i$s in each $K_i$ containing them on average. Moreover, the probabilities of the three $OHC$ edges always increases according to $i$. It indicates that they are contained in more percentage of the $OP^i$s according to $i$ whether the frequency increases or decreases. If the $p_i(e)$  for an $OHC$ edge is not close to the lower probability bound $Lbp = \frac{1}{2}$ as $i = 4$, it will always increase according to $i$. In addition, $p_{i+1}(e) < \left[1 + \frac{2}{i(i-1)}\right]p_i(e)$ exists from $i$ to $i+1$ for the three general $OHC$ edges since $p_i(e) > \frac{3}{4}$ holds according to $i$. For the $OHC$ edges with $p_i(e)$ close to 1, the probability increment will be smaller than $\frac{2p_i(e)}{i(i-1)}$ if $p_i(e)$ increases from $i$ to $i+1$. Otherwise, $p_{i+1}(e)$ will be bigger than 1. It is one contradiction. 


It is known that the average probability of all ordinary edges is smaller than $\frac{1}{2}$ according to $i$. The experimental results illustrate that the probabilities of most ordinary edges are smaller than $\frac{1}{2}$ even if $i$ is close to 4. Since the average probability of all ordinary edges is monotone decreasing according to $i$, it will deviate from $\frac{1}{2}$ further and further, see the experimental results in Table \ref{chgf}. For the three ordinary edges, the probability of each of them is always decreasing according to $i$, and the probability is smaller than $Lbp$ if $i \geq 6$. It means that each of the ordinary edges is contained in the smaller and smaller percentage of $OP^i$s according to $i$, and they are usually contained in less than half $OP^i$s in each $K_i$ containing them in the average case, respectively, especially as $i$ is relatively big. Moreover, the $p_i(g)$s for the three general ordinary edges decrease in the polynomial or exponential way according to $i$, and they decrease faster than those for the three ordinary edges in Figure \ref{fpcs13}. It indicates that the $p_i(g)$s of most ordinary edges not in $OP^n$s will become smaller than $\frac{1}{2}$ as $i$ is small. For the edges (1,5) and (3,8), the probability decreases more sharply than (6,9) according to $i$. Thus, the frequency of (1,5) and (3,8) increases much slower than (6,9) according to $i<P_0$. Moreover, the frequency of (1,5) and (3,8) decreases from $i=$ 7 and 6, respectively although the number of $K_i$s containing them is still rising at this time. (6,9) is contained in twelve $OP^{13}$s whereas (1,5) and (3,8) are excluding from any one $OP^{13}$ in the frequency $K_{13}$. The edges contained in the $OP^n$s have the similar frequency and probability changes as those for (6,9), and the edges excluding from any $OP^n$  have the similar frequency and probability changes as those for (1,5) and (3,8). In addition, the probability decrement is bigger than $\frac{2p_i(g)}{i(i-1)}$ from $i=5$ for (6,9), from $i=4$ for (1,5) and (3,8). For (6,9) that is contained in some $OP^{13}$s, $p_{i+1}(g) < \frac{i}{i+1}p_i(g)$ appears if $i\geq 6$. For (1,5) and (3,8) excluding from $OP^{13}$s, $p_{i+1}(g) < \frac{i}{i+1}p_i(g)$ exists from $i = 4$. Thus, the probability decrement of most ordinary edges will be bigger than $\frac{2p_i(g)}{i(i-1)}$ and $\frac{p_i(g)}{i+1}$ at certain  number $i \ll P_0$. This case will happen to most ordinary edges before $i=2i_d$. One can use this condition to filter out most ordinary edges while the $OHC$ edges will be preserved. 

 \begin{figure}
	\centering
	\includegraphics[width=3in,bb=0 0 500 340]{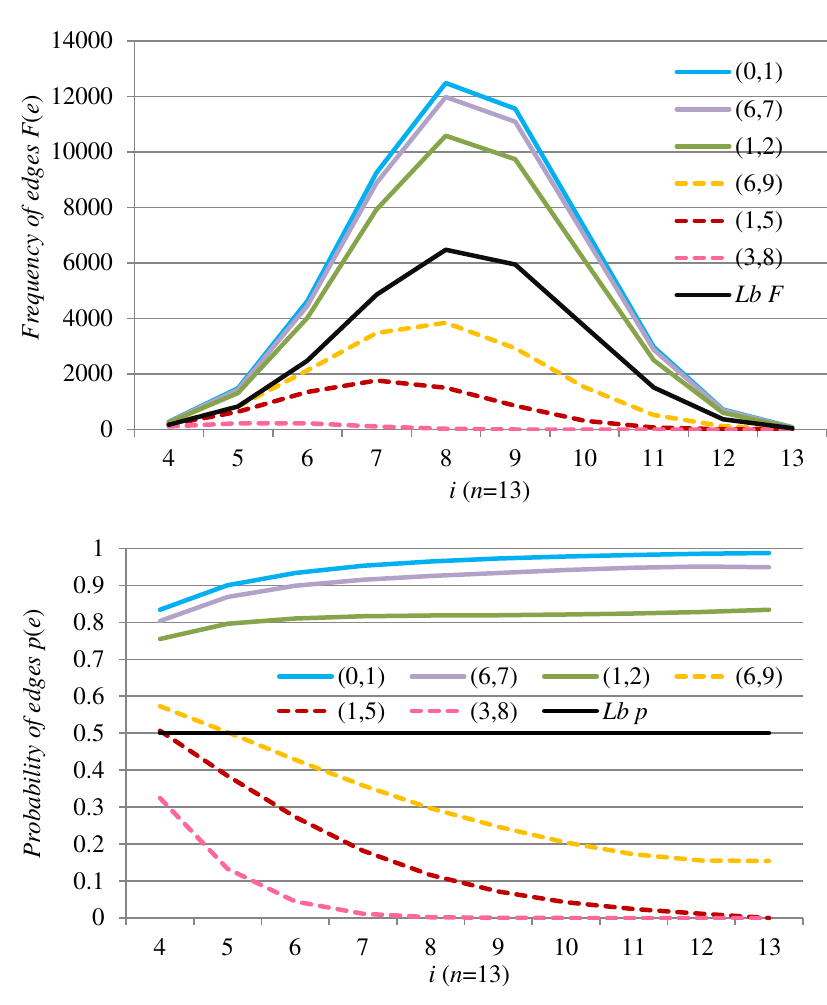}
	\caption{The frequency and probability changes for three general $OHC$ edges and three general ordinary edges for $K_{13}$.}
	\label{fpcs13or}
\end{figure}

In Figure \ref{fpcs14or}, the frequency and probability changes for the three general $OHC$ edges and ordinary edges are illustrated according to $i$ for even $n=14$. In the first picture, the frequency changes of the three $OHC$ edges and three ordinary edges are similar to those for the $OHC$ edges and ordinary edges in Figure \ref{fpcs13or}, respectively. The difference is that the $OHC$ edges reaches their peak frequencies at $P_0=9$ whereas the frequency of each of the ordinary edges arrives at the peak value before $P_0$. Moreover, only a small number of ordinary edges reach the peak frequency at $i=8$, such as (8,13), and most ordinary edges obtain the peak frequency at an even smaller number $i$. For example, (0,11) obtains the peak frequency at $i=7$ and (2,10) obtains the peak frequency at $i=6$. The ordinary edges having the peak frequency at certain $i < P_0$ are usually not contained in any one $OP^n$. For example, the frequency of (0,11) and (2,10) becomes zero at $i=14$ and 12, respectively. In this example, although (8,13) obtains the peak frequency at $i=8$ which is near 9, the frequency becomes zero at $i=14$. Thus, most ordinary edges having the frequency or probability close to and below the frequency bound or probability bound at $i=4$ are usually not contained in any one $OP^n$, and they can be identified as $i$ is small. 

\begin{figure}
	\centering
	\includegraphics[width=3in,bb=0 0 500 320]{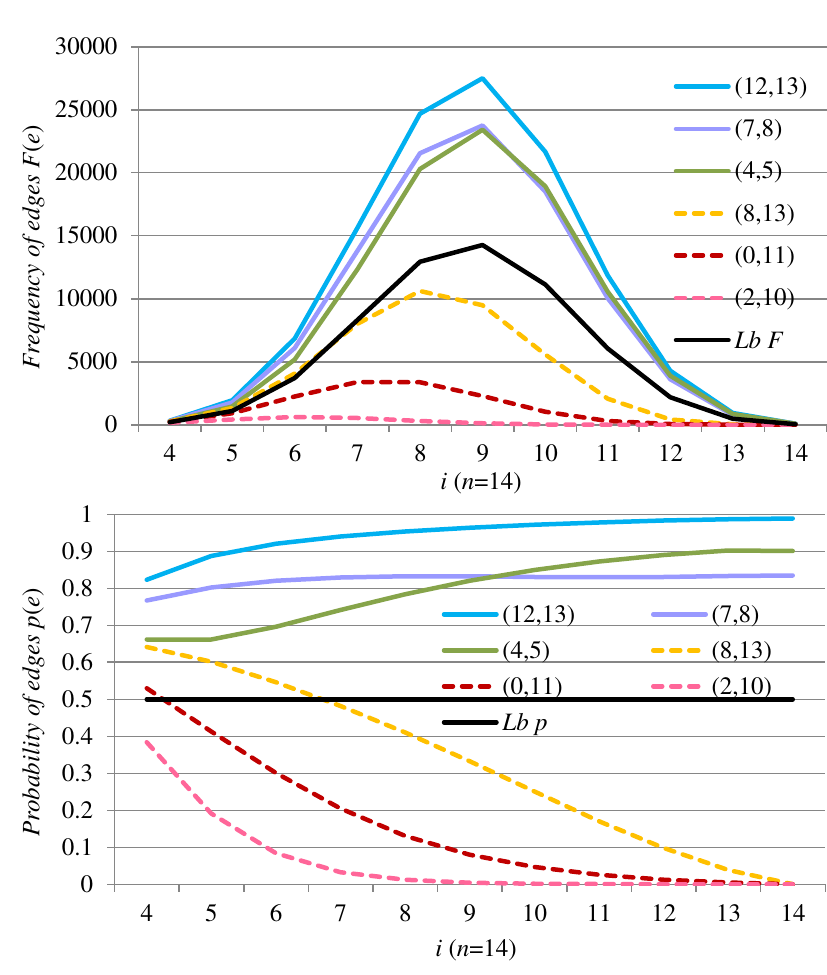}
	\caption{The frequency and probability changes for three general $OHC$ edges and three general ordinary edges for $K_{14}$.}
	\label{fpcs14or}
\end{figure}

In addition, the frequency change curves for these edges are not symmetrical with respect to the position of the peak frequency. In the left part of the curves for the $OHC$ edges before $i=P_0$, the frequencies increase in the relatively slow manner at first, and they increase very fast with the sharp slopes in the middle stage. At last, the increasing slopes becomes smaller from $P_0-1$ to $P_0$. After $i=P_0$, the frequencies decrease quickly according to $i$ at first, and the decreasing slopes become smaller from $i$ to $i+1$ as $i$ is close to $n$. As $n$ is odd, for example in the first picture in Figure \ref{fpcs13or}, the areas in the left part below the frequency change curves are smaller than those below the right part of the frequency change curves, respectively. If $n$ is even, such as the first picture in Figure \ref{fpcs14or}, the areas under the left frequency change curves are bigger than those under the right frequency change curves, respectively. For the ordinary edges, the areas under the left frequency change curves are generally smaller than those under the right frequency change curves, respectively. In total, the frequency changes for the $OHC$ edges and those for most ordinary edges have much difference. One can identify many ordinary edges according to their frequency changes according to $i$. However, some ordinary edges with the big frequency cannot be neglected according to the frequency bound $LbF$, especially as $i$ is small.  

In the second picture in Figure \ref{fpcs14or}, the probabilities of the six edges at $i=4$ are obviously different. It indicates that they are contained in different number of $OP^4$s. In similarity, the probability changes for the $OHC$ edges have much difference from those for the ordinary edges. Firstly, the probability of an $OHC$ edge is much bigger than the lower probability bound $Lbp$ according to $i$, and $Lbp$ is bigger than that of the ordinary edge (8,13) as $i\geq 7$, (0,11) as $i\geq5$, and (2,10) as $i\geq 4$, respectively. As the probabilities of the $OHC$ edges are much bigger than $Lbp$ from $i = 4$, the probabilities of the $OHC$ edges never decrease and they always increase according to $i$. However, those of the ordinary edges decrease in the polynomial or exponential way simultaneously. The probabilities of the ordinary edges become smaller than $Lbp$ as $i$ is small. Moreover, they decrease quickly according to $i$ until they tend to the smallest value or zero. For the two $OHC$ edges (12,13) and (7,8), $p_{i+1}(e) < \left[1 + \frac{2}{i(i-1)}\right]p_i(e)$ exists from $i$ to $i+1$ since the $p_i(e) > 0.75$ which is close to 1. For the $OHC$ edge (4,5), $p_{i+1}(e) > \left[1 + \frac{2}{i(i-1)}\right]p_i(e)$ occurs from $i=7$ to 11 because these $p_i(e)$s are not close 1 until $p_{12}(e) = 0.890725$ appears.  From $i=12$ to 13, the probability increment becomes smaller than $\frac{2p_i(e)}{i(i-1)}$ since these $p_i(e)$s are close to 1. For the $OHC$ edges with the relatively small $p_i(e)$ (such as $<0.75$), the probability increment is generally bigger than $\frac{2p_i(e)}{i(i-1)}$ if $p_i(e)$ increases from $i$ to $i+1$. On the other hand, the probabilities of the three ordinary edges decrease quickly according to $i$. The probability decrement becomes smaller than $\frac{2p_i(g)}{i(i-1)}$ from $i=6$ for (6,9), from $i=4$ for (1,5) and (3,8). Moreover, $p_{i+1}(g) < \frac{i}{i+1}p_i(g)$ appears for (3,8) if $i \geq 8$, for (1,5) and (6,9) if $i\geq 4$. Such cases never happen to the $OHC$ edges. At the small  $i$s,  the probability decrement for most ordinary edges begins violating the necessary condition for $OHC$ edges. Thus, most ordinary edges can be identified at the small $i$s according to the probability decrement.

\subsubsection{The number of ordinary edges with the probability decrement bigger than $\frac{2p_i(g)}{i(i-1)}$ for small $TSP$ instances}

Based on Theorem \ref{th3}, the probability of an $OHC$ edge is allowed to have the slight decrement smaller than $\frac{2p_i(e)}{i(i-1)}$ from $i$ to $i+1$. According to Theorem \ref{th33} and the experiments in above subsections, most ordinary edges will have the probability decrement $pd_i(g) > \frac{2p_i(g)}{i(i-1)}$ or $\frac{p_i(g)}{i+1}$ from $i$ to $i+1$ even if $i$ is small. Thus, these ordinary edges can be found based on the frequency $K_i$s containing a small number of vertices. The experiments for the six small $TSP$ instances in Table \ref{Examp} where  $n\in[9,14]$ are executed, and the number of ordinary edges with  $pd_i(g) > \frac{2p_i(g)}{i(i-1)}$ from $i$ to $i+1\in[5,n]$ is given in Table \ref{noerr} for each of the $TSP$ instances. 

\begin{table}
	\begin{center}
		\caption{The number of ordinary edges with the probability decrement bigger than $\frac{2p_i(g)}{i(i-1)}$ from $i$ to $i+1$ for the six small $TSP$ instances in Table \ref{Examp}.}
		{\footnotesize \begin{tabular}{ p{0.5cm}  p{1cm}  p{1cm}  p{1.5cm}  p{1cm}  p{1cm}  p{0.5cm} }
		\hline
		$n$	&  &  & $i\to i+1$ &  & &  $N_{ord}$ \\
		& $4\to 5$ & $5\to 6$ & $6\to 7$ & $7\to 8$  & $8\to 9$ &   \\				
		\hline
		9 & 23 & 27 &  &  &  &  27 \\		
		10 & 27 & 35 &  &  & &  35 \\	
		11 & 33 & 42 & 44 &  &  &  44 \\		
		12 & 31 & 49 & 54 &  &  &  54 \\	
		13 & 48 & 58 & 65 &  &  &  65 \\	
		14 & 58 & 64 & 76 & 77 &   &  77  \\					
		\hline
		\end{tabular}}
		\label{noerr}
	\end{center}
\end{table}

One sees that the number of ordinary edges with $pd_4(g) > \frac{2p_4(g)}{12}$ from $i=4$ to 5 is near the total number of ordinary edges $N_{ord}$. It indicates that many ordinary edges can be separated from $OHC$ edges at $i=5$ with respect to the probability decrement. From $i=5$ to 6, all ordinary edges are found for $n=9$ and 10 with respect to the probability decrement. From $i=6$ to 7, all ordinary edges are identified for $n$=11, 12 and 13, and all ordinary edges are found from $i=7$ to 8 for $n=14$.  However, there are quite a few ordinary edges having the frequency or probability bigger than the smallest frequency or probability of the $OHC$ edges at $i=5$ for these small $TSP$ instances, see Table \ref{naminf}. Comparing with the lower frequency bound or probability bound for $OHC$ edges, the probability decrement is more useful to neglect the ordinary edges as $TSP$ is resolved.  In addition, to separate all ordinary edges from $OHC$ edges, the frequency $K_i$s containing more vertices will be used for the big scale of $TSP$. For large $TSP$, it will be time-consuming to compute the $OP^i$s and frequency $K_i$s containing many vertices. On the other hand, one can compute the sparse graphs containing a small number of ordinary edges with respect to the probability decrement for edges. Although not all ordinary edges are filtered out as $i$ is small, the number of the preserved ordinary edges will be much smaller than $\frac{n(n-3)}{2}$, and the search space of $OHC$ will be greatly reduced.

As the probability decrement is used to filter out the ordinary edges for the six small $TSP$ instances in Table \ref{Examp}, the sparse graphs containing the preserved ordinary edges and $OHC$ are illustrated in Figures \ref{sg912} and \ref{sg1314}, respectively. In the two Figures, the blue dashed lines represent the $OHC$ edges in each $K_n$, and the solid pink lines are the preserved ordinary edges in case that $pd_i(g) \leq \frac{2p_i(g)}{i(i-1)}$ exists from $i$ to $i+1$. For the $OHC$ edges, $pd_i(e) \leq \frac{2p_i(e)}{i(i-1)}$ exists, and $pd_i(e)  < 0$ happens in most cases. One sees that the number of the preserved ordinary edges is much smaller than $\frac{n(n-3)}{2}$ for these small $TSP$ instances even if from $i$=4 to 5. As $i$ rises, the number of the preserved ordinary edges decreases sharply. If $i$ is small, a few ordinary edges are preserved for these small $TSP$ instances. Moreover, the number of the preserved ordinary edges increases according to $n$ for given $i$, such as $n=10\sim 14$ for $i=4$. It says that the probabilities or frequencies of edges computed with the frequency $K_i$s containing more vertices are more useful to filter out the ordinary edges. 

In addition, the preserved ordinary edges are not always shorter than the neglected ordinary edges. If the edges are neglected according to the distances of edges, more ordinary edges will be preserved or some $OHC$ edges with the big distances will be eliminated. It implies that the distances of edges generally do not illustrate the structure properties of the $OHC$ edges so they are  not helpful to separate $OHC$  edges from ordinary edges. According to the probability decrement, all $OHC$ edges are preserved for these $TSP$ instances whereas most ordinary edges are filtered out even if $i$ is small. It indicates that the probabilities or frequencies of edges computed based on the frequency $K_i$s demonstrate the structure properties of the $OHC$ edges so they are useful to distinguish $OHC$ edges from ordinary edges. Moreover, the probability decrement is better than the lower frequency bound and probability bound for separating more ordinary edges from $OHC$ edges. 

Furthermore, most of the preserved ordinary edges are distributed along the $OHC$ in local topological structures. In general, the two vertices (or endpoints) contained in a preserved ordinary edge are the endpoints of one $OHC$ path only containing two or three $OHC$ edges. It indicates that the ordinary edges close to the (short) $OHC$ paths containing a small number of edges generally have the relatively big frequencies or probabilities, and they have the relatively small $pd_i(g)$ from $i$ to $i+1$ as $i$ is not big. For most of the other ordinary edges whose endpoints are those of the $OHC$ paths containing many edges, they generally have the small frequencies and probabilities. $pd_i(g) > \frac{2p_i(g)}{i(i-1)}$ will exist  for these ordinary edges from $i$ to $i+1$, and they can be identified even if $i$ is small. 

In addition, it is observed that after the $OHC$ edges in the $K_n$ are replaced by the new $OHC$ edges in the $K_{n+1}$, they are usually preserved to the sparse graphs $G_{n+1}$ corresponding to the $K_{n+1}$ as $i$ is small. For an $OHC$ edge in the $K_n$, it has the big frequency in each frequency $K_i$ containing it on average. As it is replaced by the other $OHC$ edge and becomes one ordinary edge in the $K_{n+1}$, it still maintains the big frequency in the ${{n-2}\choose{i-2}}$ frequency $K_i$s contained in the $K_n$. In each of the other ${{n-2}\choose{i-3}}$ frequency $K_i$s containing the edge in the $K_{n+1}$, it will have the small frequency $(<i-3)$. Since ${{n-2}\choose{i-2}}$ is much bigger than ${{n-2}\choose{i-3}}$ as $i$ is small, the average frequency of the edge will not decrease much according to the ${{n-1}\choose{i-2}}$ frequency $K_i$s containing it in the $K_{n+1}$. Thus, the replaced $OHC$ edges in the $K_{n+1}$ will keep the nearly equal average frequency as $i$ is small. The experiments illustrated that $pd_i(g) < \frac{2p_i(g)}{i(i-1)}$ exists for such ordinary edges from $i$ to $i+1$ as $i$ is small. Thus, they are preserved to the sparse graphs $G_{n+1}$ as $i$ is small. In fact, the preserved ordinary edges are mostly the $OHC$ edges in some $K_n$s contained in the $K_{n+1}$. Once $i$ becomes big, the probabilities of such ordinary edges will decrease quickly, and $pd_i(g) > \frac{2p_i(g)}{i(i-1)}$ will happen from $i$ to $i+1$. In this case, these ordinary edges are eliminated from the sparse graphs.

\begin{figure}
	\centering
	\includegraphics[width=2.5in,bb=0 0 450 400]{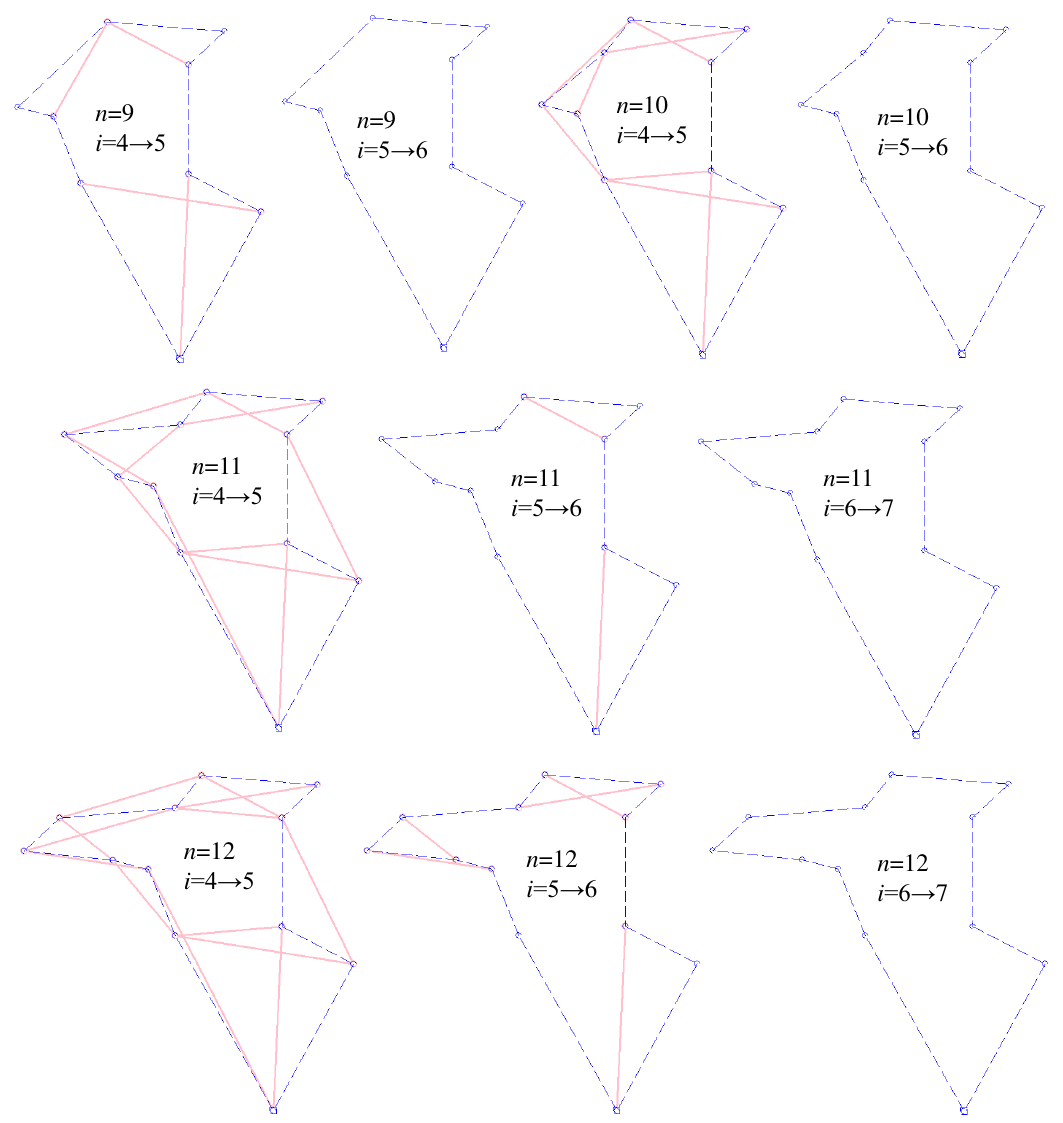}
	\caption{The sparse graphs computed with respect to the probability decrement for the small $TSP$ instances where $n=$9, 10, 11 and 12.}
	\label{sg912}
\end{figure}

\begin{figure}
	\centering
	\includegraphics[width=2.5in,bb=0 0 450 350]{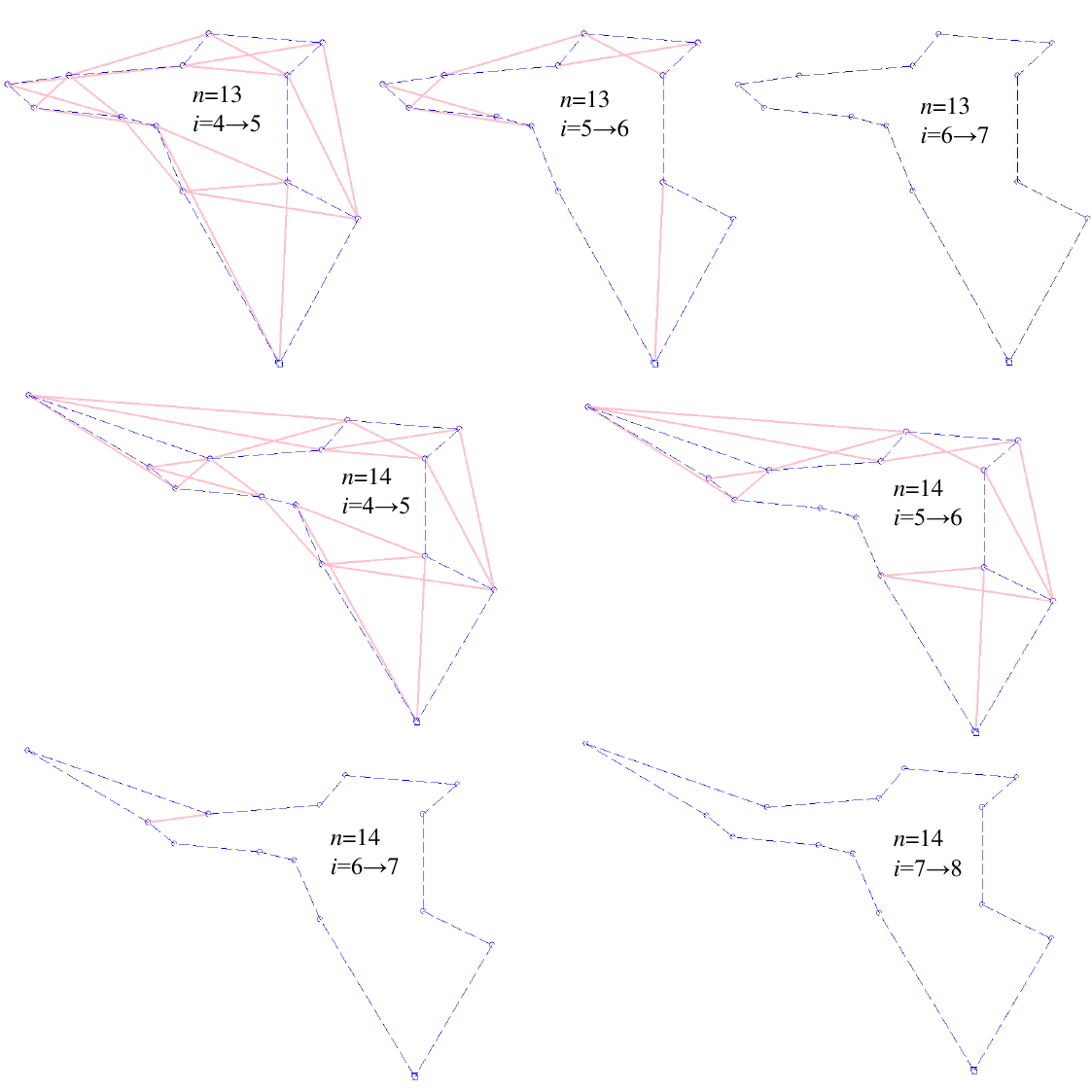}
	\caption{The sparse graphs computed with respect to the probability decrement for the small $TSP$ instances where $n=$13 and 14.}
	\label{sg1314}
\end{figure}

\subsection{The experiments for the real-world $TSP$ instances}
In this section, the experiments are executed for the real-world $TSP$ instances in $TSPLIB$ \cite{DBLP:journals/Reinelt95}. They include four types of $TSP$ which are Euclidean, ATT, GEO and Matrix. The methods to compute the distances of edges are different for the four types of $TSP$. For general $TSP$ having one $OHC$ and ${{n}\choose{2}}$ $OP^n$s in $K_n$, and each sub-graph $K_i$ contains ${{i}\choose{2}}$ $OP^i$s and one $OHC$, the theorems in this paper will work well for characterizing the $OHC$ edges and ordinary edges based on frequency $K_i$s. We will also meet some $TSP$ instances having a lot of $OHC$s or equal-weight edges. There will be more than ${{i}\choose{2}}$ $OP^i$s in many $K_i$s. In this case, it is difficult to choose the proper $OP^i$s to compute the high frequency for unknown $OHC$ edges. If the $OP^i$s containing a few $OHC$ edges are seldom used to compute the frequencies of edges, the theorems will lose powers to separate ordinary edges from some $OHC$ edges. Because $n$ is big in the experiments, it is impossible to compute the $OP^n$s for these $TSP$ instances using dynamic programming. To illustrate the frequency properties for $OHC$ edges and ordinary edges, we compute the frequency of each edge with the frequency $K_i$s as $i\ll n$, such as $i=$4, 5, 6, etc. Moreover, it is not necessary to compute the frequency of an edge with the ${{n-2}\choose{i-2}}$ frequency $K_i$s containing it. Given one $TSP$ instance, the $OP^i$s in each $K_i$ are determined according to the distances on edges. Each edge will have one specified frequency in every frequency $K_i$ containing it. Thus, the frequency distribution for each edge is fixed according to the frequency $K_i$s containing them, respectively. In this case, we can choose $N$ random frequency $K_i$s for an edge to compute the average frequency and probability. Based on the law of large numbers, the average frequency and probability of each edge will converge to the corresponding expected frequency and probability if the value of $N$ is rational. As the frequency of each edge is computed with $N$ frequency $K_i$s, the computation time to compute the frequencies of all edges is $O(Ni^42^in^2)$. In the experiments, $N=1000$ and $i\in[4,8]$ much smaller than $n$ are considered for the real-world $TSP$ instances. It mentions that one $OHC$ of each $TSP$ instance has been computed with Concorde online \cite{DBLP:journals/Mittelmann22}. They are used to verify the special structure properties denoted by the frequencies and probabilities of the $OHC$ edges and ordinary edges. According to $i\geq 4$, the frequency and probability changes for $OHC$ edges and ordinary edges will also be demonstrated. 

\subsubsection{The frequencies of the $OHC$ edges computed based on frequency $K_i$s}
For one given $TSP$ instance, the average frequency of each $OHC$ edge is computed with the frequency $K_i$s where $i\in [4,8]$. The smallest frequency and average frequency related to all $OHC$ edges will be illustrated and compared with the lower frequency bound $f_{lb}$ and theoretical average frequency $f_{oavg}$, respectively. For each of the $TSP$ instances, the first eight smallest frequencies of the $OHC$ edges are selected and shown in Tables \ref{ohcfs}$\sim$\ref{ohcfbl}, respectively. In each of the four tables, the first column denotes the name of $TSP$ in which the digit is the number of vertices $n$ for each $TSP$ instance. The second column $i$ denotes the number of vertices in the frequency $K_i$s which are used to compute the average frequency and probability of each $OHC$ edge. From the third to the tenth column, it gives the eight smallest frequencies of the $OHC$ edges which are ordered from small to big values. For example, the third column gives the first smallest frequency of the $OHC$ edges  and the tenth column presents the $8^{th}$ smallest frequency of the $OHC$ edges for each $TSP$ instance. The average frequency of all $OHC$ edges is also computed and denoted by $f_{avg}$. In the last two columns, the lower frequency bound $f_{lb}$ and theoretical average frequency $f_{oavg}$ for $OHC$ edges are given for comparisons. 

We first compare the first smallest frequency with $f_{lb}$ for these $TSP$ instances. As $n$ is small, the first smallest frequency is smaller than $f_{lb}$ for several $TSP$ instances, such as att48 when $i=6$, 7, pr76 when $i=6$, rat99 when $i=$5, 6 and 7, si535 when $i=$7 and 8, and si1032. However, the first smallest frequency is bigger than the lower bound $\frac{7}{18}{{i}\choose{2}}$ in the worst case for all the $TSP$ instances. For the other $TSP$ instances, the first smallest frequency is bigger than $f_{lb}$ according to $i\in [4,8]$.  Moreover, the difference between the first smallest frequency and $f_{lb}$ becomes bigger according to $n$. For example in Table \ref{ohcfbl}, the first smallest frequency of the $OHC$ edges is much bigger than $f_{lb}$. It implies that the average frequency of an $OHC$ edge becomes bigger according to $n$ for given $i$. Thus, the $OHC$ edges are contained in more percentage of the $OP^i$s according to $n$. 

The second smallest frequency is totally bigger than $f_{lb}$ for all the $TSP$ instances according to $i$ except si1032. It says the average frequency of almost every $OHC$ edges is bigger than $f_{lb}$ for each $TSP$ instance. For si1032, the third smallest frequency becomes bigger than $f_{lb}$ and only two $OHC$ edges have the average frequency smaller than $f_{lb}$. Moreover, the fourth smallest frequency is much bigger than the first three smallest frequencies. It mentions that si535 and si1032 are Matrix $TSP$, and they contain many equal-weight edges. In the two $K_n$s, there are a lot of $K_i$s which contain more than ${{i}\choose{2}}$ $OP^i$s. It is difficult to choose the right $OP^i$s for computing the high frequency for some $OHC$ edges in the known $OHC$. Nevertheless, the second and third smallest frequency of the $OHC$ edges become bigger than $f_{lb}$ for si535 and si1032, respectively. For the other $TSP$ instances, Theorem \ref{th3} works well to estimate the lower frequency bound for the $OHC$ edges computed based on frequency $K_i$s. 

From the first to the $8^{th}$ smallest frequency, it is observed that the frequencies of the $OHC$ edges increase quickly. In view of the average frequency $f_{avg}$ of all $OHC$ edges, it is bigger than  $f_{oavg}$ for whether the small or big $TSP$ instances. It indicates that the number of $OHC$ edges with the average frequency close to $f_{lb}$ is very small. Most of the $OHC$ edges will have an average frequency much bigger than $f_{lb}$. Based on the datum in Tables \ref{ohcfs}$\sim$\ref{ohcfbl}, one can compute the probability for the $OHC$ edge with the corresponding smallest frequency and all $OHC$ edges, it finds that the probability becomes bigger according to $n$ for these $OHC$ edges, and it also increases quickly according to $i\in[4,8]$ for given $n$. It indicates that an $OHC$ edge is not only contained in more percentage of $OP^i$s according to $n$ for given $i$, and it is also contained in more percentage of the $OP^i$s according to $i$ for given $n$. On the other hand, an ordinary edge must be contained in the smaller percentage of $OP^i$s according to $n$ and $i$ on average. Theorem \ref{th3}  are verified with these experimental results. 

\begin{table}
	\begin{center}
		\caption{The eight smallest frequencies of the $OHC$ edges computed with frequency $K_i$s ($i\in[4,8]$) for small real-world $TSP$ instances.}
		{\footnotesize \begin{tabular}{ p{0.5cm}  p{0.1cm}  p{0.5cm}  p{0.5cm}  p{0.5cm}  p{0.5cm}  p{0.5cm}  p{0.5cm}  p{0.5cm}  p{0.5cm}  p{0.5cm}  p{0.3cm}  p{0.4cm} }
				\hline
		$TSP$ & $i$ &  & The &  &  smallest &  &  & freq. &  & $f_{avg}$ & $f_{lb}$ &  $f_{oavg}$  \\				
		$/n$ &  & $1^{st}$ & $2^{nd}$ & $3^{rd}$ & $4^{th}$ & $5^{th}$ & $6^{th}$ & $7^{th}$ & $8^{th}$ &  &  &  \\	
		\hline
	att48	&	4	&	3.410 	&	3.462 	&	3.510 	&	3.708 	&	3.768 	&	3.772 	&	3.786 	&	3.792 	&	4.305 	&	3	&	3.5	\\
	&	5	&	5.439 	&	5.483 	&	5.558 	&	5.821 	&	5.860 	&	6.095 	&	6.132 	&	6.214 	&	7.370 	&	5	&	6	\\
	48	&	6	&	7.436 	&	7.674 	&	8.355 	&	8.694 	&	8.914 	&	8.941 	&	9.118 	&	9.152 	&	11.216 	&	7.5	&	9.5	\\
	&	7	&	10.434 	&	10.581 	&	11.689 	&	11.986 	&	12.092 	&	12.097 	&	12.604 	&	12.842 	&	15.835 	&	10.5	&	14	\\
	&	8	&	14.535 	&	14.629 	&	15.634 	&	16.171 	&	16.259 	&	16.416 	&	16.484 	&	18.094 	&	21.319 	&	14	&	19.5	\\	
	
	eil51	&	4	&	3.668 	&	3.760 	&	3.788 	&	3.804 	&	3.818 	&	3.846 	&	3.872 	&	3.908 	&	4.424 	&	3	&	3.5	\\
	&	5	&	5.928 	&	5.939 	&	6.043 	&	6.094 	&	6.139 	&	6.163 	&	6.327 	&	6.457 	&	7.688 	&	5	&	6	\\
	51	&	6	&	8.795 	&	8.869 	&	9.002 	&	9.121 	&	9.294 	&	9.385 	&	9.413 	&	9.940 	&	11.772 	&	7.5	&	9.5	\\
	&	7	&	11.934 	&	13.197 	&	13.209 	&	13.320 	&	13.486 	&	13.707 	&	13.720 	&	13.755 	&	16.727 	&	10.5	&	14	\\
	&	8	&	15.360 	&	17.430 	&	17.453 	&	18.099 	&	18.122 	&	18.739 	&	18.801 	&	18.881 	&	22.466 	&	14	&	19.5	\\
	
	{\tiny berlin52}	&	4	&	3.696 	&	3.774 	&	3.802 	&	3.824 	&	3.842 	&	3.876 	&	3.948 	&	3.948 	&	4.413 	&	3	&	3.5	\\
	&	5	&	5.894 	&	6.040 	&	6.243 	&	6.364 	&	6.451 	&	6.534 	&	6.673 	&	6.704 	&	7.680 	&	5	&	6	\\
	52	&	6	&	9.063 	&	9.143 	&	9.302 	&	9.744 	&	10.058 	&	10.084 	&	10.113 	&	10.408 	&	11.801 	&	7.5	&	9.5	\\
	&	7	&	12.662 	&	12.899 	&	13.466 	&	13.684 	&	13.896 	&	14.294 	&	14.567 	&	14.570 	&	16.760 	&	10.5	&	14	\\
	&	8	&	17.224 	&	17.372 	&	18.257 	&	18.503 	&	18.642 	&	19.275 	&	19.457 	&	20.101 	&	22.650 	&	14	&	19.5	\\
	
	pr76	&	4	&	3.418 	&	3.532 	&	3.608 	&	3.794 	&	3.950 	&	4.032 	&	4.092 	&	4.100 	&	4.515 	&	3	&	3.5	\\
	&	5	&	5.139 	&	5.448 	&	5.692 	&	5.877 	&	6.548 	&	6.712 	&	6.748 	&	6.931 	&	7.918 	&	5	&	6	\\
	76	&	6	&	6.933 	&	8.108 	&	8.426 	&	9.070 	&	10.068 	&	10.337 	&	10.599 	&	10.649 	&	12.225 	&	7.5	&	9.5	\\
	&	7	&	8.942 	&	11.211 	&	12.347 	&	13.359 	&	14.270 	&	14.437 	&	14.786 	&	14.868 	&	17.434 	&	10.5	&	14	\\
	&	8	&	10.903 	&	15.511 	&	17.586 	&	18.539 	&	18.679 	&	18.884 	&	19.559 	&	19.598 	&	23.487 	&	14	&	19.5	\\
	rat99	&	4	&	3.524 	&	3.556 	&	3.558 	&	3.652 	&	3.656 	&	3.656 	&	3.708 	&	3.730 	&	4.426 	&	3	&	3.5	\\
	&	5	&	4.919 	&	5.269 	&	5.306 	&	5.710 	&	5.747 	&	5.894 	&	5.898 	&	6.106 	&	7.667 	&	5	&	6	\\
	99	&	6	&	7.311 	&	7.767 	&	8.176 	&	8.686 	&	8.752 	&	8.866 	&	8.921 	&	9.083 	&	11.833 	&	7.5	&	9.5	\\
	&	7	&	10.428 	&	11.507 	&	12.414 	&	12.671 	&	12.760 	&	12.780 	&	12.905 	&	13.113 	&	16.912 	&	10.5	&	14	\\
	&	8	&	14.536 	&	15.424 	&	16.703 	&	17.477 	&	17.511 	&	17.798 	&	18.065 	&	18.305 	&	22.937 	&	14	&	19.5	\\

   gr120	&	4	&	3.294 	&	3.648 	&	3.666 	&	3.720 	&	3.784 	&	3.800 	&	3.818 	&	3.848 	&	4.495 	&	3	&	3.5	\\
   &	5	&	4.749 	&	5.804 	&	5.897 	&	6.070 	&	6.143 	&	6.156 	&	6.204 	&	6.220 	&	7.843 	&	5	&	6	\\
   120	&	6	&	6.998 	&	8.365 	&	9.151 	&	9.226 	&	9.226 	&	9.249 	&	9.302 	&	9.308 	&	12.052 	&	7.5	&	9.5	\\
   &	7	&	9.733 	&	11.611 	&	12.775 	&	12.894 	&	13.142 	&	13.216 	&	13.273 	&	13.524 	&	17.123 	&	10.5	&	14	\\
   &	8	&	13.667 	&	15.336 	&	17.304 	&	17.564 	&	18.106 	&	18.314 	&	18.472 	&	18.684 	&	23.105 	&	14	&	19.5	\\
   
   gr137	&	4	&	3.564 	&	3.696 	&	3.748 	&	3.762 	&	3.766 	&	3.766 	&	3.844 	&	3.860 	&	4.462 	&	3	&	3.5	\\
   &	5	&	5.485 	&	5.799 	&	5.903 	&	6.101 	&	6.184 	&	6.290 	&	6.326 	&	6.343 	&	7.785 	&	5	&	6	\\
   137	&	6	&	7.849 	&	8.351 	&	8.639 	&	9.148 	&	9.278 	&	9.490 	&	9.546 	&	9.780 	&	12.041 	&	7.5	&	9.5	\\
   &	7	&	11.514 	&	11.656 	&	12.274 	&	12.431 	&	12.917 	&	13.727 	&	13.729 	&	13.950 	&	17.260 	&	10.5	&	14	\\
   &	8	&	15.652 	&	16.367 	&	17.063 	&	17.131 	&	17.210 	&	17.313 	&	18.422 	&	19.364 	&	23.407 	&	14	&	19.5	\\
   
	{\tiny kroA200}	&	4	&	3.562 	&	3.704 	&	3.738 	&	3.754 	&	3.780 	&	3.856 	&	3.876 	&	3.896 	&	4.598 	&	3	&	3.5	\\
	&	5	&	5.500 	&	5.605 	&	5.974 	&	6.149 	&	6.244 	&	6.303 	&	6.309 	&	6.316 	&	8.103 	&	5	&	6	\\
	200	&	6	&	8.398 	&	8.645 	&	9.077 	&	9.352 	&	9.651 	&	9.712 	&	9.813 	&	9.946 	&	12.575 	&	7.5	&	9.5	\\
	&	7	&	12.208 	&	12.856 	&	12.904 	&	13.202 	&	13.348 	&	13.371 	&	14.258 	&	14.263 	&	17.933 	&	10.5	&	14	\\
	&	8	&	17.695 	&	17.703 	&	18.092 	&	18.709 	&	18.877 	&	19.197 	&	19.945 	&	19.950 	&	24.362 	&	14	&	19.5	\\
		
   gr202	&	4	&	3.980 	&	4.030 	&	4.030 	&	4.032 	&	4.032 	&	4.086 	&	4.102 	&	4.108 	&	4.593 	&	3	&	3.5	\\
   &	5	&	6.616 	&	6.675 	&	6.822 	&	6.845 	&	6.920 	&	6.954 	&	6.993 	&	7.101 	&	8.114 	&	5	&	6	\\
   202	&	6	&	10.292 	&	10.397 	&	10.410 	&	10.450 	&	10.901 	&	11.035 	&	11.102 	&	11.130 	&	12.606 	&	7.5	&	9.5	\\
   &	7	&	14.622 	&	15.108 	&	15.216 	&	15.373 	&	15.437 	&	15.934 	&	16.011 	&	16.063 	&	18.041 	&	10.5	&	14	\\
   &	8	&	19.990 	&	20.210 	&	20.841 	&	20.980 	&	21.075 	&	21.387 	&	21.553 	&	21.611 	&	24.405 	&	14	&	19.5	\\
   a280	&	4	&	3.630 	&	3.946 	&	4.104 	&	4.118 	&	4.120 	&	4.148 	&	4.156 	&	4.168 	&	4.645 	&	3	&	3.5	\\
   &	5	&	5.731 	&	6.599 	&	6.831 	&	7.045 	&	7.115 	&	7.129 	&	7.133 	&	7.156 	&	8.229 	&	5	&	6	\\
   280	&	6	&	8.761 	&	10.366 	&	10.755 	&	11.183 	&	11.252 	&	11.281 	&	11.323 	&	11.352 	&	12.801 	&	7.5	&	9.5	\\
   &	7	&	12.878 	&	13.997 	&	15.729 	&	16.115 	&	16.118 	&	16.191 	&	16.200 	&	16.312 	&	18.349 	&	10.5	&	14	\\
   &	8	&	18.158 	&	21.383 	&	21.760 	&	21.908 	&	22.083 	&	22.426 	&	22.483 	&	22.495 	&	24.855 	&	14	&	19.5	\\

   lin318	&	4	&	3.788 	&	3.908	&	3.926	&	3.95	&	3.954	&	3.964	&	3.964	&	3.976	&	4.691	&	3	&	3.5	\\
&	5	&	5.882	&	6.384 	&	6.505	&	6.608	&	6.621	&	6.685	&	6.698 	&	6.707	&	8.345	&	5	&	6	\\
318	&	6	&	9.111	&	9.65	&	9.926	&	10.002 	&	10.08	&	10.158	&	10.207	&	10.242	&	12.962 	&	7.5	&	9.5	\\
&	7	&	13.049	&	14.226 	&	14.378	&	14.478	&	14.581	&	14.682	&	14.756 	&	14.788	&	18.522	&	10.5	&	14	\\
&	8	&	17.726	&	19.439 	&	19.845	&	19.884	&	20.015	&	20.041	&	20.148 	&	20.157	&	25.024	&	14	&	19.5	\\
				\hline
		\end{tabular}}
		\label{ohcfs}
	\end{center}
\end{table}

\begin{table}
	\begin{center}
		\caption{The eight smallest frequencies of the $OHC$ edges computed with frequency $K_i$s ($i\in[4,8]$) for medium real-world $TSP$ instances.}
		{\footnotesize \begin{tabular}{ p{0.6cm}  p{0.1cm}  p{0.5cm}  p{0.5cm}  p{0.5cm}  p{0.5cm}  p{0.5cm}  p{0.5cm}  p{0.5cm}  p{0.5cm}  p{0.5cm}  p{0.3cm}  p{0.4cm} }
		\hline
	$TSP$ & $i$ &  & The &  &  smallest &  &  & freq. &  & $f_{avg}$ & $f_{lb}$ &  $f_{oavg}$  \\				
	$/n$ &  & $1^{st}$ & $2^{nd}$ & $3^{rd}$ & $4^{th}$ & $5^{th}$ & $6^{th}$ & $7^{th}$ & $8^{th}$ &  &  &  \\	
	\hline
   rd400	&	4	&	3.840 	&	3.928 	&	4.042 	&	4.044 	&	4.064 	&	4.066 	&	4.066 	&	4.082 	&	4.691 	&	3	&	3.5	\\
   &	5	&	6.587 	&	6.600 	&	6.752 	&	6.796 	&	6.856 	&	6.905 	&	6.914 	&	6.978 	&	8.356 	&	5	&	6	\\
   400	&	6	&	9.878 	&	10.311 	&	10.604 	&	10.705 	&	10.811 	&	10.822 	&	10.851 	&	11.098 	&	12.982 	&	7.5	&	9.5	\\
   &	7	&	14.567 	&	15.423 	&	15.498 	&	15.707 	&	15.724 	&	15.758 	&	15.874 	&	16.199 	&	18.568 	&	10.5	&	14	\\
   &	8	&	20.674 	&	21.086 	&	21.202 	&	21.433 	&	21.887 	&	21.940 	&	21.977 	&	22.037 	&	25.096 	&	14	&	19.5	\\
   
   
   pcb442	&	4	&	3.868 	&	4.008 	&	4.010 	&	4.164 	&	4.179 	&	4.192 	&	4.198 	&	4.212 	&	4.700 	&	3	&	3.5	\\
   &	5	&	6.315 	&	6.804 	&	7.020 	&	7.131 	&	7.150 	&	7.208 	&	7.245 	&	7.256 	&	8.371 	&	5	&	6	\\
   442	&	6	&	9.707 	&	10.453 	&	10.973 	&	10.993 	&	11.023 	&	11.327 	&	11.456 	&	11.526 	&	13.021 	&	7.5	&	9.5	\\
   &	7	&	14.627 	&	15.033 	&	15.612 	&	15.774 	&	15.863 	&	16.135 	&	16.411 	&	16.583 	&	18.638 	&	10.5	&	14	\\
   &	8	&	20.319 	&	20.331 	&	20.888 	&	21.157 	&	21.433 	&	21.905 	&	22.326 	&	22.413 	&	25.200 	&	14	&	19.5	\\
   
   att532	&	4	&	4.032 	&	4.038 	&	4.046 	&	4.060 	&	4.064 	&	4.064 	&	4.072 	&	4.080 	&	4.652 	&	3	&	3.5	\\
   &	5	&	6.446 	&	6.811 	&	6.846 	&	6.875 	&	6.915 	&	6.932 	&	6.936 	&	6.950 	&	8.262 	&	5	&	6	\\
   532	&	6	&	9.846 	&	10.398 	&	10.560 	&	10.679 	&	10.793 	&	10.798 	&	10.803 	&	10.860 	&	12.861 	&	7.5	&	9.5	\\
   &	7	&	14.076 	&	14.248 	&	15.299 	&	15.460 	&	15.562 	&	15.611 	&	15.668 	&	15.680 	&	18.442 	&	10.5	&	14	\\
   &	8	&	18.111 	&	19.901 	&	19.922 	&	19.999 	&	20.778 	&	21.083 	&	21.170 	&	21.432 	&	24.974 	&	14	&	19.5	\\
   
   si535	&	4	&	3.576 	&	3.580 	&	3.806 	&	4.054 	&	4.058 	&	4.078 	&	4.372 	&	4.444 	&	4.944 	&	3	&	3.5	\\
   &	5	&	5.388 	&	5.510 	&	6.184 	&	6.714 	&	6.890 	&	6.956 	&	7.449 	&	7.612 	&	8.872 	&	5	&	6	\\
   535	&	6	&	7.817 	&	7.964 	&	9.188 	&	10.171 	&	10.251 	&	10.574 	&	11.136 	&	11.631 	&	13.770 	&	7.5	&	9.5	\\
   &	7	&	9.823 	&	11.197 	&	13.184 	&	14.526 	&	14.730 	&	15.517 	&	16.053 	&	16.071 	&	19.634 	&	10.5	&	14	\\
   &	8	&	12.858 	&	14.712 	&	17.861 	&	18.645 	&	19.294 	&	21.235 	&	21.424 	&	21.436 	&	26.456 	&	14	&	19.5	\\
   
   u574	&	4	&	3.984 	&	3.988 	&	4.002 	&	4.014 	&	4.016 	&	4.016 	&	4.028 	&	4.054 	&	4.687 	&	3	&	3.5	\\
   &	5	&	6.504 	&	6.567 	&	6.661 	&	6.722 	&	6.744 	&	6.766 	&	6.809 	&	6.826 	&	8.350 	&	5	&	6	\\
   574	&	6	&	9.947 	&	10.118 	&	10.362 	&	10.454 	&	10.510 	&	10.607 	&	10.680 	&	10.718 	&	13.014 	&	7.5	&	9.5	\\
   &	7	&	14.405 	&	14.603 	&	15.449 	&	15.554 	&	15.680 	&	15.725 	&	15.753 	&	15.788 	&	18.662 	&	10.5	&	14	\\
   &	8	&	19.360 	&	19.874 	&	21.637 	&	21.676 	&	21.720 	&	21.895 	&	21.907 	&	21.926 	&	25.292 	&	14	&	19.5	\\
   
   rat575	&	4	&	3.846 	&	3.906 	&	3.914 	&	3.982 	&	4.008 	&	4.012 	&	4.014 	&	4.016 	&	4.636 	&	3	&	3.5	\\
   &	5	&	6.366 	&	6.689 	&	6.755 	&	6.831 	&	6.845 	&	6.903 	&	6.907 	&	6.923 	&	8.235 	&	5	&	6	\\
   575	&	6	&	10.060 	&	10.286 	&	10.762 	&	10.778 	&	10.798 	&	10.913 	&	10.932 	&	10.975 	&	12.834 	&	7.5	&	9.5	\\
   &	7	&	15.386 	&	15.466 	&	15.914 	&	15.918 	&	15.964 	&	16.037 	&	16.126 	&	16.192 	&	18.430 	&	10.5	&	14	\\
   &	8	&	21.843 	&	21.887 	&	21.890 	&	21.924 	&	22.267 	&	22.295 	&	22.305 	&	22.363 	&	25.023 	&	14	&	19.5	\\
   
   p654	&	4	&	3.552 	&	3.556 	&	3.558 	&	3.656 	&	3.724 	&	3.784 	&	3.838 	&	3.846 	&	4.729 	&	3	&	3.5	\\
   &	5	&	5.195 	&	5.514 	&	5.753 	&	5.877 	&	6.120 	&	6.296 	&	6.414 	&	6.451 	&	8.449 	&	5	&	6	\\
   654	&	6	&	7.043 	&	7.784 	&	9.214 	&	9.223 	&	9.301 	&	9.591 	&	10.072 	&	10.175 	&	13.035 	&	7.5	&	9.5	\\
   &	7	&	9.182 	&	11.295 	&	13.434 	&	13.523 	&	13.564 	&	14.044 	&	14.441 	&	14.529 	&	18.449 	&	10.5	&	14	\\
   &	8	&	11.343 	&	15.776 	&	18.137 	&	18.423 	&	19.254 	&	19.321 	&	19.507 	&	19.973 	&	24.704 	&	14	&	19.5	\\
   
   d657	&	4	&	3.842 	&	4.048 	&	4.050 	&	4.124 	&	4.130 	&	4.150 	&	4.198 	&	4.206 	&	4.728 	&	3	&	3.5	\\
   &	5	&	6.314 	&	6.764 	&	6.969 	&	7.012 	&	7.013 	&	7.048 	&	7.154 	&	7.231 	&	8.445 	&	5	&	6	\\
   657	&	6	&	9.915 	&	10.922 	&	10.964 	&	11.026 	&	11.187 	&	11.314 	&	11.374 	&	11.424 	&	13.152 	&	7.5	&	9.5	\\
   &	7	&	14.456 	&	16.160 	&	16.309 	&	16.350 	&	16.391 	&	16.479 	&	16.538 	&	16.542 	&	18.830 	&	10.5	&	14	\\
   &	8	&	20.280 	&	21.945 	&	22.051 	&	22.220 	&	22.442 	&	22.641 	&	22.646 	&	22.691 	&	25.470 	&	14	&	19.5	\\
   gr666	&	4	&	4.124 	&	4.178 	&	4.216 	&	4.244 	&	4.262 	&	4.274 	&	4.288 	&	4.302 	&	4.702 	&	3	&	3.5	\\
   &	5	&	6.785 	&	7.086 	&	7.292 	&	7.325 	&	7.331 	&	7.344 	&	7.367 	&	7.527 	&	8.413 	&	5	&	6	\\
   666	&	6	&	10.444 	&	11.277 	&	11.363 	&	11.435 	&	11.464 	&	11.600 	&	11.618 	&	11.674 	&	13.153 	&	7.5	&	9.5	\\
   &	7	&	14.789 	&	16.097 	&	16.327 	&	16.568 	&	16.599 	&	16.678 	&	16.690 	&	16.758 	&	18.854 	&	10.5	&	14	\\
   &	8	&	19.823 	&	22.003 	&	22.325 	&	22.594 	&	22.812 	&	22.982 	&	23.028 	&	23.074 	&	25.527 	&	14	&	19.5	\\
   u724	&	4	&	3.922 	&	4.020 	&	4.022 	&	4.050 	&	4.054 	&	4.058 	&	4.060 	&	4.066 	&	4.701 	&	3	&	3.5	\\
   &	5	&	6.655 	&	6.757 	&	6.822 	&	6.884 	&	6.893 	&	6.955 	&	6.994 	&	6.998 	&	8.382 	&	5	&	6	\\
   724	&	6	&	10.452 	&	10.789 	&	10.803 	&	10.818 	&	11.043 	&	11.097 	&	11.112 	&	11.165 	&	13.060 	&	7.5	&	9.5	\\
   &	7	&	15.280 	&	15.993 	&	16.195 	&	16.245 	&	16.403 	&	16.427 	&	16.572 	&	16.584 	&	18.736 	&	10.5	&	14	\\
   &	8	&	21.110 	&	22.316 	&	22.436 	&	22.555 	&	22.604 	&	22.682 	&	22.702 	&	22.710 	&	25.386 	&	14	&	19.5	\\
   rat783	&	4	&	3.984 	&	3.988 	&	3.990 	&	3.994 	&	4.016 	&	4.082 	&	4.092 	&	4.097 	&	4.679 	&	3	&	3.5	\\
   &	5	&	6.701 	&	6.717 	&	6.739 	&	6.766 	&	6.769 	&	6.790 	&	6.886 	&	6.900 	&	8.328 	&	5	&	6	\\
   783	&	6	&	10.717 	&	10.754 	&	10.795 	&	10.860 	&	10.905 	&	10.925 	&	11.069 	&	11.105 	&	12.996 	&	7.5	&	9.5	\\
   &	7	&	15.900 	&	15.922 	&	16.008 	&	16.031 	&	16.251 	&	16.252 	&	16.253 	&	16.302 	&	18.667 	&	10.5	&	14	\\
   &	8	&	22.019 	&	22.095 	&	22.409 	&	22.490 	&	22.581 	&	22.587 	&	22.587 	&	22.629 	&	25.323 	&	14	&	19.5	\\	
		\hline
		\end{tabular}}
		\label{ohcfm}
	\end{center}
\end{table}

\begin{table}
	\begin{center}
		\caption{The eight smallest frequencies of the $OHC$ edges computed with frequency $K_i$s ($i\in[4,8]$) for big  real-world $TSP$ instances.}
		{\footnotesize \begin{tabular}{ p{0.6cm}  p{0.1cm}  p{0.5cm}  p{0.5cm}  p{0.5cm}  p{0.5cm}  p{0.5cm}  p{0.5cm}  p{0.5cm}  p{0.5cm}  p{0.5cm}  p{0.3cm}  p{0.4cm} }
		\hline
	$TSP$ & $i$ &  & The &  &  smallest &  &  & freq. &  & $f_{avg}$ & $f_{lb}$ &  $f_{oavg}$  \\				
	$/n$ &  & $1^{st}$ & $2^{nd}$ & $3^{rd}$ & $4^{th}$ & $5^{th}$ & $6^{th}$ & $7^{th}$ & $8^{th}$ &  &  &  \\	
		\hline
	pr1002	&	4	&	3.840 	&	4.034	&	4.050 	&	4.066	&	4.116 	&	4.12	&	4.138 	&	4.142	&	4.760 	&	3	&	3.5	\\
	&	5	&	6.322	&	6.814	&	7.003	&	7.023	&	7.036	&	7.045	&	7.149	&	7.196	&	8.519 	&	5	&	6	\\
	1002	&	6	&	9.948	&	10.815	&	10.85	&	10.877	&	11.115	&	11.129	&	11.184	&	11.251	&	13.273 	&	7.5	&	9.5	\\
	&	7	&	14.434	&	15.744	&	15.775	&	15.926	&	16.098	&	16.220	&	16.282	&	16.378	&	19.007 	&	10.5	&	14	\\
	&	8	&	20.608	&	21.302	&	22.030	&	22.142	&	22.208	&	22.25	&	22.294	&	22.294	&	25.705 	&	14	&	19.5	\\
	
	si1032	&	4	&	2.740 	&	3.270 	&	3.572 	&	4.550 	&	4.914 	&	4.918 	&	4.924 	&	4.926 	&	4.987 	&	3	&	3.5	\\
	&	5	&	3.700 	&	5.003 	&	5.436 	&	8.054 	&	8.807 	&	8.814 	&	8.818 	&	8.831 	&	8.970 	&	5	&	6	\\
	1032	&	6	&	4.850 	&	7.446 	&	7.713 	&	12.403 	&	13.636 	&	13.683 	&	13.686 	&	13.696 	&	13.946 	&	7.5	&	9.5	\\
	&	7	&	6.106 	&	10.376 	&	10.674 	&	17.714 	&	19.437 	&	19.464 	&	19.521 	&	19.547 	&	19.915 	&	10.5	&	14	\\
	&	8	&	8.157 	&	13.602 	&	14.036 	&	23.467 	&	26.276 	&	26.313 	&	26.331 	&	26.363 	&	26.875 	&	14	&	19.5	\\
	
	{\tiny vm1084}	&	4	&	3.894 	&	3.912 	&	3.916 	&	3.964 	&	4.036 	&	4.072 	&	4.098 	&	4.112 	&	4.755 	&	3	&	3.5	\\
	&	5	&	6.307 	&	6.393 	&	6.443 	&	6.783 	&	6.879 	&	6.978 	&	6.994 	&	7.065 	&	8.505 	&	5	&	6	\\
	1084	&	6	&	9.841 	&	10.059 	&	10.200 	&	10.643 	&	10.753 	&	10.762 	&	10.874 	&	10.913 	&	13.243 	&	7.5	&	9.5	\\
	&	7	&	14.847 	&	14.983 	&	15.119 	&	15.266 	&	15.469 	&	15.590 	&	15.709 	&	15.717 	&	18.954 	&	10.5	&	14	\\
	&	8	&	20.338 	&	20.352 	&	20.777 	&	20.846 	&	21.170 	&	21.419 	&	21.652 	&	21.661 	&	25.640 	&	14	&	19.5	\\
	
	d1291	&	4	&	3.908 	&	4.002 	&	4.088 	&	4.124 	&	4.142 	&	4.162 	&	4.232 	&	4.238 	&	4.800 	&	3	&	3.5	\\
	&	5	&	6.604 	&	6.716 	&	7.046 	&	7.060 	&	7.151 	&	7.358 	&	7.373 	&	7.437 	&	8.619 	&	5	&	6	\\
	1291	&	6	&	10.350 	&	10.551 	&	10.992 	&	11.194 	&	11.345 	&	11.660 	&	11.669 	&	11.679 	&	13.438 	&	7.5	&	9.5	\\
	&	7	&	15.270 	&	15.393 	&	15.991 	&	16.378 	&	16.397 	&	16.536 	&	16.606 	&	16.722 	&	19.237 	&	10.5	&	14	\\
	&	8	&	20.649 	&	21.645 	&	21.693 	&	22.065 	&	22.605 	&	22.627 	&	22.647 	&	22.798 	&	26.018 	&	14	&	19.5	\\
	
	rl1323	&	4	&	4.056 	&	4.102	&	4.104 	&	4.154	&	4.162 	&	4.172	&	4.176 	&	4.198	&	4.800 	&	3	&	3.5	\\
	&	5	&	7.038 	&	7.043	&	7.182 	&	7.187	&	7.232 	&	7.251	&	7.252 	&	7.263	&	8.613 	&	5	&	6	\\
	1323	&	6	&	10.895	&	10.98	&	11.002	&	11.017	&	11.054	&	11.081	&	11.118	&	11.333	&	13.433 	&	7.5	&	9.5	\\
	&	7	&	15.872	&	15.97	&	16.042	&	16.18	&	16.218	&	16.342	&	16.344	&	16.390	&	19.243 	&	10.5	&	14	\\
	&	8	&	21.175	&	21.790	&	21.945	&	22.05	&	22.190	&	22.215	&	22.30	&	22.505	&	26.034 	&	14	&	19.5	\\
	
    u1432	&	4	&	4.056	&	4.07	&	4.192	&	4.196	&	4.232	&	4.234	&	4.24	&	4.252	&	4.758 	&	3	&	3.5	\\
    &	5	&	6.894	&	7.129	&	7.368	&	7.409	&	7.418	&	7.44	&	7.442	&	7.49	&	8.523 	&	5	 &	6	\\
    1432 &	6	&	10.925	&	11.559	&	11.632	&	11.769	&	11.775	&	11.794	&	11.798	&	11.813	&	13.282 	&	7.5	&	9.5	\\
    &	7	&	16.172	&	17.034	&	17.254	&	17.281	&	17.351	&	17.370	&	17.388	&	17.499	&	19.024 	&	10.5	&	14	\\
    &	8	&	23.025	&	23.35	&	23.465	&	23.635	&	23.805	&	23.820	&	23.855	&	23.905	&	25.745 	&	14	&	19.5	\\
	
	fl1577	&	4	&	3.774 	&	3.800 	&	3.928 	&	4.002 	&	4.148 	&	4.202 	&	4.296 	&	4.324 	&	4.832 	&	3	&	3.5	\\
	&	5	&	6.030 	&	6.058 	&	6.423 	&	6.554 	&	7.174 	&	7.192 	&	7.502 	&	7.549 	&	8.689 	&	5	&	6	\\
	1577	&	6	&	8.961 	&	9.159 	&	9.451 	&	10.243 	&	10.416 	&	11.217 	&	11.307 	&	11.493 	&	13.554 	&	7.5	&	9.5	\\
	&	7	&	12.788 	&	12.935 	&	13.364 	&	14.932 	&	15.003 	&	15.448 	&	15.537 	&	15.887 	&	19.400 	&	10.5	&	14	\\
	&	8	&	16.471 	&	17.715 	&	17.843 	&	19.295 	&	20.249 	&	20.292 	&	20.716 	&	21.186 	&	26.215 	&	14	&	19.5	\\
	
	d1655	&	4	&	3.626	&	3.788	&	3.826	&	4.312	&	4.336	&	4.34	&	4.342	&	4.344	&	4.799 	&	3	&	3.5	\\
	&	5	&	5.7	&	6.219	&	6.54	&	7.42	&	7.594	&	7.672	&	7.722	&	7.74	&	8.615 	&	5	&	6	\\
	1655	&	6	&	8.573	&	9.582	&	10.331	&	11.343	&	11.89	&	11.916	&	11.961	&	12.038	&	13.428 	&	7.5	&	9.5	\\
	&	7	&	12.241	&	13.967	&	15.412	&	15.918	&	16.85	&	17.012	&	17.111	&	17.405	&	19.225 	&	10.5	&	14	\\
	&	8	&	17.235	&	19.725	&	21.195	&	21.670	&	21.7	&	22.395	&	23.365	&	23.385	&	26.000 	&	14	&	19.5	\\
	
	{\tiny vm1748}	&	4	&	4.106 	&	4.126 	&	4.134 	&	4.144 	&	4.150 	&	4.156 	&	4.168 	&	4.222 	&	4.796 	&	3	&	3.5	\\
	&	5	&	7.010 	&	7.141 	&	7.171 	&	7.200 	&	7.200 	&	7.276 	&	7.344 	&	7.379 	&	8.603 	&	5	&	6	\\
	1748	&	6	&	11.111 	&	11.194 	&	11.407 	&	11.423 	&	11.512 	&	11.513 	&	11.534 	&	11.586 	&	13.400 	&	7.5	&	9.5	\\
	&	7	&	16.323 	&	16.403 	&	16.557 	&	16.560 	&	16.654 	&	16.813 	&	16.847 	&	16.856 	&	19.174 	&	10.5	&	14	\\
	&	8	&	21.969 	&	22.304 	&	22.341 	&	22.380 	&	22.429 	&	22.663 	&	22.847 	&	22.847 	&	25.917 	&	14	&	19.5	\\
	u1817	&	4	&	3.834 	&	4.070 	&	4.124 	&	4.138 	&	4.144 	&	4.158 	&	4.178 	&	4.192 	&	4.784 	&	3	&	3.5	\\
	&	5	&	5.783 	&	6.955 	&	7.021 	&	7.035 	&	7.094 	&	7.174 	&	7.223 	&	7.257 	&	8.582 	&	5	&	6	\\
	1817	&	6	&	9.326 	&	10.439 	&	11.138 	&	11.258 	&	11.271 	&	11.323 	&	11.383 	&	11.601 	&	13.395 	&	7.5	&	9.5	\\
	&	7	&	13.652 	&	14.561 	&	16.012 	&	16.579 	&	16.692 	&	16.703 	&	16.844 	&	16.848 	&	19.203 	&	10.5	&	14	\\
	&	8	&	19.230 	&	19.911 	&	21.950 	&	22.785 	&	23.105 	&	23.184 	&	23.204 	&	23.262 	&	26.003 	&	14	&	19.5	\\
	rl1889	&	4	&	3.820 	&	3.926 	&	4.010 	&	4.046 	&	4.100 	&	4.102 	&	4.128 	&	4.136 	&	4.821 	&	3	&	3.5	\\
	&	5	&	6.439 	&	6.729 	&	6.961 	&	7.121 	&	7.141 	&	7.165 	&	7.293 	&	7.314 	&	8.659 	&	5	&	6	\\
	1889	&	6	&	10.161 	&	10.643 	&	10.723 	&	11.017 	&	11.177 	&	11.387 	&	11.451 	&	11.528 	&	13.503 	&	7.5	&	9.5	\\
	&	7	&	15.224 	&	15.383 	&	15.598 	&	15.906 	&	16.557 	&	16.604 	&	16.784 	&	16.816 	&	19.342 	&	10.5	&	14	\\
	&	8	&	20.859 	&	21.491 	&	21.699 	&	22.191 	&	22.605 	&	22.803 	&	22.980 	&	23.013 	&	26.171 	&	14	&	19.5	\\

		\hline
		\end{tabular}}
		\label{ohcfb}
	\end{center}
\end{table}

\begin{table}
	\begin{center}
		\caption{The eight smallest frequencies of the $OHC$ edges computed with frequency $K_i$s ($i\in[4,8]$) for big and large real-world $TSP$ instances.}
		{\footnotesize \begin{tabular}{ p{0.6cm}  p{0.1cm}  p{0.5cm}  p{0.5cm}  p{0.5cm}  p{0.5cm}  p{0.5cm}  p{0.5cm}  p{0.5cm}  p{0.5cm}  p{0.5cm}  p{0.3cm}  p{0.4cm} }
		\hline
	$TSP$ & $i$ &  & The &  &  smallest &  &  & freq. &  & $f_{avg}$ & $f_{lb}$ &  $f_{oavg}$  \\				
	$/n$ &  & $1^{st}$ & $2^{nd}$ & $3^{rd}$ & $4^{th}$ & $5^{th}$ & $6^{th}$ & $7^{th}$ & $8^{th}$ &  &  &  \\	
	\hline
   	d2103	&	4	&	4.384 	&	4.444 	&	4.458 	&	4.458 	&	4.458 	&	4.462 	&	4.468 	&	4.468 	&	4.819 	&	3	&	3.5	\\
   &	5	&	7.887 	&	7.920 	&	7.938 	&	8.018 	&	8.020 	&	8.032 	&	8.039 	&	8.052 	&	8.665 	&	5	&	6	\\
   2103	&	6	&	12.115 	&	12.642 	&	12.757 	&	12.797 	&	12.802 	&	12.804 	&	12.817 	&	12.872 	&	13.510 	&	7.5	&	9.5	\\
   &	7	&	16.744 	&	18.098 	&	18.312 	&	18.362 	&	18.565 	&	18.577 	&	18.613 	&	18.636 	&	19.339 	&	10.5	&	14	\\
   &	8	&	21.609 	&	24.511 	&	24.600 	&	24.706 	&	25.043 	&	25.058 	&	25.233 	&	25.251 	&	26.153 	&	14	&	19.5	\\
   
   u2152	&	4	&	4.072 	&	4.114 	&	4.142 	&	4.154 	&	4.160 	&	4.170 	&	4.184 	&	4.202 	&	4.788 	&	3	&	3.5	\\
   &	5	&	7.061 	&	7.137 	&	7.141 	&	7.158 	&	7.244 	&	7.245 	&	7.274 	&	7.316 	&	8.592 	&	5	&	6	\\
   2152	&	6	&	11.296 	&	11.299 	&	11.331 	&	11.351 	&	11.389 	&	11.392 	&	11.534 	&	11.566 	&	13.410 	&	7.5	&	9.5	\\
   &	7	&	16.479 	&	16.587 	&	16.597 	&	16.727 	&	16.746 	&	16.889 	&	16.958 	&	16.965 	&	19.233 	&	10.5	&	14	\\
   &	8	&	22.414 	&	22.908 	&	23.044 	&	23.093 	&	23.402 	&	23.445 	&	23.449 	&	23.533 	&	26.040 	&	14	&	19.5	\\
   
   u2319	&	4	&	4.264 	&	4.270 	&	4.342 	&	4.344 	&	4.350 	&	4.356 	&	4.356 	&	4.358 	&	4.760 	&	3	&	3.5	\\
   &	5	&	7.659 	&	7.722 	&	7.752 	&	7.755 	&	7.787 	&	7.788 	&	7.792 	&	7.792 	&	8.533 	&	5	&	6	\\
   2319	&	6	&	12.259 	&	12.274 	&	12.283 	&	12.381 	&	12.397 	&	12.426 	&	12.431 	&	12.435 	&	13.308 	&	7.5	&	9.5	\\
   &	7	&	18.055 	&	18.085 	&	18.101 	&	18.110 	&	18.121 	&	18.128 	&	18.174 	&	18.196 	&	19.069 	&	10.5	&	14	\\
   &	8	&	24.693 	&	24.805 	&	24.818 	&	24.835 	&	24.842 	&	24.868 	&	24.888 	&	24.893 	&	25.806 	&	14	&	19.5	\\
   
   pr2392	&	4	&	4.076 	&	4.134 	&	4.206 	&	4.212 	&	4.232 	&	4.260 	&	4.262 	&	4.286 	&	4.811 	&	3	&	3.5	\\
   &	5	&	7.116 	&	7.246 	&	7.259 	&	7.284 	&	7.350 	&	7.432 	&	7.442 	&	7.465 	&	8.642 	&	5	&	6	\\
   2392	&	6	&	11.504 	&	11.589 	&	11.625 	&	11.687 	&	11.701 	&	11.784 	&	11.804 	&	11.882 	&	13.477 	&	7.5	&	9.5	\\
   &	7	&	16.980 	&	17.101 	&	17.143 	&	17.149 	&	17.165 	&	17.175 	&	17.183 	&	17.200 	&	19.300 	&	10.5	&	14	\\
   &	8	&	22.820 	&	23.296 	&	23.305 	&	23.323 	&	23.331 	&	23.494 	&	23.635 	&	23.695 	&	26.105 	&	14	&	19.5	\\
   
   {\tiny pcb3038}	&	4	&	4.230 	&	4.250 	&	4.268 	&	4.280 	&	4.298 	&	4.306 	&	4.312 	&	4.314 	&	4.812 	&	3	&	3.5	\\
   &	5	&	7.375 	&	7.434 	&	7.451 	&	7.471 	&	7.541 	&	7.657 	&	7.664 	&	7.673 	&	8.645 	&	5	&	6	\\
   3038	&	6	&	11.836 	&	11.926 	&	12.019 	&	12.021 	&	12.112 	&	12.131 	&	12.152 	&	12.161 	&	13.482 	&	7.5	&	9.5	\\
   &	7	&	17.533 	&	17.613 	&	17.692 	&	17.729 	&	17.789 	&	17.800 	&	17.812 	&	17.853 	&	19.303 	&	10.5	&	14	\\
   &	8	&	24.048 	&	24.123 	&	24.145 	&	24.212 	&	24.319 	&	24.421 	&	24.440 	&	24.461 	&	26.110 	&	14	&	19.5	\\
   
   fl3795	&	4	&	3.612 	&	3.746 	&	3.938 	&	4.176 	&	4.220 	&	4.240 	&	4.316 	&	4.462 	&	4.878 	&	3	&	3.5	\\
   &	5	&	5.207 	&	6.073 	&	6.727 	&	7.277 	&	7.331 	&	7.364 	&	7.407 	&	7.580 	&	8.782 	&	5	&	6	\\
   3795	&	6	&	7.661 	&	9.307 	&	10.298 	&	10.913 	&	11.391 	&	11.436 	&	11.676 	&	11.772 	&	13.685 	&	7.5	&	9.5	\\
   &	7	&	9.810 	&	13.225 	&	15.086 	&	15.356 	&	16.447 	&	16.632 	&	16.721 	&	16.868 	&	19.561 	&	10.5	&	14	\\
   &	8	&	12.203 	&	17.621 	&	20.484 	&	20.634 	&	22.287 	&	22.381 	&	22.677 	&	22.887 	&	26.406 	&	14	&	19.5	\\
   
   {\tiny fnl4461}	&	4	&	4.318 	&	4.340 	&	4.350 	&	4.354 	&	4.370 	&	4.374 	&	4.388 	&	4.394 	&	4.806 	&	3	&	3.5	\\
   &	5	&	7.624 	&	7.729 	&	7.742 	&	7.768 	&	7.792 	&	7.795 	&	7.800 	&	7.802 	&	8.635 	&	5	&	6	\\
   4461	&	6	&	12.278 	&	12.338 	&	12.343 	&	12.360 	&	12.363 	&	12.395 	&	12.404 	&	12.409 	&	13.478 	&	7.5	&	9.5	\\
   &	7	&	17.772 	&	17.772 	&	17.839 	&	18.035 	&	18.179 	&	18.184 	&	18.196 	&	18.215 	&	19.314 	&	10.5	&	14	\\
   &	8	&	24.247 	&	24.620 	&	24.675 	&	24.695 	&	24.812 	&	24.822 	&	24.825 	&	24.828 	&	26.145 	&	14	&	19.5	\\
   
   rl5915	&	4	&	4.296 	&	4.342 	&	4.346 	&	4.356 	&	4.368 	&	4.372 	&	4.376 	&	4.380 	&	4.861 	&	3	&	3.5	\\
   &	5	&	7.498 	&	7.643 	&	7.693 	&	7.761 	&	7.783 	&	7.792 	&	7.797 	&	7.802 	&	8.750 	&	5	&	6	\\
   5915	&	6	&	12.182 	&	12.249 	&	12.275 	&	12.346 	&	12.386 	&	12.438 	&	12.472 	&	12.474 	&	13.650 	&	7.5	&	9.5	\\
   &	7	&	17.643 	&	17.754 	&	17.880 	&	17.890 	&	17.898 	&	17.984 	&	18.063 	&	18.125 	&	19.546 	&	10.5	&	14	\\
   &	8	&	23.229 	&	23.841 	&	24.131 	&	24.221 	&	24.316 	&	24.595 	&	24.676 	&	24.678 	&	26.437 	&	14	&	19.5	\\
   
   rl5934	&	4	&	4.298 	&	4.330 	&	4.352 	&	4.356 	&	4.358 	&	4.364 	&	4.368 	&	4.378 	&	4.858 	&	3	&	3.5	\\
   &	5	&	7.394 	&	7.672 	&	7.704 	&	7.743 	&	7.744 	&	7.745 	&	7.772 	&	7.785 	&	8.743 	&	5	&	6	\\
   5934	&	6	&	11.940 	&	12.188 	&	12.197 	&	12.367 	&	12.376 	&	12.381 	&	12.412 	&	12.418 	&	13.644 	&	7.5	&	9.5	\\
   &	7	&	17.514 	&	17.524 	&	17.802 	&	17.834 	&	18.034 	&	18.046 	&	18.088 	&	18.114 	&	19.544 	&	10.5	&	14	\\
   &	8	&	23.563 	&	24.153 	&	24.347 	&	24.400 	&	24.529 	&	24.533 	&	24.623 	&	24.656 	&	26.440 	&	14	&	19.5	\\
   
   {\tiny xsc6880}	&	4	&	4.228	&	4.304	&	4.332	&	4.346	&	4.358	&	4.362	&	4.382	&	4.384	&	4.858 	&	3	&	3.5	\\
   &	5	&	7.44	&	7.443	&	7.685	&	7.717	&	7.748	&	7.748	&	7.752	&	7.771	&	8.742 	&	5	&	6	\\
   6880	&	6	&	11.714	&	11.822	&	12.234	&	12.318	&	12.319	&	12.338	&	12.342	&	12.345	&	13.638 	&	7.5	&	9.5	\\
   &	7	&	17.486	&	17.572	&	17.780	&	17.798	&	17.848	&	17.888	&	17.976	&	18.1	&	19.530 	&	10.5	&	14	\\
   &	8	&	23.99	&	24.052	&	24.204	&	24.398	&	24.470	&	24.552	&	24.582	&	24.636	&	26.413 	&	14	&	19.5	\\
   
   {\tiny pla7397}	&	4	&	4.138 	&	4.164 	&	4.174 	&	4.178 	&	4.180 	&	4.186 	&	4.212 	&	4.232 	&	4.859 	&	3	&	3.5	\\
   &	5	&	7.128 	&	7.129 	&	7.237 	&	7.267 	&	7.322 	&	7.328 	&	7.349 	&	7.364 	&	8.738 	&	5	&	6	\\
   7397	&	6	&	11.191 	&	11.278 	&	11.481 	&	11.611 	&	11.662 	&	11.701 	&	11.725 	&	11.731 	&	13.592 	&	7.5	&	9.5	\\
   &	7	&	16.369 	&	16.769 	&	16.890 	&	16.993 	&	17.090 	&	17.118 	&	17.263 	&	17.291 	&	19.414 	&	10.5	&	14	\\
   &	8	&	22.769 	&	23.287 	&	23.428 	&	23.496 	&	23.595 	&	23.610 	&	23.711 	&	23.813 	&	26.210 	&	14	&	19.5	\\
				
		\hline
		\end{tabular}}
		\label{ohcfbl}
	\end{center}
\end{table}

\subsubsection{The preserved ordinary edges according to the smallest frequencies of $OHC$ edges for the real-world $TSP$ instances}
In this section, we shall investigate the number of the preserved ordinary edges according to the smallest frequencies of the $OHC$ edges. Several smallest frequencies of the $OHC$ edges are taken as the thresholds to filter out the ordinary edges with the smaller frequencies, and the number of the preserved ordinary edges is recorded for each $TSP$ instance, respectively. As $i$ = 4, the average frequency of each edge is computed with all frequency $K_4$s containing them, respectively. The computation time is $O(n^4)$. Thus, the experiments are executed for some small and medium $TSP$ instances. The experimental results are given in Table \ref{nopoes}. Except for the column of $TSP$ name, each of the other columns illustrates the number of the preserved ordinary edges according to the $k^{th}$ smallest frequency of the $OHC$ edges. For example in the $5^{th}$ column, it gives the number of the preserved ordinary edges if the $5^{th}$ smallest frequency of the $OHC$ edges is taken as the frequency threshold to eliminate the edges with the smaller frequency. Five smallest frequencies, i.e., the $1^{st}$, $5^{th}$, $10^{th}$, $15^{th}$ and $20^{th}$ smallest frequency of the $OHC$ edges, are tried and the number of the preserved ordinary edges is computed in Table \ref{nopoes}, respectively. 
\begin{table}
	\begin{center}
		\caption{The number of preserved ordinary edges according to the $k^{th}$ smallest frequency of $OHC$ edges computed with frequency $K_4$s for the real-world $TSP$ instances.}
		{\footnotesize \begin{tabular}{ p{0.7cm}  p{0.5cm}  p{0.5cm}  p{0.5cm}  p{0.5cm}  p{0.5cm}  p{0.6cm}  p{0.6cm}  p{0.6cm}  p{0.6cm}  p{0.6cm}  p{0.6cm} }
				\hline
		$TSP$  &  The & $k^{th}$ & smallest &  & freq. & $TSP$ & The & $k^{th}$ & smallest &  & freq. \\		
		&  $1^{st}$ & $5^{th}$ & $10^{th}$ & $15^{th}$ & $20^{th}$ & $TSP$ & $1^{st}$ & $5^{th}$ & $10^{th}$ & $15^{th}$ & $20^{th}$ \\	
				\hline
		gr21	&	39	&	8	&		&		&		&	si175	&	1714	&	774	&	535	&	239	&	217	\\
		gr24	&	78	&	27	&		&		&		& 	tsp225	&	3298	&	2644	&	2228	&	1970	&	1856	\\
		bay29	&	90	&	37	&		&		&		&	a280	&	7641	&	3312	&	2969	&	2865	&	2746	\\
		bays29	&	102	&	41	&		&		&		&	rd400	&	11808	&	8368	&	6991	&	6480	&	5920	\\
		att48	&	243	&	151	&	121	&	87	&	49	&	gr431	&	17099	&	13331	&	10179	&	8162	&	7481	\\
		eil51	&	193	&	182	&	121	&	88	&	50	&	pcb442	&	14154	&	7644	&	6559	&	6064	&	5818	\\
		berlin52	&	195	&	137	&	104	&	62	&	50	&	att532	&	15761	&	13129	&	12309	&	11254	&	10397	\\
		st70	&	737	&	223	&	164	&	136	&	109	&	si535	&	28646	&	14182	&	6440	&	5031	&	4204	\\
		pr76	&	728	&	320	&	197	&	156	&	143	&	u574	&	19173	&	17605	&	14930	&	13948	&	12385	\\
		rat99	&	1318	&	877	&	597	&	462	&	348	&	d657	&	31601	&	19772	&	15324	&	14067	&	13462	\\
		rd100	&	897	&	667	&	551	&	326	&	244	&	gr666	&	18919	&	12962	&	9832	&	8427	&	7790	\\
		gr120	&	2063	&	1021	&	778	&	676	&	466	&	rat783	&	32185	&	29680	&	26021	&	23739	&	23080	\\
		gr137	&	1879	&	1244	&	1008	&	947	&	882	&	si1032	&	205509	&	2400	&	1946	&	1108	&	550	\\
		ch150	&	2309	&	1143	&	1051	&	967	&	776	&	u1060	&	68005	&	52238	&	48182	&	43069	&	41244	\\
		gr202	&	2008	&	1853	&	1565	&	1404	&	1258	&	rl1302	&	123823	&	71158	&	57125	&	48732	&	43871	\\
		ts225	&	2413	&	1687	&	1572	&	1483	&	1263	&	u1432	&	111651	&	73407	&	63901	&	52964	&	51309	\\		
				\hline
		\end{tabular}}
		\label{nopoes}
	\end{center}
\end{table}

As the ordinary edges with the smaller frequencies are eliminated according to the $k^{th}$ smallest frequency of the $OHC$ edges, the $k$ $OHC$ edges are also neglected. For the first four small $TSP$ instances where $n\leq 29$, the edges are eliminated according to the first and fifth smallest frequency of the $OHC$ edges, respectively. As the first smallest frequency of the $OHC$ edges is taken as the frequency threshold, approximate 20\%$\sim$30\% ordinary edges are preserved for most small $TSP$ instances where $n\leq 150$. As $n$ becomes bigger than 200, the percentage of the preserved ordinary edges becomes smaller, i.e., $10\%\sim15\%$ ordinary edges are preserved for most of them. Moreover, the percentage of the preserved ordinary edges decreases according to $n$. It says that the smallest frequency of the $OHC$ edges is generally rising according to $n$. Thus, more percentage of the ordinary edges have the frequency smaller than the smallest frequency of the $OHC$ edges. However, the number of the preserved ordinary edges is still big. It indicates that the first smallest frequency of the $OHC$ edges is not big enough, and there are many ordinary edges having the bigger frequency computed with the frequency $K_4$s. 

As the $5^{th}$ smallest frequency is taken as the frequency threshold, the percentage of the preserved ordinary edges decreases quickly for most $TSP$ instances. The preserved ordinary edges generally occupy smaller than $15\%$ of the total ordinary edges for the small $TSP$ instances, and smaller than $10\%$ of the total ordinary edges for the medium $TSP$ instances. For some $TSP$ instances, the first smallest frequency of the $OHC$ edges is close to $f_{lb}$, such as gr21, gr24, bay29, bays29, att48, st70, si175, a280, pcb442, si535 and  si1032. For these $TSP$ instances, the percentage of the preserved ordinary edges has one big drop according to the fifth smallest frequency of the $OHC$ edges. It indicates that the frequencies of the $OHC$ edges increases quickly from the first to the fifth smallest frequency. Thus, much more percentage of the ordinary edges are eliminated according to the fifth smallest frequency. In total, the number of the preserved ordinary edges decreases quickly according to the $5^{th}$, $10^{th}$, $15^{th}$ and $20^{th}$ smallest frequency of the $OHC$ edges. It indicates that the frequencies of the $OHC$ edges increase much faster than those of the ordinary edges computed with the frequency $K_4$s.  Although a few $OHC$ edges are neglected, much more ordinary edges will be eliminated according to these frequency thresholds. 

As the frequency $K_i$s containing more vertices are used, the better results will be obtained. The experimental results for five small real-world $TSP$ instances are illustrated in Table \ref{npoes2}. In the table, the percent of the preserved ordinary edges are computed based on the $k^{th}$ smallest probability of the $OHC$ edges where $k\in[1,10]$. For example in the $2^{nd}$ column, it illustrates the percent of the preserved ordinary edges according to the second smallest probability of the $OHC$ edges when the probability of each edge is computed with the frequency $K_i$s for $i\in[4,8]$. Firstly, more than 70\% of the ordinary edges are filtered out according to the first smallest probability of the $OHC$ edges. It indicates that the probabilities of most ordinary edges are smaller than those of the $OHC$ edges based on the frequency $K_i$s. 

Secondly, the percent of the preserved ordinary edges becomes smaller according to $i\in[4,8]$ based on the $k^{th}$ smallest probability, such as according to the $1^{st}$ smallest probability of the $OHC$ edges. Based on each $k^{th}$ smallest probability of the $OHC$ edges, the percent of the preserved ordinary edges decreases by 6\%$\sim$10\% from $i=4$ to 8 for these $TSP$ instances. It indicates that the probabilities of the $OHC$ edges increase faster than those of most ordinary edges according to $i$. Based on Theorems \ref{th33} and \ref{th3}, the probability of an ordinary edge generally decreases according to $i$, and that of an $OHC$ edge generally rises as well. Thus, more percentage of ordinary edges will have the probabilities smaller than those of the $OHC$ edges according to $i$. Thus, the number of the preserved ordinary edges decreases according to $i$. 

Thirdly, the percent of the preserved ordinary edges decreases quickly according to $k$ from 1 to 10. Based on the ten smallest probabilities from $k=1$ to 10 for each $i$, the percent of the preserved ordinary edges decreases by 7\% $\sim$ 20\% in the experiments. In general, the percent of the preserved ordinary edges according to the first smallest probability of the $OHC$ edges is two or three times of that according to the $10^{th}$ smallest probability of the $OHC$ edges. It means that the first smallest probability of the $OHC$ edges is  relatively small comparing to those of many ordinary edges. Based on the $k^{th}$ smallest probability of the $OHC$ edges where $k>1$, much more number of ordinary edges are filtered out although only $k\in[2,10]$ $OHC$ edges are neglected. It also indicates that the probability of the $OHC$ edges increases quickly from the smallest value to the big values according to $k$. The experiments again illustrates 
that the number of the $OHC$ edges with the probability close to $\frac{1}{2}$ is very small, even if for the small real-world $TSP$ instances. Most $OHC$ edges have the probability much bigger than $\frac{1}{2}$, especially for the medium and big $TSP$ instances. In this case, the percent of the preserved ordinary edges decreases quickly according to the $k^{th}$ ($k > 1$) smallest probability of the $OHC$ edges as $k$ is relatively big. 

\begin{table}
	\begin{center}
		\caption{The percent (\%) of preserved ordinary edges according to the $k^{th}$ smallest probability of $OHC$ edges based on frequency $K_i$s ($i\in[4,8]$) for five $TSP$ instances.}
		{\footnotesize \begin{tabular}{ p{1cm}  p{0.2cm}  p{0.5cm}  p{0.5cm}  p{0.5cm}  p{0.5cm}  p{0.5cm}  p{0.5cm}  p{0.5cm}  p{0.5cm}  p{0.5cm}  p{0.5cm} }
			\hline
		$TSP$  &  $i$ &  & the &  & $k^{th}$ &   & smallest &  & prob. &  &   \\
		&   & $1^{st}$ & $2^{nd}$ & $3^{rd}$ & $4^{th}$ & $5^{th}$ & $6^{th}$ & $7^{th}$ & $8^{th}$ & $9^{th}$ & $10^{th}$  \\	
			\hline
		kroA100	&	4	&	25.65 	&	23.92 	&	23.57 	&	18.43 	&	17.34 	&	16.66 	&	14.52 	&	13.79 	&	13.67 	&	12.78 	\\
		&	5	&	28.82 	&	24.89 	&	23.44 	&	20.08 	&	17.09 	&	14.49 	&	14.39 	&	14.27 	&	12.62 	&	12.52 	\\
		&	6	&	24.82 	&	23.84 	&	17.18 	&	15.48 	&	14.10 	&	11.88 	&	11.24 	&	9.51 	&	9.30 	&	8.80 	\\
		&	7	&	21.65 	&	19.86 	&	16.45 	&	12.76 	&	10.60 	&	10.37 	&	8.39 	&	8.25 	&	6.64 	&	5.92 	\\
		&	8	&	17.77 	&	17.75 	&	13.20 	&	12.82 	&	8.60 	&	7.90 	&	7.03 	&	6.35 	&	5.20 	&	5.05 	\\
		kroB100	&	4	&	22.08 	&	17.90 	&	14.72 	&	14.06 	&	14.02 	&	13.38 	&	13.11 	&	12.31 	&	11.73 	&	11.26 	\\
		&	5	&	18.80 	&	17.01 	&	13.53 	&	13.38 	&	12.85 	&	12.41 	&	12.41 	&	12.00 	&	11.94 	&	11.90 	\\
		&	6	&	18.12 	&	16.27 	&	10.60 	&	10.14 	&	9.86 	&	9.86 	&	9.44 	&	9.38 	&	9.09 	&	8.82 	\\
		&	7	&	15.67 	&	12.25 	&	8.54 	&	8.23 	&	7.51 	&	7.34 	&	6.97 	&	6.95 	&	6.74 	&	6.74 	\\
		&	8	&	13.98 	&	9.55 	&	8.19 	&	7.75 	&	7.46 	&	6.08 	&	6.10 	&	5.98 	&	5.59 	&	5.38 	\\
		kroC100	&	4	&	26.10 	&	25.11 	&	22.87 	&	18.62 	&	17.94 	&	13.86 	&	12.85 	&	12.43 	&	11.61 	&	10.52 	\\
		&	5	&	26.39 	&	25.20 	&	22.27 	&	17.75 	&	17.59 	&	13.79 	&	12.31 	&	12.23 	&	10.85 	&	9.86 	\\
		&	6	&	24.95 	&	23.96 	&	19.18 	&	14.72 	&	12.54 	&	10.39 	&	8.43 	&	7.98 	&	7.40 	&	7.38 	\\
		&	7	&	22.04 	&	18.99 	&	14.58 	&	10.95 	&	9.53 	&	9.36 	&	7.36 	&	6.37 	&	6.29 	&	5.71 	\\
		&	8	&	20.45 	&	16.47 	&	11.55 	&	9.84 	&	8.27 	&	6.93 	&	6.49 	&	6.41 	&	5.65 	&	4.93 	\\
		kroD100	&	4	&	29.59 	&	25.77 	&	23.32 	&	18.76 	&	18.43 	&	18.33 	&	11.59 	&	11.03 	&	10.89 	&	10.70 	\\
		&	5	&	27.73 	&	24.04 	&	21.26 	&	20.47 	&	16.85 	&	16.41 	&	11.96 	&	11.51 	&	10.87 	&	9.92 	\\
		&	6	&	27.86 	&	19.48 	&	17.20 	&	16.85 	&	13.38 	&	13.18 	&	12.02 	&	8.39 	&	8.25 	&	8.02 	\\
		&	7	&	21.57 	&	14.82 	&	14.45 	&	13.98 	&	12.56 	&	10.52 	&	9.77 	&	7.03 	&	6.29 	&	6.19 	\\
		&	8	&	18.60 	&	12.19 	&	12.16 	&	10.78 	&	10.10 	&	9.26 	&	8.06 	&	6.95 	&	6.80 	&	5.98 	\\
		rd100	&	4	&	19.05 	&	16.06 	&	15.26 	&	15.26 	&	13.34 	&	12.95 	&	12.91 	&	12.29 	&	11.98 	&	11.94 	\\
		&	5	&	18.21 	&	16.91 	&	13.38 	&	12.58 	&	11.71 	&	11.34 	&	11.36 	&	10.19 	&	9.98 	&	9.90 	\\
		&	6	&	16.23 	&	12.60 	&	11.59 	&	10.82 	&	10.02 	&	9.69 	&	9.32 	&	7.75 	&	7.73 	&	7.53 	\\
		&	7	&	13.84 	&	9.79 	&	9.53 	&	9.13 	&	8.33 	&	7.71 	&	7.34 	&	6.49 	&	5.73 	&	4.91 	\\
		&	8	&	12.74 	&	8.25 	&	6.62 	&	6.58 	&	6.41 	&	6.25 	&	6.16 	&	5.36 	&	5.07 	&	4.27 	\\
			\hline
		\end{tabular}}
		\label{npoes2}
	\end{center}
\end{table}

\subsubsection{The probability decrements for the $OHC$ edges according to $i$ for the real-world $TSP$ instances}
Based on the experimental results in Tables \ref{ohcfs}$\sim$\ref{ohcfbl}, it is known that the $OHC$ edges are contained in more and more percentage of $OP^i$s according to $n$ for given $i$. Moreover, they are contained in more percentage of $OP^i$s according to $i$ for given $n$. Thus, the average frequencies and probabilities of most $OHC$ edges increase according to $n$ and $i$, respectively. On the other hand, the average frequencies and probabilities of most ordinary edges decrease according to $n$ and $i$, respectively. In theory, the probability of an ordinary edge decreases according to $i > i_d$ computed based on formula (\ref{F5}). The experimental results totally conform to Theorems \ref{th3} and \ref{th33}. In addition, if the probability $p_i(e)$ of an $OHC$ edge decreases from $i$ to $i+1$, the decrement $pd_i(e) < \frac{2p_i(e)}{i(i-1)}$ holds. Since $p_i(e)\leq 1$, $pd_i(e)$ will decrease quickly according to $i$. For most ordinary edges, the probability decrement $pd_i(g) > \frac{2p_i(g)}{i(i-1)}$ happens even if $i$ is small.  As $p_i(g) < 1$ decreases, the $pd_i(g)$ will becomes bigger according to $i$. Thus, the condition $pd_i(g) < \frac{2p_i(g)}{i(i-1)}$ becomes more strict (to filter out the ordinary edges) according to $i$. In applications, the condition $pd_i(g)  > \frac{2p_i(g)}{i(i-1)}$ will be better than the lower probability bound $\frac{1}{2}$ to identify more ordinary edges. We first investigate this condition for the $OHC$ edges with the $TSP$ instances.  

\begin{table}
	\begin{center}
		\caption{The biggest error $err= max\left\{pd_i(e)-\frac{2p_i(e)}{i(i-1)}\right\}$ from $i$ to $i+1$ with respect to all $OHC$ edges for the real-world $TSP$ instances where $pd_i(e) = p_i(e)-p_{i+1}(e)$.}
		{\footnotesize \begin{tabular}{ p{0.7cm}  p{0.8cm}  p{0.8cm}  p{0.8cm}  p{0.8cm}  p{0.7cm}  p{0.8cm}  p{0.8cm}  p{0.8cm}  p{0.8cm}  }
				\hline
		$TSP$  & $i$  & $\to$ & $i+1$ &  & $TSP$ & $i$  & $\to$ & $i+1$ &  \\
		&  $4\to 5$ & $5\to 6$ & $6\to 7$ & $7\to 8$ &  & $4\to 5$ & $5\to 6$ & $6\to 7$ & $7\to 8$ \\	
			\hline
		att48	&	-0.017 	&	-0.005 	&	-0.006 	&	\textbf{0.005}/1 	&	u574	&	-0.093 	&	-0.075 	&	-0.042 	&	-0.024 	\\
		eil51	&	-0.069 	&	-0.040 	&	-0.015 	&	-0.009 	&	rt575	&	-0.094 	&	-0.075 	&	-0.050 	&	-0.026 	\\
		berlin52	&	-0.073 	&	-0.055 	&	-0.030 	&	-0.016 	&	p654	&	-0.026 	&	-0.002 	&	\textbf{0.001}/1 	&	\textbf{0.011}/1 	\\
		st70	&	-0.034 	&	-0.019 	&	-0.025 	&	-0.012 	&	d657	&	-0.098 	&	-0.079 	&	-0.049 	&	-0.033 	\\
		pr76	&	-0.027 	&	-0.013 	&	\textbf{0.007}/1 	&	\textbf{0.010}/1 	&	gr666	&	-0.106 	&	-0.086 	&	-0.049 	&	-0.030 	\\
		rat99	&	-0.012 	&	-0.043 	&	-0.028 	&	-0.011 	&	u724	&	-0.111 	&	-0.081 	&	-0.045 	&	-0.028 	\\
		kroA100	&	-0.021 	&	-0.013 	&	-0.004 	&	\textbf{0.001}/1 	&	rat783	&	-0.106 	&	-0.080 	&	-0.052 	&	-0.017 	\\
		kroB100	&	-0.051 	&	-0.039 	&	-0.019 	&	-0.007 	&	si1032	&	\text{0.011}/1 	&	\textbf{0.010}/1 	&	\textbf{0.011}/1 	&	-0.002 	\\
		rd100	&	-0.077 	&	-0.050 	&	-0.022 	&	-0.013 	&	vm1084	&	-0.090 	&	-0.076 	&	-0.028 	&	-0.014 	\\
		lin105	&	\textbf{0.001}/1 	&	\textbf{0.012}/1 	&	-0.026 	&	\textbf{0.002}/1 	&	d1291	&	-0.105 	&	-0.077 	&	-0.052 	&	-0.024 	\\
		gr120	&	-0.017 	&	-0.035 	&	-0.024 	&	-0.004 	&	fl1577	&	-0.075 	&	-0.049 	&	-0.025 	&	-0.008 	\\
		bier127	&	-0.094 &	-0.036 	&	-0.032 	&	-0.015 	&	vm1748	&	-0.131 	&	-0.074 	&	-0.050 	&	-0.022 	\\
		ch130	&	-0.075 	&	-0.047 	&	-0.036 	&	-0.014 	&	u1817	&	-0.046 	&	-0.070 	&	-0.044 	&	-0.032 	\\
		gr137	&	-0.042 	&	-0.028 	&	\textbf{0.004}/1 	&	\textbf{0.010}/1 	&	rl1889	&	-0.113 	&	-0.081 	&	-0.051 	&	-0.032 	\\
		kroA200	&	-0.036 	&	-0.060 	&	-0.037 	&	-0.019 	&	d2103	&	-0.160 	&	-0.085 	&	-0.044 	&	-0.012 	\\
		gr202	&	-0.102 	&	-0.072 	&	-0.036 	&	-0.022 	&	u2152	&	-0.135 	&	-0.085 	&	-0.048 	&	-0.036 	\\
		a280	&	-0.069 	&	-0.068 	&	-0.046 	&	-0.029 	&	u2319	&	-0.153 	&	-0.099 	&	-0.063 	&	-0.037 	\\
		lin318	&	-0.077 	&	-0.058 	&	-0.040 	&	-0.014 	&	pr2392	&	-0.133 	&	-0.094 	&	-0.060 	&	-0.034 	\\
		rd400	&	-0.102 	&	-0.065 	&	-0.038 	&	-0.015 	&	pcb3038	&	-0.142 	&	-0.093 	&	-0.057 	&	-0.034 	\\
		fl417	&	-0.023 	&	-0.008 	&	-0.002 	&	\textbf{0.030}/2 	&	fl3795	&	-0.019 	&	-0.042 	&	\textbf{0.010}/1 	&	\textbf{0.009}/1 	\\
		pr439	&	-0.114 	&	-0.048 	&	-0.031 	&	-0.016 	&	fnl4461	&	-0.155 	&	-0.096 	&	-0.061 	&	-0.038 	\\
		pcb442	&	-0.094 	&	-0.068 	&	-0.048 	&	-0.018 	&	rl5915	&	-0.143 	&	-0.094 	&	-0.057 	&	-0.029 	\\
		att532	&	-0.095 	&	-0.062 	&	\textbf{0.009}/1 	&	-0.024 	&	rl5934	&	-0.142 	&	-0.097 	&	-0.055 	&	-0.037 	\\
		si535	&	-0.042 	&	-0.035 	&	\textbf{0.019}/1 	&	\textbf{0.002}/1 	&	pla7397	&	-0.132 	&	-0.097 	&	-0.057 	&	-0.035 	\\
				\hline
		\end{tabular}}
		\label{proberr}
	\end{center}
\end{table}

Firstly, the error between the probability decrement $pd_i(e)=p_i(e) - p_{i+1}(e)$ and $\frac{2p_i(e)}{i(i-1)}$ from $i$ to $i+1$ is computed for all $OHC$ edges for a given $TSP$ instance, and the biggest error denoted by $err$ is recorded for each $i$, respectively, where $i\in [4,7]$. In other words, from $i$ to $i+1$ for given $i$, $pd_i(e) = p_i(e)-p_{i+1}(e)$ is computed for every edge $e\in OHC$ for a given $TSP$ instance, and  $err = max\left\{pd_i(e) - \frac{2p_i(e)}{i(i+1)}\right\}$ is computed considering all $OHC$ edges. If $p_i(e) \leq p_{i+1}(e)$ exists for certain $e$, it means that $p_i(e)$ increases from $i$ to $i+1$, and  $pd_i(e) < 0$ must hold. If $p_i(e) > p_{i+1}(e)$ occurs for $e$, it indicates that $p_i(e)$ decreases from $i$ to $i+1$. For an edge $e\in OHC$,  $pd_i(e) \leq \frac{2p_i(e)}{i(i-1)}$ exists. It implies that $err\leq 0$ exists for an $OHC$ edge from $i$ to $i+1$. The experimental results are illustrated in Table \ref{proberr}.  Due to the selection of random frequency $K_i$s for each $e$ to compute $p_i(e)$ , some $OHC$ edges may have an $err > 0$. The positive $err$ is denoted by the boldface text, and the number of such $OHC$ edges is given after the positive $err$ for each $TSP$ instance and  the specific $i$. One can compare such number with the $TSP$ scale $n$. 

In view of the value of $err$ for each $TSP$ instance, one sees that $err < 0$ appears for nearly all the $OHC$ edges from $i$ to $i+1$. It says that Theorem \ref{th3} works well for the $OHC$ edges in different types and scales of $TSP$ instances. For one or two $OHC$ edges  related to some $TSP$ instances from certain $i$ to $i+1$, $err$ is bigger than zero, such as att48, fl417 and p654 from $i=$7 to 8, pr76, gr137, si535, p654, and fl3795 from $i=6$ to 7 and from $i=$7 to 8. However, the number of such $OHC$ edges is only 1 or at most 2 (only for fl417 from $i=$7 to 8) for each of these $TSP$ instance. The number of the $OHC$ edges with $err > 0$ is too small comparing with the scale $n$ of $TSP$. It indicates that the probability decrement for nearly all $OHC$ edges for different types of $TSP$ instances conforms to Theorem \ref{th3}. 

It mentions that some $TSP$ instances, such as pr76, fl417, si535, p654, si1032, fl3795, etc., contain many equal-weight edges. There will be more than ${{i}\choose{2}}$ $OP^i$s in many $K_i$s in the $K_n$s related to these $TSP$ instances. As the $OP^i$s are used to compute the probability of each edge, it is difficult to choose the right $OP^i$s for computing the high probability for some $OHC$ edges. In this case, the probability of these $OHC$ edges will be lowered if many  $K_i$s containing these $OP^i$s are selected and used. Thus, $err > 0$ will appear for these $OHC$ edges. 

Although the value of $err$ is smaller than zero for the $OHC$ edges, it approaches zero according to $i$. It is known that  $\frac{2p_i(e)}{i(i+1)}$ decreases according to $i$ because $p_i(e) \leq 1$. It indicates that $pd_i(e)=  \vert p_{i+1}(e) - p_i(e) \rvert < \frac{2p_i(e)}{i(i+1)}$ also becomes smaller according to $i$ whether $p_i(e)$ increases or decreases. Otherwise, the $err$ will deviate from zero according to $i$. Thus, $p_i(e)$ will have certain bigger decrement or increment from $i$ to $i+1$ as $i$ is small. On the other hand, it will have the relatively smaller increment or decrement as $i$ is big. The value of  $\frac{2p_i(e)}{i(i-1)}$ gives the restrict condition to $pd_i(e)$ for an $OHC$ edge from $i$ to $i+1$. As $i$ is big, $\frac{2p_i(e)}{i(i-1)}$ will tend to zero. Since $p_i(e) - p_{i+1}(e) \leq \frac{2p_i(e)}{i(i+1)}$ holds, $p_i(e) \leq p_{i+1}(e)$ is derived. It says that $p_i(e)$ will increase for $OHC$ edges as $i$ is big. The experimental results also approved the condition for the $OHC$ edges. 

Comparing with the $err$s for the small and big $TSP$ instances, it found that $err$ becomes smaller according to $n$ for given $i\to i+1$. It means that $p_i(e)$ increases according to $n$ for the $OHC$ edges. Because $n$ is not very small for these $TSP$ instances ($n\geq 48$ for the $TSP$ instances in Table \ref{proberr}), $p_i(e)$ and $p_{i+1}(e)$ do not have much difference for an $OHC$ edge from $i$ to $i+1$. Thus, the value of $err$ is mainly constrained by $\frac{2p_i(e)}{i(i-1)}$ rather than $p_i(e)-p_{i+1}(e)$. As $n$ rises, $p_i(e)$ will become bigger accordingly for general $TSP$ instances. In addition, $p_i(e)$ for $OHC$ edges usually increases according to $i$ for the big $TSP$ instances. Based on the formula $err=max\left\{p_i(e)-p_{i+1}(e)-\frac{2p_i(e)}{i(i-1)}\right\}$, the big scale of $TSP$ will have the smaller value of $err < 0$. 


\subsubsection{The number of edges conforming to the probability decrement condition for the $OHC$ edges for the real-world $TSP$ instances}
Based on $p_i(e) > p_{i+1}(e)$ and $err > 0$, the number of the $OHC$ edges and ordinary edges meeting the conditions will be studied according to $i\in[4,7]$ for the five $TSP$ instances kroA100, kroB100, kroC100, kroD100 and rd100. The experimental results are illustrated in Table \ref{nooes}. In the third column and fourth column, the number of the $OHC$ edges with $p_i(e)>p_{i+1}(e)$ and $err>0$ is given, respectively. In the fifth column and sixth column, the number of such ordinary edges is given, respectively. In the seventh column, the number of edges with the  probability smaller than the probability bound $Lbp = \frac{1}{2}$ is given, and the total number of edges in $K_n$ is given in the last column. 

\begin{table}
	\begin{center}
		\caption{The number of the $OHC$ edges and ordinary edges with $p_i(e) > p_{i+1}(e)$ and $err > 0$ based on frequency $K_i$s ($i\in[4,8]$) for five $TSP$ instances.}
		{\footnotesize \begin{tabular}{ p{0.8cm}  p{1.2cm}  p{1.3cm}  p{1cm}  p{1.3cm}  p{1.2cm}  p{0.8cm}  p{0.5cm}}
				\hline
				$TSP$  &  $i\to i+1$ & $OHC$ & edges & Ordinary & edges  & $<Lbp$  & $N_{tot}$\\
				&   & $p_i>p_{i+1}$ & $err > 0$ & $p_i>p_{i+1}$ & $err > 0$  &  & \\
				\hline
				kroA100	&	$4\to5$	&	12	&	0	&	4478	&	3129	&	3536	&	4950	\\
				&	$5\to6$	&	9	&	0	&	4609	&	3699	&	3853	&	4950	\\
				&	$6\to7$	&	14	&	0	&	4676	&	4044	&	4029	&	4950	\\
				&	$7\to8$	&	13	&	1	&	4672	&	4218	&	4140	&	4950	\\
				kroB100	&	$4\to5$	&	17	&	0	&	4491	&	3092	&	3561	&	4950	\\
				&	$5\to6$	&	11	&	0	&	4624	&	3686	&	3858	&	4950	\\
				&	$6\to7$	&	13	&	0	&	4674	&	4019	&	4026	&	4950	\\
				&	$7\to8$	&	18	&	0	&	4678	&	4249	&	4140	&	4950	\\
				kroC100	&	$4\to5$	&	12	&	0	&	4470	&	3110	&	3540	&	4950	\\
				&	$5\to6$	&	11	&	0	&	4614	&	3692	&	3809	&	4950	\\
				&	$6\to7$	&	12	&	0	&	4666	&	4030	&	3994	&	4950	\\
				&	$7\to8$	&	22	&	0	&	4667	&	4237	&	4127	&	4950	\\
				kroD100	&	$4\to5$	&	15	&	0	&	4499	&	3094	&	3535	&	4950	\\
				&	$5\to6$	&	12	&	1	&	4599	&	3681	&	3842	&	4950	\\
				&	$6\to7$	&	16	&	0	&	4637	&	4044	&	4028	&	4950	\\
				&	$7\to8$	&	18	&	0	&	4682	&	4229	&	4152	&	4950	\\
				rd100	&	$4\to5$	&	7	&	0	&	4423	&	3098	&	3491	&	4950	\\
				&	$5\to6$	&	9	&	0	&	4610	&	3659	&	3794	&	4950	\\
				&	$6\to7$	&	9	&	0	&	4705	&	4011	&	3973	&	4950	\\
				&	$7\to8$	&	19	&	0	&	4739	&	4264	&	4092	&	4950	\\
				
				\hline
		\end{tabular}}
		\label{nooes}
	\end{center}
\end{table}

For each $TSP$ instance, there are quite a few $OHC$ edges with $p_i(e) > p_{i+1}(e)$ from $i$ to $i+1$ where $i\in[4,7]$. However, the number of the $OHC$ edges with $err > 0$ is just zero or one from certain $i$ to $i+1$. It indicates that nearly all $OHC$ edges conform to the condition $p_{i+1}(e)\geq \left[1 - \frac{2}{i(i-1)}\right]p_i(e)$ if $p_i(e)$ decreases from $i$ to $i+1$. Even if the probabilities of some $OHC$ edges decrease from $i$ to $i+1$, they decrease very slow. To reduce the computation time, we did not compute the average frequency and probability  with all the frequency $K_i$s for each edge. Since 1000 frequency $K_i$s containing each edge are selected at random, the results in two or more experiments will have slight difference for some edges. If more number of frequency $K_i$s containing an edge are used, the experimental results will be better.

For the ordinary edges, one sees that there are a lot of them with $p_i(g) > p_{i+1}(g)$ from each $i$ to $i+1$ where $i\in[4,7]$, and the number of such ordinary edges increases according to $i$. It indicates that the probabilities of most ordinary edges decrease according to $i$. Moreover, most of the ordinary edges have the $err > 0$, and the number of such ordinary edges also increases according to $i$. It says that the probabilities of most ordinary edges begin decreasing according to the small  $i$s, and most of them decrease quickly according to $i$. In addition, the number of the ordinary edges with $p_i(g)<Lbp$ is increasing according to $i$. It says more and more ordinary edges are contained in the smaller percentage of $OP^i$s according to $i$. On the other hand, the $OHC$ edges and the other ordinary edges will be contained in more percentage of the $OP^i$s. Theorems \ref{th3} and \ref{th33} are approved by these experimental results. 

Comparing the number of ordinary edges with $p_i(e) > p_{i+1}(e)$ with that of ordinary edges with $err>0$, it is observed that the number of ordinary edges with $err>0$ increases faster according to $i$. It also rises faster than the number of ordinary edges with  
$p_i(g) < Lbp$. As $i$ is small, such as $i=4\to 5$ and $i=5\to 6$, the number of the ordinary edges with $err>0$ is obviously smaller than that of the ordinary edges with $p_i(g) < Lbp$. If $i=6\to 7$ and $i=7\to 8$, the number of the ordinary edges with $err>0$ approaches and becomes bigger than that of the ordinary edges with $p_i(g) < Lbp$. Thus, the condition $p_{i+1}(e) > \left[1 - \frac{2}{i(i-1)}\right]p_i(e)$ is better than the lower frequency bound and probability bound to separate $OHC$ edges from most ordinary edges in real-life applications.

The percents of the $OHC$ edges and ordinary edges with $err < 0$ from $i$ to $i+1$ are computed for the five $TSP$ instances, and the percent changes are lined according to $i + 1\in [5,8]$ and illustrated in Figure \ref{perchg}. The percent changes for the $OHC$ edges are denoted with the solid lines, and those for the ordinary edges are denoted with the dashed lines. Although the $OHC$s of the five $TSP$ instances have different structures, the percents of the $OHC$ edges with $err < 0$ for  the same pair of $i$ and $i+1$ are nearly equal where $i\in [4,7]$. For the ordinary edges related to the five $TSP$ instances, the percents of them with $err < 0$ are also very close from the same $i$ to $i+1$. It implies that nearly all $OHC$ edges of different $TSP$ instances conform to the condition $p_{i+1}(e) > \left[1-\frac{2}{i(i-1)}\right]p_i(e)$, whereas most ordinary edges comply with the opposite condition $p_{i+1}(g) < \left[1-\frac{2}{i(i-1)}\right]p_i(g)$ according to $i$. 

For different $TSP$ instances, the percent changes for the $OHC$ edges with $err<0$ have the similar trend according to $i$, and the percent changes for the ordinary edges with $err<0$ are also similar to each other. Moreover, the percents of the $OHC$ edges with $err < 0$ are close to 100\% for the five $TSP$ instances for given $i$, and they are nearly equal according to $i$. It implies that every $OHC$ edge is contained in the nearly equal percentage of $OP^i$s or more percentage of $OP^i$s according to $i$. 

On the other hand, the percents of the ordinary edges with $err < 0$ are below 37\% from $i=4$ to 5 for the five $TSP$ instances, and they decrease quickly according to $i$. From $i=7$ to 8, the percents of such ordinary edges decrease below 13\% for the five $TSP$ instances. It implies that more ordinary edges are contained in the smaller percentage of $OP^i$s according to $i$. Since $OHC$ edges will be contained in more percentage of $OP^i$s according to $n$, the percent of the ordinary edges with $err < 0$ will become smaller for the big scale of $TSP$. As $i$ is bigger than certain number, it can be predicted that all ordinary edges will conform to the condition $p_{i+1}(g) < \left[1-\frac{2}{i(i-1)}\right]p_i(g)$ from $i$ to $i+1$. Theorems \ref{th3} and \ref{th33} are verified by the experimental results. 

\begin{figure}
	\centering
	\includegraphics[width=3in,bb=0 0 400 200]{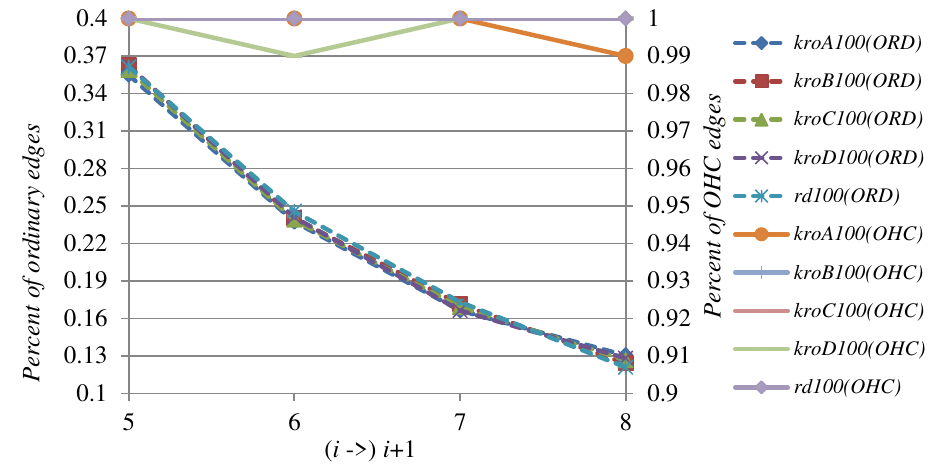}
	\caption{The percent changes for the ordinary edges ($ORD$) and $OHC$ edges with $err < 0$ according to $i+1$ for kroA100, kroB100, kroC100, kroD100 and rd100.}
	\label{perchg}
\end{figure}

 \subsubsection{The probability changes for all edges and $OHC$ edges according to $i$ for the real-world $TSP$ instances }
 The theorems and the above experimental results illustrated that an $OHC$ edge will be contained in more percentage of $OP^i$s than an ordinary edge in the average case. Moreover, the experiments illustrated that the number of the $OHC$ edges having the probability (or average frequency) close to the lower probability bound (or lower frequency bound) is very small for big scale of $TSP$. It says that the probabilities of the $OHC$ edges will increase faster than those of ordinary edges and all of edges according to $i$. In addition, the $OHC$ edges will be contained in more percentage of $OP^i$s according to $n$. Thus, the probabilities of $OHC$ edges will increase faster than those of ordinary edges according to $i$ and $n$, respectively. As a matter of fact, the probabilities of most ordinary edges will decrease according to $i$ even if $i$ is small. The experiments are executed to show the probability change for the $OHC$ edges and all edges according to $i$ for the five $TSP$ instances kroA100, kroB100, kroC100, kroD100 and rd100. For each $TSP$ instance, the probability of each edge is computed with the frequency $K_i$s for given $i$ where $i\in[4,8]$. Then the probabilities of the ${{100}\choose{2}}$ edges for each instance for given $i$ are ordered from small to big values, and one probability sequence $(p_1,p_2,...,p_k,...,p_{4950})$ is obtained, respectively,  where $k\in[1,4950]$. The ordered probabilities for each instance for given $i$ are lined according to $k\in[1,4950]$, respectively. For the $100$ $OHC$ edges for each instance, their probabilities are selected and lined according to their orders with respect to the probability sequence for given $i$, respectively. The probability changes for all edges and $OHC$ edges for the five $TSP$ instances are illustrated according to $i$ in Figures \ref{kroAB} and \ref{kroCDR}, respectively. 
 
 In the two pictures, the probability changes for the $OHC$ edges are denoted with the solid lines, and those for all edges are denoted with the dashed lines. In Figure \ref{kroAB}, the probability changes for the $OHC$ edges and all edges based on frequency $K_9$s are added for kroA100. It is observed that the probability changes related to all edges for the five $TSP$ instances are similar to each other according to given $i$, such as $i=4$, 5, etc. It implies that the probabilities of all edges conform to the nearly same distribution for the same size of $TSP$ instances. It says that the probabilities of edges computed with the frequency $K_i$s show the similar structure properties of edges for different $TSP$ instances. In addition, the number of edges with  $p_i(e) > \frac{1}{2}$ is much smaller than that of edges with  $p_i(e) < \frac{1}{2}$ for each of the $TSP$ instances. Moreover, the number of edges with $p_i(e) > \frac{1}{2}$ becomes smaller according to $i$. It says that more edges are contained in the smaller percentage of $OP^i$s according to $i$ so they have the smaller probabilities. One also sees that the small probabilities have obvious distinctions with respect to the same $k$ and different $i$s. For example according to a given $k$, the small probability with respect to certain $i+1$ is below that with respect to $i$. Based on Theorem \ref{th33}, two ordinary edges will keep their order relation with respect to their probabilities as $i$ is big. Most ordinary edges will keep the same order number $k$ with respect to the probability sequences according to $i$. In view of the probability changes for all edges, most ordinary edges are contained in the smaller percentage of $OP^i$s according to $i$. 
 
 Moreover, the smallest probability of all edges also decreases according to $i$. As $i=4$, the smallest probability is close to 0.3 for the five $TSP$ instances. It becomes smaller than or near 0.1 as $i=5$. It continues decreasing according to $i$, and becomes zero as $i=7$ or 8. Meanwhile, the small probabilities of most edges decrease accordingly with this velocity. It says that the probabilities of most edges decrease quickly according to $i$. As $i$ is relatively big, the probabilities of most edges will approach zero, and only a small number of edges have the relatively big probabilities, see the probability changes for the small $TSP$ instances in Figure \ref{fcalle14}. In fact, most edges with the small probabilities are ordinary edges. The average probability of ordinary edges decreases according to $i$ based on Theorem \ref{th33}. The experiments illustrated that the probabilities of most ordinary edges decrease fast according to $i$. 
 
 If the probability is bigger than $\frac{1}{2}$, the difference between the probabilities according to different $i$s and the same $k$ is small for each of the $TSP$ instances. Moreover, the difference between the big probabilities at the same $k$ becomes smaller according to $i$ for these $TSP$ instances. For the edges with the big probabilities, the probability change conforms to the condition $p_{i+1}(e)\in \left[1-\frac{2}{i(i-1)}, 1+\frac{2}{i(i-1)}\right]p_i(e)$ from $i$ to $i+1$. Thus, the big probabilities at the same $k$ become closer according to $i$. 
 
 In view of the probability changes for the $OHC$ edges for the five $TSP$ instances for given $i$, they also have the similar trends according to $k$. For the same size of $TSP$ instances, the probabilities of the $OHC$ edges computed with the frequency $K_i$s have the nearly equal statistical parameters, such as the nearly equal minimum value, average value and maximum value, etc. Comparing the probability changes for the $OHC$ edges with those for all edges, the probabilities of the $OHC$ edges increase much faster according to $k$. For example for kroA100 as $i=4$, the smallest probability of the $OHC$ edges increases to the biggest probability according to $k$ within the narrow interval $[3606,4950]$. Moreover, the probabilities of the $OHC$ edges increase faster according to $i$ because the curve slopes become steeper. One sees that the biggest probability of the $OHC$ edges increases while the smallest probability increases or decreases according to $i$. Whether the smallest probability increases or decreases, the interval for $k$ changing becomes narrower and narrower according to $i$. Thus, the probabilities of the $OHC$ edges increase faster than those for all edges according to $i$. The probability changes imply that more percentage of the $OHC$ edges are used to build the $OP^i$s according to $i$. On the other hand, the smaller percentage of the ordinary edges are contained in the $OP^i$s according to $i$. 
 
 For kroA100, kroC100 and kroD100, the smallest probability of the $OHC$ edges is smaller than $\frac{1}{2}$ as $i \geq 5$ or $i\geq 6$. It implies that there are some $OHC$ edges having the probability $p_i(e)<\frac{1}{2}$ at some $i$s. Since the probabilities of the $OHC$ edges increase quickly according to $k$, the number of such $OHC$ edges is very small. We examined the experimental results and found there are one or two $OHC$ edges with $p_i(e)<\frac{1}{2}$ for all the five $TSP$ instances for given $i\in[5,8]$. For kroA100 at $i=9$, there are three $OHC$ edges with $p_i(e) <\frac{1}{2}$, and all the three smallest probabilities are bigger than 0.4. In addition, the smallest probability of the $OHC$ edges decreases according to $i$ as $i$ is small, such as $i=$4, 5 or 6 (for example, see the probability change for kroB100 and kroD100). It reaches the minimum value at certain number $i$. For example, the smallest probability of kroA100 reaches the minimum value as $i=7$. As $i$ becomes bigger, the smallest probability of the $OHC$ edges begins rising. For example, the smallest probability of the $OHC$ edges becomes bigger as $i=8$ and 9 for kroA100, $i=7$ and 8 for kroB100, $i=8$ for kroC100, $i=7$ and 8 for kroD100. For rd100, the smallest probabilities according to $i=7$ and 8 are nearly equal. 
 
 In the following, We did more experiments for rd100 to show the changes of the smallest probability of the $OHC$ edges according to $i\in[4,11]$. As $i$ becomes bigger, it is time-consuming to compute the average frequencies and probabilities for all edges. To reduce the computation time, the frequency and probability of each edge is computed with 100 frequency $K_i$s containing it, respectively. Because the number of sampled frequency $K_i$s is small for each edge, the probabilities of some edges will deviate from the corresponding expected values. Since the probabilities of most edges approach the expected values, the small number of inaccurate probabilities for some edges do not have much effect on the probability change for all edges and $OHC$ edges. 
 The probability changes for all edges and $OHC$ edges for rd100 according to $i\in [4,11]$ and $k\in [1,4950]$ are illustrated in Figure \ref{rd10011}.

 It found that the probabilities of edges computed with the random frequency $K_i$s have slight difference in two or several experiments. For example, the smallest probabilities of the $OHC$ edges according to $i=7$ and 8 in Figure \ref{rd10011} are smaller than those in Figure \ref{kroCDR}, respectively. In general, the more accurate probability for an edge will be computed based on more number of frequency $K_i$s containing it. In theory, the probabilities shown in Figure \ref{kroCDR} for some edges will be more accurate than those illustrated in Figure \ref{rd10011} for these edges. For most other edges, the probabilities illustrated in the two Figures do not have much difference. Thus, the probability changes for all edges or $OHC$ edges in the two Figures are similar to each other according to $k$ and $i$. As $i$ becomes bigger, more ordinary edges have the even smaller probabilities, see the probability decrements from $i$ to $i+1$ at the same $k$. As the probability of an ordinary edge has one big decrement $pd_i(g) > \frac{2p_i(g)}{i(i-1)}$ at certain number $i$, the probability will decrease quickly according to $i$. From $i=9$ to 10 and 11,  more and more ordinary edges have the small probabilities tending to zero. On the other hand, the biggest probability of the $OHC$ edges increases accordingly. Moreover, the smallest probability of the $OHC$ edges becomes bigger as $i\geq 9$. It means that the $OHC$ edges with the smallest probability are contained in more percentage of $OP^i$s from $i\geq 9$. Although the probabilities of some $OHC$ edge decrease according to $i$ as $i$ is small, they will become bigger according to $i$ as $i$ is relatively big. Thus, the $OHC$ edges will have the bigger frequencies and probabilities than the ordinary edges as $i$ is big. 
\begin{figure}
	\centering
	\includegraphics[width=3in,bb=0 0 400 400]{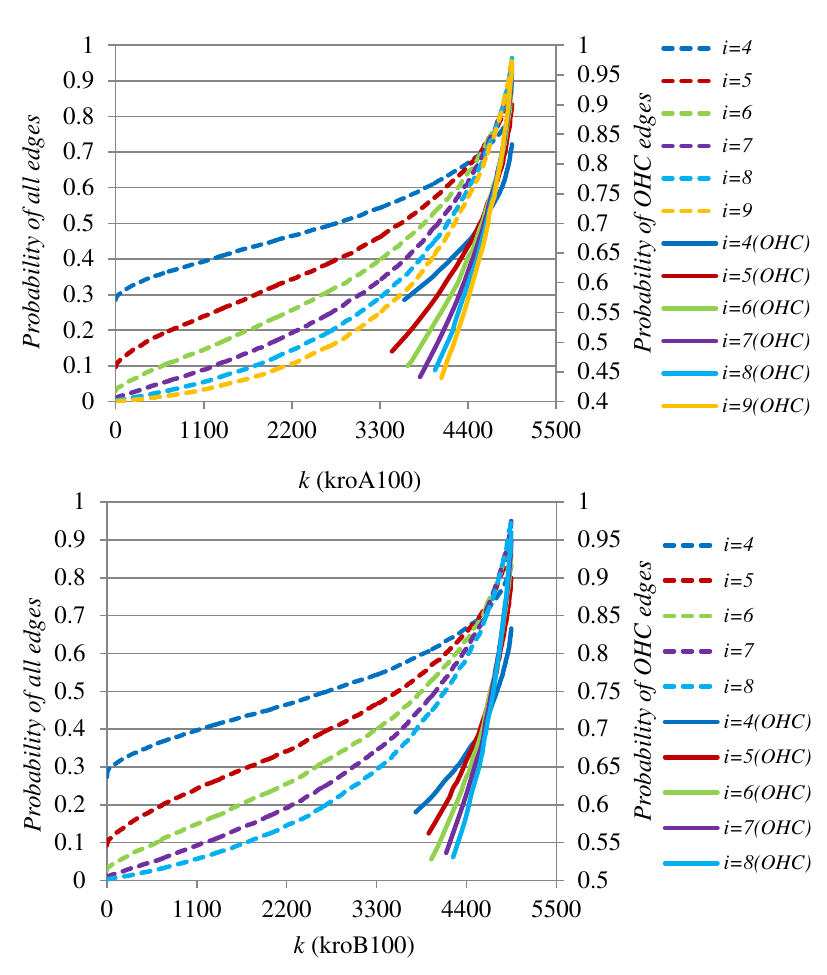}
	\caption{The probability changes for all edges and $OHC$ edges according to $k\in [1, 4950]$ and $i$ for kroA100 and kroB100.}
	\label{kroAB}
\end{figure}

\begin{figure}
	\centering
	\includegraphics[width=3in,bb=0 0 400 500]{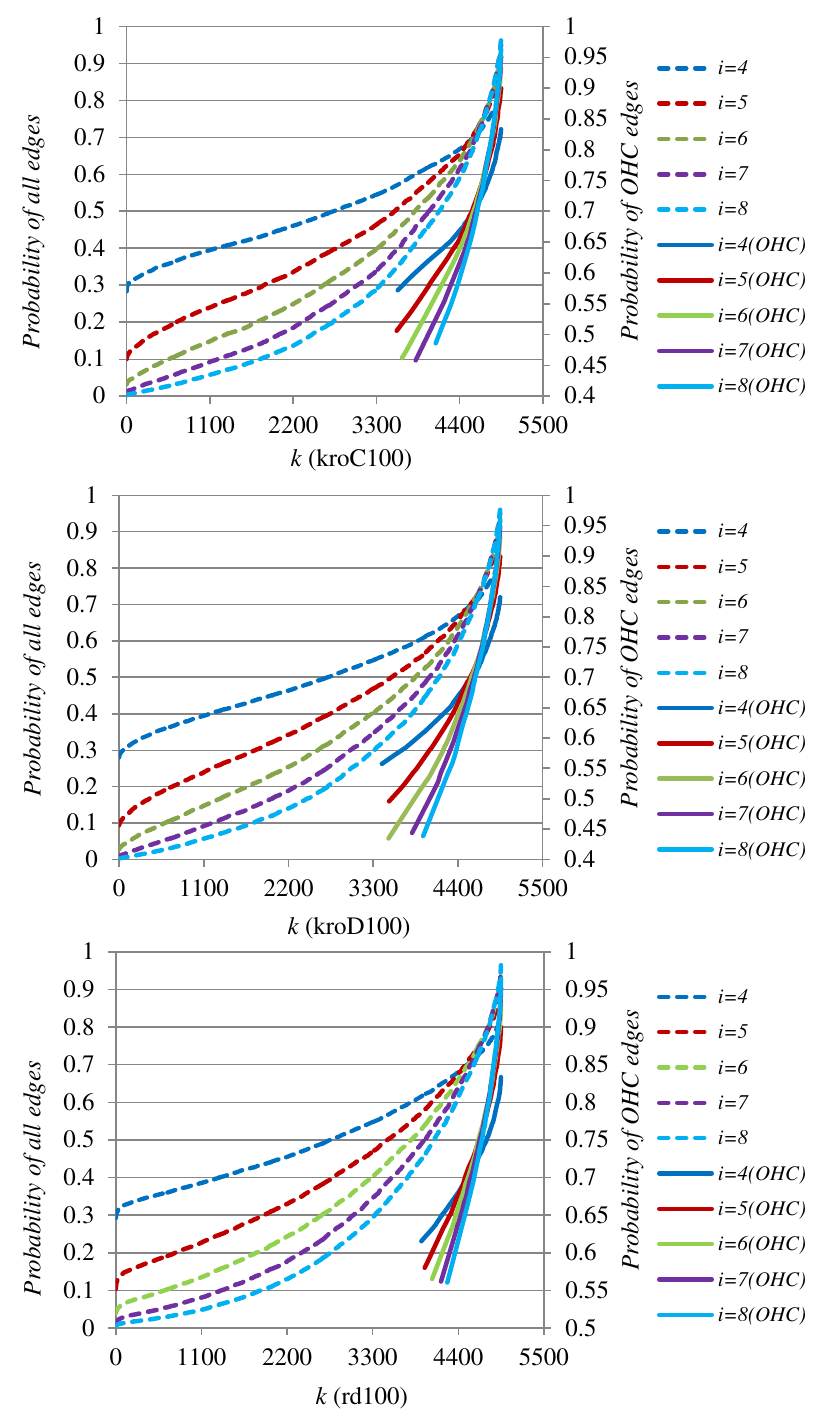}
	\caption{The probability changes for all edges and $OHC$ edges according to $k\in [1, 4950]$ and $i$ for kroC100, kroD100 and rd100.}
	\label{kroCDR}
\end{figure}

\begin{figure}
	\centering
	\includegraphics[width=3in,bb=0 0 400 200]{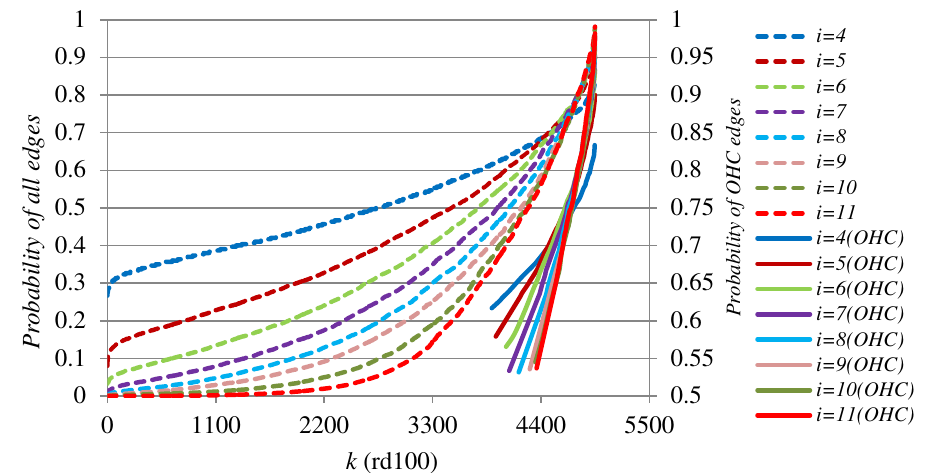}
	\caption{The probability changes for all edges and $OHC$ edges according to $k\in [1, 4950]$ and $i\in[4,11]$ for rd100.}
	\label{rd10011}
\end{figure}
\subsubsection{The probability changes for the same $OHC$ edges and ordinary edges according to $i$ for rd100}
In this section, we shall do experiments to illustrate the probability changes for the same edges according to $i$ as the frequency of the edges are computed with the frequency $K_i$s. The probability changes for the $OHC$ edges with the smallest probabilities among all $OHC$ edges and the ordinary edges with the biggest probabilities among all ordinary edges based on frequency $K_4$s are considered. The ten $OHC$ edges with the smallest probabilities and ten ordinary edges with the biggest probabilities as $i=4$ are selected from rd100 for comparisons. The probabilities of these edges are computed with the frequency $K_i$s where $i$ changes from 4 to 11. To obtain the more accurate probabilities for the twenty edges, five experiments are executed to compute five probabilities for each edge according to each $i$, respectively. Moreover, 1000 frequency $K_i$s are randomly chosen for each edge to compute one probability in every experiment. Finally, the average value of the five probabilities  according to given $i$ is computed and taken as the probability of each edge at the given $i$, respectively. 

The probabilities of each edge are lined according to $i$ and shown in Figure \ref{prochrd100}. In Figure \ref{prochrd100}, the probability changes for the ten $OHC$ edges are shown in the first picture, and the probability changes for the ten ordinary edges are illustrated in the second picture. As more number of frequency $K_i$s are used and more experiments are executed for each edge, the probabilities of the edges will approach the corresponding expected probabilities nearer, respectively. One sees that the probabilities of the ten $OHC$ edges are bigger than 0.55 according to $i\in[4,11]$ although they have the smallest probabilities as $i=4$. It means that the probability of an $OHC$ edge will be bigger than $\frac{1}{2}$ according to $i$ for general $TSP$. 

For the ten $OHC$ edges with the smallest probabilities as $i=4$, their probability changes are different according to $i\in [4,11]$. The $p_i(e)$s of most of these $OHC$ edges decrease according to $i$ as $i$ is small, such as the edges (2,77), (10,51), (35,63), etc. The $p_i(e)$ of each of the edges reaches one smallest value at certain number $i$, and then they rise according to $i$ in the following stages, respectively. Moreover, the $p_i(e)$ of each edge rises quickly after the corresponding smallest value, respectively. For each of the $OHC$ edges, the $p_i(e)$ at $i=11$ is clearly much bigger than the corresponding smallest probability, respectively. It indicates that although the $p_i(e)$s of some $OHC$ edges decreases according to $i$ as $i$ is small, they will increase as $i$ is relatively big. This case only happens to the $OHC$ edges with the smallest  $p_i(e)$s as $i = 4$. For the $OHC$ edges with a big $p_i(e)$ at $i = 4$, $p_i(e)$ will always increase according to $i\in [4,n]$. Moreover, if the $p_i(e)$s of the $OHC$ edges decrease from $i$ to $i+1$, the decrement is relatively big according to the small number $i$, and it becomes smaller according to the bigger number $i$, see the probability change for edge (2,77) in Figure \ref{prochrd100}. It implies that the $p_i(e)$s of the $OHC$ edges conform to the condition $p_i(e) - p_{i+1}(e) \leq \frac{2p_i(e)}{i(i-1)}$ from $i$ to $i+1$ if $p_i(e) > p_{i+1}(e)$ occurs. Thus, $pd_i(e)$ will become smaller according to $i$ for $OHC$ edges, and it will tend to zero as $i$ is big. In addition, the step number for the probability decreasing is limited for these $OHC$ edges. For example, $p_i(e)$ of (2,77) decreases from $i=$4 to 8, and it has the maximum decreasing step  among those of the ten $OHC$ edges. It is still much smaller than the scale $n=100$ of rd100. The experiments illustrate that the step number of the probability decreasing for an $OHC$ edge is limited and small. The probability decreasing just occurs as $i$ is small. Once $i$ is relatively big, the $p_i(e)$s of $OHC$ edges will increase according to $i$. 

\begin{figure}
	\centering
	\includegraphics[width=3.0in,bb=0 0 400 350]{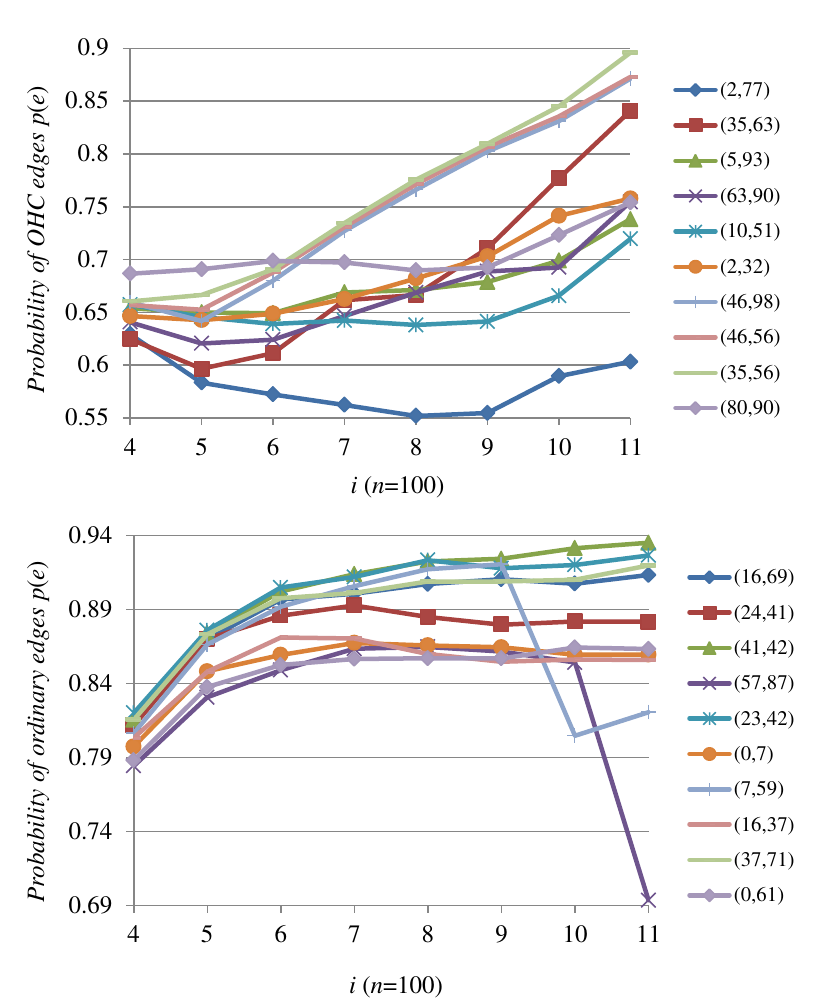}
	\caption{The probability changes for ten $OHC$ edges with the smallest probabilities and ten ordinary edges with the biggest probabilities at $i=4$  for rd100.}
	\label{prochrd100}
\end{figure}

For some $OHC$ edges, $p_i(e)$ always rises according to $i\in[4,11]$ although the $p_4(e)$ is not big, such as (35,56). This kind of probability change happens to most $OHC$ edges with the relatively big $p_i(g)$. It is interesting that the $p_i(e)$ of (35,56) increases quickly according to $i\in[6,11]$, and the rising rate does not become smaller according to $i$. It is different from the probability change for the $OHC$ edges with the big probability close to 1 as $i=4$, see Figures \ref{fpcs13or} and \ref{fpcs14or}. If the $OP^i$s are uniformly contained in the $OP^{i+1}$s from $i$ to $i+1$, the probabilities $p_{i+1}(e)$ and $p_i(e)$ will meet the condition $p_{i+1}(e) \leq \left[1 + \frac{2}{i(i-1)}\right]p_i(e)$ if $p_{i+1}(e) > p_i(e)$. We examined the probability increment for (35,56), and found that the probability increment  $p_{i+1}(e) - p_i(e) > \frac{2p_i(e)}{i(i-1)}$ exists from $i=7$. This case also happens to (46,56), (46,98) from $i=7$, and (35,63) from $i=8$. It indicates that the $OP^i$s containing more $OHC$ edges in the $K_{i+1}$s are contained in the $OP^{i+1}$s with the bigger probability, whereas the $OP^i$s containing more ordinary edges  are contained in the $OP^{i+1}$s with the smaller probability. Thus, the $p_i(e)$s of the $OHC$ edges increase faster than the average probability of all the edges contained in the $OP^n$s according to $i$, and the average probability of all edges contained in the $OP^n$s is generally bigger than those of most ordinary edges. If $p_i(e)$ increases,  $p_{i+1}(e) > \left[1+\frac{2}{i(i-1)}\right]p_i(e)$ will appear in most cases. Once $p_i(e)$ is close to 1, the probability increment will be smaller than $\frac{2p_i(e)}{i(i-1)}$ from $i$ to $i+1$. Theorems \ref{th1} and \ref{th3} are verified by the  numerical results. 

For the other $OHC$ edges, such as (2,32), (2,77), and (5,93), the probability increment is smaller than $\frac{2p_i(e)}{i(i-1)}$ from the smallest value to the bigger values in the first several steps. For example, the probability increment is smaller than $\frac{2p_i(e)}{i(i-1)}$ from $i=8$ to 9 for (2,77), from $i=5$ to 8 for (2,32), and from $i=6$ to 8 for (5,93). In the other steps where $p_i(e)$ increases from $i$ to $i+1$, the probability increment is bigger than $\frac{2p_i(e)}{i(i-1)}$ for these $OHC$ edges. 
In total, if the $p_i(e)$s of the $OHC$ edges increase from $i$ to $i+1$, the $p_i(e)$ much smaller than 1 (for example, smaller than 0.75) will have the probability increment bigger than $\frac{2p_i(e)}{i(i-1)}$ in most cases, whereas the $p_i(e)$ close to 1 (for example, bigger than 0.75) will have the probability increment smaller than  $\frac{2p_i(e)}{i(i-1)}$ as $i$ is small. In addition, $p_i(e)$ will increase quickly from $i$ to $i+1$ as $i$ is small, and it will maintain the nearly equal value as $i$ is big, see the probability changes for the $OHC$ edges as $i$ is close to or bigger than $P_0$ in Figures \ref{fpcs13or} and \ref{fpcs14or}. Theorems \ref{th1} and \ref{th3} are verified by the experimental results.

The probability changes for the ten selected ordinary edges according to $i\in[4,11]$ are illustrated in the second picture in Figure \ref{prochrd100}. It is observed that they have the big probabilities (from 0.7842 to 0.82 bigger than 0.75) as $i=4$. It implies that some ordinary edges have the big probability as $i$ is small. Moreover, the probabilities of a few ordinary edges rise  according to $i\in [4,11]$. It means that each of the ordinary edges is contained in more percentage of the $OP^i$s according to these $K_i$s. For the four edges (16,69), (23,42), (37,71) and (41,42), the probabilities are always rising from $i=4$ to 11. The probabilities increase quickly according to $i=4$ and 5. However, the probability increment becomes smaller if $i \geq6$, and the probability increment $p_{i+1}(g) - p_i(g)$ is much smaller than $\frac{2p_i(g)}{i(i-1)}$ from $i =6$ to 11. This case only happens to the $OHC$ edges with the probabilities close to 1. For the $OHC$ edges having the relatively smaller probabilities not close to 1, they will have the probability increment bigger than $\frac{2p_i(e)}{i(i-1)}$ in most cases. The experiments indicated that the probabilities of the ordinary edges increase slower than those of the $OHC$ edges according to $i$ in most cases, see Theorem \ref{th1}. If $p_i(g)$ of an ordinary edge increases from $i$ to $i+1$, $p_{i+1}(g) < [1+\frac{2}{i(i-1)}]p_i(g)$ will hold in most cases. 

For the other ordinary edges with the big $p_i(g)$s as $i=4$, the $p_i(g)$s increase at first as $i$ is small. Each of them reaches the biggest value at the corresponding number $i$, then they decrease according to $i$ after the biggest values, respectively. Based on Theorem \ref{th30}, $p_i(g)$ for an ordinary edge will decrease from $i_d = 18$ for $n=$100. The $p_i(g)$s of most ordinary edges will have one big drop from $i$ to $i+1$ before $i = 18$. For example, the two ordinary edges (7,59) and (57,87), the $p_i(g)$s drop to the small values at $i=10$ and 11, respectively although the $p_4(g)$s are much bigger than $\frac{1}{2}$. Moreover, $p_{i+1}(g) < \frac{i}{i+1}p_i(g)$ appears for the two ordinary edges at $i=9$ and 10, respectively. The $p_i(g)$s of them will decrease quickly and never become bigger in the following stages based on Theorem \ref{th33}. Most ordinary edges with a big $p_i(g)$ at the small number $i$ have such probability changes, and they can be identified as $i$ is small. 

For the remainder ordinary edges with a big $p_i(g)$, the $p_i(g)$s decrease slowly according to $i \leq 11$ after the biggest values, respectively. However, the $p_i(e)$s of the $OHC$ edges are rising at this time, see the first picture in Figure \ref{prochrd100}. Thus, these ordinary edges can also be identified according to the decreasing probabilities. For example, if the $p_i(g)$ of an ordinary edge decreases according to $i$, $p_{i+1}(g) < \frac{i}{i+1}p_i(g)$ will happen as $i$ is relatively big. Plus the rising $p_i(e)$s of the $OHC$ edges at this time, the $p_i(g)$s of most ordinary edges will decrease quickly according to $i$. For example, although (7,59) and (57,87) has the big probabilities at $i=4$, the $p_i(g)$s decrease quickly after $i=10$ and 11, respectively. For the ordinary edges having the small $p_i(g)$s at the small number $i$, such as $i=4$, 5 or 6, their $p_i(g)$s will decrease quickly according to $i$, and most of the $p_i(g)$s will tend to zero as $i$ is relatively big. Figure \ref{rd10011} also illustrated that the $p_i(g)$s of more and more ordinary edges approach zero according to $i$. Such ordinary edges can be identified at the small number $i$. 

In addition, one sees some ordinary edges will maintain the big $p_i(g)$s according to $i$ as $i$ is not big. For the ordinary edges with the big $p_i(g)$s while the $p_i(g)$s increase according to $i$ as $i$ is small, such as (41,42), (23,42), (37,71) and (16,69), they are not easily separated from some $OHC$ edges based on the frequencies or probabilities computed with the frequency $K_i$s containing a small number of vertices. To disclose such ordinary edges, the frequency $K_i$s containing more vertices (such as $i_d$ vertices) will be used. However, it is time-consuming to compute the frequency $K_i$s containing many vertices using the exact methods, such as dynamic programming. In the experiments, the frequency $K_i$s containing one given edge are chosen at random from $K_n$ to compute the frequency and probability of the edge. Although the frequency and probability of each edge is illustrated, the restriction of the adjacent $OHC$ edges to reduce the frequency and probability of the ordinary edges is relaxed. Based on Theorems \ref{th3}, \ref{th33} and \ref{th30}, an ordinary edge in  the $K$ frequency $K_i$s containing one or two pairs of adjacent $OHC$ edges will have the small frequency. If we choose the frequency $K_i$s for three adjacent edges containing one vertex for computing the frequencies and probabilities of the three adjacent edges, the frequency and probability of the ordinary edge will be much smaller than those of the two $OHC$ edges according to $i$. In theory, it will consume $O(Ni^42^in^4)$ time to discovery all $OHC$ edges based on frequency $K_i$s where $N$ is the number of the selected frequency $K_i$s containing each pair of three adjacent edges containing a vertex and the $OP^i$s are computed with dynamic programming. 
This work will be studied in the next paper. 
\section{Conclusion}
\label{sec7}
The frequency $K_i$s ($i\in [4,n]$) are studied for characterizing the special structure properties of the $OHC$ edges for symmetric $TSP$. $OHC$ edges illustrate the much higher frequencies and probabilities than those of ordinary edges computed with the frequency $K_i$s at certain $i\ll n$. Firstly, an $OHC$ edge related to a given $K_i$ has the frequency bigger than $\frac{1}{2}{{i}\choose{2}}$ in the  frequency $K_i$, whereas an ordinary edge has the frequency smaller than $2(i-3)$. On average, the expected frequency of an $OHC$ edge is bigger than $\frac{i^2-4i+7}{2}$ whereas an ordinary edge has the expected frequency smaller than 2 in each frequency $K_i$. Secondly, as the frequency of each edge in $K_n$ is computed with the frequency $K_i$s, an $OHC$ edge in $K_n$ has the average frequency bigger than $\frac{1}{2}{{i}\choose{2}}$. It indicates that an $OHC$ edge in $K_n$ is the $OHC$ edge in a $K_i$ containing it on average. Moreover, an $OHC$ edge will be contained in the nearly equal or more percentage of $OP^i$s according to $i\in[4,n]$. If the probability $p_i(e)$ of an $OHC$ edge decreases from $i$ to $i+1$, the probability decrement is smaller than $\frac{2p_i(e)}{i(i-1)}$. If $p_i(e)$ close to 1 increases from $i$ to $i+1$ as $i$ is small, the probability increment will be smaller than $\frac{2p_i(e)}{i(i-1)}$. Otherwise, the probability increment will be bigger than $\frac{2p_i(e)}{i(i-1)}$ in most cases. If $p_i(e)$ decreases according to the small $i$s, the decreasing step is limited comparing to $n$, and it will increase according to $i$ after the smallest value. Thus, the frequency of an $OHC$ edge will always increase according to $i$ until $i=\frac{n}{2}+2$ for even $n$ or $i=\frac{n+1}{2}+1$ for odd $n$. 

For ordinary edges, most of them are contained in a small percentage of the $OP^i$s, and the probability $p_i(g)$ that they are contained in the $OP^i$s becomes smaller according to $i$ in most cases. The probability decrement is generally bigger than $\frac{2p_i(g)}{i(i-1)}$ from $i$ to $i+1$, and it is bigger than $\frac{p_i(g)}{i+1}$ in most cases. Although $p_i(g)$ increases for some ordinary edges as $i$ is small, the probability increment will be smaller than $\frac{2p_i(g)}{i(i-1)}$ in most cases. In total, $p_i(g)$ for an ordinary edge decreases according to $i$ in the average case. Moreover, $p_i(g)$ definitely becomes smaller if $i \geq i_d$ where $i_d = O(n^\frac{4}{7})$. Based on the findings, a dynamic programming algorithm is presented to find all $OHC$ edges in $O(n^2i_d^42^{i_d})$ time. 

The experiments are executed for various $TSP$ instances for verifying the findings. If most $K_i$s in $K_n$ contain one $OHC$ and ${{i}\choose{2}}$ $OP^i$s for general $TSP$, the frequencies and probabilities of the $OHC$ edges and ordinary edges computed with the frequency $K_i$s totally conform to the theorems. It indicates that the $OHC$ edges and ordinary edges have different structure properties which can be characterized by the frequency $K_i$s. Comparing with the distances of edges, the $OHC$ edges and ordinary edges are better characterized with the frequencies and probabilities computed based on frequency $K_i$s. 

Given a $TSP$ instance, if the probabilities of all ordinary edges are smaller than $\frac{1}{2}$ or decreases before certain constant number $c$, the $OHC$ will be found in $O(n^2c^42^{c})$ time. In the future, we will study if there are such special $TSP$, and the related algorithms based on the  frequency $K_i$s. In addition, the frequency and probability change for the ordinary edges with the big frequencies at the small numbers $i$ will be researched, especially for the $n$ special ordinary edges. 

\section{Acknowledgements}
\label{sec7}
The author wants to thank the Fundamental Research funds for the Central Universities (No.2024JC006). The work benefits from the State Key Laboratory of Alterate Electrical Power System with Renewable Energy Sources (NCEPU).

\section{Declaration of competing interests}
\label{sec8}
The author declared that they had no affiliations with or involvement in any organization or entity with any financial interest in the subject matter or materials discussed in the manuscript. 
\section{Declaration of generative AI in scientific writing}
\label{sec9}
The author declared that there were no use of generative AI and AI-assisted technologies in scientific writing in the manuscript. 
\appendix
\section{Six frequency $K_4$s for $K_4$}
\label{app1}

The six frequency $K_4$s for any $K_4$ are computed as follows.
Given $K_4$ on four vertices $A$, $B$, $C$ and $D$, the layout of the four vertices is illustrated as that in Fig. \ref{quadrilaterals} (a). There are six edges $(A,B)$, $(A,C)$, $(A,D)$, $(B,C)$, $(B,D)$, and $(C,D)$. The distances of the six edges are denoted as $d(A,B)$, $d(A,C)$, $d(A,D)$, $d(B,C)$, $d(B,D)$ and $d(C,D)$, respectively. The six $OP^4$s in $K_4$ are up to the ordering of the three distance sums $d(A,B)+d(C,D)$, $d(A,C)+d(B,D)$ and $d(A,D)+d(B,C)$. For example, if $d(A,B)+d(C,D) < d(A,D)+d(B,C) < d(A,C)+d(B,D)$ is considered,
the six $OP^4$s are computed as $(A,D,C,B)$, $(A,B,D,C)$, $(A,B,C,D)$, $(B,A,D,C)$, $(B,A,C,D)$ and $(C,B,A,D)$. We enumerate the number of $OP^4$s containing an edge, and this number represents the frequency of the edge. For example, $(A,B)$ is contained in five of the $OP^4$s so it has the frequency of 5. As the frequency of each edge is computed with the six $OP^4$s, the frequency $K_4$ is illustrated in Figure  \ref{quadrilaterals} (a). The inequality related to the three distance sums is given below the frequency $K_4$. 
Since the three distance sums $d(A,B)+d(C,D)$, $d(A,C)+d(B,D)$ and $d(A,D)+d(B,C)$ have the other five orderings, one can derive the corresponding $OP^4$s and frequency $K_4$s. The other five frequency $K_4$s, and the inequalities related to the distance sums
are illustrated in Figure \ref{quadrilaterals} (b) $\thicksim$ (f), respectively.

\begin{figure}
	\centering
	\includegraphics[width=5.0in,bb=0 0 400 150]{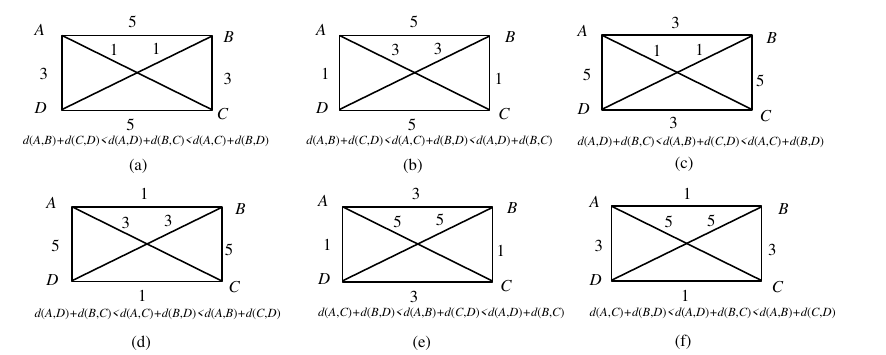}
	\caption{Six frequency $K_4$s for certain $K_4=ABCD$.}
	\label{quadrilaterals}
\end{figure}

In each frequency $K_4$, the frequency
of an edge is 1, 3 or 5. It implies that the edge is contained in 1, 3 or 5 $OP^4$s. Moreover, the frequency of $OHC$ edges in $K_4$ is either 3 or 5. For example, the $OHC$ edge $(A,B)$ has the frequency of 3 or 5 in (a), (b), (c), and (e). This means that the minimum frequency of $OHC$ edges in $K_4$ is 3. However, the expected frequency for an $OHC$ edge will be bigger than 3. Given an $OHC$ edge, such as $(A,B)$, it has the frequency of 3 in (c) and (e), and the frequency of 5 in (a) and (b). Thus, the expected frequency for $(A,B)$ is 4 with respect to the four frequency $K_4$s. Since there are six $OP^4$s in $ABCD$, the probability that $(A,B)$ is contained in an $OP^4$ in $ABCD$ is $\frac{2}{3}$. In addition, the two vertex-disjoint edges, such as $(A,B)$ vs $(C,D)$, $(A,C)$ vs $(B,D)$ and $(A,D)$ vs $(B,C)$, have the same frequency in frequency $K_4$. On the other hand, the three edges containing a vertex have three  different frequencies 1, 3 and 5, respectively. 

\section{The frequencies on the edges in the six $OP^4$s in a Frequency $K_4$}
\label{app2}

 $K_4$ contains six $OP^4$s. Each $OP^4$ contains three edges. One is in the middle, and the others are on both sides, respectively. Given an $OP^4$ in frequency $K_4$, the frequencies on the three edges are definite \cite{DBLP:journals/Wang24}. Without loss of generality, the six $OP^4$s in $ABCD$ in Fig.\ref{quadrilaterals} (b)
are shown in Fig.\ref{optimalpaths}, and the frequency on each edge is also illustrated. It is observed that two side edges in each $OP^4$ have the same frequency. Moreover, if the middle edge has a frequency of 1 or 3, the side edges must have the frequency of 5. As the middle edge has the frequency of 5, 
the side edges must have the frequency of 3 rather than 1. The side edges in an $OP^4$ never have the frequency of 1. Given an intermediate vertex in an $OP^4$, the frequency sum for the two adjacent edges containing the vertex is 8 or 6. According to the six $OP^4$s, frequency 6 appears four times and frequency 8 appears eight times. Thus, the expected frequency sum for two adjacent edges is $\frac{22}{3}$. 
As $OHC$ in $K_n$ contains $n$ $OP^4$s, the frequencies on the edges in $OP^4$s can be taken as the constraints to derive the frequency bound for an $OHC$ edge or two adjacent $OHC$ edges. 
\begin{figure}
	\centering
	\includegraphics[width=3.5in,bb=0 0 450 100]{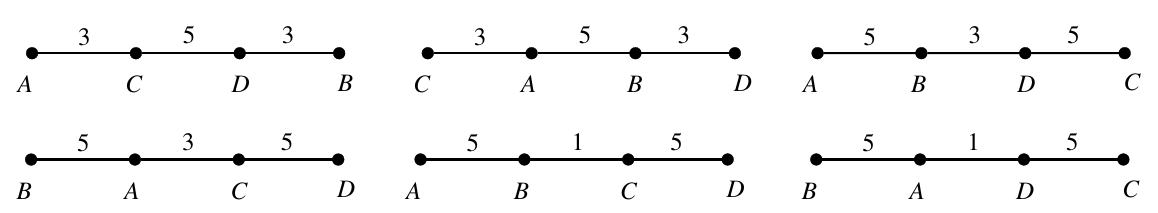}
	\caption{The frequency on each edge in the six $OP^4$s in Figure \ref{quadrilaterals} (b).}
	\label{optimalpaths}
\end{figure}





\end{document}